\DeclareFontFamily{U}{mathc}{}
\DeclareFontShape{U}{mathc}{m}{it}%
{<->s*[1.03] mathc10}{}
\DeclareMathAlphabet{\mathscr}{U}{mathc}{m}{it}
\newcommand*{\rom}[1]{\expandafter\@slowromancap\romannumeral #1@}
 \tikzstyle{startstop} = [rectangle, rounded corners, minimum width=1cm, minimum height=1.2cm,text width=7cm,, draw=black, fill=white!20]
  \tikzstyle{oddeven} = [ellipse, rounded corners, minimum width=3cm, minimum height=0.6cm,text centered, draw=black, fill=white!20]
 \tikzstyle{arrow}=[thick,->,>=stealth]
\DeclareMathAlphabet{\mathpzc}{OT1}{pzc}{m}{it}
\DeclarePairedDelimiter\ceil{\lceil}{\rceil}
\DeclarePairedDelimiter\floor{\lfloor}{\rfloor}
\newcommand{\etal}{\textit{et al.}}
\newtheorem{definition}{Definition}[section]
\newtheorem{theorem}{Theorem}[section]
\newtheorem{example}{Example}[section]
\newtheorem{lemma}{Lemma}[section]
\newtheorem{remark}{Remark}[section]
\newtheorem{proposition}{Proposition}[section]
\numberwithin{equation}{section}
\begin{document} \title{{Recursive construction and enumeration of self-orthogonal and self-dual codes over finite commutative chain rings of even characteristic}}
\author{Monika Yadav{\footnote{Email address:~\urlstyle{same}\href{mailto:monikay@iiitd.ac.in}{monikay@iiitd.ac.in}}}  ~and 
  Anuradha Sharma{\footnote{Corresponding Author, Email address:~\urlstyle{same}\href{mailto:anuradha@iiitd.ac.in}{anuradha@iiitd.ac.in}} }\\
 {Department of  Mathematics, IIIT-Delhi}\\{New Delhi 110020, India}}
\date{}
\maketitle
\begin{abstract}
Let $\mathscr{R}_{e,m}$ denote a finite commutative chain ring of even characteristic  with  maximal ideal $\langle u \rangle$ of nilpotency index $e \geq 3,$  Teichm$\ddot{u}$ller set $\mathcal{T}_{m},$ and  residue field $\mathscr{R}_{e,m}/\langle u \rangle$ of order $2^m.$ Suppose that  $2 \in \langle u^{\kappa}\rangle \setminus \langle u^{\kappa+1}\rangle$ for some odd integer $\kappa$ with $3 \leq \kappa \leq  e.$   In this paper, we first develop a recursive method to construct a self-orthogonal code $\mathscr{D}_e$ of type $\{\lambda_1, \lambda_2, \ldots, \lambda_e\}$ and length $n$ over  $\mathscr{R}_{e,m}$   from a chain $\mathcal{C}^{(1)}\subseteq \mathcal{C}^{(2)} \subseteq \cdots \subseteq \mathcal{C}^{(\ceil{\frac{e}{2}})} $ of self-orthogonal codes of length $n$ over $\mathcal{T}_{m},$ and vice versa, subject to the following conditions: \begin{itemize}\item[(i)] $\dim \mathcal{C}^{(i)}=\lambda_1+\lambda_2+\cdots+\lambda_i$ for $1 \leq i \leq \ceil{\frac{e}{2}};$ \item[(ii)] the code $\mathcal{C}^{(\floor{\frac{e}{2}}-\floor{\frac{\kappa}{2}})}$ is  doubly even; and  \item[(iii)]  the all-one vector $\textbf{1} =(1,1,\ldots,1)\notin \mathcal{C}^{(\ceil{\frac{e}{2}}-\kappa)}$ provided that $2\kappa\leq e,$  $n\equiv 4\pmod 8$ and $m$ is odd,  \end{itemize} where  $\lambda_1,\lambda_2,\ldots,\lambda_e$ are non-negative integers satisfying $2\lambda_1+2\lambda_2+\cdots+2\lambda_{e-i+1}+\lambda_{e-i+2}+\lambda_{e-i+3}+\cdots+\lambda_i \leq n$ for $\ceil{\frac{e+1}{2}}\leq i\leq e,$ and   
 $\floor{\cdot }$ and $\ceil{\cdot }$ denote the floor and ceiling functions, respectively. This construction ensures that  $Tor_i(\mathscr{D}_e)=\mathcal{C}^{(i)}$ for $1 \leq i \leq \ceil{\frac{e}{2}}.$
  With the help of this recursive construction method and by applying results from group theory and finite geometry, we obtain explicit enumeration formulae for all self-orthogonal and self-dual codes of an arbitrary length over $\mathscr{R}_{e,m}.$ We also illustrate these results with some examples. In a subsequent study \cite{YSub}, we will address the complementary case where  $\kappa$ is even.
 
\end{abstract}
{\bf Keywords:} Self-orthogonal codes; Self-dual codes; Doubly even codes; Witt decomposition theory.\\
{\bf 2020 Mathematics Subject Classification}:  15A63, 94B99, 94B15.
 \section{Introduction}\label{intro}

Self-orthogonal and self-dual codes are among the most significant and widely studied classes of linear codes. These codes exhibit strong connections with design theory \cite{Gaborit,Kennedy} as well as with the theory of modular forms and unimodular lattices \cite{HaradaBannai,Dough,P}. These codes are fundamental in constructing quantum error-correcting codes,  which protect quantum information from decoherence and noise \cite{Ashikhmin,XingJin}.
These codes are also used in designing secret-sharing schemes with desirable access structures  \cite{ Varbanov,MesnagerDougherty}. This has inspired numerous coding theorists to investigate these codes and develop construction methods for these codes \cite{Varbanov,MesnagerDougherty,XingJin,V}.

The study of self-orthogonal and self-dual codes has evolved significantly over the past few decades. In the 1990s, it was discovered that many binary non-linear codes, such as Kerdock, Preparata, Goethals, and Delsarte-Goethals codes, can be represented as Gray images of linear codes over the ring $\mathbb{Z}_4$  of integers modulo $4$ \cite{ R,sole}.  Since then, there has been growing interest in studying self-orthogonal and self-dual codes over finite commutative chain rings \cite{ Choi,Zp,GBN,Fidel,AT,K}. In particular, the explicit enumeration of self-orthogonal and self-dual codes over various finite commutative chain rings has gained significant attention, as these enumeration formulae are instrumental in classifying these codes up to monomial equivalence \cite{Choi,GBN,Galois,Y}. Below, we summarize key results in this direction.

Let  $GR(p^{\mathfrak{s}},m)$ denote the Galois ring of characteristic $p^{\mathfrak{s}}$ and cardinality $p^{\mathfrak{s}m},$ where $p$ is a prime and $m,$  $\mathfrak{s}$ are positive integers. 
By Theorem XVII.5 of \cite{Mcdonald}, any finite commutative chain ring is isomorphic to a quotient ring of the form \vspace{-1mm}\begin{equation}\label{eqCR}\mathscr{R}_{e,m}=\frac{GR(p^{\mathfrak{s}},m)[y]}{\langle h(y),p^{\mathfrak{s}-1}y^{\mathtt{t}}\rangle},\vspace{-1mm}\end{equation}
   where $h(y)$ is an Eisenstein polynomial over $GR(p^{\mathfrak{s}},m)$ of degree $\kappa,$  and the parameter $\mathtt{t}$ satisfies $1\leq \mathtt{t}\leq \kappa$ when $\mathfrak{s}\geq 2$, while  $\mathtt{t}=\kappa$ when $\mathfrak{s}=1.$  The integers $p,$ $ m,$ $\mathfrak{s},$ $\kappa$ and $\mathtt{t}$ are called invariants of the chain ring $\mathscr{R}_{e,m}.$
  
   In the special case  when $\mathfrak{s}=\kappa=\mathtt{t}=1,$ the ring 
$\mathscr{R}_{e,m}$ reduces to the finite field 
$\mathbb{F}_{p^m}$ of order $p^m,$  and the enumeration formulae for self-orthogonal and self-dual codes over $\mathbb{F}_{p^m}$ are obtained by Pless \cite{V}. Further, for a detailed study on the enumeration of self-orthogonal and self-dual codes over finite commutative chain rings of odd characteristic  (\textit{i.e.,} when $p$ is odd),
the reader is referred to \cite{b,BETTY,Zp,GBN,AT,K, Y}  and references therein. 

On the other hand,  when $p=2,$ the following important special cases have been considered in a series of papers:
\begin{itemize}\vspace{-1mm}\item If $m=1,$ $\mathfrak{s}\geq 1$ and $\kappa=1,$ then we have $\mathtt{t}=1$ and $\mathscr{R}_{e,m}\simeq \mathbb{Z}_{2^{\mathfrak{s}}}.$ Enumeration formulae for  self-dual codes over $\mathbb{Z}_{2^{\mathfrak{s}}}$ are obtained by Nagata  {\etal}  \cite{Fidel}.
\vspace{-1mm}\item If $\mathfrak{s}=1,$ then we have $\mathtt{t}=\kappa,$ $GR(2^{\mathfrak{s}},m)=\mathbb{F}_{2^m}$ and $\mathscr{R}_{e,m}\simeq \mathbb{F}_{2^m}[y]/\langle y^{\kappa}\rangle.$   Yadav and Sharma \cite{quasi} provided a recursive method to construct
and enumerate self-orthogonal and self-dual codes over $\mathbb{F}_{2^m}[y]/\langle y^{\kappa}\rangle$ from self-orthogonal  codes over $\mathbb{F}_{2^m}.$ \vspace{-1mm}\item If  $\mathfrak{s}\geq 1$ and $\kappa=1,$ then we have $\mathtt{t}=1$ and $\mathscr{R}_{e,m}\simeq GR(2^{\mathfrak{s}},m).$    Yadav and Sharma \cite{Galois} observed that the recursive construction methods and enumeration techniques employed in \cite{Fidel,quasi,Y} do not directly extend to construct and enumerate self-orthogonal and self-dual codes over Galois rings  of even characteristic.  To address this limitation, they modified their recursive methods, where they constructed  self-orthogonal and self-dual codes over the Galois ring $GR(2^{\mathfrak{s}},m)$ from doubly even codes over the Teichm$\ddot{u}$ller set $\mathcal{T}_m$ of $GR(2^{\mathfrak{s}},m).$  

However, working as in Section $3$  of  \cite{quasi}, we observe that even these modified recursive methods are insufficient to construct and enumerate self-orthogonal and self-dual codes over finite commutative chain rings of even characteristic in the case when $p=2$ and both $ \kappa, s\geq 2,$ as illustrated in Example \ref{example1}. \end{itemize}

The main goal of this paper is to provide a modified recursive method for constructing and enumerating all self-orthogonal and self-dual codes over finite commutative chain rings of even characteristic (\textit{i.e.} $p=2$) when  $\kappa\geq 3$ is odd. The complementary case, where $\kappa$ is even, will be addressed in a subsequent work \cite{YSub}, in which we develop a recursive method for constructing self-orthogonal and self-dual codes over $\mathscr{R}_{e,m}$ from a chain of self-orthogonal codes over finite fields that meet certain additional conditions. 

Together, the aforementioned works provide a complete solution to the problem of enumeration of self-orthogonal and self-dual codes over finite commutative chain rings. Notably, as shown in \cite{Jose, Lavanya}, the resulting enumeration formulae have broader applications, including the enumeration of self-orthogonal and self-dual 
quasi-abelian codes and  self-orthogonal and self-dual  Galois additive cyclic codes over finite commutative chain rings, as well as various other special classes of linear codes that decompose into direct sums of self-orthogonal and self-dual linear codes over finite commutative chain rings. Additionally, these enumeration formulae are instrumental in classifying such codes up to monomial equivalence \cite{BETTY,Choi,GBN,Galois,Y}. It is worth noting that  all monomially equivalent linear codes over finite commutative chain rings have the same homogenous and overweight distances. Consequently, these enumeration formulae are also instrumental in the effective implementation of search algorithms aimed at identifying new codes with optimal parameters \cite{Gassner,Ozbudak}.

This paper is structured
as follows: In Section \ref{prelim}, we  present the necessary preliminaries required for proving our main results.  In Section \ref{construction}, we begin by defining doubly even codes over the Teichm$\ddot{u}$ller set $\mathcal{T}_m$ of the chain ring $\mathscr{R}_{e,m}$ (see Definition \ref{d3.1}).
We then  provide  a  recursive method for  constructing a self-orthogonal (\textit{resp.} self-dual) code $\mathscr{D}_e$ of type  $\{\lambda_1, \lambda_2, \ldots, \lambda_e\}$ and  length $n$ over  $\mathscr{R}_{e,m}$ satisfying $Tor_i(\mathscr{D}_e)=\mathcal{C}^{(i)}$  from a chain \vspace{-2mm}\begin{equation*}\mathcal{C}^{(1)}\subseteq \mathcal{C}^{(2)} \subseteq \cdots \subseteq \mathcal{C}^{(\ceil{\frac{e}{2}})} \vspace{-2mm}\end{equation*} of self-orthogonal codes of length $n$ over $\mathcal{T}_m,$ and vice versa,  under the following conditions: \begin{enumerate}\vspace{-2mm}\item[(i)] $\dim \mathcal{C}^{(i)}=\lambda_1+\lambda_2+\cdots+\lambda_i$ for $1 \leq i \leq \ceil{\frac{e}{2}}$; \vspace{-1mm} \item[(ii)] the code $\mathcal{C}^{(\floor{\frac{e}{2}}-\floor{\frac{\kappa}{2}})}$ is  doubly even; and \vspace{-2mm}\item[(iii)] the all-one vector $\textbf{1} =(1,1,\ldots,1)\notin \mathcal{C}^{(\ceil{\frac{e}{2}}-\kappa)}$ provided  that $2\kappa\leq e,$  $n\equiv 4\pmod 8$ and $m$ is odd,  \end{enumerate} where  $\lambda_1,\lambda_2,\ldots,\lambda_e$ are non-negative integers satisfying $2\lambda_1+2\lambda_2+\cdots+2\lambda_{e-i+1}+\lambda_{e-i+2}+\lambda_{e-i+3}+\cdots+\lambda_i \leq n$ for $\ceil{\frac{e+1}{2}}\leq i\leq e$ (see Propositions \ref{p3.2}--\ref{p3.9KoddReplace} and Figures \ref{F1} and \ref{F2}). 
In Section \ref{counting}, we employ the recursive construction methods outlined in Figures \ref{F1} and \ref{F2} to derive explicit enumeration formulae for all self-orthogonal and self-dual codes of type $\{\lambda_1,\lambda_2,\ldots, \lambda_e\}$ and length $n$  over  $\mathscr{R}_{e,m}$ (Theorems \ref{t4.1Kodd} and  \ref{t4.2Kodd}). Using Remark \ref{Rem1} and  equation \eqref{summation}, these results further yield enumeration formulae for all self-orthogonal and self-dual codes of length $n$ over $\mathscr{R}_{e,m}.$ 
It is worth noting that  when $\mathfrak{s}=1,$ we have $e=\mathtt{t}=\kappa,$   and the ring $\mathscr{R}_{e,m}$ reduces to the quasi-Galois ring $\mathbb{F}_{2^m}[y]/\langle y^{\kappa}\rangle.$ In this special case, for any odd $\kappa \geq 3,$ the  enumeration formulae for self-orthogonal and self-dual codes over $\mathbb{F}_{2^m} \text{[}y\text{]}/\langle y^{\kappa}\rangle$  can be  directly obtained from Theorems \ref{t4.1Kodd} and \ref{t4.2Kodd}  upon  setting $\mathfrak{s}=1$ and $e=\mathtt{t}=\kappa$ (see equation \eqref{summation} and Remarks \ref{rem9.1}  and \ref{rem9.2}).

\vspace{-1mm}\section{Some preliminaries}\label{prelim}
In this section, we will present fundamental properties of finite commutative chain rings and outline the theory of linear codes over such rings, including their generator matrices, duality and torsion codes. We will also explain the structural properties of self-orthogonal and self-dual codes and highlight the limitations of existing recursive constructions for these codes over  finite commutative chain rings of even characteristic that are neither quasi-Galois nor Galois rings.
\vspace{-2mm}\subsection{Finite commutative chain rings}
A finite commutative ring with unity is called a chain ring if  its ideals are totally ordered under the set-theoretic inclusion relation.  Throughout this paper, let $p$ be a prime number, and let $m,$ $\mathfrak{s},$ $\kappa$ and $\mathtt{t}$  be positive integers satisfying $1\leq \mathtt{t} \leq \kappa$ if $\mathfrak{s}\geq 2,$ whereas $\mathtt{t}=\kappa$ if $\mathfrak{s}=1.$ Let $\mathscr{R}_{e,m}$ (as defined in \eqref{eqCR}) be a finite commutative chain ring with invariants  $p,$ $ m,$ $\mathfrak{s},$ $\kappa,$ $\mathtt{t}$ and maximal ideal $\langle u \rangle$ of nilpotency index $e=\kappa (\mathfrak{s}-1)+\mathtt{t}.$   Here, the residue field  $\overline{\mathscr{R}}_{e,m}=\mathscr{R}_{e,m}/\langle u \rangle$ is  the finite field of order $p^m.$  By Theorem XVII.5 of \cite{Mcdonald}, we note that  all ideals of $\mathscr{R}_{e,m}$ form a chain   $\{0\} \subsetneq \langle u^{e-1}\rangle \subsetneq \langle u^{e-2}\rangle \subsetneq  \ldots \subsetneq \langle u\rangle \subsetneq \langle 1 \rangle =\mathscr{R}_{e,m}$ and that the element $p \in \langle u^{\kappa} \rangle \setminus \langle u^{\kappa+1}\rangle.$ We further note, by Lemma  XVII.4 of \cite{Mcdonald}, that $|\langle u^i\rangle|=p^{m(e-i)}$ for $0\leq i\leq e,$ (throughout this paper, $|\cdot|$ denotes the cardinality function). By Theorem XVII.5 of \cite{Mcdonald} again, we see that there exists an element $\xi \in \mathscr{R}_{e,m}$ with multiplicative order  $p^m-1$. Moreover,  the cyclic subgroup generated by $\xi$ is the only subgroup of the unit group of $\mathscr{R}_{e,m}$, which is isomorphic to the multiplicative group of the residue field $\overline{\mathscr{R}}_{e,m}.$  The set \vspace{-2mm}$$\mathcal{T}_{e,m}=\{0,1,\xi,\xi^2,\ldots, \xi^{p^m-2}\}\vspace{-2mm}$$ is called the Teichm$\ddot{u}$ller set of $\mathscr{R}_{e,m}$. By Lemma  XVII.4 of \cite{Mcdonald} again, we see that each element $d\in \mathscr{R}_{e,m}$ can be uniquely expressed as $d=d_{0}+ud_1+\cdots+u^{e-1}d_{e-1},$ where $d_i\in \mathcal{T}_{e,m}$ for $0\leq i\leq e-1,$ (such a representation is called the  Teichm$\ddot{u}$ller representation of $d$). Furthermore, the element $d$  is a unit in $\mathscr{R}_{e,m}$ if and only if $d_0\neq0.$ Now, define a projection map $\pi_0: \mathscr{R}_{e,m} \rightarrow \mathcal{T}_{e,m}$ as $\pi_0(d)=d_0$ for all  $d=d_{0}+ud_1+\cdots+u^{e-1}d_{e-1},$ where $d_i\in \mathcal{T}_{e,m}$ for $0\leq i\leq e-1.$ Next, define a binary operation $\oplus$ on $\mathcal{T}_{e,m}$ as $a\oplus b=\pi_0(a+b)$ for all $a,b \in \mathcal{T}_{e,m}.$ It is easy to see that the set $\mathcal{T}_{e,m}$ is the finite field of order $p^m$ under the addition operation $\oplus$ and the usual multiplication operation of $\mathscr{R}_{e,m}$ and that the finite field $\mathcal{T}_{e,m}$  is isomorphic to the residue field $\overline{\mathscr{R}}_{e,m}$ of $\mathscr{R}_{e,m}.$ 
There also exists a canonical epimorphism  $^{-} : \mathscr{R}_{e,m}\rightarrow \overline{\mathscr{R}}_{e,m},$ defined as  $a\mapsto \bar{a}=a+\langle u\rangle$ for all $a\in \mathscr{R}_{e,m}$. Note that the restriction map $^{-}{\restriction_{\mathcal{T}_{e,m}}}:\mathcal{T}_{e,m}\rightarrow \overline{\mathscr{R}}_{e,m}$  is a field isomorphism.  
\vspace{-2mm}\subsection{Linear codes over $\mathscr{R}_{e,m}$}
Now, let $n$ be a positive integer, and let $\mathscr{R}_{e,m}^n$ be an $\mathscr{R}_{e,m}$-module consisting of all $n$-tuples over $\mathscr{R}_{e,m}.$  A linear code $\mathscr{D}$  of length $n$ over $\mathscr{R}_{e,m}$  is
defined as an $\mathscr{R}_{e,m}$-submodule of $\mathscr{R}_{e,m}^n,$ whose elements  
are referred to as codewords.  A generator matrix of the code $\mathscr{D}$ is  defined as a matrix over $\mathscr{R}_{e,m}$ whose rows form a minimal generating set of the code $\mathscr{D}.$
Further, two linear codes of length $n$ over $\mathscr{R}_{e,m}$ are said to be permutation equivalent if one code can be obtained from the other by  permuting its coordinate positions.  
By Proposition 3.2 of Norton and  S$\check{a}$l$\check{a}$gean \cite{Norton}, we note that every linear code $\mathscr{D}$ of length $n$ over $\mathscr{R}_{e,m}$ is permutation equivalent to a code with a generator matrix in the standard form
\vspace{-1mm} \begin{equation}\label{e1.1} 
 \vspace{-1mm}\begin{aligned}
\mathcal{G}=\begin{bmatrix}
	\mathtt{I}_{\lambda_1}& \mathtt{B}_{1,1}&\cdots& \mathtt{B}_{1,e-2}&\mathtt{B}_{1,e-1}& \mathtt{B}_{1,e} \\
	0 & u\mathtt{I}_{\lambda_2} &\cdots & u\mathtt{B}_{2,e-2}& u\mathtt{B}_{2,e-1} &u\mathtt{B}_{2,e} \\
	\vdots & \vdots &\vdots &\vdots& \vdots&\vdots\\
	0&0&\cdots&u^{e-2}\mathtt{I}_{\lambda_{e-1}}& u^{e-2}\mathtt{B}_{e-1,e-1}&u^{e-2}\mathtt{B}_{e-1,e}\\
	0&0&\cdots &0&u^{e-1}\mathtt{I}_{\lambda_{e}}&u^{e-1}\mathtt{B}_{e,e}
\end{bmatrix}=\begin{bmatrix}
    \mathtt{W}_1\\u\mathtt{W}_2\\ \vdots\\u^{e-2}\mathtt{W}_{e-1}\\u^{e-1}\mathtt{W}_{e}
\end{bmatrix},
\end{aligned}\end{equation}
where the columns of $\mathcal{G}$ are partitioned into blocks of sizes $\lambda_1$, $\lambda_2$, $\ldots$, $\lambda_{e},$ $ \lambda_{e+1}=n-(\lambda_1+\lambda_2+\cdots+\lambda_e),$  $\mathtt{I}_{\lambda_i}$ denotes the $\lambda_i\times\ \lambda_i$ identity matrix over $\mathscr{R}_{e,m},$ and  $\mathtt{B}_{i,j}\in \mathcal{M}_{\lambda_i\times \lambda_{j+1}}(\mathscr{R}_{e,m})$  is considered modulo $u^{j-i+1},$ \textit{i.e.,}   $\mathtt{B}_{i,j}=\mathtt{B}_{i,j}^{(0)}+u\mathtt{B}_{i,j}^{(1)}+\cdots+u^{j-i}\mathtt{B}_{i,j}^{(j-i)}$ with $\mathtt{B}_{i,j}^{(0)},\mathtt{B}_{i,j}^{(1)},\ldots,\mathtt{B}_{i,j}^{(j-i)}\in \mathcal{M}_{\lambda_i\times \lambda_{j+1}}(\mathcal{T}_{e,m})$  for $1\leq i\leq j\leq e,$ (throughout this paper, the set of all $f\times h$ matrices over a ring $R$ is denoted by $\mathcal{M}_{f\times h}(R)$). A linear code $\mathscr{D}$ of length $n$ over $\mathscr{R}_{e,m}$ is said to be of type $\{\lambda_1,\lambda_2,\ldots,\lambda_{e}\}$ if it is permutation equivalent to a code with a generator matrix in the standard form  \eqref{e1.1}.   Further, a linear code $\mathscr{D}$ of type $\{\lambda_1,\lambda_2,\ldots,\lambda_{e}\}$ over $\mathscr{R}_{e,m}$ contains $(p^m)^{\sum\limits_{i=1}^{e}(e-i+1)\lambda_i}$ codewords.

	Next, the Euclidean bilinear form on $\mathscr{R}_{e,m}^n$ is a map  $\cdot: \mathscr{R}_{e,m}^n \times \mathscr{R}_{e,m}^n \to \mathscr{R}_{e,m},$ defined as  $ \textbf{a}\cdot \textbf{b}=\sum\limits_{i=1}^{n}\text{a}_i\text{b}_i$  for all $\textbf{a}=(\text{a}_1,\text{a}_2,\ldots,\text{a}_n)$, $\textbf{b}=(\text{b}_1,\text{b}_2,\ldots,\text{b}_n) \in \mathscr{R}_{e,m}^n.$
   Note that the map $\cdot$ is  a non-degenerate and symmetric bilinear form  on $\mathscr{R}_{e,m}^n.$ 
   The (Euclidean) dual code $\mathscr{D}^{\perp}$ of a linear code $\mathscr{D}$ of length $n$ over $\mathscr{R}_{e,m}$ is defined as $\mathscr{D}^{\perp}=\{\textbf{v}_1\in \mathscr{R}_{e,m}^n ~:~ \textbf{v}_2 \cdot \textbf{v}_1 =0 ~\text{for all }\textbf{v}_2 \in \mathscr{D}\}.$ Note that the dual code $\mathscr{D}^{\perp}$ is also a linear code of  length $n$ over $\mathscr{R}_{e,m}$. Further, by Theorem 3.10 of Norton and  S$\check{a}$l$\check{a}$gean \cite{Norton}, we see that the dual code $\mathscr{D}^{\perp}$   is of type $\{n-(\lambda_1+\lambda_2+\cdots+\lambda_e),\lambda_{e},\lambda_{e-1},\ldots,\lambda_2\}$ if the code $\mathscr{D}$ is of type $\{\lambda_1,\lambda_2,\ldots, \lambda_{e}\}.$ 
  The code $\mathscr{D}$ is said to be self-orthogonal if $\mathscr{D}\subseteq \mathscr{D}^{\perp},$ while  the code $\mathscr{D}$ is said to be  self-dual if $\mathscr{D}= \mathscr{D}^{\perp}$.    
Now, the following theorem provides a necessary and sufficient condition under which a linear code of length 
$n$ over $\mathscr{R}_{e,m}$ is self-orthogonal.

\vspace{-1mm}\begin{lemma}\cite{Dougherty, Y}\label{l2.2}
For  integers $e \geq 2$ and $n \geq 1,$ let $\lambda_1,\lambda_2,\ldots,\lambda_{e+1}$ be non-negative integers satisfying  $n=\lambda_1+\lambda_2+\cdots+\lambda_{e+1}.$ 
	Let $\mathscr{D}$ be a linear code of type $\{\lambda_1,\lambda_2,\ldots,\lambda_e\}$ and  length $n$ over $\mathscr{R}_{e,m}$ with a  generator matrix $\mathcal{G}$ of the form \eqref{e1.1}. The code $\mathscr{D}$ is   self-orthogonal  if and only if 
	\vspace{-2mm}\begin{equation*}\label{e2.3a}
	\mathtt{W}_i\mathtt{W}_j^t\equiv\mathbf{0} \pmod{u^{e-i-j+2}} \text{ ~~for } 1\leq i\leq j\leq e \text{ and }  i+j\leq e+1.
	\vspace{-1mm}\end{equation*}
		Furthermore,  the code $\mathscr{D}$  is   self-dual  if and only if  it is  self-orthogonal and  $\lambda_i=\lambda_{e-i+2}$ for $1 \leq i \leq e.$ (Throughout this paper, $(\cdot)^t$ denotes the matrix transpose.)
		\end{lemma}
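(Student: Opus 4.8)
The plan is to first reduce to the case where $\mathscr{D}$ itself (and not merely a permutation-equivalent code) has a generator matrix $\mathcal{G}$ of the form \eqref{e1.1}. This is legitimate because the Euclidean bilinear form $\cdot$ is invariant under coordinate permutations, so the properties of being self-orthogonal and self-dual, as well as the type of a code, are all preserved under permutation equivalence. Henceforth assume $\mathscr{D}$ has generator matrix $\mathcal{G}$ with row blocks $u^{i-1}\mathtt{W}_i$ for $1\le i\le e$.

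For the self-orthogonality part, I would use bilinearity of $\cdot$ to observe that $\mathscr{D}\subseteq\mathscr{D}^{\perp}$ if and only if any two rows of $\mathcal{G}$ (a row allowed to be paired with itself) are mutually orthogonal. Writing $\mathtt{W}_i[r]$ for the $r$-th row of $\mathtt{W}_i$, a row in the $i$-th block of $\mathcal{G}$ has the form $u^{i-1}\mathtt{W}_i[r]$, so the Euclidean product of a row in the $i$-th block with a row in the $j$-th block equals $u^{i+j-2}\bigl(\mathtt{W}_i[r]\cdot\mathtt{W}_j[s]\bigr)$, which is the $(r,s)$-entry of $u^{i+j-2}\mathtt{W}_i\mathtt{W}_j^t$. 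Hence $\mathscr{D}$ is self-orthogonal if and only if $u^{i+j-2}\mathtt{W}_i\mathtt{W}_j^t=\mathbf{0}$ for all $1\le i\le j\le e$ (the case $i>j$ is the transpose of the case $j>i$). Since $u$ has nilpotency index $e$, the annihilator of $u^{k}$ in $\mathscr{R}_{e,m}$ is $\langle u^{e-k}\rangle$ for $0\le k\le e$; therefore, when $i+j\ge e+2$ we have $u^{i+j-2}=0$ and the condition is vacuous, while when $i+j\le e+1$ the equation $u^{i+j-2}\mathtt{W}_i\mathtt{W}_j^t=\mathbf{0}$ is equivalent to $\mathtt{W}_i\mathtt{W}_j^t\equiv\mathbf{0}\pmod{u^{e-i-j+2}}$. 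This is exactly the stated system of congruences.

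For the self-dual part, note that self-duality trivially implies self-orthogonality. To deal with the extra condition I would compare cardinalities: by Theorem 3.10 of Norton and S$\check{a}$l$\check{a}$gean \cite{Norton}, $\mathscr{D}^{\perp}$ has type $\{\lambda_{e+1},\lambda_e,\lambda_{e-1},\ldots,\lambda_2\}$ with $\lambda_{e+1}=n-(\lambda_1+\cdots+\lambda_e)$, so by the cardinality formula recorded above, $|\mathscr{D}|=(p^m)^{\sum_{i=1}^{e}(e-i+1)\lambda_i}$ and $|\mathscr{D}^{\perp}|=(p^m)^{e\lambda_{e+1}+\sum_{i=2}^{e}(i-1)\lambda_i}$. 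If $\mathscr{D}$ is self-orthogonal and $\lambda_i=\lambda_{e-i+2}$ for $1\le i\le e$ (which in particular gives $\lambda_1=\lambda_{e+1}$), then a single index substitution in the exponent of $|\mathscr{D}|$ shows $|\mathscr{D}|=|\mathscr{D}^{\perp}|$; combined with $\mathscr{D}\subseteq\mathscr{D}^{\perp}$ and finiteness, this forces $\mathscr{D}=\mathscr{D}^{\perp}$. Conversely, if $\mathscr{D}=\mathscr{D}^{\perp}$, then $\mathscr{D}$ is self-orthogonal and, since the type of a code is an invariant, equating the type $\{\lambda_1,\ldots,\lambda_e\}$ of $\mathscr{D}$ with the type $\{\lambda_{e+1},\lambda_e,\ldots,\lambda_2\}$ of $\mathscr{D}^{\perp}$ yields $\lambda_i=\lambda_{e-i+2}$ for $1\le i\le e$.

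The computations here are essentially bookkeeping; the step that requires some care is the passage from $u^{i+j-2}\mathtt{W}_i\mathtt{W}_j^t=\mathbf{0}$ to the congruence modulo $u^{e-i-j+2}$, together with the observation that only the range $1\le i\le j\le e$ with $i+j\le e+1$ produces non-vacuous constraints. This relies on the ideal structure of $\mathscr{R}_{e,m}$ and the orders $|\langle u^i\rangle|=p^{m(e-i)}$ recalled in the preliminaries, which pin down the annihilator of each power of $u$.
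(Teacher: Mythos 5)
Your argument is correct. Note that the paper does not prove this lemma at all --- it is quoted from \cite{Dougherty, Y} --- so there is no internal proof to compare against; your route (reduce by permutation equivalence, translate self-orthogonality into pairwise orthogonality of the rows $u^{i-1}\mathtt{W}_i[r]$, use that the annihilator of $u^{k}$ is $\langle u^{e-k}\rangle$ to get the congruences and to see that the range $i+j\geq e+2$ is vacuous, and then handle self-duality via the dual type from Norton--S\u{a}l\u{a}gean together with a cardinality comparison) is exactly the standard argument underlying the cited sources, and all the bookkeeping steps check out.
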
  
\vspace{-3mm}\subsection{Quotient rings of $\mathscr{R}_{e,m}$ and Teichm$\ddot{u}$ller sets}
For an integer $\ell$ satisfying  $1 \leq \ell < e,$  one can easily see that the quotient ring $\mathscr{R}_{e,m}/\langle u^{\ell} \rangle $ is also a finite commutative chain ring with maximal ideal  $\langle u+\langle u^{\ell} \rangle \rangle $  of nilpotency index $\ell.$  From now on, we shall denote the chain ring $\mathscr{R}_{e,m}/\langle u^{\ell} \rangle $ by $\mathscr{R}_{\ell,m}$ for our convenience. 
Furthermore,  the element $\xi_{\ell} := \xi +\langle u^{\ell}\rangle\in \mathscr{R}_{\ell,m}$ has  multiplicative order $p^m-1$ and the set  $\mathcal{T}_{\ell,m}=\{0,1, \xi_{\ell}, \xi_{\ell}^2,\ldots, \xi_{\ell}^{p^m-2}\}$ is   the Teichm$\ddot{u}$ller set of $\mathscr{R}_{\ell,m}.$ The canonical epimorphism from $\mathscr{R}_{e,m}$ onto $\mathscr{R}_{\ell,m}$ is defined as $a \mapsto a+\langle u^{\ell}\rangle$ for all $a \in \mathscr{R}_{e,m}.$  In view of this, we shall identify each element $a +\langle u^{\ell}\rangle \in \mathscr{R}_{\ell,m}$ with the element $a \in \mathscr{R}_{e,m},$ performing addition and multiplication in $\mathscr{R}_{\ell,m}$ modulo $u^{\ell}.$ Specifically, we shall identify the element $\xi_{\ell}\in \mathcal{T}_{\ell,m}$ with  the element $\xi \in \mathcal{T}_{e,m}.$ Accordingly, we assume,  throughout this paper, that 
\vspace{-1.5mm}\begin{equation*}\vspace{-1mm}\mathcal{T}_{1,m}=\mathcal{T}_{2,m}=\cdots=\mathcal{T}_{e-1,m}=\mathcal{T}_{e,m}=\{0,1,\xi,{\xi}^2,\ldots, {\xi}^{p^m-2}\}=\mathcal{T}_{m}\text{ (say)}\end{equation*} and \vspace{-1mm}\begin{equation*}\vspace{-1mm}\mathscr{R}_{1,m}=\overline{\mathscr{R}}_{1,m}=\overline{\mathscr{R}}_{2,m}=\cdots=\overline{\mathscr{R}}_{e-1,m}=\overline{\mathscr{R}}_{e,m}=\mathcal{T}_{m} .\end{equation*} 
In view of this,  we see, for $1 \leq \ell \leq e,$ that each element $b \in \mathscr{R}_{\ell,m}$  can be uniquely written  as $b=b_0+ub_1 +u^2b_2 +\cdots+u^{\ell-1}b_{\ell-1},$ where $b_0,b_1,b_2,\ldots,b_{\ell-1} \in \mathcal{T}_{m}.$ Further, each matrix $Z \in \mathcal{M}_{f\times h}(\mathscr{R}_{\ell,m}) $ can be uniquely written  as $Z=Z_0+uZ_1+u^2Z_2+\cdots+u^{\ell-1}Z_{\ell-1},$ where $Z_0,Z_1,\ldots,Z_{\ell-1} \in \mathcal{M}_{f\times h}(\mathcal{T}_{m}).$  
\vspace{-2mm} \subsection{Torsion codes over $\mathcal{T}_m$ and self-orthogonal and self-dual codes over $\mathscr{R}_{e,m}$}

The set $\mathcal{T}_{m}^n,$ consisting of all $n$-tuples over $\mathcal{T}_{m}$, can be viewed  as an $n$-dimensional vector space over the finite field $\mathcal{T}_{m}$ under the component-wise addition induced by $\oplus$ and the component-wise scalar multiplication.   We further define a map $B_{m}:\mathcal{T}_{m}^n \times \mathcal{T}_{m}^n \rightarrow \mathcal{T}_{m}$  as $B_{m}(\textbf{v},\textbf{w})=\pi_0(\textbf{v}\cdot \textbf{w})=\pi_0(\sum\limits_{i=1}^{n}v_iw_i)$ for all $\textbf{v}=(v_1,v_2,\ldots,v_n),$ $\textbf{w}=(w_1,w_2,\ldots,w_n) \in \mathcal{T}_{m}^n,$ (here $\textbf{v}\cdot \textbf{w}=\sum\limits_{i=1}^{n}v_iw_i$ is viewed as an element of $\mathscr{R}_{e,m}$).  Note that the map $B_m$  is a symmetric and non-degenerate  bilinear form on $\mathcal{T}_{m}^n.$  A linear code $\mathcal{C}$ of length $n$ over $\mathcal{T}_{m}$ is defined as a $\mathcal{T}_{m}$-linear subspace of $\mathcal{T}_{m}^n.$ The dual code of $\mathcal{C}$ with respect to $B_m$ is defined as $\mathcal{C}^{\perp_{B_{m}}}=\{\textbf{w} \in \mathcal{T}_{m}^n : B_{m}(\textbf{w},\textbf{c})=0 \text{ for all } \textbf{c} \in \mathcal{C}\}.$ Note that $\mathcal{C}^{\perp_{B_{m}}}$ is also a linear code of length $n$ over $\mathcal{T}_{m}.$
  Further, a linear code $\mathcal{C}$ of length $n$ over $\mathcal{T}_{m}$ is said to be (i) self-orthogonal if  it satisfies $\mathcal{C}\subseteq \mathcal{C}^{\perp_{B_{m}}},$ and (ii) self-dual if  it satisfies $\mathcal{C}= \mathcal{C}^{\perp_{B_{m}}}.$

    For  a linear code $\mathscr{D}$ of length $n$ over $\mathscr{R}_{e,m},$ the $i$-th torsion code  of $\mathscr{D}$  is defined as  
\vspace{-2mm}\begin{equation*} Tor_i(\mathscr{D})=\{\textbf{w} \in \mathcal{T}_m^n: u^{i-1} (\textbf{w}+u\textbf{w}^{\prime})\in \mathscr{D} \text{ for some } \textbf{w}^{\prime}\in \mathscr{R}_{e,m}^n \}\vspace{-2mm}\end{equation*} for $1 \leq i \leq e.$    Note that $Tor_i(\mathscr{D})$ is a linear code of length $n$ over $\mathcal{T}_{m}$ for each $i.$ Additionally, if the code $\mathscr{D}$ has a generator matrix $\mathcal{G}$ of the form \eqref{e1.1}, then the $i$-th torsion code $Tor_i(\mathscr{D})$  has dimension  $\lambda_1+\lambda_2+\cdots + \lambda_i$  over $\mathcal{T}_m$ and  has a generator matrix 
 \vspace{-2mm}\begin{equation}\label{e1.2}\vspace{-1mm}
	\begin{aligned}
	\begin{bmatrix}
	\mathtt{I}_{\lambda_1}& \mathtt{B}_{1,1}^{(0)}& \mathtt{B}_{1,2}^{(0)}&  \cdots& \mathtt{B}_{1,i-1}^{(0)}& \cdots&\mathtt{B}_{1,e-1}^{(0)}&\mathtt{B}_{1,e}^{(0)}\\
	0&\mathtt{I}_{\lambda_2}& \mathtt{B}_{2,2}^{(0)}&\cdots & \mathtt{B}_{2,i-1}^{(0)}& \cdots&\mathtt{B}_{2,e-1}^{(0)}&\mathtt{B}_{2,e}^{(0)}\\
	\vdots&\vdots&\vdots&\vdots& \vdots& \vdots &\vdots&\vdots\\
	0&0&0&\cdots &\mathtt{I}_{\lambda_i}&\cdots &\mathtt{B}_{i,e-1}^{(0)}&\mathtt{B}_{i,e}^{(0)}
	\end{bmatrix}.
	\end{aligned}
	\end{equation}
	 Note that $Tor_i(\mathscr{D})\subseteq Tor_{i+1}(\mathscr{D})$ for $1\leq i \leq e-1$ and  $|\mathscr{D}|=\prod\limits_{i=1}^{e}|Tor_i(\mathscr{D})|.$ Now,  to count all self-orthogonal and self-dual codes over $\mathscr{R}_{e,m}$  of a given type, we need the following lemma.
		\begin{lemma}\cite{Dougherty, Y}\label{l1.2} For  a  self-orthogonal code  $\mathscr{D}$ of length $n$ over $\mathscr{R}_{e,m},$  the following hold.
				\begin{enumerate}[(a)]
			 	\vspace{-1mm}\item 	\label{l1.2a}$Tor_i(\mathscr{D})\subseteq Tor_i(\mathscr{D})^{\perp_{B_m}}~\text{for }1\leq i \leq \floor{\frac{e+1}{2}}.$
				\vspace{-1mm}\item  	\label{l1.2b}$Tor_i(\mathscr{D})\subseteq Tor_{e-i+1}(\mathscr{D})^{\perp_{B_m}}~\text{for } \floor{\frac{e+1}{2}}+1\leq i \leq e.$
					\vspace{-1mm}\end{enumerate}
			Furthermore, if the code $\mathscr{D}$ is self-dual, then  we have $Tor_i(\mathscr{D})=Tor_{e-i+1}(\mathscr{D})^{\perp_{B_m}}$ for $\ceil{\frac{e+1}{2}}\leq i\leq e.$ (Throughout this paper,  $\floor{\cdot }$ and $\ceil{\cdot }$ denote the floor and ceiling functions, respectively.)
	\end{lemma}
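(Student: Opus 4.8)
The plan is to reduce parts (a) and (b) to a single $B_m$-orthogonality statement for pairs of torsion codes, and then to upgrade the relevant inclusion to an equality in the self-dual case by a dimension count. \textbf{Step 1.} I would first prove the claim: for all integers $i,j\geq 1$ with $i+j\leq e+1$, every $\textbf{a}\in Tor_i(\mathscr{D})$ and $\textbf{b}\in Tor_j(\mathscr{D})$ satisfy $B_m(\textbf{a},\textbf{b})=0$, equivalently $Tor_i(\mathscr{D})\subseteq Tor_j(\mathscr{D})^{\perp_{B_m}}$. By the definition of the torsion codes, pick $\textbf{a}',\textbf{b}'\in\mathscr{R}_{e,m}^n$ with $u^{i-1}(\textbf{a}+u\textbf{a}')\in\mathscr{D}$ and $u^{j-1}(\textbf{b}+u\textbf{b}')\in\mathscr{D}$. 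Since $\mathscr{D}$ is self-orthogonal and the Euclidean form on $\mathscr{R}_{e,m}^n$ is $\mathscr{R}_{e,m}$-bilinear, $u^{i+j-2}\bigl((\textbf{a}+u\textbf{a}')\cdot(\textbf{b}+u\textbf{b}')\bigr)=0$ in $\mathscr{R}_{e,m}$. Expanding, $(\textbf{a}+u\textbf{a}')\cdot(\textbf{b}+u\textbf{b}')=\textbf{a}\cdot\textbf{b}+uc$ for some $c\in\mathscr{R}_{e,m}$, so $u^{i+j-2}(\textbf{a}\cdot\textbf{b})\equiv 0\pmod{u^{i+j-1}}$. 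Writing $\textbf{a}\cdot\textbf{b}=c_0+uc_1+\cdots+u^{e-1}c_{e-1}$ with each $c_k\in\mathcal{T}_m$, so that $c_0=\pi_0(\textbf{a}\cdot\textbf{b})=B_m(\textbf{a},\textbf{b})$, and using $i+j-2\leq e-1$, the congruence forces $u^{i+j-2}c_0\equiv 0\pmod{u^{i+j-1}}$; since the only element of $\mathcal{T}_m$ lying in $\langle u\rangle$ is $0$, we get $c_0=0$, which proves the claim.

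\textbf{Step 2.} Part (a) is the case $j=i$ of the claim, whose hypothesis $2i\leq e+1$ holds exactly for $1\leq i\leq\floor{\frac{e+1}{2}}$. Part (b) is the case $j=e-i+1$: then $i+j=e+1$, so the claim applies for every $i$ with $1\leq i\leq e$, and restricting to $\floor{\frac{e+1}{2}}+1\leq i\leq e$ yields $Tor_i(\mathscr{D})\subseteq Tor_{e-i+1}(\mathscr{D})^{\perp_{B_m}}$.

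\textbf{Step 3.} Now assume $\mathscr{D}$ is self-dual and fix $i$ with $\ceil{\frac{e+1}{2}}\leq i\leq e$. If $i\geq\floor{\frac{e+1}{2}}+1$, part (b) already gives $Tor_i(\mathscr{D})\subseteq Tor_{e-i+1}(\mathscr{D})^{\perp_{B_m}}$; the only other value is $i=\floor{\frac{e+1}{2}}=\ceil{\frac{e+1}{2}}$ (occurring when $e$ is odd), in which case $e-i+1=i$ and part (a) yields the same inclusion. For the reverse inclusion I would compare dimensions. By Lemma \ref{l2.2}, self-duality forces $\lambda_k=\lambda_{e-k+2}$ for $1\leq k\leq e$; together with $\lambda_{e+1}=n-(\lambda_1+\cdots+\lambda_e)$, the tuple $(\lambda_1,\ldots,\lambda_{e+1})$ is then palindromic with sum $n$. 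Since $\dim Tor_i(\mathscr{D})=\lambda_1+\cdots+\lambda_i$ and $B_m$ is non-degenerate, $\dim Tor_{e-i+1}(\mathscr{D})^{\perp_{B_m}}=n-(\lambda_1+\cdots+\lambda_{e-i+1})$; and the palindrome identity $\lambda_1+\cdots+\lambda_i=\lambda_{e-i+2}+\cdots+\lambda_{e+1}$ gives $(\lambda_1+\cdots+\lambda_i)+(\lambda_1+\cdots+\lambda_{e-i+1})=\lambda_1+\cdots+\lambda_{e+1}=n$. Hence the two subspaces of $\mathcal{T}_m^n$ have equal dimension, so the inclusion is an equality.

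\textbf{Main obstacle.} The only step with genuine content is Step 1: one must check that self-orthogonality of $\mathscr{D}$ over $\mathscr{R}_{e,m}$ descends to $B_m$-orthogonality of the torsion codes, and that the $u$-adic valuation bookkeeping --- the passage from $u^{i+j-2}(\textbf{a}\cdot\textbf{b})\equiv 0\pmod{u^{i+j-1}}$ to $\pi_0(\textbf{a}\cdot\textbf{b})=0$ --- goes through precisely under the constraint $i+j\leq e+1$. Everything after that is routine: matching the floor/ceiling index ranges and invoking the standard dimension formula for duals with respect to a non-degenerate bilinear form over a field.
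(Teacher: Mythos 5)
Your proof is correct; the paper does not reprove this lemma but cites it from \cite{Dougherty, Y}, and your argument is essentially the standard one behind those citations: lift torsion-code elements to codewords of $\mathscr{D}$, compare $u$-adic valuations of their Euclidean product to conclude $\pi_0(\textbf{a}\cdot\textbf{b})=0$ whenever $i+j\leq e+1$, and in the self-dual case combine the type condition $\lambda_k=\lambda_{e-k+2}$ from Lemma \ref{l2.2} with the dimension formula for duals of the non-degenerate form $B_m$. I see no gaps in the valuation bookkeeping or in the index-range and dimension checks.
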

\begin{remark}\label{Rem1} From the above lemma, it follows that 
	if $\mathscr{D}$ is a  self-orthogonal code  of type $\{\lambda_1,\lambda_2,\ldots,\lambda_e\}$  and length $n$ over $\mathscr{R}_{e,m},$ then $2\lambda_1+2\lambda_2+\cdots+2\lambda_{e-i+1}+\lambda_{e-i+2}+\cdots+\lambda_i \leq n$ for $\ceil{\frac{e+1}{2}}\leq i\leq e,$ which implies  that  $n\geq2 (\lambda_1+\lambda_2+\cdots+\lambda_{\frac{e}{2}})+\lambda_{\frac{e}{2}+1}$ if $e$ is even, while $n\geq 2 (\lambda_1+\lambda_2+\cdots+\lambda_{\frac{e+1}{2}})$ if $e$ is odd.\end{remark}
\vspace{-2mm} \subsection{Limitations of existing constructions} 
    Yadav and Sharma  \cite{Galois,quasi,Y} counted all self-orthogonal and self-dual codes of length $n$ over $\mathscr{R}_{e,m}$ in the following three cases: (i) $p$ is an odd prime (\textit{i.e.,} the ring $\mathscr{R}_{e,m}$ is of odd characteristic), (ii) $p=2$ and $\mathfrak{s}=1$ (\textit{i.e.,} $\mathscr{R}_{e,m}$ is the quasi-Galois ring $\mathbb{F}_{2^m}[u]/\langle u^e \rangle$ of even characteristic), and (iii) $p=2$ and $\kappa=1$ (\textit{i.e.,} $\mathscr{R}_{e,m}$ is the Galois ring $GR(2^{e},m)$ of  characteristic $2^{e}$ and cardinality $2^{em}$). 
In view of Remark \ref{Rem1}, let $\lambda_1, \lambda_2, \ldots,  \lambda_{e}$ be non-negative integers satisfying $2\lambda_1+2\lambda_2+\cdots+2\lambda_{e-i+1}+\lambda_{e-i+2}+\cdots+\lambda_i \leq n$ for $\ceil{\frac{e+1}{2}}\leq i\leq e.$ When either  $p=2$ and $\mathfrak{s}=1$ or $p$ is odd, Yadav and Sharma   \cite[Sec. 4]{quasi} and  \cite[Sec. 5]{Y}   provided recursive methods for constructing self-orthogonal and self-dual codes of type $\{\lambda_1,\lambda_1,\ldots,\lambda_e\}$ and length $n$ over $\mathscr{R}_{e,m}$ from a self-orthogonal code of length $n$ and dimension $\lambda_1+\lambda_2+\cdots+\lambda_{\ceil{\frac{e}{2}}}$ over $\mathcal{T}_{m}.$   On the other hand, when $p=2$ and $\kappa=1,$ Yadav and Sharma  \cite[Sec. 4]{Galois} provided a recursive method for constructing self-orthogonal and self-dual codes of type $\{\lambda_1,\lambda_1,\ldots,\lambda_e\}$ and length $n$ over $\mathscr{R}_{e,m}$ from a doubly even code of length $n$ and dimension $\lambda_1+\lambda_2+\cdots+\lambda_{\ceil{\frac{e}{2}}}$ over $\mathcal{T}_{m}.$   It is worth noting that when $p=2$ and either $\mathfrak{s}=1$ or $\kappa=1,$ the recursive method provided in \cite[Sec. 5]{Y} does not always lift a self-orthogonal code over $\mathcal{T}_m$ to a self-orthogonal code over $\mathscr{R}_{e,m}$ (see   \cite[Sec. 1]{Galois} and \cite[Sec. 3]{quasi}). When $p=2$ and both $\mathfrak{s}, \kappa\geq 2,$ the following example illustrates that the recursive methods provided in  \cite[Sec. 4]{Galois}   and \cite[Sec. 4]{quasi} do not always lift a  self-orthogonal code over $\mathcal{T}_m$ and a doubly even code over $\mathcal{T}_m$ (see \cite[Def. 2.1]{Galois})  to a self-orthogonal code over $\mathscr{R}_{e,m},$ respectively.

    \begin{example}\label{example1} Let  $\mathscr{R}_{8,2}=\frac{GR(8,2)[y]}{\langle y^3+2,4y^2\rangle}$  be a finite commutative chain ring with  maximal ideal $\langle u \rangle $ of  nilpotency index $8,$   where $u:= y+\langle y^3+2, 4y^2 \rangle \in \mathscr{R}_{8,2}.$ Note that  $u^3+u^6=2$ in $\mathscr{R}_{8,2}.$
 Let $n=4,$ $\lambda_1=1$  and $\lambda_2=\lambda_3=\lambda_4=\lambda_5=\lambda_6=\lambda_7=\lambda_8=0.$ Let $\mathcal{C}_0$ be a self-orthogonal  code of length $4$ and dimension $\lambda_1+\lambda_2+\lambda_3+\lambda_4=1$ over $\mathcal{T}_2$ with a generator matrix $\begin{bmatrix}1& 1&1&1\end{bmatrix}.$   Note that $\pi_{1}(\textbf{d}\cdot \textbf{d})=\pi_{3}(\textbf{d}\cdot \textbf{d})=0$ for all $\textbf{d}\in \mathcal{C}_0,$ where each $\textbf{d}\cdot \textbf{d}$ is viewed as an element of $\mathscr{R}_{8,2}.$
 \begin{description}\item[I.]  ~~~Here, we will show that the code  $\mathcal{C}_0$ can not be lifted to a self-orthogonal code of type $\{1,0,0,0,0,0,0,0\}$ and length $4$ over  $\mathscr{R}_{8,2}$ using the recursive construction method provided in Section 4  of Yadav and Sharma \cite{quasi}. Towards this, let us consider a linear code $\mathcal{C}_2$ of type $\{1,0\} $ and length $4$ over $\mathscr{R}_{2,2}$ with a generator matrix $\begin{bmatrix}
    1&1&1&1
\end{bmatrix}+u\begin{bmatrix}
    0&a_1&b_1&c_1
\end{bmatrix},  $  where $a_1,b_1,c_1\in \mathcal{T}_2.$  Using the recursive construction method provided in Section 4 of \cite{quasi}, we see that the code $\mathcal{C}_0$ can be lifted to a self-orthogonal code of type $\{1,0,0,0,0,0,0,0\}$ over $\mathscr{R}_{8,2}$ if and only if $\mathcal{C}_2$ is a   self-orthogonal code   over $\mathscr{R}_{2,2}$ satisfying property $(\ast)$ (see \cite[Sec. 4]{quasi}), which holds if and only if   $a_1^2+b_1^2+c_1^2=0.$ 
In particular, for $a_1=b_1=c_1=0,$ the code $\mathcal{C}_2$ of length $4$ over $\mathscr{R}_{2,2}$ with a generator matrix $\begin{bmatrix}1& 1&1&1\end{bmatrix}$ is a self-orthogonal code satisfying property $(\ast).$ Corresponding to this particular choice of the code $\mathcal{C}_2$,  consider a linear code $\mathcal{C}_4$ of type $\{1,0,0,0\} $ and length $4$ over $\mathscr{R}_{4,2}$ with a generator matrix \vspace{-2mm}\begin{equation*}\vspace{-1mm}
    \begin{bmatrix}
    1&1&1&1
\end{bmatrix}+u^2\begin{bmatrix}
0&a_2&b_2&c_2\end{bmatrix}+u^3\begin{bmatrix}
    0&a_3&b_3&c_3
\end{bmatrix},  \end{equation*} where $a_2,b_2,c_2,a_3,b_3,c_3\in \mathcal{T}_2.$ Using again the recursive construction method provided in Section 4 of \cite{quasi},  we see that the code $\mathcal{C}_4$ can be lifted to a self-orthogonal code of type $\{1,0,0,0,0,0,0,0\}$ over $\mathscr{R}_{8,2}$ if and only if $\mathcal{C}_4$ is a   self-orthogonal code   over $\mathscr{R}_{4,2}$ satisfying property $(\ast),$ which holds if and only if  $a_2^2+b_2^2+c_2^2=0$ and $a_3^2+b_3^2+c_3^2=0.$ In particular, for $a_2=b_2=c_2=a_3=b_3=c_3=0,$ we see that the code $\mathcal{C}_4$ of length $4$ over $\mathscr{R}_{4,2}$ with a generator matrix $\begin{bmatrix}1& 1&1&1\end{bmatrix}$ is a self-orthogonal code satisfying property $(\ast).$
Further, corresponding to this particular choice of the code $\mathcal{C}_4$,  consider a linear code $\mathcal{C}_6$ of type $\{1,0,0,0,0,0\} $ and length $4$ over $\mathscr{R}_{6,2}$ with a generator matrix \vspace{-1mm}\begin{equation*}
    \vspace{-1mm}\begin{bmatrix}
    1&1&1&1
\end{bmatrix}+u^4\begin{bmatrix}
0&a_4&b_4&c_4\end{bmatrix}+u^5\begin{bmatrix}
    0&a_5&b_5&c_5
\end{bmatrix}, \end{equation*} where $a_4,b_4,c_4,a_5,b_5,c_5\in \mathcal{T}_2.$ By the recursive construction method provided in Section 4 of \cite{quasi}, we see that the code $\mathcal{C}_6$ can be lifted to a self-orthogonal code of type $\{1,0,0,0,0,0,0,0\}$ over $\mathscr{R}_{8,2}$ if and only if $\mathcal{C}_6$ is a   self-orthogonal code   over $\mathscr{R}_{6,2}$ satisfying property $(\ast),$ which holds for all  choices of $a_4,b_4,c_4,a_5,b_5,c_5\in \mathcal{T}_2.$ In particular, for $a_4=b_4=c_4=a_5=b_5=c_5=0,$ we see that the  code $\mathcal{C}_6$ of length $4$ over  $\mathscr{R}_{6,2}$ with a generator matrix $\begin{bmatrix}
    1&1&1&1
\end{bmatrix}$ is a self-orthogonal code   over $\mathscr{R}_{6,2}$ satisfying property $(\ast).$ Finally,  consider a linear code $\mathcal{C}_8$ of type $\{1,0,0,0,0,0,0,0\} $ and length $4$ over $\mathscr{R}_{8,2}$ with a generator matrix \begin{equation*}
    \begin{bmatrix}
    1&1&1&1
\end{bmatrix}+u^6\begin{bmatrix}
0&a_6&b_6&c_6\end{bmatrix}+u^7\begin{bmatrix}
    0&a_7&b_7&c_7
\end{bmatrix},  \end{equation*}  where $a_6,b_6,c_6,a_7,b_7,c_7\in \mathcal{T}_2.$ Note that the code $\mathcal{C}_8$ is not self-orthogonal for any choice of $a_6,b_6,c_6,a_7,b_7,c_7\in \mathcal{T}_2.$ This illustrates that the recursive construction method provided in Section 4  of Yadav and Sharma \cite{quasi} does not always lift a self-orthogonal code over $\mathcal{T}_m$ to a self-orthogonal code over $\mathscr{R}_{e,m}.$  

\item[II.] ~~We will next show that the code $\mathcal{C}_0$ can not be lifted to a self-orthogonal code of type $\{1,0,0,0,0,0,0,0\}$ and length $4$ over  $\mathscr{R}_{8,2}$ using the recursive construction method provided in Section 4  of Yadav and Sharma \cite{Galois}. For this, we first recall \cite[Def. 2.1]{Galois} that  a self-orthogonal code $\mathscr{D}$ over $\mathcal{T}_2$ is said to be doubly even if it satisfies $\pi_1(\textbf{d}\cdot \textbf{d})=0$ for all $\textbf{d}\in \mathscr{D},$ where  each $\textbf{d}\cdot \textbf{d}$ is viewed as an element of $\mathscr{R}_{8,2}$. 
One can easily see that $\pi_1(\textbf{v}\cdot \textbf{v})=0$ for all $\textbf{v}\in \mathcal{C}_0,$ so the code $\mathcal{C}_0$ is doubly even.   Now, let us consider a linear code $\mathcal{D}_2$ of type $\{1,0\} $ and length $4$ over $\mathscr{R}_{2,2}$ with a generator matrix $\begin{bmatrix}
    1&1&1&1
\end{bmatrix}+u\begin{bmatrix}
    0&f_1&g_1&h_1
\end{bmatrix},  $  where $f_1,g_1,h_1\in \mathcal{T}_2.$ By the recursive construction method provided in Section 4 of \cite{Galois}, we see that the code $\mathcal{D}_2$ can be lifted to a self-orthogonal code of type $\{1,0,0,0,0,0,0,0\}$ over $\mathscr{R}_{8,2}$ if and only if $\mathcal{D}_2$ is a  $\lambda_1$-doubly even self-orthogonal code   over $\mathscr{R}_{2,2}$ (see \cite[Def. 2.1]{Galois}), which  holds  if and only if $f_1^2+g_1^2+h_1^2\equiv 0\pmod u.$ In particular, for $f_1=g_1=h_1=0,$ we see that the code $\mathcal{D}_2$  of length $4$ over $\mathscr{R}_{2,2}$ with a generator matrix $\begin{bmatrix}
    1&1&1&1
\end{bmatrix}$ is a $\lambda_1$-doubly even  self-orthogonal code over $\mathscr{R}_{2,2}.$ Corresponding to this particular choice of the code $\mathcal{D}_2$, let us consider a linear code $\mathcal{D}_4$ of type $\{1,0,0,0\}$ and length $4$ over $\mathscr{R}_{4,2}$ with a generator matrix 
\vspace{-1mm}\begin{equation*}\begin{bmatrix}
    1&1&1&1
\end{bmatrix}+u^2\begin{bmatrix}
0&f_2&g_2&h_2\end{bmatrix}+u^3\begin{bmatrix}
    0&f_3&g_3&h_3
\end{bmatrix},  \end{equation*}  where $f_2,g_2,h_2,f_3,g_3,h_3\in \mathcal{T}_2.$ By the recursive construction method provided in Section 4 of \cite{Galois},  we see that the code $\mathcal{D}_4$ can be lifted to a self-orthogonal code of type $\{1,0,0,0,0,0,0,0\}$ over $\mathscr{R}_{8,2}$ if and only if $\mathcal{C}_4$ is a   $\lambda_1$-doubly even  self-orthogonal code   over $\mathscr{R}_{4,2},$  which holds if and only if  $f_2^2+g_2^2+h_2^2\equiv 0\pmod u.$  In particular, for $f_2=g_2=h_2=f_3=g_3=h_3=0,$ we see that the code $\mathcal{D}_4$ of length $4$ over $\mathscr{R}_{4,2}$ with a generator matrix $\begin{bmatrix}1& 1&1&1\end{bmatrix}$ is a $\lambda_1$-doubly even self-orthogonal code. Now, corresponding to this particular choice of the code $\mathcal{D}_4$,  consider a linear code $\mathcal{D}_6$ of type $\{1,0,0,0,0,0\} $ and length $4$ over $\mathscr{R}_{6,2}$ with a generator matrix \vspace{-1mm}\begin{equation*}
    \begin{bmatrix}
    1&1&1&1
\end{bmatrix}+u^4\begin{bmatrix}
0&f_4&g_4&h_4\end{bmatrix}+u^5\begin{bmatrix}
    0&f_5&g_5&h_5
\end{bmatrix}, \end{equation*} where $f_4,g_4,h_4,f_5,g_5,h_5\in \mathcal{T}_2.$ By the recursive construction method provided in Section 4 of \cite{Galois}, we see that the code $\mathcal{D}_6$ can be lifted to a self-orthogonal code of type $\{1,0,0,0,0,0,0,0\}$ over $\mathscr{R}_{8,2}$ if and only if $\mathcal{D}_6$ is a   $\lambda_1$-doubly even self-orthogonal code   over $\mathscr{R}_{6,2}.$ 
One can easily observe that $\mathcal{D}_6$ is not a $\lambda_1$-doubly even code for any choice of $f_4,g_4,h_4,f_5,g_5,h_5\in \mathcal{T}_2.$   This illustrates  that the recursive construction method provided in Section 4  of Yadav and Sharma \cite{Galois} does not always lift a doubly even code over $\mathcal{T}_m$ to a self-orthogonal code over $\mathscr{R}_{e,m}.$  
\end{description}
\end{example}
\vspace{-2mm}\subsection{Notations and preliminary setup}
From now on, we assume, throughout this paper, that  $p=2$  and the integer $\kappa\geq 3$ is odd, \textit{i.e.,}  the element $2 \in \langle u^{\kappa}\rangle \setminus \langle u^{\kappa+1}\rangle$ with $\kappa \geq 3$ being odd.  Accordingly, we assume, from now on, that $2 \equiv u^{\kappa}(\eta_0+u\eta_1+u^2\eta_2+\cdots+u^{e-1-\kappa}\eta_{e-1-\kappa})\pmod{u^{e}},$ where $\eta_0\in \mathcal{T}_m\setminus \{0\}$ and $\eta_1,\eta_2,\ldots,\eta_{e-1-\kappa}\in \mathcal{T}_{m}.$ In the following section, we will provide a  recursive method for lifting a self-orthogonal code over $\mathcal{T}_m$ to a self-orthogonal code over $\mathscr{R}_{e,m}$ when $\kappa\geq 3$ is odd;  the case when $\kappa$ is even will be addressed in a subsequent work \cite{YSub}. For notational convenience,  we define $\mathrm{f}_w=\floor[\big]{\frac{w}{2}}$ and  $\mathrm{c}_w=\ceil[\big]{\frac{w}{2}}$ for any positive integer $w,$  $\theta_e= 0$ if $e$ is even and $\theta_e =1$ if $e$ is odd, $s=\mathrm{f}_e=\floor[\big]{\frac{e}{2}},$ and  $\kappa_1=\mathrm{f}_\kappa=\floor[\big]{\frac{\kappa}{2}}=\frac{\kappa-1}{2}.$   
Unless otherwise stated, we will  adhere to the notations introduced in Section \ref{prelim}.
 \section{A recursive method for constructing self-orthogonal and self-dual codes over $\mathscr{R}_{e,m}$ from  a chain of self-orthogonal codes over $\mathcal{T}_m$}\label{construction}
  Throughout this section, let $n$ be a positive integer, and   let $\lambda_1, \lambda_2, \ldots,  \lambda_{e+1}$ be non-negative integers satisfying $n=\lambda_1+\lambda_2+\cdots+\lambda_{e+1}$ and  $2\lambda_1+2\lambda_2+\cdots+2\lambda_{e-i+1}+\lambda_{e-i+2}+\cdots+\lambda_i \leq n$ for $s+1\leq i\leq e.$       Let us define $\Lambda_0=0$  and 
$\Lambda_i=\lambda_1+\lambda_2+\cdots+\lambda_i $   for $1\leq i \leq e+1 .$ Note that $\Lambda_{e+1}=n.$ Further, for $ 1\leq \ell \leq e$ and  $0 \leq i \leq \ell-1,$ let us  define a mapping $\pi_i: \mathscr{R}_{\ell,m} \rightarrow \mathcal{T}_{m}$ as $\pi_i(a)=a_i$ for all  $a=a_0+ua_1+\cdots+u^{\ell-1}a_{\ell-1}\in \mathscr{R}_{\ell,m} ,$ where $a_0,a_1,\ldots,a_{\ell-1} \in \mathcal{T}_{m}.$ We further recall, from Section \ref{prelim}, that $(\mathcal{T}_m,\oplus,\cdot)$ is the finite field of order $2^m.$  Now, if  $\mathcal{C}$ is a self-orthogonal code of length $n$ over $\mathcal{T}_m,$ then for all $\textbf{v}\in \mathcal{C},$ one can easily see that $\pi_{0}(\textbf{v}\cdot \textbf{v})=0$ and $\pi_{2j+1}(\textbf{v}\cdot \textbf{v})=0$ for $ 0\leq j \leq \kappa_1-1,$ where $\kappa_1=\mathrm{f}_\kappa=\frac{\kappa-1}{2}$ and each $\textbf{v}\cdot \textbf{v}$ is viewed as an element of $\mathscr{R}_{e,m}.$ We next define a special class of self-orthogonal codes over $\mathcal{T}_m.$ 
\begin{definition}\label{d3.1}
  A  self-orthogonal  code $\mathcal{D}$ of length $n$  over $\mathcal{T}_m$ is said to be doubly even if it satisfies  $\pi_{\kappa}(\textbf{b}\cdot \textbf{b}) = 0 $  for all $\textbf{b} \in \mathcal{D},$ where each $\textbf{b}\cdot \textbf{b}$ is viewed as an element of $\mathscr{R}_{e,m}.$ 
  Moreover, a self-orthogonal code $\mathcal{D}$ of length $n$ over $\mathcal{T}_m$ is said to be $\mathfrak{d}$-doubly even if it has a $\mathfrak{d}$-dimensional doubly even subcode.
\end{definition}

For $e \geq 3$ and $2 \leq \ell \leq e,$ we further define a special class of linear codes of length $n$ over $\mathscr{R}_{\ell,m}$  as follows:

\vspace{2mm} \noindent \textbf{Linear codes over $\mathscr{R}_{\ell,m}$ satisfying property $(\mathfrak{P})$:} \textit{Let $e, \ell $ be integers satisfying $e \geq 3 ,$ $2 \leq \ell\leq  e$ and $\ell \equiv e\pmod2.$ Let us define $\gamma_{\ell}=s-\mathrm{f}_\ell=\floor[\big]{\frac{e}{2}}-\floor[\big]{\frac{\ell}{2}}.$   Let $\mathscr{D}_{\ell}$ be a linear code of type $\{\Lambda_{\gamma_{\ell}+1},\lambda_{\gamma_{\ell}+2}, \ldots, \lambda_{\gamma_{\ell}+\ell}\}$ and length $n$ over $\mathscr{R}_{\ell,m}$ with a generator matrix 
\vspace{-3mm}\small{\begin{equation*}\label{e3.0}
\mathcal{G}_{\ell}=\begin{bmatrix}
\mathtt{W}_1\\\mathtt{W}_2 \\ \vdots\\ \mathtt{W}_{\gamma_{\ell}+1}\vspace{0.5mm}\\ u\mathtt{W}_{\gamma_{\ell}+2}\\ \vdots\\u^{\ell-1}\mathtt{W}_{\gamma_{\ell}+\ell}
\end{bmatrix},\end{equation*}} \normalsize
where $\mathtt{W}_i\in \mathcal{M}_{\lambda_i\times n}(\mathscr{R}_{\ell,m})$ for $1 \leq i \leq \gamma_{\ell}+1$ and  the matrix $\mathtt{W}_{\gamma_{\ell}+j}\in \mathcal{M}_{\lambda_{\gamma_{\ell}+j}\times n}(\mathscr{R}_{\ell,m})$ is to be considered modulo $u^{\ell-j+1}$ for  $2 \leq j \leq \ell.$ 
\begin{enumerate}
    \item[(i)] When $\ell \leq \min\{\kappa-1,e-\kappa\} ,$ we say that the code $\mathscr{D}_{\ell}$ satisfies property $(\mathfrak{P})$ if 
   \hspace{-3mm} \begin{eqnarray*}
    \displaystyle \mathtt{Diag}\left([\mathtt{W}]_{\gamma_{\ell}+1-\mathrm{f}_i}[\mathtt{W}]_{\gamma_{\ell}+1-\mathrm{f}_i}^t\right) &\equiv & \mathbf{0} \pmod{u^{\ell+i}} \text{ ~~for ~} i\in \{2,4,6,\ldots,\ell-\theta_e\}~\text{ and}\\ 
   \hspace{-2mm} \displaystyle  \pi_{\ell+j}\big( \mathtt{Diag}\big([\mathtt{W}]_{\gamma_{\ell}+1-\mathrm{f}_{j+2}}[\mathtt{W}]_{\gamma_{\ell}+1-\mathrm{f}_{j+2}}^t\big) \big)&=& \mathbf{0} \text{ ~for~~ } j \in \{ \ell-1,\ell+1,\ldots, \kappa -2+\theta_e  \}.
    \end{eqnarray*}
     \item[(ii)] When $e-\kappa< \ell \leq \kappa-\mathrm{f}_{2\kappa-e}+1 ,$  we say that the code $\mathscr{D}_{\ell}$ satisfies property $(\mathfrak{P})$ if 
     \begin{eqnarray*}
   \displaystyle \mathtt{Diag}\left([\mathtt{W}]_{\gamma_{\ell}+1-\mathrm{f}_i}[\mathtt{W}]_{\gamma_{\ell}+1-\mathrm{f}_i}^t\right) &\equiv &\mathbf{0} \pmod{u^{\ell+i}} \text{ ~~~for ~} i\in \{2,4,6,\ldots,\ell-\theta_e\}\text{ and}\\ 
     \displaystyle  \pi_{\ell+j}\left( \mathtt{Diag}\big([\mathtt{W}]_{\gamma_{\ell}+1-\mathrm{f}_{j+2}}[\mathtt{W}]_{\gamma_{\ell}+1-\mathrm{f}_{j+2}}^t\big)\right)&=& \mathbf{0} \text{ ~ for~ } j \in \{ \ell-1,\ell+1,\ell+3,\ldots, e-\ell-1-\theta_e\}.
    \end{eqnarray*}
     \item[(iii)] When $\kappa \leq \ell \leq e-\kappa,$ we say that the code $\mathscr{D}_{\ell}$ satisfies property $(\mathfrak{P})$ if 
    \begin{eqnarray*}
   \displaystyle  \mathtt{Diag}\left([\mathtt{W}]_{\gamma_{\ell}+1-\mathrm{c}_i}[\mathtt{W}]_{\gamma_{\ell}+1-\mathrm{c}_i}^t\right) &\equiv &\mathbf{0} \pmod{u^{\ell+i}} \text{ ~~for ~} i\in \{2,4,6,\ldots,\kappa-1\} \cup \{\kappa\}.
    \end{eqnarray*}
    \item[(iv)] When $\ell > \max\{e-\kappa,\kappa-\mathrm{f}_{2\kappa-e}+1\},$  we say that the code $\mathscr{D}_{\ell}$ satisfies property $(\mathfrak{P})$ if 
    \begin{eqnarray*}
   \displaystyle  \mathtt{Diag}\left([\mathtt{W}]_{\gamma_{\ell}+1-\mathrm{f}_i}[\mathtt{W}]_{\gamma_{\ell}+1-\mathrm{f}_i}^t\right) &\equiv& \mathbf{0} \pmod{u^{\ell+i}} \text{ ~~for ~} i\in \{2,4,6,\ldots,e-\ell\}.
    \end{eqnarray*}(Here, all matrices are considered  over $\mathscr{R}_{e,m}$ and all matrix multiplications are performed over $\mathscr{R}_{e,m}.$ Throughout this paper,  for any square matrix $X,$  $\mathtt{Diag}(X)$ denotes the diagonal matrix whose principal diagonal is the same as that of the matrix $X.$ Also, for any positive integer $v,$ let $[Y]_v$ denote the column block matrix with $z$-th block  $Y_z$ for $1 \leq z \leq v.$)
    \end{enumerate}}
By carrying out  direct computations, one can easily observe that linear codes over $\mathscr{R}_{\ell,m}$ satisfying property $(\mathfrak{P})$ coincide with  linear codes over $\mathscr{R}_{\ell,m}$ satisfying property $(\ast)$ when $\mathfrak{s}=1$ 
    \cite[Sec. 4]{quasi}.
    Next, for  positive integers  $v$ and $w \leq e,$ let  $(\mathtt{H})_{v,w} $ denote the block matrix whose  $(i,j)$-th block is  the matrix $\mathtt{H}_{i,j}\in \mathcal{M}_{\lambda_{i}\times \lambda_{j+1}}(\mathcal{T}_{m})$   for $1 \leq i \leq v$ and $w \leq j \leq e.$ Further, if $\mathscr{D}$ is a  self-orthogonal code of type $\{\lambda_1,\lambda_2,\ldots, \lambda_e\}$ and length $n$ over $\mathscr{R}_{e,m}$,   then we see, by Lemma \ref{l1.2},     that the torsion code $Tor_{s+\theta_e}(\mathscr{D})$ is also a  self-orthogonal code of length $n$  and  dimension $\Lambda_{s+\theta_e}$ over $\mathcal{T}_{m}(=\overline{\mathscr{R}}_{e,m}).$ We now make the following observation.
\vspace{-2mm}\begin{remark}\label{r3.1}\cite{Galois,quasi}
Every  self-orthogonal code of length $n$ and dimension $\Lambda_{s+\theta_e}$ over  $\mathcal{T}_{m}$ is permutation equivalent to a self-orthogonal code with a generator matrix 
\vspace{-1.5mm}\begin{equation*}
G =\begin{bmatrix}
\mathtt{I}_{\lambda_1}&\mathtt{B}_{1,1}&\mathtt{B}_{1,2}&\cdots&\mathtt{B}_{1,s+\theta_e-1}&\mathtt{B}_{1,s+\theta_e}&\cdots &\mathtt{B}_{1,e}\\
 \mathbf{0} &\mathtt{I}_{\lambda_2} &\mathtt{B}_{2,2} &\cdots &\mathtt{B}_{2,s+\theta_e-1}&\mathtt{B}_{2,s+\theta_e}&\cdots &\mathtt{B}_{2,e}\\
	\vdots& \vdots &\vdots &\vdots &\vdots &\vdots& \vdots&\vdots\\
 \mathbf{0}& \mathbf{0}& \mathbf{0}&\cdots&  \mathtt{I}_{\lambda_{s+\theta_e}}&\mathtt{B}_{s+\theta_e,s+\theta_e}& \cdots & \mathtt{B}_{s+\theta_e,e}
\end{bmatrix}, \vspace{-1mm} \end{equation*} 
where  the columns  of $G$ are partitioned into blocks of sizes $\lambda_1,\lambda_2,\ldots, \lambda_{e+1}, $ $\mathtt{I}_{\lambda_i}$ denotes the $\lambda_i\times \lambda_i$ identity matrix over $\mathcal{T}_{m},$   $\mathtt{B}_{i,j} \in \mathcal{M}_{\lambda_i\times \lambda_{j+1}}(\mathcal{T}_{m})$  for $1 \leq i \leq s+\theta_e$ and $i \leq j \leq e,$  and the matrices 
$(\mathtt{B})_ { s,s+\theta_e+1} $, $(\mathtt{B})_{ s-1,s+\theta_e+2},\ldots, (\mathtt{B})_{2,e-1},$  $\mathtt{B}_{1,e}$ are of full row ranks. \end{remark}
Now, let $\mathscr{D}_e$ be a linear code of type $\{\lambda_1,\lambda_2,\ldots,\lambda_e\}$ and length $n$ over $\mathscr{R}_{e,m}$ with a generator matrix \vspace{-2mm}\small{\begin{equation}\label{Ge}
    \mathcal{G}_{e}=\begin{bmatrix}
            \mathtt{W}^{(e)}_1\\  u\mathtt{W}^{(e)}_2\\
              u^2\mathtt{W}^{(e)}_3\\
              \vdots\\u^{e-1}  \mathtt{W}^{(e)}_e\\
        \end{bmatrix},\vspace{-1mm}
    \end{equation}} \normalsize  where 
    the matrix  $\mathtt{W}^{(e)}_j \in \mathcal{M}_{\lambda_j\times n }(\mathscr{R}_{e,m})$ is of the form  $\mathtt{W}^{(e)}_j=\mathtt{W}^{(0)}_j+u\mathtt{V}^{(1)}_j+u^2\mathtt{V}^{(2)}_j+\cdots+u^{e-j}\mathtt{V}^{(e-j)}_j$ with $\mathtt{W}^{(0)}_j,\mathtt{V}^{(1)}_j,\mathtt{V}^{(2)}_j,\ldots, \mathtt{V}^{(e-j)}_j\in \mathcal{M}_{\lambda_j\times n}(\mathcal{T}_m)$
     for $1 \leq j \leq e.$ Now, for each integer $\ell$ satisfying $2 \leq \ell \leq e-2$ and  $\ell\equiv e\pmod 2,$    let us define $\gamma_{\ell}=s-\mathrm{f}_{\ell},$ and let  us consider a linear code $\mathscr{D}_{\ell}$ of type $\{\Lambda_{\gamma_{\ell}+1},\lambda_{\gamma_{\ell}+2}, \ldots, \lambda_{\gamma_{\ell}+\ell}\}$ and length $n$ over $\mathscr{R}_{\ell,m}$    with a generator matrix 
\vspace{-2mm}\small{\begin{equation}\label{Gl}
\mathcal{G}_{\ell}=\begin{bmatrix}
\mathtt{W}_1^{(\ell)}\\\mathtt{W}_2^{(\ell)} \\ \vdots\\ \mathtt{W}_{\gamma_{\ell}+1}^{(\ell)}\vspace{0.5mm}\\ u\mathtt{W}_{\gamma_{\ell}+2}^{(\ell)}\\ \vdots\\u^{\ell-1}\mathtt{W}_{\gamma_{\ell} +\ell}^{(\ell)}
\end{bmatrix},\end{equation} } \normalsize
where   $\mathtt{W}_h^{(\ell)} =\mathtt{W}^{(0)}_h+u\mathtt{V}_h^{(1)}+\cdots+ u^{\ell-1}\mathtt{V}_h^{(\ell-1)}$ \big(\textit{i.e.,} $\mathtt{W}_h^{(\ell)}\equiv\mathtt{W}_h^{(e)} \pmod{ u^{\ell}}$\big) for $1 \leq h\leq \gamma_{\ell}+1$ and  $\mathtt{W}_{\gamma_{\ell}+j}^{(\ell)}=\mathtt{W}^{(0)}_{\gamma_{\ell}+j}+u\mathtt{V}_{\gamma_{\ell}+j}^{(1)}+\cdots+ u^{\ell-j}\mathtt{V}_{\gamma_{\ell}+j}^{(\ell-j)}$ \big(\textit{i.e.,} $\mathtt{W}_{\gamma_{\ell}+j}^{(\ell)}\equiv \mathtt{W}_{\gamma_{\ell}+j}^{(e)} \pmod{ u^{\ell-j+1}}$\big)  for   $2 \leq j \leq \ell.$ In the following lemma, we establish a necessary condition under which a self-orthogonal (\textit{resp.} self-dual) code  over $\mathscr{R}_{\ell,m}$  can be lifted to a self-orthogonal (\textit{resp.} self-dual) code over $\mathscr{R}_{e,m},$ where the integer $\ell$ satisfies $2 \leq \ell \leq e-2$ and $\ell \equiv e \pmod 2.$

\begin{lemma}\label{l3.2} Let $\mathscr{D}_{e}$ be a linear code  of type $\{\lambda_1,\lambda_2,\ldots,\lambda_e\}$ and length $n$ over $\mathscr{R}_{e,m}$ with a generator matrix $\mathcal{G}_{e}$ (as defined in \eqref{Ge}). For   each integer $\ell$ satisfying $2 \leq \ell \leq e-2$ and $\ell\equiv e \pmod 2,$ let $\mathscr{D}_{\ell}$ be the corresponding linear code of type  $\{\Lambda_{\gamma_{\ell}+1},\lambda_{\gamma_{\ell}+2}, \ldots, \lambda_{\gamma_{\ell}+\ell}\}$ and length $n$ over $\mathscr{R}_{\ell,m}$ with a generator matrix $\mathcal{G}_{\ell}$ (as defined in \eqref{Gl}), where  $\gamma_{\ell}=s-\mathrm{f}_{\ell}.$ If the code $\mathscr{D}_{e}$ is self-orthogonal (\textit{resp.} self-dual), then $\mathscr{D}_{\ell}$ is also a self-orthogonal  (\textit{resp.} self-dual) code over $\mathscr{R}_{\ell,m}$ satisfying   property $(\mathfrak{P}).$ 
    \end{lemma}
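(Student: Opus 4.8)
The plan is to work directly with the self-orthogonality conditions from Lemma~\ref{l2.2} applied to the generator matrix $\mathcal{G}_e$ of $\mathscr{D}_e$, and then translate them into the corresponding conditions on $\mathcal{G}_\ell$. Since $\mathscr{D}_e$ is self-orthogonal, Lemma~\ref{l2.2} gives $\mathtt{W}_i^{(e)}\mathtt{W}_j^{(e)t}\equiv \mathbf{0}\pmod{u^{e-i-j+2}}$ for all $1\le i\le j\le e$ with $i+j\le e+1$. First I would observe that $\mathcal{G}_\ell$ is obtained from $\mathcal{G}_e$ by reducing modulo $u^\ell$ (the rows $\mathtt{W}^{(\ell)}_h$ for $h\le\gamma_\ell+1$ are $\mathtt{W}^{(e)}_h\bmod u^\ell$, and the rows $u^{j-1}\mathtt{W}^{(\ell)}_{\gamma_\ell+j}$ are the rows $u^{\gamma_\ell+j-1}\mathtt{W}^{(e)}_{\gamma_\ell+j}$ of $\mathcal{G}_e$ divided by $u^{\gamma_\ell}$ and reduced mod $u^\ell$). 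Re-indexing the rows of $\mathcal{G}_e$ in two ``blocks'' — the top $\gamma_\ell+1$ uncapped rows and the lower $\ell-1$ capped rows — I would verify that the pairwise products $\mathtt{W}_a^{(\ell)}\mathtt{W}_b^{(\ell)t}$ arising in the self-orthogonality test for $\mathscr{D}_\ell$ over $\mathscr{R}_{\ell,m}$ are exactly the reductions modulo $u^\ell$ of the relevant products $\mathtt{W}_i^{(e)}\mathtt{W}_j^{(e)t}$ (with a shift in the exponent bookkeeping coming from the $u^{\gamma_\ell}$ factor). A careful index chase shows the exponent $e-i-j+2$ appearing for $\mathscr{D}_e$ dominates the exponent $\ell - a - b + 2$ needed for $\mathscr{D}_\ell$ on the matching index pairs, so self-orthogonality of $\mathscr{D}_e$ forces self-orthogonality of $\mathscr{D}_\ell$. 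The self-dual addendum is immediate: if $\mathscr{D}_e$ is self-dual then $\lambda_i=\lambda_{e-i+2}$ for all $i$, and one checks that the type $\{\Lambda_{\gamma_\ell+1},\lambda_{\gamma_\ell+2},\ldots,\lambda_{\gamma_\ell+\ell}\}$ of $\mathscr{D}_\ell$ then satisfies the symmetry required by Lemma~\ref{l2.2} for self-duality over $\mathscr{R}_{\ell,m}$, using $\gamma_\ell = s-\mathrm{f}_\ell$ and $\ell\equiv e\pmod 2$.

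The substantive part is establishing property $(\mathfrak{P})$, which involves the diagonal entries of $[\mathtt{W}]_{\gamma_\ell+1-\mathrm{f}_i}[\mathtt{W}]_{\gamma_\ell+1-\mathrm{f}_i}^t$ (and the $\mathrm{c}_i$-variant in case~(iii)) vanishing to higher order than plain self-orthogonality guarantees. Here I would exploit the hypothesis that $2\in\langle u^\kappa\rangle\setminus\langle u^{\kappa+1}\rangle$ with $\kappa$ odd: for a single row vector $\mathbf{v}$ over $\mathcal{T}_m$ (or more generally a row of a matrix over $\mathscr{R}_{e,m}$), the diagonal entry $\mathbf{v}\cdot\mathbf{v} = \sum v_t^2$ has the feature that squaring is additive modulo $2$, hence the ``odd-indexed'' coordinates $\pi_{2j+1}(\mathbf{v}\cdot\mathbf{v})$ vanish automatically for $0\le j\le\kappa_1-1$ (as already noted in the text before Definition~\ref{d3.1}), and the first potentially nonzero odd-layer contribution is pushed to level $\kappa$. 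Feeding this ``automatic vanishing on odd layers up to $\kappa$'' into the mod-$u^{e-i-j+2}$ self-orthogonality relations for the appropriate sub-blocks of rows of $\mathcal{G}_e$, and then reducing modulo $u^\ell$, should yield precisely the four cases of $(\mathfrak{P})$. The case split in $(\mathfrak{P})$ — $\ell\le\min\{\kappa-1,e-\kappa\}$, $e-\kappa<\ell\le\kappa-\mathrm{f}_{2\kappa-e}+1$, $\kappa\le\ell\le e-\kappa$, and the remaining range — tracks exactly how far the squaring/doubling phenomenon reaches relative to both the available precision $\ell$ over $\mathscr{R}_{\ell,m}$ and the threshold $\kappa$ where $2$ first appears, and which of the row-blocks $[\mathtt{W}]_{\gamma_\ell+1-\mathrm{f}_i}$ versus $[\mathtt{W}]_{\gamma_\ell+1-\mathrm{c}_i}$ is the relevant one.

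Concretely, for each fixed even $i$ in the stated range I would isolate the top $\gamma_\ell+1-\mathrm{f}_i$ rows of $\mathcal{G}_e$, call this block $\mathtt{W}$, and examine $\mathtt{Diag}(\mathtt{W}\mathtt{W}^t)$. The diagonal entry indexed by row $a$ is $\mathbf{w}_a\cdot\mathbf{w}_a$ where $\mathbf{w}_a=\mathtt{W}_a^{(e)}$; self-orthogonality of $\mathscr{D}_e$ gives $\mathbf{w}_a\cdot\mathbf{w}_a\equiv 0\pmod{u^{e-2a+2}}$, and for $a\le\gamma_\ell+1-\mathrm{f}_i$ one computes $e-2a+2\ge \ell+i$ (using $\gamma_\ell=s-\mathrm{f}_\ell$, $s=\mathrm{f}_e$, and $\ell\equiv e\pmod 2$), which is the required congruence in the first line of cases~(i), (ii), (iv). For the weaker $\pi_{\ell+j}$-conditions on odd $j$ in cases~(i), (ii), the extra vanishing comes from the fact that the odd layer $\ell+j$ of a sum-of-squares is governed by the doubling map and is automatically zero as long as $\ell+j<\kappa$ (for (i)) or $\ell+j < e-\ell$ (for (ii), where it instead follows from the self-orthogonality precision on the complementary block, because rows beyond index $\gamma_\ell+1$ carry a $u$-factor that interacts with $2\equiv u^\kappa(\cdots)$). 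Case~(iii), with $\mathrm{c}_i$ in place of $\mathrm{f}_i$ and the special index $\{\kappa\}$ appended, is where the $u^\kappa$-expansion of $2$ enters most directly: the layer-$\kappa$ vanishing is not automatic and must be read off from the specific relation $2\equiv u^\kappa(\eta_0+u\eta_1+\cdots)\pmod{u^e}$ together with self-orthogonality at the matching precision.

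I expect the main obstacle to be the bookkeeping in case~(iii) and the boundary between cases~(ii) and~(iii): tracking how the substitution $2=u^\kappa(\eta_0+u\eta_1+\cdots)$ shifts the layer indices, verifying that $\mathrm{c}_i$ rather than $\mathrm{f}_i$ is forced precisely when $\kappa\le\ell\le e-\kappa$, and confirming that the appended index $i=\kappa$ (an odd value) is exactly the place where self-orthogonality of $\mathscr{D}_e$ supplies the needed congruence mod $u^{\ell+\kappa}$ after dividing out the $u^\kappa$ coming from $2$. The off-by-one interplay among $s$, $\mathrm{f}_\ell$, $\mathrm{c}_\ell$, $\kappa_1$, and $\theta_e$ in the ranges $\{2,4,\ldots,\ell-\theta_e\}$ and $\{\ell-1,\ell+1,\ldots,\kappa-2+\theta_e\}$ will require care, but each individual verification is a finite computation with the $u$-adic expansions; no new idea beyond Lemma~\ref{l2.2}, the reduction $\mathcal{G}_\ell = \mathcal{G}_e\bmod u^\ell$, and the structure of $2$ in $\mathscr{R}_{e,m}$ should be needed.
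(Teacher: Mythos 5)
Your proposal is correct and follows essentially the same route as the paper: both parts are obtained by matching the congruences of Lemma~\ref{l2.2} under the reduction $\mathcal{G}_e\mapsto\mathcal{G}_\ell$ (with the type symmetry giving the self-dual case), and property $(\mathfrak{P})$ is extracted from the higher-precision diagonal congruences together with exactly your two mechanisms — squares of $\mathcal{T}_m$-rows contribute only at even layers, while doubled cross terms are pushed to layer at least $\kappa+\ell$ because $2\in\langle u^{\kappa}\rangle$. The only cosmetic difference is that the paper channels the higher-precision information through the intermediate truncations $\mathscr{D}_{\ell+i}$ and $\mathscr{D}_{\ell+j+1+\theta_e}$ (whose self-orthogonality comes from the first part), whereas you apply Lemma~\ref{l2.2} to $\mathscr{D}_e$ directly via the inequality $e-2a+2\ge\ell+i$; this is the same bookkeeping, provided you also note (as your layer analysis already implies) that replacing $\mathtt{W}^{(e)}$ by its truncation $\mathtt{W}^{(\ell)}$ perturbs the diagonal only at layers that are even or at least $\kappa+\ell$, so the required congruences modulo $u^{\ell+i}$ survive the truncation.
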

\begin{proof} To prove the result,  let  $\ell$ be a  fixed integer satisfying $2 \leq \ell \leq e-2$ and  $\ell \equiv e\pmod 2 .$ Here, we see, working as in Theorem 5.1 of Yadav and Sharma \cite{Y},  that if the code $\mathscr{D}_e$ is  self-orthogonal (\textit{resp.} self-dual), then the code  $\mathscr{D}_{i}$ is also  self-orthogonal (\textit{resp.} self-dual) for any integer  $ i < e$  with $i\equiv e\pmod 2,$ and in particular,  the code $\mathscr{D}_{\ell}$ is  self-orthogonal (\textit{resp.} self-dual).   Now, it remains to show that  the code $\mathscr{D}_{\ell}$ satisfies property $(\mathfrak{P}).$ For this, we will distinguish the following two cases:  (I) $2\kappa \leq e,$  and (II) $2\kappa >e.$  
\begin{description}
  \item[(I)] Let  $2\kappa  \leq e.$  Here, we will consider the following three cases separately: (i) $2\leq \ell \leq \kappa-1,$ (ii) $\kappa \leq \ell \leq e-\kappa,$ and (iii) $\ell \geq e-\kappa+1.$ 
\begin{description}\item[(i)] Let $2 \leq \ell \leq \kappa-1.$  Since $\mathscr{D}_{\ell}$ is a self-orthogonal code over $\mathscr{R}_{\ell,m},$ we have  $\mathtt{Diag}\big([\mathtt{W}^{(\ell)}]_{\gamma_{\ell}+1}[\mathtt{W}^{(\ell)}]_{\gamma_{\ell}+1}^t\big)\equiv \mathbf{0}\pmod{u^{\ell}}.$  Let us write $\mathtt{Diag}\big([\mathtt{W}^{(\ell)}]_{\gamma_{\ell}+1}[\mathtt{W}^{(\ell)}]_{\gamma_{\ell}+1}^t\big)\equiv u^{\ell}\mathtt{E}^{(\ell)}+u^{\ell+1}\mathtt{E}^{(\ell+1)}+\cdots+u^{\ell+\kappa-1}\mathtt{E}^{(\ell+\kappa-1)}\pmod{u^{\ell+\kappa}},$  where $\mathtt{E}^{(\ell)},\mathtt{E}^{(\ell+1)},\ldots,\mathtt{E}^{(\ell+\kappa-1)}\in \mathcal{M}_{(\gamma_{\ell}+1)\times (\gamma_{\ell}+1)}(\mathcal{T}_m).$ Further, let $\mathtt{E}^{(b)}_{g,g}$ denote the $(g,g)$-th entry of the matrix $\mathtt{E}^{(b)}$ for  $1 \leq g \leq \gamma_{\ell}+1$ and $ \ell \leq b \leq \ell+\kappa-1.$  

Now, for  $i\in\{2,4,\ldots,\ell-\theta_e\},$  as $\mathscr{D}_{\ell+i}$ is a self-orthogonal code over $\mathscr{R}_{\ell+i,m},$ we see,  by Lemma \ref{l2.2},  that  $\mathtt{Diag}\big([\mathtt{W}^{(\ell+i)}]_{\gamma_{\ell}+1-\mathrm{f}_i}[\mathtt{W}^{(\ell+i)}]_{\gamma_{\ell}+1-\mathrm{f}_i}^t\big)\equiv \mathbf{0}\pmod{u^{\ell+i}},$  
from which one can easily observe that $$\mathtt{Diag}\big([\mathtt{W}^{(\ell)}]_{\gamma_{\ell}+1-\mathrm{f}_i}[\mathtt{W}^{(\ell)}]_{\gamma_{\ell}+1-\mathrm{f}_i}^t\big)\equiv \mathtt{Diag}\big([\mathtt{W}^{(\ell+i)}]_{\gamma_{\ell}+1-\mathrm{f}_i}[\mathtt{W}^{(\ell+i)}]_{\gamma_{\ell}+1-\mathrm{f}_i}^t\big)\equiv\mathbf{0}\pmod{u^{\ell+i}}.$$ Further, for $j \in \{\ell-1, \ell+1,\ldots,\kappa-2+\theta_e\},$  since $\mathscr{D}_{\ell+j+1+\theta_e}$ is a self-orthogonal code over $\mathscr{R}_{\ell+j+1+\theta_e,m},$ by Lemma \ref{l2.2} again, we get \begin{equation}\label{eqndiag}
\mathtt{Diag}\big([\mathtt{W}^{(\ell+j+1+\theta_e)}]_{\gamma_{\ell}+1-\mathrm{f}_{j+2}}[\mathtt{W}^{(\ell+j+1+\theta_e)}]_{\gamma_{\ell}+1-\mathrm{f}_{j+2}}^t\big)\equiv \mathbf{0}\pmod{u^{\ell+j+1+\theta_e}}.
\end{equation}   We further observe, for $1 \leq a \leq \gamma_{\ell}+1-\mathrm{f}_{j+2},$  that the $(a,a)$-th entry, say $L_a$, of the  matrix  $\mathtt{Diag}\big([\mathtt{W}^{(\ell+j+1+\theta_e)}]_{\gamma_{\ell}+1-\mathrm{f}_{j+2}}[\mathtt{W}^{(\ell+j+1+\theta_e)}]_{\gamma_{\ell}+1-\mathrm{f}_{j+2}}^t\big) $  satisfies    
\begin{eqnarray*} L_a\equiv u^{2\ell}\big(\mathtt{E}^{(2\ell)}_{a,a}+  \textbf{v}_a^{(\ell)}\cdot \textbf{v}_a^{(\ell)}\big)+u^{2\ell+1}\mathtt{E}^{(2\ell+1)}_{a,a}+\cdots+u^{\ell+j-2}\mathtt{E}^{(\ell+j-2)}_{a,a}+u^{\ell+j-1}\big(\mathtt{E}^{(\ell+j-1)}_{a,a}~~~~~~~~~~~~~~~~~\\ +    \textbf{v}_a^{(\mathtt{f}_{\ell+j-1})}\cdot \textbf{v}_a^{(\mathtt{f}_{\ell+j-1})}\big)  +u^{\ell+j}\mathtt{E}^{(\ell+j)}_{a,a}\pmod{ u^{\ell+j+1}}\end{eqnarray*}   if $\ell$ is even, while it satisfies
\begin{eqnarray*}L_a\equiv u^{2\ell-1}\mathtt{E}^{(2\ell-1)}_{a,a}+  u^{2\ell}\big(\mathtt{E}^{(2\ell)}_{a,a}+  \textbf{v}_a^{(\ell)}\cdot \textbf{v}_a^{(\ell)}\big)+u^{2\ell+1}\mathtt{E}^{(2\ell+1)}_{a,a}+\cdots+u^{\ell+j}\mathtt{E}^{(\ell+j)}_{a,a}+u^{\ell+j+1}\big(\mathtt{E}^{(\ell+j+1)}_{a,a}\\+    \textbf{v}_a^{(\mathtt{f}_{\ell+j+1})}\cdot \textbf{v}_a^{(\mathtt{f}_{\ell+j+1})}\big) \pmod{ u^{\ell+j+2}}\end{eqnarray*} if $\ell$ is odd,
where $\mathbf{v}_a^{(b)}$ denotes the $a$-th row  of the matrix $[\mathtt{V}^{(b)}]_{\gamma_{\ell}+1}$ for  $1 \leq a \leq \gamma_{\ell}+1-\mathtt{f}_{j+2}$ and $\ell \leq b \leq \mathtt{f}_{\ell+j-1+2\theta_e}.$ We next observe, by \eqref{eqndiag},  that $\mathtt{E}^{(\ell+j)}_{a,a}=0$ for $1 \leq a \leq \gamma_{\ell}+1-\mathtt{f}_{j+2},$ 
which   implies that   $$\pi_{\ell+j}\big(\mathtt{Diag}\big([\mathtt{W}^{(\ell)}]_{\gamma_{\ell}+1-\mathrm{f}_{j+2}}[\mathtt{W}^{(\ell)}]_{\gamma_{\ell}+1-\mathrm{f}_{j+2}}^t\big)\big)=\mathbf{0}$$ for $j \in \{\ell-1, \ell+1,\ldots,\kappa-2+\theta_e\}.$

\item[(ii)] Next, let $\kappa \leq \ell \leq e-\kappa. $ Here, for  $i\in\{2,4,\ldots,\kappa-1\}\cup \{\kappa\},$ as $\mathscr{D}_{\ell+i}$ is a self-orthogonal code over $\mathscr{R}_{\ell+i,m},$ by Lemma \ref{l2.2},  we get $\mathtt{Diag}\big([\mathtt{W}^{(\ell+i)}]_{\gamma_{\ell}+1-\mathrm{c}_i}[\mathtt{W}^{(\ell+i)}]_{\gamma_{\ell}+1-\mathrm{c}_i}^t\big)\equiv \mathbf{0}\pmod{u^{\ell+i}},$ which implies that $$\mathtt{Diag}\big([\mathtt{W}^{(\ell)}]_{\gamma_{\ell}+1-\mathrm{c}_i}[\mathtt{W}^{(\ell)}]_{\gamma_{\ell}+1-\mathrm{c}_i}^t\big)\equiv \mathbf{0}\pmod{u^{\ell+i}}.$$

\item[(iii)] Finally, let  $\ell \geq e-\kappa+1.$ Here, for  $i\in\{2,4,\ldots,e-\ell\},$  since $\mathscr{D}_{\ell+i}$ is a self-orthogonal code over $\mathscr{R}_{\ell+i,m},$  by Lemma \ref{l2.2}, we get $\mathtt{Diag}\big([\mathtt{W}^{(\ell+i)}]_{\gamma_{\ell}+1-\mathrm{f}_i}[\mathtt{W}^{(\ell+i)}]_{\gamma_{\ell}+1-\mathrm{f}_i}^t\big)\equiv \mathbf{0}\pmod{u^{\ell+i}},$ which implies that $$\mathtt{Diag}\big([\mathtt{W}^{(\ell)}]_{\gamma_{\ell}+1-\mathrm{f}_i}[\mathtt{W}^{(\ell)}]_{\gamma_{\ell}+1-\mathrm{f}_i}^t\big)\equiv \mathbf{0}\pmod{u^{\ell+i}}.$$ \end{description}  This shows that  the code $\mathscr{D}_{\ell}$ satisfies property $(\mathfrak{P}).$ 
\item[(II)] When $2\kappa>e,$ the desired result follows by working as in case (I).
\vspace{-4mm}\end{description}
 \vspace{-4mm}  \end{proof} 
From the above lemma, we see, for $2 \leq \ell \leq e-2$ and  $\ell \equiv e\pmod 2 ,$ that a necessary condition for lifting a linear code $\mathscr{D}_{\ell}$ over $\mathscr{R}_{\ell,m}$ to a self-orthogonal (\textit{resp.} self-dual) code $\mathscr{D}_{e}$ over $\mathscr{R}_{e,m}$ is that   $\mathscr{D}_{\ell}$ itself must be a self-orthogonal (\textit{resp.} self-dual) code over $\mathscr{R}_{\ell,m}$ and must additionally satisfy property $(\mathfrak{P}).$
We next prove the following lemma.

\begin{lemma}\label{l3.1} Let  $\ell$ be a fixed  integer satisfying  $\ell \equiv e \pmod 2$ and $4 \leq \ell \leq \kappa+2$ if $2\kappa \leq e,$ while the integer $\ell$ satisfies $\ell \equiv e \pmod 2$ and $4 \leq \ell \leq e-\kappa+1-2\theta_e $ if $2\kappa >e.$ 
  Let us define $\gamma_{\ell}=s-\mathrm{f}_{\ell}.$ If there exists  a self-orthogonal code $\mathscr{D}_{\ell}$ of type $\{\Lambda_{\gamma_{\ell}+1},\lambda_{\gamma_{\ell}+2}, \ldots, \lambda_{\gamma_{\ell}+\ell}\}$ and length $n$ over $\mathscr{R}_{\ell,m}$ satisfying property $(\mathfrak{P}),$ then there exists a $\Lambda_{s-\kappa_1}$-dimensional doubly even subcode of the torsion code $Tor_1(\mathscr{D}_{\ell}).$  
 \end{lemma}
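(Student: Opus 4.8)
The plan is to exhibit the required doubly even subcode explicitly. Take $\mathcal{E}$ to be the $\mathcal{T}_m$-linear span of the first $s-\kappa_1$ block-rows $\mathtt{W}_1^{(0)},\dots,\mathtt{W}_{s-\kappa_1}^{(0)}$ of the standard-form generator matrix of $Tor_1(\mathscr{D}_{\ell})$. Reducing $\mathcal{G}_{\ell}$ modulo $u$ shows that $Tor_1(\mathscr{D}_{\ell})$ is generated by the rows of $[\mathtt{W}^{(0)}]_{\gamma_{\ell}+1}$, and the stated bound on $\ell$ together with the identity $\ell-2\mathrm{f}_\ell=\theta_e$ gives $s-\kappa_1\le\gamma_{\ell}+1$; so $\mathcal{E}$ is a well-defined subcode of $Tor_1(\mathscr{D}_{\ell})$, and its generators, being the first $\Lambda_{s-\kappa_1}$ rows of a standard-form generator matrix, are $\mathcal{T}_m$-linearly independent, so $\dim\mathcal{E}=\Lambda_{s-\kappa_1}$. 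By Lemma~\ref{l1.2}(a), applied over $\mathscr{R}_{\ell,m}$, the code $Tor_1(\mathscr{D}_{\ell})$ is self-orthogonal with respect to $B_m$, hence so is $\mathcal{E}$; it therefore only remains to prove that $\pi_{\kappa}(\mathbf{b}\cdot\mathbf{b})=0$ for every $\mathbf{b}\in\mathcal{E}$, where $\mathbf{b}\cdot\mathbf{b}$ is computed in $\mathscr{R}_{e,m}$.

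The core of the argument is a linearization of the map $Q(\mathbf{b}):=\pi_{\kappa}(\mathbf{b}\cdot\mathbf{b})$ on $\mathcal{T}_m^n$. From the fact that $a+b\equiv a\oplus b\pmod{u^{\kappa}}$ for all $a,b\in\mathcal{T}_m$ (equivalently $a+b-(a\oplus b)\in\langle 2\rangle$, which follows from the Witt-vector description of $\mathscr{R}_{e,m}$) together with $2\in\langle u^{\kappa}\rangle\setminus\langle u^{\kappa+1}\rangle$, a short carry computation shows that for every $\mathbf{v}\in\mathcal{T}_m^n$ one has $\pi_j(\mathbf{v}\cdot\mathbf{v})=0$ for $1\le j\le\kappa-1$ and $\pi_{\kappa}(\mathbf{v}\cdot\mathbf{v})=\eta_0\,Q_0(\mathbf{v})$, where $Q_0(\mathbf{v})=\bigoplus_{i<j}v_iv_j$ is the ordinary quadratic form on the $\mathcal{T}_m$-vector space $\mathcal{T}_m^n$. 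The polarization of $Q_0$ is $\beta(\mathbf{v},\mathbf{w})=\big(\bigoplus_i v_i\big)\big(\bigoplus_i w_i\big)\oplus B_m(\mathbf{v},\mathbf{w})$, which vanishes identically on any self-orthogonal code (there $B_m(\mathbf{v},\mathbf{w})=0$ and $\big(\bigoplus_i v_i\big)^2=B_m(\mathbf{v},\mathbf{v})=0$). Consequently, writing a general element of $\mathcal{E}$ as $\mathbf{b}=\bigoplus_g x_g\mathbf{r}_g$ with $x_g\in\mathcal{T}_m$ and $\mathbf{r}_1,\dots,\mathbf{r}_{\Lambda_{s-\kappa_1}}$ the chosen generators, we get $Q_0(\mathbf{b})=\bigoplus_g x_g^{2}Q_0(\mathbf{r}_g)$, hence $Q(\mathbf{b})=\eta_0\,Q_0(\mathbf{b})=\bigoplus_g x_g^{2}\,\pi_{\kappa}(\mathbf{r}_g\cdot\mathbf{r}_g)$. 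Thus it suffices to prove $\pi_{\kappa}(\mathbf{r}_g\cdot\mathbf{r}_g)=0$ for $1\le g\le\Lambda_{s-\kappa_1}$.

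To bring property~$(\mathfrak{P})$ to bear, let $\mathbf{r}_g+u\mathbf{v}_g^{(1)}+u^{2}\mathbf{v}_g^{(2)}+\cdots$ (with $\mathbf{v}_g^{(t)}\in\mathcal{T}_m^n$) be the Teichm\"uller expansion of the $g$-th row of $[\mathtt{W}^{(\ell)}]_{\gamma_{\ell}+1}$, regarded over $\mathscr{R}_{e,m}$. All cross terms of its self-inner-product carry a factor $2$ and thus lie in $\langle u^{\kappa+1}\rangle$, so this self-inner-product is $\equiv\mathbf{r}_g\cdot\mathbf{r}_g+\sum_{t=1}^{\kappa_1}u^{2t}\big(\mathbf{v}_g^{(t)}\cdot\mathbf{v}_g^{(t)}\big)\pmod{u^{\kappa+1}}$; since $\pi_j(\mathbf{v}_g^{(t)}\cdot\mathbf{v}_g^{(t)})=0$ for $1\le j\le\kappa-1$ by the previous paragraph, each added summand sits (below level $\kappa+1$) at the even level $2t<\kappa$, and such even-level terms — together with the carries they generate — contribute nothing to the $u^{\kappa}$-coefficient. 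Hence $\pi_{\kappa}(\mathbf{r}_g\cdot\mathbf{r}_g)$ equals the $u^{\kappa}$-coefficient of the $(g,g)$-diagonal entry of $[\mathtt{W}^{(\ell)}]_{s-\kappa_1}[\mathtt{W}^{(\ell)}]_{s-\kappa_1}^{t}$ that property~$(\mathfrak{P})$ controls. It then remains to invoke $(\mathfrak{P})$: if $\ell>\kappa$, the self-orthogonality of $\mathscr{D}_{\ell}$ over $\mathscr{R}_{\ell,m}$ already forces this diagonal entry to vanish modulo $u^{\ell}$ with $\ell\ge\kappa+1$; and if $\ell\le\kappa$, then $\ell$ is covered by clause~(i), (ii) or (iii) of $(\mathfrak{P})$, and, setting $i^{\ast}=2(\kappa_1+1-\mathrm{f}_\ell)$ — or, when $i^{\ast}$ lies beyond the index set of the relevant clause, $j^{\ast}=\kappa-\ell$ — that clause is stated precisely for the block count $s-\kappa_1$ and asserts that $\mathtt{Diag}\big([\mathtt{W}^{(\ell)}]_{s-\kappa_1}[\mathtt{W}^{(\ell)}]_{s-\kappa_1}^{t}\big)\equiv\mathbf{0}\pmod{u^{\ell+i^{\ast}}}$ with $\ell+i^{\ast}=\kappa+1+\theta_e\ge\kappa+1$ (respectively that its $u^{\kappa}$-coefficient is zero). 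In every case $\pi_{\kappa}(\mathbf{r}_g\cdot\mathbf{r}_g)=0$ for all $g\le\Lambda_{s-\kappa_1}$, so $Q$ vanishes on $\mathcal{E}$, and $\mathcal{E}$ is the desired $\Lambda_{s-\kappa_1}$-dimensional doubly even subcode of $Tor_1(\mathscr{D}_{\ell})$.

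I expect the main obstacle to be the case-bookkeeping of the last step: one must verify, clause by clause in $(\mathfrak{P})$ and in each of the regimes $2\kappa\le e$ and $2\kappa>e$, that $i^{\ast}=2(\kappa_1+1-\mathrm{f}_\ell)$ — or, when it fails, $j^{\ast}=\kappa-\ell$ — indeed belongs to the index set of the relevant clause, yields the block count $s-\kappa_1$, and gives a modulus reaching $u^{\kappa+1}$ (or the $\pi_{\kappa}$-condition). Using $\ell-2\mathrm{f}_\ell=\theta_e$ and $2\kappa_1=\kappa-1$ this reduces to the elementary fact that $i^{\ast}$ leaves the prescribed set only when $\ell$ is so small that $j^{\ast}=\kappa-\ell$ then falls within the $\pi_{\ell+j}$-part of clause~(i) or~(ii), where the second group of conditions of $(\mathfrak{P})$ takes over. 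The remaining new ingredient — the linearization of $\pi_{\kappa}$ on self-orthogonal codes — is, once the Witt-type congruence $a+b\equiv a\oplus b\pmod{u^{\kappa}}$ is in hand, a routine carry argument.
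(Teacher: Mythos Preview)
Your proposal is correct and follows essentially the same route as the paper: both take $\mathcal{E}=\mathcal{C}^{(s-\kappa_1)}$ to be the subcode generated by the first $\Lambda_{s-\kappa_1}$ rows of $[\mathtt{W}^{(0)}]_{\gamma_\ell+1}$, observe $s-\kappa_1\le\gamma_\ell+1$, and then extract $\pi_\kappa\big(\mathtt{Diag}([\mathtt{W}^{(\ell)}]_{s-\kappa_1}[\mathtt{W}^{(\ell)}]_{s-\kappa_1}^t)\big)=0$ from property~$(\mathfrak{P})$ (or directly from self-orthogonality when $\ell>\kappa$), transferring this to $[\mathtt{W}^{(0)}]$ via the observation that the difference contributes only at even $u$-levels below $\kappa$. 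The one genuine addition in your write-up is that you make explicit the linearization step---showing that $Q_0(\mathbf{v})=\bigoplus_{i<j}v_iv_j$ has polarization vanishing on self-orthogonal codes, hence it suffices to verify $\pi_\kappa(\mathbf{r}_g\cdot\mathbf{r}_g)=0$ on generators---whereas the paper simply asserts ``From this, it follows that $\mathcal{C}^{(s-\kappa_1)}$ is doubly even'' after checking the diagonal; your version is more self-contained on this point.
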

\begin{proof}To prove the result, we assume, without any loss of generality, that the code  
$\mathscr{D}_{\ell}$ has a generator matrix $\mathcal{G}_{\ell}$ as defined in \eqref{Gl}.
Note that the torsion code $Tor_1(\mathscr{D}_{\ell})$ is a $\Lambda_{\gamma_{\ell}+1}$-dimensional linear code over 
$\mathcal{T}_m$ with a generator matrix $[\mathtt{W}^{(0)}]_{\gamma_{\ell}+1}.$ Now, let $\mathcal{C}^{(s-\kappa_1)}$ be a $\Lambda_{s-\kappa_1}$-dimensional linear subcode of 
$Tor_1(\mathscr{D}_{\ell})$  with a generator matrix $[\mathtt{W}^{(0)}]_{s-\kappa_1}.$  We assert  that  $\mathcal{C}^{(s-\kappa_1)}$ is a doubly even code over $\mathcal{T}_m.$

To prove this assertion, we see,  by Lemma \ref{l1.2}, that $\mathcal{C}^{(s-\kappa_1)}$ is a self-orthogonal code over $\mathcal{T}_m.$ We further note, by   Lemma  \ref{l2.2}, that if  $\mathscr{D}_{\ell}$ is a self-orthogonal code of type $\{\Lambda_{\gamma_{\ell}+1},\lambda_{\gamma_{\ell}+2}, \ldots, \lambda_{\gamma_{\ell}+\ell}\}$ and length $n$ over $\mathscr{R}_{\ell,m}$ satisfying property $(\mathfrak{P}),$ then 
 \begin{eqnarray*}
  \displaystyle \pi_{i}\Big(\mathtt{Diag}\big([\mathtt{W}^{(\ell)}]_{\gamma_{\ell}+1}[\mathtt{W}^{(\ell)}]_{\gamma_{\ell}+1}^t\big)\Big)&= & \mathbf{0} \text{ ~for ~} i\in \{1,3,\ldots,\ell-1-\theta_e\} \text{ and}\\
  \pi_{\ell+j-\theta_e}\left( \mathtt{Diag}\big([\mathtt{W}^{(\ell)}]_{\gamma_{\ell}+1-\mathrm{f}_{j+2-\theta_e}}[\mathtt{W}^{(\ell)}]_{\gamma_{\ell}+1-\mathrm{f}_{j+2-\theta_e}}^t\big) \right)&=& \mathbf{0} \text{ ~for~ } j \in \{1,3,\ldots,\kappa-2  \}.
    \end{eqnarray*}
  We further  note that 
$\pi_{g}\big(\mathtt{Diag}\big([\mathtt{W}^{(0)}]_{s-\kappa_1}[\mathtt{W}^{(0)}]_{s-\kappa_1}^t \big)\big)=  \pi_{g}\big(\mathtt{Diag}\big([\mathtt{W}^{(\ell)}]_{s-\kappa_1}[\mathtt{W}^{(\ell)}]_{s-\kappa_1}^t\big) \big)= \mathbf{0}$  for all $ g \in \{ 1,3,\ldots,\kappa\}.$ From this, it follows that  $\mathcal{C}^{(s-\kappa_1)}$ is a doubly even code over $\mathcal{T}_m.$ 
\end{proof}

In the following lemma, we show that if $\mathscr{D}_{e}$ is a self-orthogonal code  of type $\{\lambda_1,\lambda_2,\ldots,\lambda_e\}$ and length $n$ over $\mathscr{R}_{e,m},$ then  $Tor_{s+\theta_e}(\mathscr{D}_e)$ is a $\Lambda_{s-\kappa_1}$-doubly even  code of length $n$ and dimension $\Lambda_{s+\theta_e}$ over $\mathcal{T}_m$ and   has $Tor_{s-\kappa_1}(\mathscr{D}_{e})$ as  its $\Lambda_{s-\kappa_1}$-dimensional doubly even subcode.    

\begin{lemma}\label{l3.3} Let $\mathscr{D}_{e}$ be a self-orthogonal code  of type $\{\lambda_1,\lambda_2,\ldots,\lambda_e\}$ and length $n$ over $\mathscr{R}_{e,m}$ with a generator matrix $\mathcal{G}_{e}$ (as defined in \eqref{Ge}). The torsion code $Tor_{s+\theta_e}(\mathscr{D}_e)$ is a $\Lambda_{s-\kappa_1}$-doubly even code of length $n$ and dimension $\Lambda_{s+\theta_e}$ over $\mathcal{T}_m$ with its $\Lambda_{s-\kappa_1}$-dimensional doubly even subcode as  $Tor_{s-\kappa_1}(\mathscr{D}_{e}).$  \end{lemma}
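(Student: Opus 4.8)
My proof would split the statement into three assertions — that $Tor_{s+\theta_e}(\mathscr D_e)$ has $\mathcal T_m$-dimension $\Lambda_{s+\theta_e}$, that $Tor_{s-\kappa_1}(\mathscr D_e)$ is a $\Lambda_{s-\kappa_1}$-dimensional subcode of it, and that $Tor_{s-\kappa_1}(\mathscr D_e)$ is doubly even — and I would dispose of the first two immediately. Indeed $s+\theta_e=\floor{\frac{e+1}{2}}$ and $1\le s-\kappa_1\le s\le s+\theta_e$ (the bound $s-\kappa_1\ge 1$ holds because $e\ge\kappa+1$ forces $s=\floor{\frac{e}{2}}\ge\kappa_1+1$), so by the discussion around \eqref{e1.2} we get $Tor_{s-\kappa_1}(\mathscr D_e)\subseteq Tor_{s+\theta_e}(\mathscr D_e)$ with $\mathcal T_m$-dimensions $\Lambda_{s-\kappa_1}$ and $\Lambda_{s+\theta_e}$, and by Lemma \ref{l1.2}(a) both codes are self-orthogonal over $\mathcal T_m$. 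By Definition \ref{d3.1} everything then reduces to showing that $Tor_{s-\kappa_1}(\mathscr D_e)$ is doubly even, i.e.\ that $\pi_\kappa(\textbf b\cdot\textbf b)=0$ for every $\textbf b\in Tor_{s-\kappa_1}(\mathscr D_e)$.

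Before the main argument I would record two local facts about $\mathscr R_{e,m}$, both resting on the hypothesis $2\in\langle u^{\kappa}\rangle\setminus\langle u^{\kappa+1}\rangle$ together with the observation that $\mathscr R_{e,m}/\langle u^{\kappa}\rangle$ has characteristic $2$ (so that on $\mathcal T_m$ its ring-addition agrees with $\oplus$). First: if $\textbf v$ lies in any $B_m$-self-orthogonal code over $\mathcal T_m$, then reducing modulo $u^{\kappa}$ gives $\textbf v\cdot\textbf v\equiv B_m(\textbf v,\textbf v)=0\pmod{u^{\kappa}}$, so $\textbf v\cdot\textbf v\in\langle u^{\kappa}\rangle$. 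Second: for \emph{any} $\textbf w\in\mathscr R_{e,m}^n$ the Teichm\"uller expansion of $\textbf w\cdot\textbf w$ vanishes at every odd position $\le\kappa-1$; writing $w_i=\sum_l u^l w_{i,l}$ with $w_{i,l}\in\mathcal T_m$, each cross term of $w_i^2$ carries a factor $2\in\langle u^{\kappa}\rangle$ and a positive power of $u$, hence lies in $\langle u^{\kappa+1}\rangle$, so $w_i^2\equiv\sum_l u^{2l}w_{i,l}^2\pmod{u^{\kappa+1}}$ is supported on even positions only, and summing over $i$ introduces carries lying in $\langle u^{\kappa}\rangle$-multiples, which cannot reach an odd position below $\kappa$.

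For the main step I would fix $\textbf b\in Tor_{s-\kappa_1}(\mathscr D_e)$, choose $\textbf b'\in\mathscr R_{e,m}^n$ with $\textbf c:=u^{s-\kappa_1-1}(\textbf b+u\textbf b')\in\mathscr D_e$, and exploit $\textbf c\cdot\textbf c=0$: since $e-2(s-\kappa_1-1)=\kappa+1+\theta_e$ (using $e-2s=\theta_e$ and $2\kappa_1=\kappa-1$), this says $(\textbf b+u\textbf b')\cdot(\textbf b+u\textbf b')\in\langle u^{\kappa+1+\theta_e}\rangle\subseteq\langle u^{\kappa+1}\rangle$, and expanding and discarding $2u(\textbf b\cdot\textbf b')\in\langle u^{\kappa+1}\rangle$ leaves $\textbf b\cdot\textbf b+u^2(\textbf b'\cdot\textbf b')\in\langle u^{\kappa+1}\rangle$. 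Combining this with $\textbf b\cdot\textbf b\in\langle u^{\kappa}\rangle$ (the first local fact, as $Tor_{s-\kappa_1}(\mathscr D_e)$ is $B_m$-self-orthogonal) forces $u^2(\textbf b'\cdot\textbf b')\in\langle u^{\kappa}\rangle$, hence $\textbf b'\cdot\textbf b'\in\langle u^{\kappa-2}\rangle$; and the second local fact, applied to $\textbf b'$, kills the one remaining coefficient, at position $\kappa-2$, which is odd since $\kappa$ is. Thus $\textbf b'\cdot\textbf b'\in\langle u^{\kappa-1}\rangle$, so $u^2(\textbf b'\cdot\textbf b')\in\langle u^{\kappa+1}\rangle$, and therefore $\textbf b\cdot\textbf b\in\langle u^{\kappa+1}\rangle$, i.e.\ $\pi_\kappa(\textbf b\cdot\textbf b)=0$. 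As $\textbf b$ was arbitrary, $Tor_{s-\kappa_1}(\mathscr D_e)$ is doubly even, which completes the proof.

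The crux — and the point where the odd parity of $\kappa$ is essential — is the very last chain of inclusions: the arithmetic relation $\textbf b\cdot\textbf b+u^2(\textbf b'\cdot\textbf b')\in\langle u^{\kappa+1}\rangle$ only improves the control of $\textbf b'\cdot\textbf b'$ up to $\langle u^{\kappa-2}\rangle$, i.e.\ it settles the coefficients in positions $0,\dots,\kappa-3$ but \emph{not} position $\kappa-2$; that last coefficient has to be pinned down separately, and it vanishes precisely because $\kappa-2$ is an odd position, which is exactly what the second local fact supplies — and this in turn is why one needs $2\in\langle u^{\kappa}\rangle\setminus\langle u^{\kappa+1}\rangle$ rather than merely $2\in\langle u^{\kappa}\rangle$, since that identification makes the carry analysis exact. (For many ranges of the parameters one could instead deduce doubly-evenness by combining Lemmas \ref{l3.2} and \ref{l3.1} applied to a suitable $\ell\in\{\kappa+1,\kappa+2\}$, but the argument above has the merit of being uniform across all admissible $e$ and $\kappa$.)
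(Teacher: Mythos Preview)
Your proof is correct and follows essentially the same route as the paper's: both lift $\textbf{b}\in Tor_{s-\kappa_1}(\mathscr{D}_e)$ to a codeword $u^{s-\kappa_1-1}(\textbf{b}+u\textbf{b}')\in\mathscr{D}_e$, square it to obtain a congruence modulo $u^{\kappa+1+\theta_e}$, and then use the vanishing of odd-position Teichm\"uller coefficients of squares (your ``second local fact'') to isolate $\pi_\kappa(\textbf{b}\cdot\textbf{b})$. The only cosmetic difference is that the paper expands the lift into Teichm\"uller components $a_1,\ldots,a_{\kappa_1}\in\mathcal{T}_m^n$ and reads off $\pi_\kappa$ of each $u^{2j}(a_j\cdot a_j)$ separately, whereas you keep the tail as a single $\textbf{b}'\in\mathscr{R}_{e,m}^n$ and argue via the chain $\textbf{b}'\cdot\textbf{b}'\in\langle u^{\kappa-2}\rangle\Rightarrow\langle u^{\kappa-1}\rangle$; the underlying computation is identical.
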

\begin{proof}
 To prove the result,  we note, by Lemma \ref{l1.2},   that the torsion code $Tor_{s+\theta_e}(\mathscr{D}_e)$ is a self-orthogonal code of length $n$ and dimension $\Lambda_{s+\theta_e}$ over $\mathcal{T}_m.$ Here, we assert  that the torsion  code $Tor_{s-\kappa_1}(\mathscr{D}_e)$ is a doubly even code, \textit{i.e.,} $\pi_{\kappa}(v\cdot v)=0$ for all $v \in Tor_{s-\kappa_1}(\mathscr{D}_e),$ where each $v\cdot v$ is viewed as an element of $\mathscr{R}_{e,m}.$ To prove this assertion, let $a \in Tor_{s-\kappa_1}(\mathscr{D}_e)$ be fixed.  Thus, there exist $a_1,a_2,\ldots,a_{s+\kappa_1+\theta_e}\in \mathcal{T}_m^n$ such that  $u^{s-\kappa_1-1}(a+ua_1+u^2a_2+\cdots+ u^{s+\kappa_1+\theta_e}a_{s+\kappa_1+\theta_e})\in \mathscr{D}_e.$ Now, since $\mathscr{D}_e$ is a self-orthogonal code over $\mathscr{R}_{e,m},$ we have $u^{2s-2\kappa_1-2}(a+ua_1+\cdots+ u^{s+\kappa_1+\theta_e}a_{s+\kappa_1+\theta_e})\cdot (a+ua_1+\cdots+ u^{s+\kappa_1+\theta_e}a_{s+\kappa_1+\theta_e})\equiv 0 \pmod{u^e}.$ This implies that $(a+ua_1+\cdots+ u^{s+\kappa_1+\theta_e}a_{s+\kappa_1+\theta_e})\cdot (a+ua_1+\cdots+ u^{s+\kappa_1+\theta_e}a_{s+\kappa_1+\theta_e})\equiv 0 \pmod{u^{\kappa+1+\theta_e}},$ which further implies that $a\cdot a +u^2 (a_1\cdot a_1) +\cdots+ u^{\kappa-1}(a_{\kappa_1}\cdot a_{\kappa_1})+\theta_eu^{\kappa+1}(a_{\kappa_1+1}\cdot a_{\kappa_1+1}+\eta_0a\cdot a_1)\equiv 0\pmod{u^{\kappa+1+\theta_e}}.$ From this, one can easily see that $ \pi_{\kappa}(a\cdot a)= \pi_{\kappa}\left(a\cdot a +u^2 (a_1\cdot a_1) +\cdots+ u^{\kappa-1}(a_{\kappa_1}\cdot a_{\kappa_1})+\theta_eu^{\kappa+1}(a_{\kappa_1+1}\cdot a_{\kappa_1+1}+\eta_0a\cdot a_1)\right)=0.$ This shows that $Tor_{s-\kappa_1}(\mathscr{D}_e)$ is a doubly even code over $\mathcal{T}_m.$ From this, the desired result follows.
\end{proof}
From the above lemma, we see that  a necessary condition under which a linear code $\mathscr{D}_0$ of length $n$ and dimension $\Lambda_{s+\theta_e}$ over $\mathcal{T}_m$  can be lifted to a self-orthogonal code over $\mathscr{R}_{e,m}$ is that the code $\mathscr{D}_0$ must be  a $\Lambda_{s-\kappa_1}$-doubly even code over $\mathcal{T}_m.$
Next, let $\mathcal{T}$ be the Teichm$\ddot{u}$ller set  of the Galois ring $GR(2^{e},m)$ of characteristic $2^e$ and rank $m.$ Now, a linear code $C$  of length $n$ over $\mathcal{T}$ is said to be doubly even if   it satisfies $\textbf{c}\cdot \textbf{c} \equiv 0 \pmod{4}$ for all $\textbf{c} \in C,$ where each $\textbf{c}\cdot \textbf{c}$ is viewed as an element of $GR(2^{e},m)$ (see \cite[Def. 2.1]{Galois}). 
 Yadav and Sharma \cite[Th. 3.2 and 3.3]{Galois} obtained  enumeration formulae for  the number $\widehat{\sigma}_m(n;\mathfrak{d})$ of  doubly even codes of length $n$ and dimension $\mathfrak{d}$ over $\mathcal{T}$ containing the all-one vector and the number $\sigma_m(n;\mathfrak{d})$ of  doubly even codes of length $n$ and dimension $\mathfrak{d}$ over $\mathcal{T}$ that do not contain the all-one vector.
In the following lemma, we establish a 1-1 correspondence between doubly even codes over $\mathcal{T}_m$ and  doubly even codes over the  Teichm$\ddot{u}$ller set $\mathcal{T}$ of $GR(2^{e},m).$ 

\begin{lemma}\label{LEM}There is a  1-1 correspondence between doubly even codes of length $n$ over $\mathcal{T}_m$ and  doubly even codes of the same length $n$ over the  Teichm$\ddot{u}$ller set $\mathcal{T}$ of $GR(2^{e},m).$  
\end{lemma}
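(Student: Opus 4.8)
The plan is to exhibit an explicit bijection between the two sets of codes by transporting codes through the field isomorphism $\mathcal{T}_m \cong \overline{\mathscr{R}}_{e,m} \cong \mathcal{T}/\langle 2\rangle$ and showing that the "doubly even" condition on one side corresponds exactly to the "doubly even" condition on the other. Concretely, let $\varphi\colon \mathcal{T}_m \to \mathcal{T}$ be the bijection obtained by composing the field isomorphism $\overline{{}^{-}}\restriction_{\mathcal{T}_m}\colon \mathcal{T}_m \to \overline{\mathscr{R}}_{e,m}$ with the inverse of the analogous field isomorphism for $\mathcal{T}$ onto the residue field of $GR(2^e,m)$; equivalently, $\varphi$ is the unique multiplicative bijection fixing $0$ and $1$ and sending $\xi$ to the corresponding Teichm\"uller generator of $GR(2^e,m)$. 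Extend $\varphi$ coordinatewise to a bijection $\mathcal{T}_m^n \to \mathcal{T}^n$, and note that for any $\mathcal{T}_m$-subspace $\mathcal{C}\subseteq \mathcal{T}_m^n$ the image $\varphi(\mathcal{C})$ is a $\mathcal{T}$-linear code of the same length and dimension, because $\varphi$ is a field isomorphism between $(\mathcal{T}_m,\oplus,\cdot)$ and $(\mathcal{T},\oplus',\cdot)$ where $\oplus'$ is the Teichm\"uller addition on $\mathcal{T}$. This gives a dimension- and length-preserving bijection between all linear codes over $\mathcal{T}_m$ and all linear codes over $\mathcal{T}$, so it only remains to check that $\mathcal{C}$ is doubly even (in the sense of Definition~\ref{d3.1}) if and only if $\varphi(\mathcal{C})$ is doubly even (in the sense of $\textbf{c}\cdot\textbf{c}\equiv 0\pmod 4$).

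The key computational step is to relate the two quadratic-type conditions. For $\textbf{v}=(v_1,\dots,v_n)\in\mathcal{T}_m^n$, the condition $\pi_\kappa(\textbf{v}\cdot\textbf{v})=0$ (together with the automatic vanishing $\pi_0(\textbf{v}\cdot\textbf{v})=\pi_1(\textbf{v}\cdot\textbf{v})=\cdots=\pi_{\kappa-1}(\textbf{v}\cdot\textbf{v})$ of the odd-indexed coefficients, which holds for self-orthogonal $\mathcal{C}$ as noted just before Definition~\ref{d3.1}) says precisely that $\textbf{v}\cdot\textbf{v}\in\langle u^{\kappa+1}\rangle$ as an element of $\mathscr{R}_{e,m}$. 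Since $2\equiv u^\kappa(\eta_0+u\eta_1+\cdots)\pmod{u^e}$ with $\eta_0\neq 0$, we have $\langle 2\rangle=\langle u^\kappa\rangle$ and $\langle 4\rangle=\langle 2\rangle^2=\langle u^{2\kappa}\rangle$ in $\mathscr{R}_{e,m}$; more to the point, I would instead work directly in the Galois ring $GR(2^e,m)$, where $\varphi(\textbf{v})\cdot\varphi(\textbf{v})=\sum_i \varphi(v_i)^2$ is a sum of squares of Teichm\"uller elements. The arithmetic fact I need is that for Teichm\"uller elements $a,b$ in $GR(2^e,m)$ one has $(a\oplus' b)^2 \equiv a^2 + b^2 \pmod 4$ — this is the standard identity underlying the Gray map and the theory of $\mathbb{Z}_4$-linear codes — so $\sum_i \varphi(v_i)^2 \equiv \big(\text{Teichm\"uller lift of }\bigoplus_i v_i^2\big)^2 + 2(\text{cross terms})$, and a short induction shows $\varphi(\textbf{v})\cdot\varphi(\textbf{v})\equiv 0\pmod 4$ if and only if the corresponding residue-field condition (the one expressing $\pi_\kappa(\textbf{v}\cdot\textbf{v})=0$) holds. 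I would make this precise by choosing the isomorphism identifying $\mathscr{R}_{e,m}/\langle u^{2\kappa}\rangle$-arithmetic with $GR(2^e,m)/\langle 4\rangle$-arithmetic on the relevant generators, so that $\pi_\kappa$ of a square in $\mathscr{R}_{e,m}$ matches the coefficient of $2$ in the $2$-adic expansion of the corresponding square in $GR(2^e,m)$.

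Having established the two-way implication for individual codewords, the equivalence at the level of codes is immediate: $\mathcal{C}$ is doubly even iff every $\textbf{v}\in\mathcal{C}$ satisfies $\pi_\kappa(\textbf{v}\cdot\textbf{v})=0$ iff every $\varphi(\textbf{v})\in\varphi(\mathcal{C})$ satisfies $\varphi(\textbf{v})\cdot\varphi(\textbf{v})\equiv 0\pmod 4$ iff $\varphi(\mathcal{C})$ is doubly even. Since $\varphi$ is a bijection on the ambient space restricting to a bijection on linear subspaces, and since it preserves the doubly-even property in both directions, $\mathcal{C}\mapsto\varphi(\mathcal{C})$ is the desired $1$-$1$ correspondence; its inverse is $C\mapsto\varphi^{-1}(C)$.

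The main obstacle is the bookkeeping in the second paragraph: one must pin down the identification of the residue-field valuation data of $\mathscr{R}_{e,m}$ (the meaning of $\pi_\kappa$ applied to a sum of Teichm\"uller squares, which involves the parameters $\eta_0,\eta_1,\dots$) with the $2$-adic valuation data of $GR(2^e,m)$, and verify that the "mod $u^{\kappa+1}$" truncation on one side really does correspond to the "mod $4$" truncation on the other. The potential subtlety is that $\mathscr{R}_{e,m}$ is not itself a Galois ring, so the correspondence is only between the quotients $\mathscr{R}_{e,m}/\langle u^{\kappa+1}\rangle$ and $GR(2^e,m)/\langle 4\rangle$ (or whatever minimal quotient controls the doubly-even condition), and one must check that passing through these quotients does not lose information about whether the full codeword inner product lies in the right ideal — which it does not, precisely because the doubly-even condition only constrains $\pi_\kappa(\textbf{v}\cdot\textbf{v})$ and the lower coefficients are already forced by self-orthogonality over $\mathcal{T}_m$.
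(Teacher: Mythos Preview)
Your high-level strategy matches the paper's: use the field isomorphism $\varphi\colon\mathcal{T}_m\to\mathcal{T}$ coordinatewise to get a bijection of linear codes, then show the doubly-even condition transfers. The problem is in the computational core of your second paragraph.

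Your ``arithmetic fact'' $(a\oplus' b)^2 \equiv a^2+b^2 \pmod 4$ for Teichm\"uller elements in $GR(2^e,m)$ is false: take $a=b=1$ in $\mathbb{Z}_4$, where $(1\oplus' 1)^2=0$ but $1^2+1^2=2$. The correct identity is $a^2+b^2=(a+b)^2-2ab\equiv(a\oplus' b)^2-2ab\pmod 4$, and it is exactly this cross term that carries the doubly-even information. Relatedly, your proposed identification of $\mathscr{R}_{e,m}/\langle u^{\kappa+1}\rangle$ with $GR(2^e,m)/\langle 4\rangle$ cannot be a ring isomorphism (the cardinalities are $2^{m(\kappa+1)}$ and $2^{2m}$, which differ for $\kappa\geq 3$), so the quotient-comparison route you sketch does not work as stated.

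The paper avoids all of this by extracting a single residue-field invariant on each side. Using $\sum_i c_i^2=(\sum_i c_i)^2-2\sum_{i<j}c_ic_j$ together with self-orthogonality (which forces $\sum_i c_i\in\langle 2\rangle$, resp.\ $\sum_i v_i\in\langle u^\kappa\rangle$, so the square term drops below the relevant threshold), one finds that in $GR(2^e,m)$ the condition $\textbf{c}\cdot\textbf{c}\equiv 0\pmod 4$ is equivalent to $\pi_0\big(\sum_{i<j}c_ic_j\big)=0$, while in $\mathscr{R}_{e,m}$ one computes directly $\pi_\kappa(\textbf{v}\cdot\textbf{v})=\eta_0\,\pi_0\big(\sum_{i<j}v_iv_j\big)$. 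Since $\eta_0$ is a unit and $\pi_0\big(\sum_{i<j}v_iv_j\big)=\bigoplus_{i<j}v_iv_j$ is a purely field-theoretic expression, it is preserved by $\varphi$, and the equivalence follows immediately. This sidesteps any need to compare quotient rings or to track higher Teichm\"uller coefficients.
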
 
\begin{proof}To prove the result, we first  note that $\mathcal{T}\simeq \mathbb{F}_{2^m}\simeq \mathcal{T}_m,$ and let $\phi$ be the field isomorphism from $\mathcal{T}$ onto $\mathcal{T}_m.$ The isomorphism $\phi$ can be extended component-wise to an isomorphism from $\mathcal{T}^n$ onto $\mathcal{T}_m^n,$ which we shall denote by $\phi$ itself for the sake of simplicity. The isomorphism $\phi$ induces a 1-1 correspondence between  linear codes of length $n$ over $\mathcal{T}$ and linear codes of length $n$ over $\mathcal{T}_m.$

Now, let $C$ be a doubly even code of length $n$  over $\mathcal{T}. $ We assert that $\phi(C)$ is a doubly even code over $\mathcal{T}_m.$ To prove this assertion, it is easy to see that the code $\phi(C)$ is self-orthogonal.   Next, let $\textbf{c}=(c_1,c_2,\ldots,c_n)\in C.$  Since the code $C$ is doubly even over $\mathcal{T},$ we have  $\textbf{c}\cdot \textbf{c}\equiv 0\pmod 4,$ where $\textbf{c}\cdot \textbf{c}$ is viewed as an element of $GR(2^e,m).$ This implies that $(\sum\limits_{i=1}^n c_i)^2-2\hspace{-1mm}\sum\limits_{1\leq i<j\leq n}\hspace{-1mm}c_ic_j \equiv 0\pmod 4,$ which further implies that $\pi_0(-\hspace{-1mm}\sum\limits_{1\leq i<j\leq n}\hspace{-1mm}c_ic_j)=0.$  As $\phi(\textbf{c})\cdot \phi(\textbf{c})$ can be viewed as an element of $\mathscr{R}_{e,m},$ we observe that  $\pi_{\kappa}\big(\phi(\textbf{c})\cdot \phi(\textbf{c})\big)=\eta_0\pi_0\big(-\hspace{-1mm}\sum\limits_{1\leq i<j\leq n}\hspace{-1mm}\phi(c_i)\phi(c_j)\big)=\eta_0\phi\big(\pi_0(-\hspace{-1mm}\sum\limits_{1\leq i<j\leq n}\hspace{-1mm}c_ic_j)\big)=\phi(0)=0,$ which implies that the code $\phi(C)$ is doubly even over $\mathcal{T}_m.$ 

Conversely, let  $D$ be a doubly even code of length $n$ over $\mathcal{T}_m.$ There exists a linear code $D^{\prime}$ of length $n$ over $\mathcal{T}$ such that $\phi(D^{\prime})=D.$ We claim that the code $D^{\prime}$ is doubly even.
It is easy to see that the code $D^{\prime}$ is  self-orthogonal.  
Now, let $\textbf{d}=(d_1,d_2,\ldots,d_n)\in D^{\prime}$ be fixed.  Since $D$ is doubly even, we have 
$\pi_{\kappa}\big(\phi(\textbf{d})\cdot \phi(\textbf{d})\big)=\eta_0 \pi_0\big(-\sum\limits_{1\leq i<j\leq n}\hspace{-1mm}\phi(d_i)\phi(d_j)\big)=0.$ This implies that $\pi_0\big(-\sum\limits_{1\leq i<j\leq n}\hspace{-1mm}d_id_j\big)=0,$ which further implies that $\textbf{d}\cdot \textbf{d}\equiv 0\pmod 4,$ where $\textbf{d}\cdot \textbf{d}$ is viewed as an element of $GR(2^e,m).$  This implies that the code $D^{\prime}$ is doubly even.
\vspace{-1mm}\end{proof}

Further, let $\mathtt{Sym}_j(\mathcal{T}_m)$ and $\mathtt{Alt}_j(\mathcal{T}_m)$ denote the sets consisting of all $j \times j$ symmetric and alternating matrices over $\mathcal{T}_m,$ respectively.  We further recall, from Section \ref{prelim}, that  $2 \equiv u^{\kappa}(\eta_0+u\eta_1+u^2\eta_2+\cdots+u^{e-1-\kappa}\eta_{e-1-\kappa})\pmod{u^{e}},$ where $\eta_0\in \mathcal{T}_m\setminus \{0\}$ and $\eta_1,\eta_2,\ldots,\eta_{e-1-\kappa}\in \mathcal{T}_{m}.$ 
Now, the following proposition considers the case where $e \geq 4$ is an even integer  and $\kappa $ is an odd integer satisfying $3\leq \kappa \leq e-1.$  It provides a method for lifting  a $\Lambda_{s-\kappa_1}$-doubly even   code  of length $n$ and dimension $\Lambda_s$ over $\mathcal{T}_{m}$ to a self-orthogonal code  of type $\{\Lambda_{s},\lambda_{s+1}\}$ and the same length $n$ over $\mathscr{R}_{2,m}$  satisfying property $(\mathfrak{P}),$  and it also enumerates  the number of distinct ways to perform this lifting. \vspace{-1mm}\begin{proposition}\label{p3.2}
 Let $e\geq 4$ be an even integer, and let $\kappa $ be an odd integer satisfying $3\leq \kappa \leq e-1.$  Let $\mathcal{C}^{(s)}$ be a $\Lambda_{s-\kappa_1}$-doubly even  code of length $n$ and dimension $\Lambda_s$ over $\mathcal{T}_{m}.$ 
The following hold.
\begin{enumerate}\vspace{-2mm}\item[(a)]  There exists a self-orthogonal code $\mathscr{D}_{2}$ of type $\{\Lambda_{s},\lambda_{s+1}\}$ and length $n$ over $\mathscr{R}_{2,m}$  satisfying property $(\mathfrak{P})$ with $Tor_1(\mathscr{D}_2)=\mathcal{C}^{(s)}.$
\vspace{-2mm}\item[(b)] Moreover,  each  $\Lambda_{s-\kappa_1}$-doubly even  code $\mathcal{C}^{(s)}$ of length $n$ and dimension $\Lambda_s$ over $\mathcal{T}_{m}$   gives rise to precisely  	\vspace{-2mm}\begin{equation*}
\displaystyle (2^m)^{\sum\limits_{i=3}^{s+1}\lambda_i\Lambda_{i-2}+\Lambda_{s}(n-\Lambda_{s+1})- \Lambda_{s-1}-\frac{\Lambda_{s}(\Lambda_{s}-1)}{2}}{\lambda_{s+1}+n-\Lambda_{s+1}-\Lambda_s \brack \lambda_{s+1}}_{2^m}
\vspace{-1mm}\end{equation*} distinct   self-orthogonal codes $\mathscr{D}_{2}$ of type $\{\Lambda_{s},\lambda_{s+1}\}$ and length $n$ over $\mathscr{R}_{2,m}$ satisfying property $(\mathfrak{P})$ with $Tor_1(\mathscr{D}_2)=\mathcal{C}^{(s)}.$   
 \end{enumerate} \end{proposition}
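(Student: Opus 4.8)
The plan is to construct the lift $\mathscr{D}_2$ explicitly from a standard-form generator matrix of $\mathcal{C}^{(s)}$ and then to count the degrees of freedom available in that construction. First I would fix a generator matrix $G$ of $\mathcal{C}^{(s)}$ in the reduced echelon form supplied by Remark \ref{r3.1} (with $\theta_e = 0$, so the block sizes are $\lambda_1,\dots,\lambda_{s+1}$ and the bottom blocks $(\mathtt{B})_{s,s+1},(\mathtt{B})_{s-1,s+2},\dots,\mathtt{B}_{1,e}$ have full row rank), and write $G = [\mathtt{W}^{(0)}]_{s}$. A candidate generator matrix for $\mathscr{D}_2$ over $\mathscr{R}_{2,m}$ then has the shape
\begin{equation*}
\mathcal{G}_2 = \begin{bmatrix} \mathtt{W}^{(0)}_1 + u\mathtt{V}^{(1)}_1 \\ \vdots \\ \mathtt{W}^{(0)}_s + u\mathtt{V}^{(1)}_s \\ u\,\mathtt{W}^{(0)}_{s+1} \end{bmatrix},
\end{equation*}
where the first $s$ rows reduce mod $u$ to $G$ (so $Tor_1(\mathscr{D}_2) = \mathcal{C}^{(s)}$ is automatic) and $\mathtt{W}^{(0)}_{s+1} \in \mathcal{M}_{\lambda_{s+1}\times n}(\mathcal{T}_m)$ together with the perturbations $\mathtt{V}^{(1)}_i$ are the data to be chosen. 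The task is to determine exactly when such a $\mathcal{G}_2$ generates a self-orthogonal code over $\mathscr{R}_{2,m}$ satisfying property $(\mathfrak{P})$, and then to count the valid choices modulo the redundancy that yields the same code $\mathscr{D}_2$.

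Next I would translate the self-orthogonality and property $(\mathfrak{P})$ conditions, via Lemma \ref{l2.2} and case (i) (or case (iv), depending on whether $2\kappa \le e$) of the definition of property $(\mathfrak{P})$ at level $\ell = 2$, into linear/quadratic equations in the unknowns over $\mathcal{T}_m$. Over $\mathscr{R}_{2,m}$ the pairwise products $\mathtt{W}_i \mathtt{W}_j^t$ only need to vanish mod a power of $u$ determined by $i+j$; writing each product out in its Teichmüller expansion, the off-diagonal relations among the first $s$ rows become $\mathbb{F}_{2^m}$-linear conditions expressing how the $\mathtt{V}^{(1)}_i$ (equivalently, the perturbations of the off-diagonal blocks $\mathtt{B}_{i,j}$) are constrained by the already-fixed entries of $G$; the diagonal relations, i.e. the $\pi_1$ and $\pi_\kappa$-type conditions in property $(\mathfrak{P})$, are precisely the conditions that $\mathcal{C}^{(s)}$ be self-orthogonal and that its subcode $\mathcal{C}^{(s-\kappa_1)}$ generated by $[\mathtt{W}^{(0)}]_{s-\kappa_1}$ be doubly even — which is exactly the hypothesis $\mathcal{C}^{(s)}$ is $\Lambda_{s-\kappa_1}$-doubly even. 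This is where the hypothesis is consumed, and it gives part (a): at least one lift exists (e.g. take all admissible perturbations to be the forced ones and $\mathtt{W}^{(0)}_{s+1}$ a basis of a complement inside $(\mathcal{C}^{(s)})^{\perp_{B_m}}$ of the appropriate dimension). The key point making part (a) work is that the "new" row $u\mathtt{W}^{(0)}_{s+1}$ contributes $u^2 \mathtt{W}^{(0)}_{s+1}(\mathtt{W}^{(0)}_{s+1})^t$ to the Gram matrix, which is $\equiv \mathbf 0 \pmod{u^2}$ automatically, and its products with the first $s$ rows are $\equiv u\,(\mathtt{W}^{(0)}_{s+1})(\mathtt{W}^{(0)}_i)^t \pmod{u^2}$, vanishing iff the rows of $\mathtt{W}^{(0)}_{s+1}$ lie in $(\mathcal{C}^{(s)})^{\perp_{B_m}}$.

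For part (b) I would count in two stages. Stage one: the number of choices for the bottom block, i.e. for a $\lambda_{s+1}$-dimensional subspace of $(\mathcal{C}^{(s)})^{\perp_{B_m}} / \mathcal{C}^{(s)}$ together with the freedom to translate its generators by elements of $\mathcal{C}^{(s)}$; since $\dim (\mathcal{C}^{(s)})^{\perp_{B_m}} = n - \Lambda_s$, the quotient has dimension $n - \Lambda_s - \Lambda_{s+1}$ after fixing the required $\lambda_{s+1}$ rows, and the Gaussian binomial $\qbin{\lambda_{s+1}+n-\Lambda_{s+1}-\Lambda_s}{\lambda_{s+1}}{2^m}$ counts the subspaces while the factor $(2^m)^{\Lambda_s(n-\Lambda_{s+1})}$ (minus a correction) accounts for the coset representative freedom within the standard-form normalization. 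Stage two: the number of choices for the perturbations $\mathtt{V}^{(1)}_i$ of the first $s$ rows that are consistent with self-orthogonality and property $(\mathfrak{P})$ — these split into the blocks that are freely choosable ($\mathtt{B}_{i,j}^{(1)}$ for $j$ large, contributing $\sum_{i=3}^{s+1}\lambda_i \Lambda_{i-2}$ in the exponent) and those forced by the diagonal relations, with the triangular-number corrections $-\Lambda_{s-1} - \binom{\Lambda_s}{2}$ coming from the symmetric/alternating structure of the constraints (this is where $\mathtt{Sym}_j$ and $\mathtt{Alt}_j$ enter). I expect the main obstacle to be bookkeeping stage two correctly: one must carefully identify, block by block, which entries of the perturbation matrices are free, which are uniquely determined, and which are subject to a symmetric constraint of rank counted by $\binom{\Lambda_s}{2}$, and then verify that every admissible choice does give a genuinely distinct code $\mathscr{D}_2$ (no overcounting), i.e. that the standard-form generator matrix is unique for each code. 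The cleanest way to handle this is to mirror the computation in Section 4 of \cite{quasi} (the $\mathfrak{s}=1$ case), which the paper has already noted coincides with property $(\mathfrak{P})$ at this level, adjusting only the role of $2 \equiv u^\kappa(\eta_0 + \cdots)$ in the single place where a "$2\times$" term appears — namely in the $\pi_\kappa$ diagonal relation — and checking that because $\kappa$ is odd this term lands on an index already governed by the doubly-even hypothesis rather than creating a new obstruction.
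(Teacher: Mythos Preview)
Your approach is correct and is essentially what the paper does: its proof is a one-line appeal to Proposition~4.1 of \cite{Galois} together with Lemma~4.1 of \cite{quasi}, and your plan to set up $\mathcal{G}_2$, translate self-orthogonality and property~$(\mathfrak{P})$ into linear constraints on $[\mathtt{V}^{(1)}]_s$ and $\mathtt{W}^{(0)}_{s+1}$, and then count solutions is exactly the content of those references. One small bookkeeping slip to watch: the factor $(2^m)^{\Lambda_s(n-\Lambda_{s+1})}$ is not ``coset representative freedom'' for the bottom block but, together with $\sum_{i=3}^{s+1}\lambda_i\Lambda_{i-2}$, is the total number of free entries in $[\mathtt{V}^{(1)}]_s$ before imposing the $\binom{\Lambda_s}{2}$ off-diagonal and $\Lambda_{s-1}$ diagonal (the $\pi_2$-part of the $\pmod{u^4}$ condition) constraints --- the choice of $\mathtt{W}^{(0)}_{s+1}$ is entirely captured by the Gaussian binomial.
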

\vspace{-2mm}\begin{proof}
    Working as in Proposition 4.1 of Yadav and Sharma \cite{Galois}
and by applying Lemma 4.1 of Yadav and Sharma \cite{quasi}, the desired result follows. \end{proof}
 The following proposition considers the case when both $e$ and $\kappa$ are odd integers  satisfying  $3\leq \kappa \leq e-1.$ It provides a method for lifting a $\Lambda_{s-\kappa_1}$-doubly even  code of length $n$ and dimension $\Lambda_{s+1}$ over $\mathcal{T}_m$ to a self-orthogonal code of type $\{\Lambda_{s},\lambda_{s+1},\lambda_{s+2}\}$ and the same length $n$ over $\mathscr{R}_{3,m}$ satisfying property $(\mathfrak{P}),$   and it also enumerates the number of distinct ways in which this lifting can be performed. \vspace{-1mm}\begin{proposition}\label{p3.3a} Suppose that  $e$ and $\kappa$ are both odd integers satisfying  $3\leq \kappa \leq e-1.$ Let $\mathcal{C}^{(s+1)}\supseteq \mathcal{C}^{(s)}\supseteq\mathcal{C}^{(s-\kappa_1)}\supseteq\mathcal{C}^{(s-\kappa_1-1)}$ be a chain of self-orthogonal codes of length $n$ over $\mathcal{T}_m$, where $\dim  \mathcal{C}^{(s+1)}=\Lambda_{s+1},$ $\dim \mathcal{C}^{(s)}=\Lambda_s,$ $\dim \mathcal{C}^{(s-\kappa_1)}=\Lambda_{s-\kappa_1},$  $\dim \mathcal{C}^{(s-\kappa_1-1)}=\Lambda_{s-\kappa_1-1},$  and the code $\mathcal{C}^{(s-\kappa_1)}$ is doubly even.   The following hold.
 \begin{enumerate}\vspace{-2mm}\item[(a)]  There exists a  self-orthogonal code $\mathscr{D}_{3}$ of type $\{\Lambda_{s},\lambda_{s+1},\lambda_{s+2}\}$ and length $n$ over $\mathscr{R}_{3,m}$  satisfying property $(\mathfrak{P})$ with   $Tor_1(\mathscr{D}_3)=\mathcal{C}^{(s)}$   and  $Tor_2(\mathscr{D}_3)=\mathcal{C}^{(s+1)}.$ 
\item[(b)] Moreover,  for each such chain $\mathcal{C}^{(s+1)}\supseteq \mathcal{C}^{(s)}\supseteq\mathcal{C}^{(s-\kappa_1)}\supseteq\mathcal{C}^{(s-\kappa_1-1)}$   of self-orthogonal codes over $\mathcal{T}_m,$ there are precisely	\begin{equation*}
\displaystyle (2^m)^{\sum\limits_{i=3}^{s+2}\lambda_i\Lambda_{i-2}+\sum\limits_{j=4}^{s+2}\lambda_j\Lambda_{j-3}+(\Lambda_s+\Lambda_{s+1})(n-\Lambda_{s+2}-\Lambda_s)+\Lambda_s^2-\Lambda_{s-1}-\Lambda_{s-\kappa_1-1}+\omega_3}{\lambda_{s+2}+n-\Lambda_{s+2}-\Lambda_s \brack \lambda_{s+2}}_{2^m}\vspace{-2mm}
\end{equation*} distinct   self-orthogonal codes $\mathscr{D}_{3}$ of type $\{\Lambda_{s},\lambda_{s+1},\lambda_{s+2}\}$ and length $n$ over $\mathscr{R}_{3,m}$ satisfying property $(\mathfrak{P})$ with $Tor_1(\mathscr{D}_3)=\mathcal{C}^{(s)}$   and $Tor_2(\mathscr{D}_3)=\mathcal{C}^{(s+1)},$    where $\omega_3=1$ if $\textbf{1}\in \mathcal{C}^{(s-\kappa_1-1)},$ 
while $\omega_3=0$ otherwise.
 \end{enumerate} \end{proposition}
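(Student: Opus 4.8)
Here is a plan for proving the proposition.

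\medskip
\noindent\textbf{Overall strategy.} The plan is to follow the layer-by-layer lifting scheme of Proposition~4.1 of Yadav and Sharma~\cite{Galois} together with Lemma~4.1 of~\cite{quasi}, adapted to the new situation in which $e$ is odd and the lift is performed in one step from $\mathcal{T}_m$ to the nilpotency-index-$3$ ring $\mathscr{R}_{3,m}$ (there is no $\mathscr{R}_{2,m}$ to pass through, since $2\not\equiv e\pmod 2$). A convenient feature is that, because $\kappa\geq 3$, we have $\langle u^{\kappa}\rangle\subseteq\langle u^{3}\rangle$, so $2=0$ in $\mathscr{R}_{3,m}$ and hence $\mathscr{R}_{3,m}\cong\mathcal{T}_m[u]/\langle u^{3}\rangle$ is a quasi-Galois ring; thus the self-orthogonality part of the construction is governed by the quasi-Galois methods of~\cite{quasi}, while property~$(\mathfrak{P})$ imposes further constraints on the Teichm\"uller layers of the generator matrix, formed over $\mathscr{R}_{e,m}$ and reaching up to the $\pi_{\kappa+2}$-component on the row-block indexed by $s-\kappa_1-1$.

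\medskip
\noindent\textbf{Step 1: fix the torsion data.} First I would choose a generator matrix of $\mathcal{C}^{(s+1)}$ in the standard form of Remark~\ref{r3.1} that is compatible with the whole chain $\mathcal{C}^{(s-\kappa_1-1)}\subseteq\mathcal{C}^{(s-\kappa_1)}\subseteq\mathcal{C}^{(s)}\subseteq\mathcal{C}^{(s+1)}$ (so that its first $\Lambda_{s-\kappa_1-1}$, $\Lambda_{s-\kappa_1}$ and $\Lambda_{s}$ rows generate $\mathcal{C}^{(s-\kappa_1-1)}$, $\mathcal{C}^{(s-\kappa_1)}$ and $\mathcal{C}^{(s)}$, respectively). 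This pins down the $\mathcal{T}_m$-layer $[\mathtt{W}^{(0)}]_{s}$ and $\mathtt{W}^{(0)}_{s+1}$ of the desired generator matrix $\mathcal{G}_{3}$ of $\mathscr{D}_{3}$ (notation as in~\eqref{Gl} with $\ell=3$, $\gamma_{3}=s-1$). By Lemma~\ref{l1.2} with $e=3$ (the inclusion $Tor_{3}\subseteq Tor_{1}^{\perp_{B_m}}$), the code $Tor_{3}(\mathscr{D}_3)$ must be a $\lambda_{s+2}$-dimensional extension of $\mathcal{C}^{(s+1)}$ inside $(\mathcal{C}^{(s)})^{\perp_{B_m}}$; since $\dim(\mathcal{C}^{(s)})^{\perp_{B_m}}-\dim\mathcal{C}^{(s+1)}=n-\Lambda_{s+1}-\Lambda_s$, the number of such extensions is the Gaussian-binomial factor $\qbin{\lambda_{s+2}+n-\Lambda_{s+2}-\Lambda_s}{\lambda_{s+2}}{2^m}$ appearing in the count.

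\medskip
\noindent\textbf{Step 2: solve the lifting congruences and count.} With the torsion data fixed, $\mathcal{G}_3$ is determined by the free Teichm\"uller matrices $\mathtt{V}^{(1)}_h,\mathtt{V}^{(2)}_h$ ($1\le h\le s$), $\mathtt{V}^{(1)}_{s+1}$ and by a generator of the $\mathtt{W}^{(0)}_{s+2}$-block. I would then write out, via Lemma~\ref{l2.2} with $e=3$ and by expanding all products over $\mathscr{R}_{e,m}$, the requirement that $\mathscr{D}_3$ be self-orthogonal over $\mathscr{R}_{3,m}$ and satisfy property~$(\mathfrak{P})$. These split into \emph{off-diagonal} congruences, which are inhomogeneous $\mathcal{T}_m$-linear equations in the entries of the $\mathtt{V}^{(\cdot)}_h$, automatically consistent since $\pi_0(\mathbf{w}^{(0)}_a\cdot\mathbf{w}^{(0)}_b)=0$ for all rows $\mathbf{w}^{(0)}_a,\mathbf{w}^{(0)}_b$ of $[\mathtt{W}^{(0)}]_s$ (using $\mathcal{C}^{(s)}\subseteq\mathcal{C}^{(s+1)}\subseteq(\mathcal{C}^{(s+1)})^{\perp_{B_m}}\subseteq(\mathcal{C}^{(s)})^{\perp_{B_m}}$), with a coefficient system of rank computable from the pivot structure; and \emph{diagonal} congruences $\pi_{t}\big(\mathtt{Diag}([\mathtt{W}^{(3)}]_{\mu}[\mathtt{W}^{(3)}]_{\mu}^{t})\big)=\mathbf{0}$ for a few pairs $(t,\mu)$, the largest being $(t,\mu)=(\kappa+2,\,s-\kappa_1-1)$. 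For the diagonal ones I would invoke the standing facts, valid for every self-orthogonal code over $\mathcal{T}_m$, that $\pi_0(\mathbf{v}\cdot\mathbf{v})=\pi_1(\mathbf{v}\cdot\mathbf{v})=\cdots=\pi_{\kappa-2}(\mathbf{v}\cdot\mathbf{v})=0$, together with the extra vanishing $\pi_{\kappa}(\mathbf{v}\cdot\mathbf{v})=0$ for $\mathbf{v}\in\mathcal{C}^{(s-\kappa_1)}$ (the doubly-even hypothesis) and the identity $\mathbf{v}\cdot\mathbf{v}=\big(\sum_i v_i\big)^2-2\sum_{i<j}v_iv_j$ with $2\equiv u^{\kappa}(\eta_0+u\eta_1+\cdots)$, to show each diagonal congruence is solvable and to count its solutions. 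The one place where the count is not a function of the dimensions $\Lambda_i$ alone is the top diagonal congruence on the block indexed by $s-\kappa_1-1$: since $\mathbf{1}$ is automatically $B_m$-orthogonal to any self-orthogonal code, evaluating the relevant form on $\mathbf{1}$ produces exactly one further linear relation when $\mathbf{1}\in\mathcal{C}^{(s-\kappa_1-1)}$ and none otherwise, which is precisely the correction $\omega_3$. Multiplying the cardinalities of all these affine solution spaces with the binomial factor from Step~1 yields the stated formula; part~(a) then follows because at least one admissible lift exists (the congruences of both families can be solved greedily, layer by layer), and $Tor_1(\mathscr{D}_3)=\mathcal{C}^{(s)}$, $Tor_2(\mathscr{D}_3)=\mathcal{C}^{(s+1)}$ hold by construction. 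Conversely, by Lemma~\ref{l3.1} every self-orthogonal $\mathscr{D}_3$ over $\mathscr{R}_{3,m}$ satisfying~$(\mathfrak{P})$ with these torsion codes has a $\Lambda_{s-\kappa_1}$-dimensional doubly even subcode inside $Tor_1(\mathscr{D}_3)$, hence arises from such a chain, so the displayed number is the exact fibre size.

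\medskip
\noindent\textbf{Main obstacle.} The delicate part is the finite-field linear-algebra bookkeeping of Step~2: deciding which diagonal congruences are forced to hold (by the $\pi_{2j+1}$-vanishing and the doubly-evenness of $\mathcal{C}^{(s-\kappa_1)}$), which remaining ones are mutually independent constraints on the lift data, and — most sensitively — proving that the \emph{only} departure from a count depending solely on the $\Lambda_i$ is the single relation governed by whether $\mathbf{1}\in\mathcal{C}^{(s-\kappa_1-1)}$, so that the net rank of the binding congruences is exactly $\Lambda_{s-1}+\Lambda_{s-\kappa_1-1}-\omega_3$. This is precisely the phenomenon behind the failure of the earlier recursive methods in Example~\ref{example1}, so the argument must genuinely extend, rather than merely cite, the techniques of~\cite{quasi,Galois}.
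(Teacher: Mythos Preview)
Your plan is correct and follows essentially the same approach as the paper: fix a standard-form generator matrix of $\mathcal{C}^{(s+1)}$ compatible with the whole chain (Remark~\ref{r3.1}), write $\mathtt{G}_3$ with free Teichm\"uller layers $[\mathtt{V}^{(1)}]_s,[\mathtt{V}^{(2)}]_s$ and the $\mathtt{W}_{s+1}^{(3)},\mathtt{W}_{s+2}^{(3)}$ blocks, reduce self-orthogonality together with property~$(\mathfrak{P})$ to the linear systems you describe, solve and count the $[\mathtt{V}^{(1)}]_s$-layer via Lemma~4.1 of~\cite{quasi} and the $[\mathtt{V}^{(2)}]_s$-layer via Lemma~4.1 of~\cite{Galois} (this is exactly where the $\omega_3$ correction from $\mathbf{1}\in\mathcal{C}^{(s-\kappa_1-1)}$ enters, as a rank-drop rather than an extra constraint), and finish with the count for the last two row blocks as in Proposition~4.2 of~\cite{Galois}. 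One small imprecision: in your ``Main obstacle'' paragraph the full diagonal-constraint rank is $\Lambda_s+\Lambda_{s-1}+\Lambda_{s-\kappa_1-1}-\omega_3$ (the $\Lambda_s$ piece coming from the $\mathbf{1}\cdot\mathbf{x}_i$ conditions on $[\mathtt{V}^{(1)}]_s$), not just $\Lambda_{s-1}+\Lambda_{s-\kappa_1-1}-\omega_3$; this does not affect your argument but should be tracked in the final bookkeeping.
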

 \begin{proof} Here, by Remark \ref{r3.1},  we assume,  without any loss of generality, that the code $\mathcal{C}^{(s+1)}$  has a
generator matrix   \begin{equation*} 
\mathtt{G}_0=[\mathtt{W}^{(0)}]_{s+1}=\begin{bmatrix}
~\mathtt{W}_1^{(0)}~\\ ~\mathtt{W}_2^{(0)}~\\ ~\vdots~\\ ~\mathtt{W}_{s+1}^{(0)}~
\end{bmatrix}  =\begin{bmatrix}
\mathtt{I}_{\lambda_1}&\mathtt{B}_{1,1}^{(0)}& \mathtt{B}_{1,2}^{(0)}&\cdots&\mathtt{B}_{1,s}^{(0)}&\cdots &\mathtt{B}_{1,e-1}^{(0)}&\mathtt{B}_{1,e}^{(0)}\vspace{0.5mm}\\
 \mathbf{0} & \mathtt{I}_{\lambda_2} &  \mathtt{B}_{2,2}^{(0)} &\cdots &\mathtt{B}_{2,s}^{(0)}&\cdots & \mathtt{B}_{2,e-1}^{(0)}& \mathtt{B}_{2,e}^{(0)}\\
	\vdots& \vdots  &\vdots &\vdots&\vdots &\vdots& \vdots&\vdots\\
 \mathbf{0}& \mathbf{0}& \mathbf{0}&\cdots&\mathtt{I}_{\lambda_{s+1}}& \cdots & \mathtt{B}_{s+1,e-1}^{(0)}& \mathtt{B}_{s+1,e}^{(0)}
\end{bmatrix}, \end{equation*} where columns of the matrix $\mathtt{G}_0$  are partitioned into blocks of sizes $\lambda_1,\lambda_2,\ldots, \lambda_{e+1}, $ the matrix $\mathtt{I}_{\lambda_i}$ is the $\lambda_i\times \lambda_i$ identity matrix over $\mathcal{T}_{m},$ the matrix    $\mathtt{B}_{i,j}^{(0)} \in \mathcal{M}_{\lambda_i\times \lambda_{j+1}}(\mathcal{T}_{m})$  for $1 \leq i \leq s+1$ and $i \leq j \leq e,$ and  each of  the  matrices $(\mathtt{B}^{(0)})_{ s,s+2} $, $(\mathtt{B}^{(0)})_{ s-1,s+3},\ldots , (\mathtt{B}^{(0)})_{2,e-1}$, $\mathtt{B}_{1,e}^{(0)}$ are of full row-ranks.
We further assume, without any loss of generality,  that the code $\mathcal{C}^{(s)}$ has a generator matrix $[\mathtt{W}^{(0)}]_{s},$ the code $\mathcal{C}^{(s-\kappa_1)}$  has a generator matrix $[\mathtt{W}^{(0)}]_{s-\kappa_1},$ and
 the code $\mathcal{C}^{(s-\kappa_1-1)}$  has a generator matrix $[\mathtt{W}^{(0)}]_{s-\kappa_1-1}.$  
 Under these assumptions, we have  \begin{eqnarray*}
 [\mathtt{W}^{(0)}]_{ s}[\mathtt{W}^{(0)}]_{ s}^t & \equiv &u\mathtt{F}^{(1)}+u^2\mathtt{F}^{(2)}+\cdots+u^{\kappa+2}\mathtt{F}^{(\kappa+2)} \pmod{u^{\kappa+3}}, \\
~[\mathtt{W}^{(0)}]_{ s}\mathtt{W}_{s+1}^{(0) t}&\equiv &uP~\pmod{u^2}, 
\end{eqnarray*} 
  where     $\mathtt{F}^{(j)}\in \mathtt{Sym}_{\Lambda_s}(\mathcal{T}_{m})$ for $1 \leq j \leq \kappa+2$ and  $P\in \mathcal{M}_{\Lambda_s\times \lambda_{s+1}}(\mathcal{T}_{m}).$ Further, one can easily see that  $\mathtt{F}^{(1)}_{i,i}=\mathtt{F}^{(3)}_{i,i}=\cdots=\mathtt{F}^{(\kappa-2)}_{i,i}=0$ for $1 \leq i \leq \Lambda_s$ and $\mathtt{F}^{(\kappa)}_{r,r}=0$ for $1 \leq r \leq \Lambda_{s-\kappa_1}.$  

 Now, to prove  the result,  let us define a matrix  $\mathtt{G}_3$ over $\mathscr{R}_{3,m}$ as 
\vspace{-1.5mm}\begin{equation*} \vspace{-1mm}\mathtt{G}_3=\begin{bmatrix}
\mathtt{W}_1^{(3)}\\\mathtt{W}_2^{(3)}\\ \vdots\\  \mathtt{W}_{s}^{(3)}\\u\mathtt{W}_{s+1}^{(3)}\\ u^2\mathtt{W}_{s+2}^{(3)}
\end{bmatrix}=\begin{bmatrix}
\mathtt{W}_1^{(0)}+u\mathtt{V}_1^{(1)}+u^2\mathtt{V}_1^{(2)}\\ \mathtt{W}_2^{(0)}+u\mathtt{V}_2^{(1)}+u^2\mathtt{V}_2^{(2)}\\ \vdots\\  \mathtt{W}_{s}^{(0)}+u\mathtt{V}_s^{(1)}+u^2\mathtt{V}_s^{(2)}\\u\mathtt{W}_{s+1}^{(3)}\vspace{0.5mm}\\u^2\mathtt{W}_{s+2}^{(3)}
\end{bmatrix}, \end{equation*} where  the matrices
 $[\mathtt{V}^{(\mu)}]_{s} \in \mathcal{M}_{\Lambda_{s}\times n}(\mathcal{T}_{m}) $  for $\mu \in \{1,2\},$  $\mathtt{W}_{s+1}^{(3)} \in \mathcal{M}_{\lambda_{s+1}\times n}(\mathscr{R}_{3,m})$  and $\mathtt{W}_{s+2}^{(3)} \in \mathcal{M}_{\lambda_{s+2}\times n}(\mathcal{T}_{m})$ are of the forms
\begin{equation*}[\mathtt{V}^{(\mu)}]_{ s}=\begin{bmatrix}
\mathtt{V}_1^{(\mu)}\\\mathtt{V}_2^{(\mu)}\\ \vdots\\ \mathtt{V}_{s}^{(\mu)}
\end{bmatrix}= \begin{bmatrix}
 \mathbf{0}&\cdots&  \mathbf{0}&\mathtt{B}_{1,\mu+1}^{(\mu)}& \mathtt{B}_{1,\mu+2}^{(\mu)}&\cdots&\mathtt{B}_{1,\mu+s}^{(\mu)}&\cdots &  \mathtt{B}_{1,e}^{(\mu)}\\
 \mathbf{0} & \cdots&  \mathbf{0}& \mathbf{0}&\mathtt{B}_{2,\mu+2}^{(\mu)}&  \cdots &\mathtt{B}_{2,\mu+s}^{(\mu)}&\cdots &\mathtt{B}_{2,e}^{(\mu)}\\
\vdots &	\vdots& \vdots  &\vdots &\vdots &\vdots&\vdots&\vdots&\vdots\\
 \mathbf{0}&\cdots&   \mathbf{0}& \mathbf{0}& \mathbf{0}&\cdots&\mathtt{B}_{s,\mu+s}^{(\mu)}&\cdots &\mathtt{B}_{s,e}^{(\mu)}
\end{bmatrix},\end{equation*} 
\begin{equation*}\mathtt{W}_{s+1}^{(3)} = \mathtt{W}_{s+1}^{(0)}+u\begin{bmatrix}
 \mathbf{0}&\cdots& \mathbf{0}& \mathtt{B}_{s+1,s+2}^{(1)}&\mathtt{B}_{s+1,s+3}^{(1)}& \cdots & \mathtt{B}_{s+1,e}^{(1)}
\end{bmatrix}   \text{ and }\end{equation*} 
\begin{equation*} \mathtt{W}_{s+2}^{(3)}= \begin{bmatrix}
 \mathbf{0}&\cdots& \mathbf{0}& \mathtt{I}_{\lambda_{s+2}}& \mathtt{B}_{s+2,s+2}^{(0)}& \mathtt{B}_{s+2,s+3}^{(0)}\cdots & \mathtt{B}_{s+2,e}^{(0)}
\end{bmatrix}  \end{equation*}
with $\mathtt{B}_{i,j}^{(\mu)} \in \mathcal{M}_{\lambda_i \times \lambda_{j+1}}(\mathcal{T}_{m})$ for $1 \leq i \leq s$ and $i +\mu \leq j \leq e,$ $\mathtt{B}_{s+1,g}^{(1)} \in \mathcal{M}_{\lambda_{s+1} \times  \lambda_{g+1}}(\mathcal{T}_{m}) $ for $ s+2 \leq g \leq e,$ and 
$\mathtt{B}_{s+2,h}^{(0)} \in \mathcal{M}_{\lambda_{s+2} \times  \lambda_{h+1}}(\mathcal{T}_{m}) $ for $s+2 \leq h \leq e.$ Let $\mathscr{D}_{3}$  be a linear code of length $n$ over $\mathscr{R}_{3,m}$ with a generator matrix  $\mathtt{G}_{3}.$  Clearly,  the code $\mathscr{D}_{3}$ is  of  type $\{\Lambda_s,\lambda_{s+1},\lambda_{s+2}\}$ and   satisfies  $Tor_1(\mathscr{D}_{3})=\mathcal{C}^{(s)}$ and  $Tor_2(\mathscr{D}_{3})=\mathcal{C}^{(s+1)}.$ 
We further see,  by  Lemma \ref{l2.2},   that  the code $\mathscr{D}_{3}$ is a  self-orthogonal code over $\mathscr{R}_{3,m}$ satisfying property $(\mathfrak{P})$   
  if and only if there exist matrices    $[\mathtt{V}^{(1)}]_{ s},[\mathtt{V}^{(2)}]_{ s} ,$ $[ \mathbf{0}~\cdots~ \mathbf{0}~ \mathtt{B}_{s+1,s+2}^{(1)}~ \cdots ~ \mathtt{B}_{s+1,e}^{(1)}]  $ and $\mathtt{W}_{s+2}^{(3)}$  satisfying the following system of matrix equations:
\vspace{-1mm}\begin{eqnarray}
\mathtt{F}^{(1)}+[\mathtt{W}^{(0)}]_{s} [\mathtt{V}^{(1)}]_{ s}^t+ [\mathtt{V}^{(1)}]_{ s} [\mathtt{W}^{(0)}]_{s}^t+u\left( \mathtt{F}^{(2)}+[\mathtt{V}^{(1)}]_{ s}[\mathtt{V}^{(1)}]_{ s}^t \right. & &\nonumber\\ \left.+[\mathtt{W}^{(0)}]_{s}[\mathtt{V}^{(2)}]_{ s}^t+[\mathtt{V}^{(2)}]_{ s}[\mathtt{W}^{(0)}]_{s}^t\right)& \equiv & \mathbf{0} \pmod{u^2},\label{e3.14}\\
\displaystyle \mathtt{Diag}\left([\mathtt{W}^{(3)}]_{s-1}[\mathtt{W}^{(3)}]_{s-1}^t\right) &\equiv &  \mathbf{0} \pmod{u^{5}},~\label{e3.15}\\ 
   \hspace{-2mm} \displaystyle  \pi_{j+3}\left( \mathtt{Diag}\big([\mathtt{W}^{(3)}]_{s-\mathrm{f}_{j+2}}[\mathtt{W}^{(3)}]_{s-\mathrm{f}_{j+2}}^t\big) \right)&=& \mathbf{0} \text{ ~~for~ } j \in \{ 2,4,\ldots, \kappa -1  \},\label{e3.16}\\
 ~P+ [\mathtt{W}^{(0)}]_{s} \left[ 0~\cdots~0~ \mathtt{B}_{s+1,s+2}^{(1)}~ \cdots ~ \mathtt{B}_{s+1,e}^{(1)}\right]^t&\equiv &  \mathbf{0} \pmod u,\label{e3.17}\\
  ~[\mathtt{W}^{(0)}]_{s} \mathtt{W}^{(3) t}_{s+2}&\equiv & \mathbf{0} \pmod u.\label{e3.18}
\end{eqnarray}

First of all, we will establish the existence of the matrices  $[\mathtt{V}^{(1)}]_s$ and $[\mathtt{V}^{(2)}]_s$ satisfying the system of matrix equations \eqref{e3.14}--\eqref{e3.16} and count their choices. 
 To do this, let us suppose that  $[\mathtt{W}^{(0)}]_s=(\textbf{b}_i),$ $[\mathtt{V}^{(1)}]_s=(\textbf{x}_j)$ and $[\mathtt{V}^{(2)}]_s=(\textbf{y}_g),$ where $\textbf{b}_i$\rq{}s, $\textbf{x}_j$\rq{}s, and $\textbf{y}_g$\rq{}s denote the rows of the matrices $[\mathtt{W}^{(0)}]_s,$ $[\mathtt{V}^{(1)}]_s,$ and $[\mathtt{V}^{(2)}]_s,$ respectively. 
  We further observe, for $1\leq i \leq \Lambda_{s},$ that the $(i,i)$-th entry, say $J_i$, of the  matrix  $\mathtt{Diag}\left([\mathtt{W}^{(3)}]_{s}[\mathtt{W}^{(3)}]_{s}^t\right)$  satisfies
 \begin{eqnarray*}\label{e3.19} J_i\equiv u^2\left(\mathtt{F}^{(2)}_{i,i}+\textbf{x}_{i}\cdot \textbf{x}_{i}\right)+u^3\mathtt{F}_{i,i}^{(3)}+u^4\left(\mathtt{F}^{(4)}_{i,i}+\textbf{y}_{i}\cdot \textbf{y}_{i}\right)+u^5\mathtt{F}_{i,i}^{(5)}+\cdots+u^{\kappa}\mathtt{F}_{i,i}^{(\kappa)} +u^{\kappa+1}\left(\mathtt{F}_{i,i}^{(\kappa+1)}+\eta_0\textbf{b}_i\cdot \textbf{x}_{i}\right)\nonumber\\ + u^{\kappa+2}\left(\mathtt{F}_{i,i}^{(\kappa+2)}+\eta_0\textbf{b}_i\cdot \textbf{y}_{i}+\eta_1\textbf{b}_i\cdot \textbf{x}_{i}\right)\pmod{u^{\kappa+3}}.
\end{eqnarray*}
 If  $w$ is the smallest positive integer satisfying  $\kappa+1 \leq 2^{w},$ then we observe, for $1\leq i \leq \Lambda_s,$ that 
\begin{eqnarray*}\label{e3.20}
 J_i\equiv  u^2 \left((\mathtt{F}^{(2)}_{i,i})^{2^{-w}}+(\textbf{1}\cdot \textbf{x}_{i})^{2^{-(w-1)}}\right)^{2^{w}}+u^3\mathtt{F}_{i,i}^{(3)}+u^4\left((\mathtt{F}^{(4)}_{i,i})^{2^{-w}}+(\textbf{1}\cdot \textbf{y}_{i})^{2^{-(w-1)}}\right)^{2^w}+u^5\mathtt{F}_{i,i}^{(5)}+\cdots~~~~~~\nonumber\\+u^{\kappa}\mathtt{F}_{i,i}^{(\kappa)} +u^{\kappa+1}\left(\mathtt{F}_{i,i}^{(\kappa+1)}+\eta_0\textbf{b}_i\cdot \textbf{x}_{i}\right)+u^{\kappa+2}\Big(\mathtt{F}_{i,i}^{(\kappa+2)}+\eta_0\textbf{b}_i\cdot \textbf{y}_{i}+\eta_1\textbf{b}_i\cdot \textbf{x}_{i}~~\nonumber\\ -\eta_0\Big((\textbf{1}\cdot \textbf{x}_i)(\mathtt{F}_{i,i}^{(2)})^{2^{-1}}+\sum\limits_{1 \leq h <g \leq n}\texttt{x}_{i,h}\texttt{x}_{i,g} \Big)\Big)\pmod{u^{\kappa+3}},\end{eqnarray*}
where $\textbf{x}_i=(\texttt{x}_{i,1},\texttt{x}_{i,2},\ldots,\texttt{x}_{i,n}) \in \mathcal{T}_{m}^n.$ 
It is easy to see that  $\big((\mathtt{F}^{(2)}_{i,i})^{2^{-w}}+(\textbf{1}\cdot \textbf{x}_{i})^{2^{-(w-1)}}\big)^{2^{w}} \in \mathcal{T}_{m}$  and $ \big((\mathtt{F}^{(4)}_{i,i})^{2^{-w}}+(\textbf{1}\cdot \textbf{y}_{i})^{2^{-(w-1)}}\big)^{2^{w}}\in \mathcal{T}_{m}$ for $1 \leq i \leq \Lambda_{s}.$ 
   In view of this, we see that the system  of  matrix equations \eqref{e3.14}-\eqref{e3.16} is  equivalent to the following system of equations in unknowns $ \textbf{x}_1, \textbf{x}_2,\ldots, \textbf{x}_{\Lambda_{s}}$ and $ \textbf{y}_1, \textbf{y}_2,\ldots, \textbf{y}_{\Lambda_{s}}$over $\mathcal{T}_{m}$:  \begin{eqnarray}
\mathtt{F}_{i,j}^{(1)}+\textbf{b}_i\cdot \textbf{x}_j+\textbf{b}_j \cdot \textbf{x}_i  +u\left(\mathtt{F}_{i,j}^{(2)}+\textbf{x}_i\cdot\textbf{x}_j+\textbf{b}_i\cdot \textbf{y}_j+\textbf{b}_j \cdot \textbf{y}_i \right) &\equiv& 0 \pmod{u^2} \text{ for } 1 \leq i <j \leq \Lambda_s,\label{e3.21}\\ 
(\mathtt{F}^{(2)}_{i,i})^{2^{-1}}+\textbf{1}\cdot \textbf{x}_i & \equiv & 0 \pmod{u} \text{ ~~for } 1 \leq i \leq \Lambda_{s},~~\label{e3.22}\\
(\mathtt{F}^{(4)}_{i,i})^{2^{-1}} +\textbf{1}\cdot \textbf{y}_i & \equiv & 0\pmod{u} \text{~~ for } 1 \leq i \leq \Lambda_{s-1},\label{e3.23}\\
\hspace{-8mm}\mathtt{F}_{i,i}^{(\kappa+2)}+\eta_0\textbf{b}_i\cdot \textbf{y}_{i}+\eta_1\textbf{b}_i\cdot \textbf{x}_{i}-\eta_0\Big((\textbf{1}\cdot \textbf{x}_i)(\mathtt{F}_{i,i}^{(2)})^{2^{-1}}\hspace{-1mm}+\hspace{-3mm}\sum\limits_{1 \leq h <g \leq n}\hspace{-2mm}\texttt{x}_{i,h}\texttt{x}_{i,g} \Big) &\equiv&0 \pmod{u} \text{ for } 1 \leq i \leq \Lambda_{s-\kappa_1-1}. \label{e3.24}
\end{eqnarray}
Working as in Lemma 4.1  of Yadav and Sharma   \cite{quasi},  we see that there exist vectors $\textbf{x}_1, \textbf{x}_2,\ldots, \textbf{x}_{\Lambda_{s}}$
satisfying the following system of equations: 
 \begin{eqnarray}
\mathtt{F}_{i,j}^{(1)}+\textbf{b}_i\cdot \textbf{x}_j+\textbf{b}_j \cdot \textbf{x}_i  &\equiv& 0 \pmod{u} \text{ ~~for } 1 \leq i <j \leq \Lambda_s,~~~~~~~\label{e3.25}\\ 
(\mathtt{F}^{(2)}_{i,i})^{2^{-1}}+\textbf{1}\cdot \textbf{x}_i & \equiv & 0 \pmod{u} \text{~~ for } 1 \leq i \leq \Lambda_{s},\label{e3.26}
 \end{eqnarray}
and hence there exists a matrix  $\textbf{[}\mathtt{V}^{(1)}\textbf{]}_{s} \in \mathcal{M}_{\Lambda_s\times n}(\mathcal{T}_{m}) $ satisfying \eqref{e3.25} and \eqref{e3.26}. Moreover, such a matrix $\textbf{[}\mathtt{V}^{(1)}\textbf{]}_{s} \in \mathcal{M}_{\Lambda_s\times n}(\mathcal{T}_{m}) $ has precisely  $(2^m)^{\sum\limits_{i=3}^{s+2}\lambda_i\Lambda_{i-2}+\Lambda_{s}(n-\Lambda_{s+2})-\Lambda_s-\frac{\Lambda_{s}(\Lambda_{s}-1)}{2}}  $ distinct choices. 
Next, for a given choice of the matrix $[\mathtt{V}^{(1)}]_{s}$  satisfying \eqref{e3.25} and \eqref{e3.26}, we can write
\begin{equation}\label{e3.28}
 \mathtt{F}^{(1)}+[\mathtt{W}^{(0)}]_{s}[\mathtt{V}^{(1)}]_{s}^t+[\mathtt{V}^{(1)}]_{s}[\mathtt{W}^{(0)}]_{s}^t ~\equiv ~ uQ \pmod{u^2}
\end{equation}
for some  $Q\in \mathtt{Sym}_{\Lambda_{s}}(\mathcal{T}_{m}) .$ 
This, by \eqref{e3.21}, implies that
\begin{eqnarray}
    Q+\mathtt{F}^{(2)}+[\mathtt{W}^{(0)}]_{s}[\mathtt{V}^{(2)}]_{s}^t+[\mathtt{V}^{(2)}]_{s}[\mathtt{W}^{(0)}]_{s}^t+[\mathtt{V}^{(1)}]_{s}[\mathtt{V}^{(1)}]_{s}^t&\equiv&  \mathbf{0}\pmod{u}.\label{e3.29}
\end{eqnarray}
Note that $\mathcal{C}^{(s-\kappa_1-1)}$ is a doubly even code over $\mathcal{T}_m.$ Further, if $\textbf{1}\in \mathcal{C}^{(s-\kappa_1-1)},$ then by Theorem 3.2 of Yadav and Sharma \cite{Galois}, we see that   $n\equiv 0,4\pmod 8.$  Now, working  as in Lemma 4.1 of Yadav and Sharma \cite{Galois}, we see that the matrix
$[\mathtt{V}^{(2)}]_{s}$ satisfying \eqref{e3.23}, \eqref{e3.24} and \eqref{e3.29} has precisely 
\begin{equation*}
(2^m)^{\sum\limits_{i=4}^{s+2}\lambda_i\Lambda_{i-3}+\Lambda_{s}(n-\Lambda_{s+2})-\Lambda_{s-1}-\Lambda_{s-\kappa_1-1}-\frac{\Lambda_{s}(\Lambda_{s}-1)}{2}+\omega_3}\end{equation*} distinct choices, where $\omega_3=1$ if $\textbf{1}\in \mathcal{C}^{(s-\kappa_1-1)},$ 
 while $\omega_3=0$ otherwise. Further, working as in Proposition 4.2 of Yadav and Sharma \cite{Galois}, we see that there exist matrices 
$[  \mathbf{0}~\cdots~ \mathbf{0}~ \mathtt{B}_{s+1,s+2}^{(1)}~ \cdots ~ \mathtt{B}_{s+1,e}^{(1)}]  $ and $\mathtt{W}_{s+2}^{(3)}$  
satisfying \eqref{e3.17} and \eqref{e3.18} and that such matrices $[  \mathbf{0}~\cdots~ \mathbf{0}~ \mathtt{B}_{s+1,s+2}^{(1)}~ \cdots ~ \mathtt{B}_{s+1,e}^{(1)}]  $ and $\mathtt{W}_{s+2}^{(3)}$  have  precisely \begin{equation*}
  (2^m)^{\lambda_{s+1}(n-\Lambda_{s+2}-\Lambda_{s})}{\lambda_{s+2}+n-\Lambda_{s+2}-\Lambda_{s} \brack \lambda_{s+2}}_{2^m}
\end{equation*}  distinct choices. 
From this,  the desired result follows immediately. 
 \end{proof}
From this point on, we shall  distinguish the following two cases: (i) $2 \kappa \leq e,$ and (ii) $2 \kappa > e.$
The following proposition lifts a  self-orthogonal code  of type $\{\Lambda_{\gamma_{\ell}+2},\lambda_{\gamma_{\ell}+3}, \ldots, \lambda_{\gamma_{\ell}+\ell-1}\}$   and  length $n$ over $\mathscr{R}_{\ell-2,m}$ satisfying  property $(\mathfrak{P})$ to a self-orthogonal code  of type $\{\Lambda_{\gamma_{\ell}+1},\lambda_{\gamma_{\ell}+2}, \ldots, \lambda_{\gamma_{\ell}+\ell}\}$ and the same length $n$ over $\mathscr{R}_{\ell,m}$ satisfying  property $(\mathfrak{P}),$ where the integer $\ell $ satisfies  $\ell \equiv e \pmod 2$ and $4 \leq \ell \leq \kappa$ if $2\kappa \leq e,$ while   $4 \leq \ell \leq e-\kappa+1-2\theta_e $ if $2\kappa >e.$ Additionally, it enumerates the number of possible ways to perform this lifting.
 \begin{proposition}\label{p3.4Kodd} Let $\ell$ be a fixed  integer satisfying  $\ell \equiv e \pmod 2$ and $4 \leq \ell \leq \kappa$ if  $2\kappa \leq e,$ while  $4 \leq \ell \leq e-\kappa+1-2\theta_e $ if $2\kappa >e.$ 
   Let us define $\gamma_{\ell}=s-\mathrm{f}_{\ell}.$  Let $\mathscr{D}_{\ell-2}$ be a self-orthogonal code of type $\{\Lambda_{\gamma_{\ell}+2},\lambda_{\gamma_{\ell}+3}, \ldots, \lambda_{\gamma_{\ell}+\ell-1}\}$   and  length $n$ over $\mathscr{R}_{\ell-2,m}$ satisfying  property $(\mathfrak{P}).$  
Let   $\mathcal{C}^{(\gamma_{\ell}+1)}$ and $\mathcal{C}^{(\gamma_{\ell}+1-\kappa_1-\theta_e)}$ be linear subcodes of   $Tor_{1}(\mathscr{D}_{\ell-2})$ such that $\mathcal{C}^{(\gamma_{\ell}+1-\kappa_1-\theta_e)}\subseteq \mathcal{C}^{(\gamma_{\ell}+1)},$  $\dim \mathcal{C}^{(\gamma_{\ell}+1)}= \Lambda_{\gamma_{\ell}+1},$  $\dim \mathcal{C}^{(\gamma_{\ell}+1-\kappa_1-\theta_e)}=\Lambda_{\gamma_{\ell}+1-\kappa_1-\theta_e}$ and the code $\mathcal{C}^{(\gamma_{\ell}+1-\kappa_1-\theta_e)}$ is doubly even, (note that such a pair of codes $\mathcal{C}^{(\gamma_{\ell}+1)}$ and $\mathcal{C}^{(\gamma_{\ell}+1-\kappa_1-\theta_e)}$ exists for each choice of $\mathscr{D}_{\ell-2},$ by Lemma \ref{l3.1}).
The following hold.
 \begin{enumerate}\item[(a)] There exists a  self-orthogonal code $\mathscr{D}_{\ell}$ of type $\{\Lambda_{\gamma_{\ell}+1},\lambda_{\gamma_{\ell}+2}, \ldots, \lambda_{\gamma_{\ell}+\ell}\}$ and length $n$ over $\mathscr{R}_{\ell,m}$ satisfying  property $(\mathfrak{P})$ with     $Tor_{1}(\mathscr{D}_{\ell})=\mathcal{C}^{(\gamma_{\ell}+1)}$ and $Tor_{j+1}(\mathscr{D}_{\ell})=Tor_{j}(\mathscr{D}_{\ell-2})$ for $1 \leq j \leq  \ell-2.$
\item[(b)] Moreover,  each such choice of the codes $\mathscr{D}_{\ell-2},$ $\mathcal{C}^{(\gamma_{\ell}+1)}$ and $\mathcal{C}^{(\gamma_{\ell}+1-\kappa_1-\theta_e)})$ 
gives rise to  precisely
\begin{eqnarray*}
\displaystyle (2^m)^{\sum\limits_{i=\ell}^{\gamma_{\ell}+\ell}\lambda_i\Lambda_{i-\ell+1}+\sum\limits_{j=\ell+1}^{\gamma_{\ell}+\ell}\lambda_j\Lambda_{j-\ell}+Y_{\ell}}  {\lambda_{\gamma_{\ell}+\ell}+n-\Lambda_{\gamma_{\ell}+\ell}-\Lambda_{\gamma_{\ell}+1} \brack \lambda_{\gamma_{\ell}+\ell}}_{2^m}
\end{eqnarray*}  distinct  self-orthogonal codes $\mathscr{D}_{\ell}$ of type $\{\Lambda_{\gamma_{\ell}+1},\lambda_{\gamma_{\ell}+2}, \ldots, \lambda_{\gamma_{\ell}+\ell}\}$ and length $n$ over $\mathscr{R}_{\ell,m}$ satisfying property $(\mathfrak{P})$ with $Tor_{1}(\mathscr{D}_{\ell})=\mathcal{C}^{(\gamma_{\ell}+1)}$ and $Tor_{j+1}(\mathscr{D}_{\ell})=Tor_{j}(\mathscr{D}_{\ell-2})$ for $1 \leq j \leq  \ell-2,$ where  $Y_{\ell}=(\Lambda_{\gamma_{\ell}+\ell-1}+\Lambda_{\gamma_{\ell}+1})(n-\Lambda_{\gamma_{\ell}+\ell}-\Lambda_{\gamma_{\ell}+1}) +\Lambda_{\gamma_{\ell}+1}^2 +\Lambda_{\gamma_{\ell}+1}-\Lambda_{s-2\mathrm{f}_{\ell}+2}-\Lambda_{s-2\mathrm{f}_{\ell}+1}-\Lambda_{\gamma_{\ell}+1-\kappa_1-\theta_e} +\omega_{\ell}$ with 
$\omega_{\ell}=1$ if $\textbf{1} \in \mathcal{C}^{(\gamma_{\ell}+1-\kappa_1-\theta_e)},$  while $\omega_{\ell}=0$ otherwise.
 \end{enumerate} \end{proposition}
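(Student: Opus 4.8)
The plan is to carry out the inductive step $\ell-2\rightsquigarrow\ell$ in the same spirit as the proof of Proposition \ref{p3.3a}. By Remark \ref{r3.1} and the standard form \eqref{Gl}, I would first assume without loss of generality that $\mathscr{D}_{\ell-2}$ has a generator matrix $\mathcal{G}_{\ell-2}$ of the form \eqref{Gl} (with $\ell$ replaced by $\ell-2$ and $\gamma_{\ell-2}=\gamma_\ell+1$), that $\mathcal{C}^{(\gamma_\ell+1)}$ has generator matrix $[\mathtt{W}^{(0)}]_{\gamma_\ell+1}$, and that $\mathcal{C}^{(\gamma_\ell+1-\kappa_1-\theta_e)}$ has generator matrix $[\mathtt{W}^{(0)}]_{\gamma_\ell+1-\kappa_1-\theta_e}$ (these are nested by hypothesis and available by Lemma \ref{l3.1}). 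I would then write down an explicit candidate generator matrix $\mathtt{G}_\ell$ over $\mathscr{R}_{\ell,m}$ for $\mathscr{D}_\ell$: its $u^0$-layer is $\mathtt{W}_h^{(\ell)}=\mathtt{W}_h^{(0)}+u\mathtt{V}_h^{(1)}+\cdots+u^{\ell-1}\mathtt{V}_h^{(\ell-1)}$ for $1\le h\le\gamma_\ell+1$; the blocks carrying $u,u^2,\ldots,u^{\ell-2}$ are the blocks $\gamma_\ell+2,\gamma_\ell+3,\ldots,\gamma_\ell+\ell-1$ of $\mathscr{D}_{\ell-2}$, each multiplied by one more power of $u$ and lifted to $\mathscr{R}_{\ell,m}$ by adjoining fresh Teichm\"uller digits; and the $u^{\ell-1}$-block is a new block $\mathtt{W}_{\gamma_\ell+\ell}^{(\ell)}=[\,\textbf{0}\ \cdots\ \textbf{0}\ \mathtt{I}_{\lambda_{\gamma_\ell+\ell}}\ \mathtt{B}_{\gamma_\ell+\ell,\gamma_\ell+\ell}^{(0)}\ \cdots\ \mathtt{B}_{\gamma_\ell+\ell,e}^{(0)}\,]$ over $\mathcal{T}_m$. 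Since $\mathcal{C}^{(\gamma_\ell+1)}\subseteq Tor_1(\mathscr{D}_{\ell-2})$, this $\mathscr{D}_\ell$ is automatically of type $\{\Lambda_{\gamma_\ell+1},\lambda_{\gamma_\ell+2},\ldots,\lambda_{\gamma_\ell+\ell}\}$ and satisfies $Tor_1(\mathscr{D}_\ell)=\mathcal{C}^{(\gamma_\ell+1)}$ and $Tor_{j+1}(\mathscr{D}_\ell)=Tor_j(\mathscr{D}_{\ell-2})$ for $1\le j\le\ell-2$, regardless of the free parameters.

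Next I would translate ``$\mathscr{D}_\ell$ is self-orthogonal over $\mathscr{R}_{\ell,m}$ and satisfies property $(\mathfrak{P})$'' into a system of matrix congruences in these parameters, using Lemma \ref{l2.2} and the appropriate clause among (i)--(iii) of the definition of property $(\mathfrak{P})$ (which one depending on whether $\ell$ lies strictly below the threshold $\kappa$, resp.\ $e-\kappa$, or equals the endpoint $\kappa$, resp.\ $e-\kappa+1$). Because $\ell-2\le\kappa-2$ (resp.\ $\ell-2\le e-\kappa-1-2\theta_e$) forces $\mathscr{D}_{\ell-2}$ to satisfy clause (i) of $(\mathfrak{P})$, every congruence of the system that involves only the shifted blocks is inherited; the genuinely new equations involve the $u^0$-layer $[\mathtt{W}^{(\ell)}]_{\gamma_\ell+1}$ and, after extracting Teichm\"uller digits, have exactly the shape of \eqref{e3.14}--\eqref{e3.18}: one bilinear equation for $[\mathtt{V}^{(1)}]_{\gamma_\ell+1}$ encoding self-orthogonality of the lift, one for $[\mathtt{V}^{(2)}]_{\gamma_\ell+1}$ carrying the doubly-even condition on $\mathcal{C}^{(\gamma_\ell+1-\kappa_1-\theta_e)}$, and two fixing the off-diagonal coupling to the shifted part and the new bottom block. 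Exactly as in the passage \eqref{e3.21}--\eqref{e3.24}, I would absorb the factor $2\equiv u^\kappa(\eta_0+u\eta_1+\cdots)$ through the substitution replacing each diagonal digit $\mathtt{F}_{i,i}$ by an expression $\big((\mathtt{F}_{i,i})^{2^{-w}}+(\textbf{1}\cdot\textbf{x}_i)^{2^{-(w-1)}}\big)^{2^{w}}\in\mathcal{T}_m$ (with $w$ minimal such that $\kappa+1\le 2^w$), so that the entire system becomes a system of linear and quadratic equations over the field $\mathcal{T}_m$.

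I would then conclude, as in the proof of Proposition \ref{p3.3a}, by invoking three counting inputs: Lemma~4.1 of \cite{quasi} gives both the solvability and the exact number $(2^m)^{\sum_{i=\ell}^{\gamma_\ell+\ell}\lambda_i\Lambda_{i-\ell+1}+\Lambda_{\gamma_\ell+1}(n-\Lambda_{\gamma_\ell+\ell})-\Lambda_{\gamma_\ell+1}-\frac{\Lambda_{\gamma_\ell+1}(\Lambda_{\gamma_\ell+1}-1)}{2}}$ of admissible $[\mathtt{V}^{(1)}]_{\gamma_\ell+1}$; Lemma~4.1 of \cite{Galois}, using that $\mathcal{C}^{(\gamma_\ell+1-\kappa_1-\theta_e)}$ is doubly even and --- when $\textbf{1}\in\mathcal{C}^{(\gamma_\ell+1-\kappa_1-\theta_e)}$ --- Theorem~3.2 of \cite{Galois} forcing $n\equiv 0,4\pmod 8$, gives the number of admissible $[\mathtt{V}^{(2)}]_{\gamma_\ell+1}$ with the correction $\omega_\ell$; and the argument of Proposition~4.2 of \cite{Galois} counts the admissible off-diagonal blocks and bottom block as $(2^m)^{\lambda_{\gamma_\ell+\ell-1}(n-\Lambda_{\gamma_\ell+\ell}-\Lambda_{\gamma_\ell+1})}\qbin{\lambda_{\gamma_\ell+\ell}+n-\Lambda_{\gamma_\ell+\ell}-\Lambda_{\gamma_\ell+1}}{\lambda_{\gamma_\ell+\ell}}{2^m}$. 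Multiplying these three counts and collecting exponents then yields the stated number, with $Y_\ell=(\Lambda_{\gamma_\ell+\ell-1}+\Lambda_{\gamma_\ell+1})(n-\Lambda_{\gamma_\ell+\ell}-\Lambda_{\gamma_\ell+1})+\Lambda_{\gamma_\ell+1}^2+\Lambda_{\gamma_\ell+1}-\Lambda_{s-2\mathrm{f}_\ell+2}-\Lambda_{s-2\mathrm{f}_\ell+1}-\Lambda_{\gamma_\ell+1-\kappa_1-\theta_e}+\omega_\ell$.

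The part I expect to carry the real weight is the index bookkeeping: checking that the hypotheses $\ell\equiv e\pmod 2$ and $4\le\ell\le\kappa$ (resp.\ $4\le\ell\le e-\kappa+1-2\theta_e$) place us precisely in the regime where property $(\mathfrak{P})$ of $\mathscr{D}_{\ell-2}$ supplies all the ``old'' congruences and the only new ones are those listed above --- which amounts to matching the index sets of clauses (i)--(iii) in the definition of $(\mathfrak{P})$ across the shift $\ell-2\rightsquigarrow\ell$ --- and then verifying that the sum of the three exponents above telescopes into the closed form $Y_\ell$. The genuinely non-formal ingredients, namely the characteristic-$2$ substitution trick and the doubly-even/all-one-vector dichotomy, are already packaged inside the cited lemmas.
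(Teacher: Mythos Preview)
Your overall strategy matches the paper's exactly: fix a standard form for $\mathscr{D}_{\ell-2}$, introduce two new Teichm\"uller layers, translate self-orthogonality plus property~$(\mathfrak{P})$ into a system of matrix congruences, and count. There are, however, three bookkeeping slips that the paper handles differently.

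First, the free parameters in the lift are $[\mathtt{V}^{(\ell-2)}]_{\gamma_\ell+1}$ and $[\mathtt{V}^{(\ell-1)}]_{\gamma_\ell+1}$, not $[\mathtt{V}^{(1)}]$ and $[\mathtt{V}^{(2)}]$; the layers $\mathtt{V}^{(1)},\ldots,\mathtt{V}^{(\ell-3)}$ are already determined by $\mathscr{D}_{\ell-2}$. Second, the paper splits into the cases $\ell$ even and $\ell$ odd (equations \eqref{e3.33Kodd} vs.\ \eqref{e3.33bKodd}), and the location of the $\eta_0$-constraint --- the one requiring $\mathcal{C}^{(\gamma_\ell+1-\kappa_1-\theta_e)}$ to be doubly even and producing $\omega_\ell$ --- is \emph{not} as in Proposition~\ref{p3.3a}: for $\ell$ even it sits on the \emph{first} new layer $[\mathtt{V}^{(\ell-2)}]$, so the order in which you invoke Lemma~4.1 of \cite{quasi} and Lemma~4.1 of \cite{Galois} is reversed relative to your description. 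The paper actually writes out the coefficient matrix $\mathcal{N}$ for $[\mathtt{V}^{(\ell-2)}]$ and does the case split on $\textbf{1}\in\mathcal{C}^{(\gamma_\ell+1-\kappa_1)}$ explicitly before invoking those lemmas. Third, there are $\ell-2$ shifted blocks $g=2,\ldots,\ell-1$, not one, so the off-diagonal count carries the exponent $(\Lambda_{\gamma_\ell+\ell-1}-\Lambda_{\gamma_\ell+1})(n-\Lambda_{\gamma_\ell+\ell}-\Lambda_{\gamma_\ell+1})$ rather than $\lambda_{\gamma_\ell+\ell-1}(n-\Lambda_{\gamma_\ell+\ell}-\Lambda_{\gamma_\ell+1})$; the paper cites Proposition~4.3 (not~4.2) of \cite{Galois} here. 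Your three intermediate exponents therefore do not telescope to the stated $Y_\ell$, though once these indices are corrected the computation goes through exactly as you outline.
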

\begin{proof} To prove the result, we assume, without any loss of generality, that  the code  
$\mathscr{D}_{\ell-2}$ has a generator matrix 
  \vspace{-1mm}\begin{equation*}\label{e00.1Kodd}
\mathtt{G}_{\ell-2}=\begin{bmatrix}
\mathtt{W}_1^{(\ell-2)}\\\mathtt{W}_2^{(\ell-2)} \\ \vdots\\ \mathtt{W}_{\gamma_{\ell}+2}^{(\ell-2)}\vspace{0.5mm}\\ u\mathtt{W}_{\gamma_{\ell}+3}^{(\ell-2)}\vspace{0.5mm}\\  \vdots\\u^{\ell-3}\mathtt{W}_{\gamma_{\ell}+\ell-1}^{(\ell-2)}
\end{bmatrix}=\begin{bmatrix}
\mathtt{W}_1^{(0)}+u\mathtt{V}_1^{(1)}+u^2\mathtt{V}_1^{(2)}+\cdots +u^{\ell-3}\mathtt{V}_1^{(\ell-3)}\\\mathtt{W}_2^{(0)}+u\mathtt{V}_2^{(1)}+u^2\mathtt{V}_2^{(2)}+\cdots +u^{\ell-3}\mathtt{V}_2^{(\ell-3)} \\ \vdots\\ \mathtt{W}_{\gamma_{\ell}+2}^{(0)}+u\mathtt{V}_{\gamma_{\ell}+2}^{(1)}+u^2\mathtt{V}_{\gamma_{\ell}+2}^{(2)}+\cdots +u^{\ell-3}\mathtt{V}_{\gamma_{\ell}+2}^{(\ell-3)}\vspace{0.5mm}\\ u\mathtt{W}_{\gamma_{\ell}+3}^{(\ell-2)}\vspace{0.5mm}\\ \vdots\\u^{\ell-3}\mathtt{W}_{\gamma_{\ell}+\ell -1}^{(\ell-2)}
\end{bmatrix},\end{equation*} 
 where  
\begin{equation*}\label{e00.2Kodd}
[\mathtt{W}^{(0)}]_{\gamma_{\ell}+2}=\begin{bmatrix}
\mathtt{W}_1^{(0)}\\\mathtt{W}_2^{(0)}\\ \vdots\\ \mathtt{W}_{\gamma_{\ell}+2}^{(0)}
\end{bmatrix}  =\begin{bmatrix}
{\mathtt{I}}_{\lambda_1}&{\mathtt{B}}_{1,1}^{(0)}&\cdots&{\mathtt{B}}_{1,\gamma_{\ell}+1}^{(0)}&\cdots & {\mathtt{B}}_{1,e-1}^{(0)}& {\mathtt{B}}_{1,e}^{(0)}\\
\mathbf{0} &{\mathtt{I}}_{\lambda_2}  &\cdots &{\mathtt{B}}_{2,\gamma_{\ell}+1}^{(0)}&\cdots & {\mathtt{B}}_{2,e-1}^{(0)}&{\mathtt{B}}_{2,e}^{(0)}\\
	\vdots& \vdots   &\vdots&\vdots &\vdots& \vdots&\vdots\\
\mathbf{0}&\mathbf{0}&\cdots& {\mathtt{I}}_{\lambda_{\gamma_{\ell}+2}}& \cdots & {\mathtt{B}}_{\gamma_{\ell}+2,e-1}^{(0)}&{\mathtt{B}}_{\gamma_{\ell}+2,e}^{(0)}
\end{bmatrix}\end{equation*} 
with  ${\mathtt{I}}_{\lambda_i}$ as the $\lambda_i\times \lambda_i$ identity matrix over ${\mathcal{T}}_{m},$    $ {\mathtt{B}}_{i,j}^{(0)} \in \mathcal{M}_{\lambda_i\times \lambda_{j+1}}({\mathcal{T}}_{m})$  for $1 \leq i \leq \gamma_{\ell}+2$ and $i \leq j \leq e,$
$[\mathtt{V}^{(a)}]_{\gamma_{\ell}+2}\in \mathcal{M}_{\Lambda_{\gamma_{\ell}+2}\times n}(\mathcal{T}_{m})$ for $1 \leq a \leq \ell-3,$ and  the matrix $\mathtt{W}_{\gamma_{\ell}+g}^{(\ell-2)}\in \mathcal{M}_{\lambda_{\gamma_{\ell}+g}\times n}(\mathscr{R}_{\ell-2,m})$  is to be considered modulo $u^{\ell-g}$   for  $3 \leq g \leq \ell-1.$ Here,  we note that the torsion code $Tor_1(\mathscr{D}_{\ell-2})$ is a $\Lambda_{\gamma_{\ell}+2}$-dimensional code over $\mathcal{T}_{m}$ with a generator matrix $[\mathtt{W}^{(0)}]_{\gamma_{\ell}+2}.$   
 We further assume, without any loss of generality,  that the  code  $\mathcal{C}^{(\gamma_{\ell}+1)}$ has a generator matrix $[\mathtt{W}^{(0)}]_{\gamma_{\ell}+1}$
and   the code $\mathcal{C}^{(\gamma_{\ell}+1-\kappa_1-\theta_e)}$   has a generator matrix $[\mathtt{W}^{(0)}]_{\gamma_{\ell}+1-\kappa_1-\theta_e} .$ 
Additionally, by Remark \ref{r3.1}, we  assume that the matrix $(\mathtt{B}^{(0)})_ { \gamma_{\ell}+1,\gamma_{\ell}+\ell} $ is of full row-rank. 
   
 Now, as $\mathscr{D}_{\ell-2}$ is a self-orthogonal code over $\mathscr{R}_{\ell-2,m}$ satisfying   property $(\mathfrak{P}),$  we have
\begin{eqnarray*}
[\mathtt{W}^{(\ell-2)}]_{\gamma_{\ell}+1}[\mathtt{W}^{(\ell-2)}]_{\gamma_{\ell}+1}^t &\equiv &\mathbf{0} \pmod{u^{\ell-2}},\label{}\\
  \displaystyle \mathtt{Diag}\left([\mathtt{W}^{(\ell-2)}]_{\gamma_{\ell}+2-\mathrm{f}_i}[\mathtt{W}^{(\ell-2})]_{\gamma_{\ell}+2-\mathrm{f}_i}^t\right) &\equiv & \mathbf{0} \pmod{u^{\ell-2+i}} \text{ ~~for ~} i\in \{2,4,6,\ldots,\ell-2-\theta_e\},~\\ 
  \displaystyle  \pi_{\ell-2+j}\left( \mathtt{Diag}\big([\mathtt{W}^{(\ell-2)}]_{\gamma_{\ell}+2-\mathrm{f}_{j+2}}[\mathtt{W}^{(\ell-2)}]_{\gamma_{\ell}+2-\mathrm{f}_{j+2}}^t\big) \right)&=&\mathbf{0} \text{ ~~for~~ } j \in \{ \ell-3,\ell-1,\ldots, \kappa -2+\theta_e \},\\
  ~[\mathtt{W}^{(\ell-2)}]_{ \gamma_{\ell}+1} \mathtt{W}_{\gamma_{\ell}+g}^{(\ell-2)t}&\equiv &\mathbf{0}  \pmod{u^{\ell-g}} \text{ ~for } 2 \leq g \leq \ell-1 ,\\
\mathtt{W}_{\gamma_{\ell}+i}^{(\ell-2)}
 \mathtt{W}_{\gamma_{\ell}+j}^{(\ell-2)t}&\equiv &\mathbf{0}  \pmod{u^{\ell+2-i-j}}~\text{ for } 2 \leq i,j\leq \ell-1 \text{ and } i+j\leq \ell+1.\end{eqnarray*} 
   This implies that \begin{eqnarray*}
    [\mathtt{W}^{(\ell-2)}]_{\gamma_{\ell}+1} [\mathtt{W}^{(\ell-2)}]_{\gamma_{\ell}+1}^t&\equiv&  u^{\ell-2}\mathtt{F}^{(\ell-2)}+u^{\ell-1}\mathtt{F}^{(\ell-1)}+\cdots+u^{\ell+\kappa-1}\mathtt{F}^{(\ell+\kappa-1)} \pmod{u^{\ell+\kappa}},\\ ~[\mathtt{W}^{(\ell-2)}]_{ \gamma_{\ell}+1} \mathtt{W}_{\gamma_{\ell}+g}^{(\ell-2)t}&\equiv &u^{\ell-g}P_{g}  \pmod{u^{\ell+1-g}} \text{ ~for } 2 \leq g \leq \ell-1 ,\\
\mathtt{W}_{\gamma_{\ell}+i}^{(\ell-2)} \mathtt{W}_{\gamma_{\ell}+j}^{(\ell-2)t}&\equiv &\mathbf{0} \pmod{u^{\ell+2-i-j}} ~\text{ for } 2 \leq i,j\leq \ell-1 \text{ and } i+j\leq \ell+1,
\end{eqnarray*} where $\mathtt{F}^{(\ell-2)},\mathtt{F}^{(\ell-1)}, \ldots, \mathtt{F}^{(\ell+\kappa-1)} \in \mathtt{Sym}_{\Lambda_{\gamma_{\ell}+1}}(\mathcal{T}_{m})$ with $\mathtt{F}_{h,h}^{(\ell-4+i)}=\mathtt{F}^{(\ell-3+i)}_{h,h}=0 $ for $1 \leq h \leq \Lambda_{\gamma_{\ell}+2-\mathrm{f}_i }$ and $ i \in \{ 2,4,\ldots,\ell-2-\theta_e\} $ and  $\mathtt{F}_{a,a}^{(\ell-2+j)}=0 $ for $ 1 \leq a \leq \Lambda_{\gamma_{\ell}+2-\mathrm{f}_{j+2}} $ and $ j\in \{ \ell-3,\ell-1, \ell+1,\ldots, \kappa-2+\theta_e\},$ and the matrix $P_{g} \in \mathcal{M}_{\Lambda_{\gamma_{\ell}+1} \times \lambda_{\gamma_{\ell}+g}}(\mathcal{T}_{m})  $ for $ 2 \leq g \leq \ell-1.$   
   
Now, to establish the result,  we define a matrix  $\mathtt{G}_{\ell}$ over $\mathscr{R}_{\ell,m}$ as follows:
\vspace{-1mm}\begin{equation*} \mathtt{G}_{\ell}=\begin{bmatrix}
\mathtt{W}_1^{(\ell)}\\\mathtt{W}_2^{(\ell)}\\ \vdots\\ \mathtt{W}_{\gamma_{\ell}+1}^{(\ell)}\\ \vspace{0.5mm}u\mathtt{W}_{\gamma_{\ell}+2}^{(\ell)}\\ \vdots\\u^{\ell-1}\mathtt{W}_{\gamma_{\ell}+\ell}^{(\ell)}
\end{bmatrix}=\begin{bmatrix} \mathtt{W}_1^{(\ell-2)}+u^{\ell-2}\mathtt{V}_1^{(\ell-2)}+u^{\ell-1}\mathtt{V}_1^{(\ell-1)}\\ \mathtt{W}_2^{(\ell-2)}+u^{\ell-2}\mathtt{V}_2^{(\ell-2)}+u^{\ell-1}\mathtt{V}_2^{(\ell-1)}\\ \vdots\\ \mathtt{W}_{\gamma_{\ell}+1}^{(\ell-2)}+u^{\ell-2}\mathtt{V}_{\gamma_{\ell}+1}^{(\ell-2)}+u^{\ell-1}\mathtt{V}_{\gamma_{\ell}+1}^{(\ell-1)}\\ \vspace{0.5mm}u\mathtt{W}_{\gamma_{\ell}+2}^{(\ell)}\\ \vdots\\u^{\ell-1}\mathtt{W}_{\gamma_{\ell}+\ell}^{(\ell)}   \end{bmatrix},\end{equation*} 
where the matrices  $[\mathtt{V}^{(\mu)}]_{\gamma_{\ell}+1}\in \mathcal{M}_{\Lambda_{\gamma_{\ell}+1}\times n}(\mathcal{T}_{m})$ for $ \mu \in \{\ell-2,\ell-1\},$   $\mathtt{W}_{\gamma_{\ell}+g}^{(\ell)}\in \mathcal{M}_{\lambda_{\gamma_{\ell}+g}\times n}(\mathscr{R}_{\ell,m})$   for  $2 \leq g \leq \ell-1$ and  $\mathtt{W}_{\gamma_{\ell}+\ell}^{(\ell)}\in \mathcal{M}_{\lambda_{\gamma_{\ell}+\ell}\times n}(\mathcal{T}_{m})$ are of the forms 
\begin{equation*}\begin{bmatrix}
\mathtt{V}_1^{(\mu)}\\\mathtt{V}_2^{(\mu)}\\ \vdots\\ \mathtt{V}_{\gamma_{\ell}+1}^{(\mu)}
\end{bmatrix}= \begin{bmatrix}
\mathbf{0}&\cdots&\mathbf{0}&\mathtt{B}_{1,\mu+1}^{(\mu)}& \mathtt{B}_{1,\mu+2}^{(\mu)}&\cdots&\mathtt{B}_{1,\gamma_{\ell}+1+\mu}^{(\mu)}&\cdots &  \mathtt{B}_{1,e}^{(\mu)}\\
\mathbf{0} & \cdots&\mathbf{0}&\mathbf{0}& \mathtt{B}_{2,\mu+2}^{(\mu)}&  \cdots &\mathtt{B}_{2,\gamma_{\ell}+1+\mu}^{(\mu)}&\cdots &\mathtt{B}_{2,e}^{(\mu)}\\
	\vdots& \cdots &\vdots &\vdots &\vdots&\vdots & \vdots&\vdots&\vdots\\
\mathbf{0}&\cdots&\mathbf{0}&\mathbf{0}&\mathbf{0}&\cdots&  \mathtt{B}_{\gamma_{\ell}+1,\gamma_{\ell}+1+\mu}^{(\mu)}&\cdots &  \mathtt{B}_{\gamma_{\ell}+1,e}^{(\mu)}\\
\end{bmatrix},\end{equation*} 
 \begin{equation*} \mathtt{W}_{\gamma_{\ell}+g}^{(\ell)}= \mathtt{W}_{\gamma_{\ell}+g}^{(\ell-2)}+u^{\ell-g}\begin{bmatrix}
\mathbf{0}&\cdots&\mathbf{0}& \mathtt{B}_{\gamma_{\ell}+g,\gamma_{\ell}+\ell}^{(\ell-g)}&\cdots & \mathtt{B}_{\gamma_{\ell}+g,e}^{(\ell-g)}
\end{bmatrix} \text{ and } \end{equation*} 
\begin{equation*} \mathtt{W}_{\gamma_{\ell}+\ell}^{(\ell)}= \begin{bmatrix}
\mathbf{0}&\cdots&\mathbf{0}&\mathtt{I}_{\lambda_{\gamma_{\ell}+\ell}}& \mathtt{B}_{\gamma_{\ell}+\ell,\gamma_{\ell}+\ell}^{(0)}& \cdots & \mathtt{B}_{\gamma_{\ell}+\ell,e}^{(0)}
\end{bmatrix} \end{equation*}
with   $\mathtt{B}_{i,j}^{(\mu)} \in \mathcal{M}_{\lambda_i \times \lambda_{j+1}}(\mathcal{T}_{m})$ for $1 \leq i \leq \gamma_{\ell}+1$ and $ i+\mu \leq j \leq e,$   $\mathtt{B}_{\gamma_{\ell}+g,h}^{(\ell-g)} \in \mathcal{M}_{\lambda_{\gamma_{\ell}+g} \times  \lambda_{h+1}}(\mathcal{T}_{m}) \text{ for } \gamma_{\ell}+\ell \leq h \leq e$  and $\mathtt{B}_{\gamma_{\ell}+\ell,a}^{(0)} \in \mathcal{M}_{\lambda_{\gamma_{\ell}+\ell} \times  \lambda_{a+1}}(\mathcal{T}_{m}) \text{ for } \gamma_{\ell}+\ell \leq a \leq e. $
Next, let  $\mathscr{D}_{\ell}$  be a  linear code of length $n$ over $\mathscr{R}_{\ell,m}$ with a generator matrix $\mathtt{G}_{\ell}.$   Note that the code $\mathscr{D}_{\ell}$ is  of  type $\{\Lambda_{\gamma_{\ell}+1},\lambda_{\gamma_{\ell}+2}, \ldots, \lambda_{\gamma_{\ell}+\ell}\}$ over $\mathscr{R}_{\ell,m}$ and satisfies $Tor_1(\mathscr{D}_{\ell})=\mathcal{C}^{(\gamma_{\ell}+1)}$ and $Tor_{j+1}(\mathscr{D}_{\ell})=Tor_{j}(\mathscr{D}_{\ell-2})$ for $1 \leq j \leq \ell-2.$  By  Lemma \ref{l2.2},  we see that the code  $\mathscr{D}_{\ell}$ is a self-orthogonal code over $\mathscr{R}_{\ell,m}$ satisfying property $(\mathfrak{P})$   if and only if there exist matrices  $[\mathtt{V}^{(\ell-2)}]_{\gamma_{\ell}+1},$ $[\mathtt{V}^{(\ell-1)}]_{\gamma_{\ell}+1},$ $\left[\mathbf{0}~\cdots~\mathbf{0}~\mathtt{B}_{\gamma_{\ell}+g,\gamma_{\ell}+\ell}^{(\ell-g)}~\cdots ~ \mathtt{B}_{\gamma_{\ell}+g,e}^{(\ell-g)}\right]$ for $2 \leq g \leq \ell-1$ and  $\mathtt{W}_{\gamma_{\ell}+\ell}^{(\ell)}$
 satisfying the following system of matrix equations:
\small{\begin{eqnarray}
 \mathtt{F}^{(\ell-2)}+[\mathtt{W}^{(0)}]_{\gamma_{\ell}+1}[\mathtt{V}^{(\ell-2)}]_{\gamma_{\ell}+1}^t+[\mathtt{V}^{(\ell-2)}]_{\gamma_{\ell}+1}[\mathtt{W}^{(0)}]_{\gamma_{\ell}+1}^t+
u\left([\mathtt{W}^{(0)}]_{\gamma_{\ell}+1}[\mathtt{V}^{(\ell-1)}]_{\gamma_{\ell}+1}^t\right.~~~~&&\nonumber\\\left.+[\mathtt{V}^{(\ell-1)}]_{\gamma_{\ell}+1}[\mathtt{W}^{(0)}]_{\gamma_{\ell}+1}^t +[\mathtt{V}^{(1)}]_{\gamma_{\ell}+1}[\mathtt{V}^{(\ell-2)}]_{\gamma_{\ell}+1}^t+[\mathtt{V}^{(\ell-2)}]_{\gamma_{\ell}+1}[\mathtt{V}^{(1)}]_{\gamma_{\ell}+1}^t+\mathtt{F}^{(\ell-1)}\right)&\equiv& \mathbf{0}\pmod{u^2},\label{e3.30Kodd}\\
\mathtt{Diag}\left(\mathtt{E}^{(2\ell-4)}+[\mathtt{V}^{(\ell-2)}]_{s-2\mathrm{f}_{\ell}+2}[\mathtt{V}^{(\ell-2)}]_{s-2\mathrm{f}_{\ell}+2}^t\right) &\equiv & \mathbf{0} \pmod{u},\label{e3.31Kodd}~~~~~~~\\ 
\mathtt{Diag}\left(\mathtt{E}^{(2\ell-2)}+[\mathtt{V}^{(\ell-1)}]_{s-2\mathrm{f}_{\ell}+1}[\mathtt{V}^{(\ell-1)}]_{s-2\mathrm{f}_{\ell}+1}^t\right) &\equiv & \mathbf{0} \pmod{u},\label{e3.32Kodd}\\
\mathtt{Diag}\left(\mathtt{E}^{(\ell+\kappa-1)}+\eta_1[\mathtt{W}^{(0)}]_{\gamma_{\ell}-\kappa_1}[\mathtt{V}^{(\ell-2)}]_{\gamma_{\ell}-\kappa_1}^t\right.~~~~~~~&&\nonumber\\\left.+\eta_0\left([\mathtt{W}^{(0)}]_{\gamma_{\ell}-\kappa_1}[\mathtt{V}^{(\ell-1)}]_{\gamma_{\ell}-\kappa_1}^t+[\mathtt{V}^{(1)}]_{\gamma_{\ell}-\kappa_1}[\mathtt{V}^{(\ell-2)}]_{\gamma_{\ell}-\kappa_1}^t\right)\right) &\equiv & \mathbf{0} \pmod{u} \text{ if } e \text{ is odd},\label{e3.33bKodd}~~~~~~~\\
\mathtt{Diag}\left(\mathtt{E}^{(\ell+\kappa-2)}+\eta_0[\mathtt{W}^{(0)}]_{\gamma_{\ell}+1-\kappa_1}[\mathtt{V}^{(\ell-2)}]_{\gamma_{\ell}+1-\kappa_1}^t\right) &\equiv & \mathbf{0} \pmod{u}\text{ if } e \text{ is even},\label{e3.33Kodd}~~~~~~~\\ P_{g}+[\mathtt{W}^{(0)}]_{ \gamma_{\ell}+1} \begin{bmatrix} \mathbf{0}~ \cdots ~\mathbf{0} ~\mathtt{B}_{\gamma_{\ell}+g,\gamma_{\ell}+\ell}^{(\ell-g)} ~\cdots ~ \mathtt{B}_{\gamma_{\ell}+g,e}^{(\ell-g)}\end{bmatrix} ^t&\equiv &\mathbf{0}   \pmod{u}, ~~\label{e3.34Kodd}\\
~[\mathtt{W}^{(0)}]_{\gamma_{\ell}+1} \mathtt{W}_{\gamma_{\ell}+\ell}^{(\ell)t} &\equiv& \mathbf{0} \pmod{u},\label{e3.35Kodd}\end{eqnarray}}\normalsize
 where   the matrices $\mathtt{E}^{(2\ell-4)},$ $\mathtt{E}^{(2\ell-2)}$ and $\mathtt{E}^{(\ell+\kappa-2+\theta_e)}$ are  of orders $\Lambda_{s-2\mathrm{f}_{\ell}+2} \times \Lambda_{s-2\mathrm{f}_{\ell}+2},$  $\Lambda_{s-2\mathrm{f}_{\ell}+1} \times \Lambda_{s-2\mathrm{f}_{\ell}+1}$ and $\Lambda_{\gamma_{\ell}+1-\kappa_1-\theta_e} \times \Lambda_{\gamma_{\ell}+1-\kappa_1-\theta_e}$ over $\mathcal{T}_{m}$ whose rows are the  first $\Lambda_{s-2\mathrm{f}_{\ell}+2},$ $\Lambda_{s-2\mathrm{f}_{\ell}+1}$ and $\Lambda_{\gamma_{\ell}+1-\kappa_1-\theta_e}$  rows of the matrices $\mathtt{F}^{(2\ell-4)},$ $ \mathtt{F}^{(2\ell-2)}$ and $ \mathtt{F}^{(\ell+\kappa-2+\theta_e)}, $   respectively.

 First of all, we will establish the existence of the   matrices $[\mathtt{V}^{(\ell-2)}]_{\gamma_{\ell}+1}$ and  $[\mathtt{V}^{(\ell-1)}]_{\gamma_{\ell}+1}$  satisfying the system of matrix equations \eqref{e3.30Kodd}--\eqref{e3.33Kodd}.
To do this,  we will distinguish the following two cases: (I) $\ell$ is even, and (II) $\ell$ is odd. 

\begin{itemize}

\item[(I)] Let $\ell$ be even.  In this case, we will  first show that there exists a matrix  $[\mathtt{V}^{(\ell-2)}]_{\gamma_{\ell}+1} \in \mathcal{M}_{\Lambda_{\gamma_{\ell}+1}\times n}(\mathcal{T}_{m}) $ satisfying the  following system of matrix equations:  
\begin{eqnarray}
  \mathtt{F}^{(\ell-2)}+[\mathtt{W}^{(0)}]_{\gamma_{\ell}+1}[\mathtt{V}^{(\ell-2)}]_{\gamma_{\ell}+1}^t+[\mathtt{V}^{(\ell-2)}]_{\gamma_{\ell}+1}[\mathtt{W}^{(0)}]_{\gamma_{\ell}+1}^t &\equiv & \mathbf{0} \pmod{u},\label{e3.40p3.5Kodd}\\
  \mathtt{Diag}\left(\mathtt{E}^{(2\ell-4)}+[\mathtt{V}^{(\ell-2)}]_{s-2\mathrm{f}_{\ell}+2}[\mathtt{V}^{(\ell-2)}]_{s-2\mathrm{f}_{\ell}+2}^t\right) &\equiv & \mathbf{0} \pmod{u}\label{e3.41p3.5Kodd},\\
\mathtt{Diag}\left(\mathtt{E}^{(\ell+\kappa-2)}+\eta_0[\mathtt{W}^{(0)}]_{\gamma_{\ell}+1-\kappa_1} [\mathtt{V}^{(\ell-2)}]_{\gamma_{\ell}+1-\kappa_1}^t\right) &\equiv & \mathbf{0} \pmod{u}.\label{e3.42p3.5Kodd}
\end{eqnarray} 
Towards this, let  $[\mathtt{W}^{(0)}]_{\gamma_{\ell}+1}=(\textbf{b}_i)$ and $[\mathtt{V}^{(\ell-2)}]_{\gamma_{\ell}+1}=(\textbf{v}_j),$ where $\textbf{b}_i$ and $\textbf{v}_j$ denote the $i$-th and $j$-th  rows of the matrices $[\mathtt{W}^{(0)}]_{\gamma_{\ell}+1}$ and $[\mathtt{V}^{(\ell-2)}]_{\gamma_{\ell}+1},$ respectively, for each $i$ and $j.$ 
Further, let $\mathtt{F}^{(\ell-2)}_{i,j},\mathtt{E}^{(2\ell-4)}_{i,j}, \mathtt{E}^{(\ell+\kappa-2)}_{i,j}\in \mathcal{T}_{m}$ denote the $(i,j)$-th entries of the matrices  $\mathtt{F}^{(\ell-2)},\mathtt{E}^{(2\ell-4)},$  $\mathtt{E}^{(\ell+\kappa-2)},$ respectively. Thus, the system  of  matrix equations \eqref{e3.40p3.5Kodd}--\eqref{e3.42p3.5Kodd} is  equivalent to the following system of equations in unknowns $ \textbf{v}_1, \textbf{v}_2,\ldots, \textbf{v}_{\Lambda_{\gamma_{\ell}+1}}$ over $\mathcal{T}_{m}$:  \begin{eqnarray}
\textbf{b}_i\cdot \textbf{v}_j+\textbf{b}_j \cdot \textbf{v}_i  &\equiv& \mathtt{F}^{(\ell-2)}_{i,j} \pmod{u} \text{ ~~for } 1 \leq i <j \leq \Lambda_{\gamma_{\ell}+1},\label{e3.6p3.5Kodd}\\ 
\textbf{1}\cdot \textbf{v}_i & \equiv & (\mathtt{E}^{(2\ell-4)}_{i,i})^{2^{m-1}} \pmod{u} \text{~~ for } 1 \leq i \leq \Lambda_{s-2\mathrm{f}_{\ell}+2},\label{e3.7p3.5Kodd}\\
\textbf{b}_i \cdot \textbf{v}_i & \equiv & \eta_0^{-1}\mathtt{E}_{i,i}^{(\ell+\kappa-2)} \pmod{u} \text{ ~~for } 1 \leq i \leq \Lambda_{\gamma_{\ell}+1-\kappa_1}. \label{e3.8p3.5Kodd}
\end{eqnarray} 
For each integer $j$ satisfying $1 \leq j \leq \Lambda_{\gamma_{\ell}+1},$ one can easily see that there exists a unique integer $r_{j}$ satisfying $1 \leq r_{j}\leq \gamma_{\ell}+1$ and $\Lambda_{r_{j}-1}+1 \leq j \leq \Lambda_{r_{j}}.$ The corresponding unknown vector $\textbf{v}_{j}$ can be written as $\textbf{v}_{j}=(\textbf{0}~ \textbf{v}_{j}^{n-\Lambda_{r_{j}+\ell-2}}),$ where $\textbf{0}$ denotes the zero vector of 
length $\Lambda_{r_{j}+\ell-2}$ and $\textbf{v}_{j}^{n-\Lambda_{r_{j}+\ell-2}}$ denotes the vector of length $n-\Lambda_{r_{j}+\ell-2}$ obtained   by omitting the first $\Lambda_{r_{j}+\ell-2}$ coordinates of $\textbf{v}_{j}$. This implies,  for $\Lambda_{r_{j}-1}+1 \leq j \leq \Lambda_{r_{j}}$, that the first $\Lambda_{r_{j}+\ell-2}$ coordinates of $\textbf{v}_{j}$ are zero,  leaving  $n-\Lambda_{r_{j}+\ell-2}$   variables
 in $\textbf{v}_{j}.$  For $1 \leq j \leq \Lambda_{\gamma_{\ell}+1},$ let $\widehat{\textbf{v}}_{j}=\textbf{v}_{j}^{n-\Lambda_{r_{j}+\ell-2}}$ 
and  $\widehat{\textbf{b}}_{j}=\textbf{b}_{j}^{n-\Lambda_{r_{j}+\ell-2}}$ be the  vectors of length $n-\Lambda_{r_{j}+\ell-2}$ obtained from $\textbf{v}_{j}$ and  $\textbf{b}_{j}$ by omitting their first $\Lambda_{r_{j}+\ell-2}$ coordinates, respectively. 
Consequently, the system   of equations \eqref{e3.6p3.5Kodd}--\eqref{e3.8p3.5Kodd} is equivalent to the following system of equations in unknowns $\widehat{\textbf{v}}_1, \widehat{\textbf{v}}_2,\ldots, \widehat{\textbf{v}}_{\Lambda_{\gamma_{\ell}+1}}$ over $\mathcal{T}_{m}$:   
\begin{eqnarray}
\widehat{\textbf{b}}_i\cdot \widehat{\textbf{v}}_j+\widehat{\textbf{b}}_j \cdot \widehat{\textbf{v}}_i  &\equiv& \mathtt{F}^{(\ell-2)}_{i,j} \pmod{u} \text{ ~~for } 1 \leq i <j \leq \Lambda_{\gamma_{\ell}+1},\label{e3.9p3.5Kodd}\\ 
\widehat{\textbf{1}}\cdot \widehat{\textbf{v}}_i & \equiv & (\mathtt{E}^{(2\ell-4)}_{i,i})^{2^{m-1}} \pmod{u} \text{~~ for } 1 \leq i \leq \Lambda_{s-2\mathrm{f}_{\ell}+2},\label{e3.10p3.5Kodd}\\
\widehat{\textbf{b}}_i \cdot \widehat{\textbf{v}}_i   & \equiv & \eta_0^{-1}\mathtt{E}_{i,i}^{(\ell+\kappa-2)} \pmod{u} \text{ ~~for } 1 \leq i \leq \Lambda_{\gamma_{\ell}+1-\kappa_1}, \label{e3.11p3.5Kodd}
\end{eqnarray} 
 where $\widehat{\textbf{1}}$ denotes the all-one vector having the same length as that of the vector $\widehat{\textbf{v}}_i$ for each $i.$ We further express the above system of equations \eqref{e3.9p3.5Kodd}--\eqref{e3.11p3.5Kodd} in the matrix form as
\small{\begin{equation}\label{e3.12p3.5Kodd}\mathcal{N}\begin{bmatrix}
	\widehat{\textbf{v}}_1^{t}\\ \widehat{\textbf{v}}_2^{t}\\ \widehat{\textbf{v}}_3^{t}\\ \vdots\\ \vdots\\ \widehat{\textbf{v}}_{\Lambda_{\gamma_{\ell}+1-\kappa_1}}^{t}\vspace{0.50mm}\\ \vdots\\ \widehat{\textbf{v}}_{\Lambda_{s-2\mathrm{f}_{\ell}+2}+1}^{t} \\ \vdots\\ \widehat{\textbf{v}}_{\Lambda_{\gamma_{\ell}+1}-2}^{t}\\ \widehat{\textbf{v}}_{\Lambda_{\gamma_{\ell}+1}-1}^{t} \\ \vspace{0.75mm}\widehat{\textbf{v}}_{\Lambda_{\gamma_{\ell}+1}}^{t}
	\end{bmatrix}\equiv\begin{bmatrix}
	(\mathtt{E}_{1,1}^{(2\ell-4)})^{2^{m-1}}\\ \vdots\\ (\mathtt{E}_{\Lambda_{s-2\mathrm{f}_{\ell}+2},\Lambda_{s-2\mathrm{f}_{\ell}+2}}^{(2\ell-4)})^{2^{m-1}}\vspace{0.5mm}\\ \eta_0^{-1}\mathtt{E}^{(\ell+\kappa-2)}_{1,1}\vspace{-0.25mm}\\ \vdots\vspace{-0.25mm}\\ \eta_0^{-1}\mathtt{E}^{(\ell+\kappa-2)}_{\Lambda_{\gamma_{\ell}+1-\kappa_1},\Lambda_{\gamma_{\ell}+1-\kappa_1}}\vspace{0.5mm}\\\mathtt{F}^{(\ell-2)}_{1,2}\\\vdots\\\mathtt{F}^{(\ell-2)}_{1,\Lambda_{\gamma_{\ell}+1}}\vspace{-0.25mm}\\ \vdots\vspace{-0.25mm} \\\mathtt{F}^{(\ell-2)}_{\Lambda_{\gamma_{\ell}+1}-1,\Lambda_{\gamma_{\ell}+1}}
	\end{bmatrix}\pmod u,\end{equation}}\normalsize
where the matrix  $\mathcal{N}$ of order $\Big( \Lambda_{s-2\mathrm{f}_\ell+2}+\Lambda_{\gamma_{\ell}+1-\kappa_1}+\frac{\Lambda_{\gamma_{\ell}+1}(\Lambda_{\gamma_{\ell}+1}-1)}{2}\Big) \times \Big(\sum\limits_{i=\ell}^{\gamma_{\ell}+\ell}\lambda_i\Lambda_{i-\ell+1}+\Lambda_{\gamma_{\ell}+1}(n-\Lambda_{\gamma_{\ell}+\ell})\Big)$ is  given by  
\smaller{\begin{equation*}
  \mathcal{N}=\begin{bmatrix}
	\widehat{\textbf{1}} &&&\\
    	&\widehat{\textbf{1}} &&&\\
	& & \ddots  &  \\
	& & & \widehat{\textbf{1}}\\
    	& & &&\ddots  &  \\
    & & & && \widehat{\textbf{1}}\\
	\widehat{\textbf{b}}_1 &&&\\
    	&\widehat{\textbf{b}}_2 &&&\\
	& & \ddots  &  \\
	& & &  \widehat{\textbf{b}}_{\Lambda_{\gamma_{\ell}+1}-\kappa_1}&\\
\widehat{\textbf{b}}_{2}&\widehat{\textbf{b}}_{1}&&& \\
	\vdots&\vdots  & \vdots &\vdots\\
    \widehat{\textbf{b}}_{\Lambda_{s-2\mathrm{f}_{\ell}+2}}&&&&&\widehat{\textbf{b}}_{1}&&& \\
    	\vdots&\vdots  & \vdots &\vdots\\
\widehat{\textbf{b}}_{\Lambda_{\gamma_{\ell}+1}}&&&&&&\cdots&&\widehat{\textbf{b}}_{1} \\
	&&&\vdots&\vdots  & \vdots&\cdots&\vdots&\vdots \\
    	
	& && &&&\cdots& \widehat{\textbf{b}}_{\Lambda_{\gamma_{\ell}+1}}& \widehat{\textbf{b}}_{\Lambda_{\gamma_{\ell}+1}-1}\\
	\end{bmatrix}.\vspace{-1mm}
\end{equation*}}\normalsize
Furthermore,  working as in Lemma 4.1 of Yadav and Sharma \cite{Galois}, we observe that the rows of the matrix $\mathcal{N}$  are linearly independent over $\mathcal{T}_{m}$  if and only if $\textbf{1}\notin \mathcal{C}^{(\gamma_{\ell}+1-\kappa_1)}.$ Accordingly, we will consider the following two cases separately:  (i) $\textbf{1}\notin \mathcal{C}^{(\gamma_{\ell}+1-\kappa_1)},$ and (ii)  $\textbf{1}\in \mathcal{C}^{(\gamma_{\ell}+1-\kappa_1)}.$

\begin{itemize}\vspace{-1mm}\item[(i)] Let  $\textbf{1}\notin \mathcal{C}^{(\gamma_{\ell}+1-\kappa_1)}.$ In this case, the  rows of the matrix $\mathcal{N}$ are linearly independent over $\mathcal{T}_{m},$  and hence the row-rank of the matrix $\mathcal{N}$ is  $\Lambda_{s-2\mathrm{f}_\ell+2}+\Lambda_{\gamma_{\ell}+1-\kappa_1}+\frac{\Lambda_{\gamma_{\ell}+1}(\Lambda_{\gamma_{\ell}+1}-1)}{2}.$ Thus, the matrix equation  \eqref{e3.12p3.5Kodd} always has a solution. In fact,    the number of  solutions  of  the matrix equation \eqref{e3.12p3.5Kodd}, and hence the number of choices for the matrix $[\mathtt{V}^{(\ell-2)}]_{\gamma_{\ell}+1} \in \mathcal{M}_{\Lambda_{\gamma_{\ell}+1}\times n}(\mathcal{T}_{m})$ is  given by 
\vspace{-2mm}\begin{equation*}
   \displaystyle (2^m)^{\sum\limits_{i=\ell}^{\gamma_{\ell}+\ell}\lambda_i\Lambda_{i-\ell+1}+\Lambda_{\gamma_{\ell}+1}(n-\Lambda_{\gamma_{\ell}+\ell})-\Lambda_{s-2\mathrm{f}_{\ell}+2}-\Lambda_{\gamma_{\ell}+1-\kappa_1}-\frac{\Lambda_{\gamma_{\ell}+1}(\Lambda_{\gamma_{\ell}+1}-1)}{2} }. 
\vspace{-2mm}\end{equation*} 
\vspace{-1mm}\item[(ii)] Now, let us assume that $\textbf{1}\in \mathcal{C}^{(\gamma_{\ell}+1-\kappa_1)}.$  In this case, $\textbf{1}$ belongs to $\mathcal{T}_m$-span of the vectors  $\textbf{b}_1,\textbf{b}_2,\ldots,\\  \textbf{b}_{\Lambda_{\gamma_{\ell}+1-\kappa_1}}.$ Since the  vectors $\textbf{b}_1,\textbf{b}_2,\ldots, \textbf{b}_{\Lambda_{\gamma_{\ell}+1-\kappa_1}}$  are linearly independent over $\mathcal{T}_{m},$ there exist unique scalars $\alpha_{1},\alpha_{2},\ldots,\alpha_{\Lambda_{\gamma_{\ell}+1-\kappa_1}} \in \mathcal{T}_{m}$ such that $\alpha_{1} \textbf{b}_{1}+\alpha_{2} \textbf{b}_{2}+\cdots+\alpha_{\Lambda_{\gamma_{\ell}+1-\kappa_1}} \textbf{b}_{\Lambda_{\gamma_{\ell}+1-\kappa_1}} \equiv  \textbf{1}\pmod u.$ 
Moreover, all the rows of the matrix $\mathcal{N}$ except the last row are linearly independent over $\mathcal{T}_{m}.$ This implies that the row-rank of the matrix $\mathcal{N}$ is  $\Lambda_{\gamma_{\ell}+1-\kappa_1}+\Lambda_{s-2\mathrm{f}_\ell+2}+\frac{\Lambda_{\gamma_{\ell}+1}(\Lambda_{\gamma_{\ell}+1}-1)}{2}-1.$ Further, one can easily see that  the matrix equation \eqref{e3.12p3.5Kodd} has a solution if and only if 
\vspace{-1mm}\begin{equation}\label{e3.13p3.5Kodd}\vspace{-1mm}
   \displaystyle   \sum\limits_{h=1}^{\Lambda_{\gamma_{\ell}+1-\kappa_1}}\alpha_h(\mathtt{E}^{(2\ell-4)}_{h,h})^{2^{m-1}} +\eta_0^{-1}\sum\limits_{g=1}^{\Lambda_{\gamma_{\ell}+1-\kappa_1}}\alpha_g^2 \mathtt{E}^{(\ell+\kappa-2)}_{g,g} + \hspace{-4mm}\sum\limits_{~~~~~1\leq i< j \leq \Lambda_{\gamma_{\ell}+1-\kappa_1}}\hspace{-4mm}\alpha_i \alpha_j \mathtt{F}_{i,j}^{(\ell-2)}~ \equiv ~0\pmod{u}.
 \end{equation}
 As  $\mathcal{C}^{(\gamma_{\ell}+1-\kappa_1)}$ is a doubly even code over $\mathcal{T}_m$ satisfying $\textbf{1}\in \mathcal{C}^{(\gamma_{\ell}+1-\kappa_1)},$ we see,   by Theorem 3.2 of Yadav and Sharma \cite{Galois}, that $n\equiv 0,4\pmod 8.$  We next assert that the equation  \eqref{e3.13p3.5Kodd} holds (\textit{i.e.,} the matrix equation \eqref{e3.12p3.5Kodd}  always has a solution).       To prove this assertion, let $[\mathtt{V}^{(a)}]_{\gamma_{\ell}+1} = (\textbf{v}^{(a)}_j)$ for $1 \leq a \leq \ell-3,$  where  $\textbf{v}^{(a)}_j$ denotes the 
 $j$-th row of the matrix $[\mathtt{V}^{(a)}]_{\gamma_{\ell}+1}$  for each $j$ and $a.$  Further, let us write \vspace{-1mm}\begin{equation*}\vspace{-1mm}
     \textbf{1} \equiv \sum\limits_{j=1}^{\Lambda_{\gamma_{\ell}+1-\kappa_1}} \hspace{-2mm}\alpha_j \left(\textbf{b}_j+u\textbf{v}^{(1)}_j+ \cdots+u^{\ell-3}\textbf{v}^{(\ell-3)}_j\right)+ u^{\ell-2}\textbf{d}_{\ell-2}+\cdots+u^{\ell+\kappa-2} \textbf{d}_{\ell+\kappa-2} \pmod{u^{\ell+\kappa-1}}
 \end{equation*} for some $\textbf{d}_{\ell-2},\textbf{d}_{\ell-1},\ldots,\textbf{d}_{\ell+\kappa-2} \in \mathcal{T}_{m}^n.$ Note that  $\textbf{1}\cdot \textbf{1} \equiv n \pmod{ u^{\ell+\kappa-1}}.$ As $n\equiv 0,4\pmod 8,$ we must have  $\textbf{1}\cdot \textbf{1} \equiv 0 \pmod{ u^{\ell+\kappa-1}}. $   From this, we get 
\vspace{-1mm}\begin{equation}\label{eta_0p3.5Kodd}\vspace{-1mm}
 \sum\limits_{h=1}^{\Lambda_{\gamma_{\ell}+1-\kappa_1}}\alpha_h^2 \mathtt{E}_{h,h}^{(2\ell-4)}+\textbf{d}_{\ell-2}\cdot \textbf{d}_{\ell-2}  \equiv 0\pmod{u}   \end{equation}       and \vspace{-1mm}\begin{equation}\label{eta_1p3.5Kodd}\vspace{-1mm}
   \eta_0^{-1} \sum\limits_{g=1}^{\Lambda_{\gamma_{\ell}+1-\kappa_1}}\alpha_g^2 \mathtt{E}^{(\ell+\kappa-2)}_{g,g}+ \hspace{-5mm}\sum\limits_{~~~~~1\leq i< j \leq \Lambda_{\gamma_{\ell}+1-\kappa_1}}\hspace{-5mm}\alpha_i \alpha_j \mathtt{F}_{i,j}^{(\ell-2)}+\textbf{1}\cdot \textbf{d}_{\ell-2} \equiv 0\pmod{u}.  
 \end{equation} 
By \eqref{eta_0p3.5Kodd} and \eqref{eta_1p3.5Kodd},  we see that the equation \eqref{e3.13p3.5Kodd} holds, and hence the matrix equation \eqref{e3.12p3.5Kodd}  always has a solution. 
  Further, 
  the number of solutions of the matrix equation \eqref{e3.12p3.5Kodd}, and hence the number of choices for the matrix  $[\mathtt{V}^{(\ell-2)}]_{\gamma_{\ell}+1} \in \mathcal{M}_{\Lambda_{\gamma_{\ell}+1} \times n } (\mathcal{T}_{m})$ is given by 
 \vspace{-1mm} \begin{equation*}\vspace{-1mm}
   \displaystyle (2^m)^{\sum\limits_{i=\ell}^{\gamma_{\ell}+\ell}\lambda_i\Lambda_{i-\ell+1}+\Lambda_{\gamma_{\ell}+1}(n-\Lambda_{\gamma_{\ell}+\ell})-\Lambda_{s-2\mathrm{f}_\ell+2}-\Lambda_{\gamma_{\ell}+1-\kappa_1}-\frac{\Lambda_{\gamma_{\ell}+1}(\Lambda_{\gamma_{\ell}+1}-1)}{2}+1}. 
\end{equation*}
Next, for a given choice of the matrix $[V^{(\ell-2)}]_{\gamma_{\ell}+1}$  satisfying \eqref{e3.40p3.5Kodd}--\eqref{e3.42p3.5Kodd},  we can write
\begin{equation}\label{e3.43Kodd}
 \mathtt{F}^{(\ell-2)}+[\mathtt{W}^{(0)}]_{\gamma_{\ell}+1}[\mathtt{V}^{(\ell-2)}]_{\gamma_{\ell}+1}^t+[\mathtt{V}^{(\ell-2)}]_{\gamma_{\ell}+1}[\mathtt{W}^{(0)}]_{\gamma_{\ell}+1}^t ~\equiv ~ uQ_1 \pmod{u^2}
\end{equation}
for some  $Q_1\in \mathtt{Sym}_{\Lambda_{\gamma_{\ell}+1}}(\mathcal{T}_{m}) .$ 
This, by \eqref{e3.30Kodd},  implies that
\begin{eqnarray}\label{e3.44Kodd}
    Q_1+\mathtt{F}^{(\ell-1)}+[\mathtt{W}^{(0)}]_{\gamma_{\ell}+1}[\mathtt{V}^{(\ell-1)}]_{\gamma_{\ell}+1}^t+[\mathtt{V}^{(\ell-1)}]_{\gamma_{\ell}+1}[\mathtt{W}^{(0)}]_{\gamma_{\ell}+1}^t&&\nonumber\\+[\mathtt{V}^{(1)}]_{\gamma_{\ell}+1}[\mathtt{V}^{(\ell-2)}]_{\gamma_{\ell}+1}^t+[\mathtt{V}^{(\ell-2)}]_{\gamma_{\ell}+1}[\mathtt{V}^{(1)}]_{\gamma_{\ell}+1}^t&\equiv& \mathbf{0}\pmod{u}.
    \end{eqnarray}
Now, working as in Lemma 4.1  of Yadav and Sharma   \cite{quasi}, we observe that there exists a matrix $[\mathtt{V}^{(\ell-1)}]_{\gamma_{\ell}+1}$ satisfying  \eqref{e3.32Kodd} and \eqref{e3.44Kodd}. Moreover, the matrix $[\mathtt{V}^{(\ell-1)}]_{\gamma_{\ell}+1}$ satisfying  \eqref{e3.32Kodd} and \eqref{e3.44Kodd}  has precisely \vspace{-1mm}\begin{equation*}
  \vspace{-1mm}  \displaystyle (2^m)^{\sum\limits_{i=\ell+1}^{\gamma_{\ell}+\ell}\lambda_i\Lambda_{i-\ell}+\Lambda_{\gamma_{\ell}+1}(n-\Lambda_{\gamma_{\ell}+\ell})-\Lambda_{s-2\mathrm{f}_\ell+1}-\frac{\Lambda_{\gamma_{\ell}+1}(\Lambda_{\gamma_{\ell}+1}-1)}{2}}
\end{equation*} distinct choices. Further, working as in Proposition 4.3 of Yadav and Sharma \cite{Galois}, we see that the matrices 
$\Big[
  ~0~ \cdots ~0 ~ \mathtt{B}_{\gamma_{\ell}+g,\gamma_{\ell}+\ell}^{(\ell-g)} ~\cdots ~ \mathtt{B}_{\gamma_{\ell}+g,e}^{(\ell-g)}  
\Big] $ for $2 \leq g \leq \ell-1$ and $\mathtt{W}_{\gamma_{\ell}+\ell}^{(\ell)}$
satisfying \eqref{e3.34Kodd} and \eqref{e3.35Kodd} have  precisely \vspace{-1mm}\begin{equation*}
\vspace{-1mm}(2^m)^{(\Lambda_{\gamma_{\ell}+\ell-1}-\Lambda_{\gamma_{\ell}+1})(n-\Lambda_{\gamma_{\ell}+\ell}-\Lambda_{\gamma_{\ell}+1})}{\lambda_{\gamma_{\ell}+\ell}+n-\Lambda_{\gamma_{\ell}+\ell}-\Lambda_{\gamma_{\ell}+1} \brack \lambda_{\gamma_{\ell}+\ell}}_{2^m}\end{equation*}  distinct choices.  From this, the desired result follows in the case when $\ell$ is even.
 \vspace{-2mm} \end{itemize} 
  \item[(II)] When $\ell$ is odd, the desired result follows by working as in case (I). \vspace{-2mm}
\end{itemize}\vspace{-2mm}\vspace{-2mm} \end{proof}

 The following proposition  lifts a  self-orthogonal code of type $\{\Lambda_{s-\kappa_1+1},\lambda_{s-\kappa_1+2}, \ldots, \lambda_{s+\kappa_1+\theta_e}\}$   and  length $n$ over $\mathscr{R}_{\kappa-1+\theta_e,m}$ satisfying  property $(\mathfrak{P})$ to a   self-orthogonal code  of type $\{\Lambda_{s-\kappa_1},\lambda_{s-\kappa_1+1}, \ldots, \lambda_{s+\kappa_1+1+\theta_e}\}$ and the same length $n$ over $\mathscr{R}_{\kappa+1+\theta_e,m}$ that also satisfies property $(\mathfrak{P})$ when  $2 \kappa \leq e.$ It also enumerates the number of all possible distinct ways to perform this lifting.
\vspace{-1mm}\begin{proposition}\label{p3.6Kodd} Let $2\kappa \leq e,$   and let $\ell=\kappa+1+\theta_e.$  Let $\mathscr{D}_{\kappa-1+\theta_e}$ be a self-orthogonal code of type $\{\Lambda_{s-\kappa_1+1},\lambda_{s-\kappa_1+2}, \\ \ldots, \lambda_{s+\kappa_1+\theta_e}\}$   and  length $n$ over $\mathscr{R}_{\kappa-1+\theta_e,m}$ satisfying  property $(\mathfrak{P}).$ Let $\mathcal{C}^{(s-\kappa+\theta_e)} \subseteq \mathcal{C}^{(s-\kappa+1)}\subseteq \mathcal{C}^{(s-\kappa_1)}$ be a chain of linear subcodes of $Tor_{1}(\mathscr{D}_{\kappa-1+\theta_e}), $ such that $\dim \mathcal{C}^{(s-\kappa+\theta_e)}= \Lambda_{s-\kappa+\theta_e},$ $\dim \mathcal{C}^{(s-\kappa+1)}= \Lambda_{s-\kappa+1}$, $\dim \mathcal{C}^{(s-\kappa_1)}=\Lambda_{s-\kappa_1} $ and the code $\mathcal{C}^{(s-\kappa_1)}$ is doubly even, (such a chain $\mathcal{C}^{(s-\kappa+\theta_e)} \subseteq \mathcal{C}^{(s-\kappa+1)}\subseteq \mathcal{C}^{(s-\kappa_1)}$ of codes exists for each choice of $\mathscr{D}_{\kappa-1+\theta_e},$ by Lemma \ref{l3.1}).    
 The following hold.
 \begin{enumerate}\vspace{-1mm}\item[(a)] There exists a  self-orthogonal code $\mathscr{D}_{\kappa+1+\theta_e}$ of type $\{\Lambda_{s-\kappa_1},\lambda_{s-\kappa_1+1}, \ldots, \lambda_{s+\kappa_1+1+\theta_e}\}$ and length $n$ over $\mathscr{R}_{\kappa+1+\theta_e,m}$ satisfying property $(\mathfrak{P})$ with  $Tor_{1}(\mathscr{D}_{\kappa+1+\theta_e})=\mathcal{C}^{(s-\kappa_1)}$ and     $Tor_{j+1}(\mathscr{D}_{\kappa+1+\theta_e})=Tor_{j}(\mathscr{D}_{\kappa-1+\theta_e})$ for $1 \leq j \leq  \kappa-1+\theta_e,$  where such a code exists: \begin{itemize}\item unconditionally, if $\textbf{1}\notin \mathcal{C}^{(s-\kappa+\theta_e)};$ \item  if and only if either $n\equiv 0\pmod 8$ or 
  $n \equiv 4 \pmod{8}$ with $m$  even when $\textbf{1}\in \mathcal{C}^{(s-\kappa+\theta_e)}.$ \end{itemize}
\vspace{-1mm}\item[(b)] Moreover,  each such choice of the codes   $\mathscr{D}_{\kappa-1+\theta_e},$ $\mathcal{C}^{(s-\kappa_1)},$  $\mathcal{C}^{(s-\kappa+1)}$  and $\mathcal{C}^{(s-\kappa+\theta_e)}$ gives rise to precisely 
\vspace{-1mm}\begin{equation*}
\vspace{-1mm}\hspace{-4mm}2^{\epsilon}(2^m)^{\sum\limits_{i=\kappa+1+\theta_e}^{s+\kappa_1+1+\theta_e}\lambda_i\Lambda_{i-\kappa-\theta_e}+\sum\limits_{j=\kappa+2+\theta_e}^{s+\kappa_1+1+\theta_e}\lambda_j\Lambda_{j-\kappa-1-\theta_e}+\Lambda_{s-\kappa_1}^2+\Lambda_{s-\kappa_1}+Y_{\kappa+1+\theta_e}}  {\lambda_{s+\kappa_1+1+\theta_e}+n-\Lambda_{s+\kappa_1+1+\theta_e}-\Lambda_{s-\kappa_1} \brack \lambda_{s+\kappa_1+1+\theta_e}}_{2^m}
\end{equation*}  distinct  self-orthogonal codes $\mathscr{D}_{\kappa+1+\theta_e}$ of type $\{\Lambda_{s-\kappa_1},\lambda_{s-\kappa_1+1}, \ldots, \lambda_{s+\kappa_1+1+\theta_e}\}$ and length $n$ over $\mathscr{R}_{\kappa+1+\theta_e,m}$ satisfying property $(\mathfrak{P})$ with $Tor_{1}(\mathscr{D}_{\kappa+1+\theta_e})=\mathcal{C}^{(s-\kappa_1)}$ and   $Tor_{j+1}(\mathscr{D}_{\kappa+1+\theta_e})=Tor_{j}(\mathscr{D}_{\kappa-1+\theta_e})$ for $1 \leq j \leq  \kappa-1+\theta_e.$ Here  $\epsilon=1$ if $\textbf{1} \in \mathcal{C}^{(s-\kappa+\theta_e)},$ 
while $\epsilon=0$ otherwise. The number $Y_{\kappa+1+\theta_e}$ is given by $$Y_{\kappa+1+\theta_e}=\left\{ \begin{array}{ll} (\Lambda_{s+\kappa_1+1}+\Lambda_{s-\kappa_1})(n-\Lambda_{s+\kappa_1+2}-\Lambda_{s-\kappa_1})-\Lambda_{s-\kappa+1}-\Lambda_{s-\kappa} & \text{if }e \text{ is odd}; \\
(\Lambda_{s+\kappa_1}+\Lambda_{s-\kappa_1})(n-\Lambda_{s+\kappa_1+1}-\Lambda_{s-\kappa_1})-\Lambda_{s-\kappa} -2\Lambda_{s-\kappa+1}  +\omega_{\kappa+1} & \text{if } e \text{ is even},\end{array}\right.$$ where, in the case that $e$ is even, the number $\omega_{\kappa+1}$ is given by
 $\omega_{\kappa+1}=1$ if $\textbf{1} \in \mathcal{C}^{(s-\kappa+1)},$  
 while $\omega_{\kappa+1}=0$ otherwise.\end{enumerate} \vspace{-1mm}\end{proposition}
\vspace{-1mm}\begin{proof}  To prove the result, we  will consider the following two cases separately: (I) $e$ is even , and (II) $e$ is odd. 
 \vspace{-1mm}\begin{itemize}
     \item[(I)] Let $e$ be even. In this case, we assume, without any loss of generality, that the code 
$\mathscr{D}_{\kappa-1}$ has a generator matrix
\begin{equation*}\label{e00.1}
\mathtt{G}_{\kappa-1}=\begin{bmatrix}
\mathtt{W}_1^{(\kappa-1)}\\\mathtt{W}_2^{(\kappa-1)} \\ \vdots\\ \mathtt{W}_{s-\kappa_1+1}^{(\kappa-1)}\vspace{0.5mm}\\ u\mathtt{W}_{s-\kappa_1+2}^{(\kappa-1)}\vspace{0.5mm}\\  \vdots\\u^{\kappa-2}\mathtt{W}_{s+\kappa_1}^{(\kappa-1)}
\end{bmatrix}=\begin{bmatrix}
\mathtt{W}_1^{(0)}+u\mathtt{V}_1^{(1)}+u^2\mathtt{V}_1^{(2)}+\cdots +u^{\kappa-2}\mathtt{V}_1^{(\kappa-2)}\\\mathtt{W}_2^{(0)}+u\mathtt{V}_2^{(1)}+u^2\mathtt{V}_2^{(2)}+\cdots +u^{\kappa-2}\mathtt{V}_2^{(\kappa-2)} \\ \vdots\\ \mathtt{W}_{s-\kappa_1+1}^{(0)}+u\mathtt{V}_{s-\kappa_1+1}^{(1)}+u^2\mathtt{V}_{s-\kappa_1+1}^{(2)}+\cdots +u^{\kappa-2}\mathtt{V}_{s-\kappa_1+1}^{(\kappa-2)}\vspace{0.5mm}\\ u\mathtt{W}_{s-\kappa_1+2}^{(\kappa-1)}\vspace{0.5mm}\\ \vdots\\u^{\kappa-2}\mathtt{W}_{s+\kappa_1}^{(\kappa-1)}
\end{bmatrix},\end{equation*} 
 where  
\vspace{-1mm}\begin{equation*}\label{e00.2}\vspace{-1mm}
[\mathtt{W}^{(0)}]_{s-\kappa_1+1}=\begin{bmatrix}
\mathtt{W}_1^{(0)}\\\mathtt{W}_2^{(0)}\\ \vdots\\ \mathtt{W}_{s-\kappa_1+1}^{(0)}
\end{bmatrix}  =\begin{bmatrix}
{\mathtt{I}}_{\lambda_1}&{\mathtt{B}}_{1,1}^{(0)}&\cdots&{\mathtt{B}}_{1,s-\kappa_1}^{(0)}&\cdots & {\mathtt{B}}_{1,e-1}^{(0)}& {\mathtt{B}}_{1,e}^{(0)}\\
\mathbf{0} &{\mathtt{I}}_{\lambda_2}  &\cdots &{\mathtt{B}}_{2,s-\kappa_1}^{(0)}&\cdots & {\mathtt{B}}_{2,e-1}^{(0)}&{\mathtt{B}}_{2,e}^{(0)}\\
	\vdots& \vdots   &\vdots&\vdots &\vdots& \vdots&\vdots\\
\mathbf{0}&\mathbf{0}&\cdots& {\mathtt{I}}_{\lambda_{s-\kappa_1+1}}& \cdots & {\mathtt{B}}_{s-\kappa_1+1,e-1}^{(0)}&{\mathtt{B}}_{s-\kappa_1+1,e}^{(0)}
\end{bmatrix}\end{equation*} 
with  ${\mathtt{I}}_{\lambda_i}$ as the $\lambda_i\times \lambda_i$ identity matrix over ${\mathcal{T}}_{m},$    $ {\mathtt{B}}_{i,j}^{(0)} \in \mathcal{M}_{\lambda_i\times \lambda_{j+1}}({\mathcal{T}}_{m})$  for $1 \leq i \leq s-\kappa_1+1$ and $i \leq j \leq e,$
$[\mathtt{V}^{(a)}]_{s-\kappa_1+1}\in \mathcal{M}_{\Lambda_{s-\kappa_1+1}\times n}(\mathcal{T}_{m})$ for $1 \leq a \leq \kappa-2,$ and  the matrix $\mathtt{W}_{s-\kappa_1-1+g}^{(\kappa-1)}\in \mathcal{M}_{\lambda_{s-\kappa_1-1+g}\times n}(\mathscr{R}_{\kappa-1,m})$   to be considered modulo $u^{\kappa+1-g}$   for  $3 \leq g \leq \kappa.$ Note that the torsion code $Tor_1(\mathscr{D}_{\kappa-1})$ is a $\Lambda_{s-\kappa_1+1}$-dimensional code over $\mathcal{T}_{m}$ with a generator matrix $[\mathtt{W}^{(0)}]_{s-\kappa_1+1}.$ 
Here, we also assume, without any loss of generality,  that the code  $\mathcal{C}^{(s-\kappa_1)}$ has a generator matrix $[\mathtt{W}^{(0)}]_{s-\kappa_1},$ the code $\mathcal{C}^{(s-\kappa+1)}$  has a generator matrix $[\mathtt{W}^{(0)}]_{s-\kappa+1} $ and  the code $\mathcal{C}^{(s-\kappa)}$   has a generator matrix $[\mathtt{W}^{(0)}]_{s-\kappa}. $ Moreover, by  Remark \ref{r3.1},  we assume that the matrix $(\mathtt{B}^{(0)})_ { s-\kappa_1,s+\kappa_1+1} $ has full row-rank.   
 Since $\mathscr{D}_{\kappa-1}$ is a self-orthogonal code over $\mathscr{R}_{\kappa-1,m}$ satisfying  property $(\mathfrak{P}),$  we have
\begin{eqnarray*}
[\mathtt{W}^{(\kappa-1)}]_{s-\kappa_1}[\mathtt{W}^{(\kappa-1)}]_{s-\kappa_1}^t &\equiv &\mathbf{0} \pmod{u^{\kappa-1}},\label{}\\
  \displaystyle \mathtt{Diag}\left([\mathtt{W}^{(\kappa-1)}]_{s-\kappa_1+1-\mathrm{f}_i}[\mathtt{W}^{(\kappa-1)}]_{s-\kappa_1+1-\mathrm{f}_i}^t\right) &\equiv & \mathbf{0} \pmod{u^{\kappa-1+i}} \text{ ~~for ~} i\in \{2,4,6,\ldots,\kappa-1\},~\\ 
   \hspace{-2mm} \displaystyle  \pi_{2\kappa-3}\left( \mathtt{Diag}\big([\mathtt{W}^{(\kappa-1)}]_{s-\kappa+2}[\mathtt{W}^{(\kappa-1)}]_{s-\kappa+2}^t\big) \right)&=&\mathbf{0} ,
  \\ 
  ~[\mathtt{W}^{(\kappa-1)}]_{ s-\kappa_1} \mathtt{W}_{s-\kappa_1-1+g}^{(\kappa-1)t}&\equiv &\mathbf{0} \pmod{u^{\kappa+1-g}} \text{ ~for } 2 \leq g \leq  \kappa ,\\
\mathtt{W}_{s-\kappa_1-1+i}^{(\kappa-1)}
 \mathtt{W}_{s-\kappa_1-1+j}^{(\kappa-1)t}&\equiv &\mathbf{0}  \pmod{u^{\kappa+3-i-j}} \text{ for } 2 \leq i,j\leq \kappa \text{ and } i+j\leq \kappa+2.\end{eqnarray*} 
   This implies that \begin{eqnarray*}
[\mathtt{W}^{(\kappa-1)}]_{s-\kappa_1} [\mathtt{W}^{(\kappa-1)}]^{t}_{s-\kappa_1}&\equiv&  u^{\kappa-1}\mathtt{F}^{(\kappa-1)}+u^{\kappa}\mathtt{F}^{(\kappa)}+\cdots+u^{2\kappa}\mathtt{F}^{(2\kappa)} \pmod{u^{2\kappa+1}},\\ ~[\mathtt{W}^{(\kappa-1)}]_{ s-\kappa_1} \mathtt{W}_{s-\kappa_1-1+g}^{(\kappa-1)t}&\equiv &u^{\kappa+1-g}R_{g}  \pmod{u^{\kappa+2-g}} \text{ ~for } 2 \leq g \leq \kappa ,\\
\mathtt{W}_{s-\kappa_1-1+i}^{(\kappa-1)} \mathtt{W}_{s-\kappa_1-1+j}^{(\kappa-1)t}&\equiv &\mathbf{0} \pmod{u^{\kappa+3-i-j}} \text{ for } 2 \leq i,j\leq \kappa \text{ and } i+j\leq \kappa+2,
\end{eqnarray*} where $\mathtt{F}^{(\kappa-1)},\mathtt{F}^{(\kappa)}, \ldots, \mathtt{F}^{(2\kappa)} \in \mathtt{Sym}_{\Lambda_{s-\kappa_1}}(\mathcal{T}_{m})$ with  $\mathtt{F}_{j,j}^{(\kappa-3+i)}=\mathtt{F}^{(\kappa-2+i)}_{j,j}=0 $ for $1 \leq j \leq \Lambda_{s-\kappa_1+1-\mathrm{f}_i }$ and $ i \in \{ 2,4,\ldots,\kappa-1\} ,$  and the matrix $R_{g} \in \mathcal{M}_{\Lambda_{s-\kappa_1} \times \lambda_{s-\kappa_1-1+g}}(\mathcal{T}_{m})  $ for $ 2 \leq g \leq \kappa.$   

Now, to prove the result,  we define a matrix  $\mathtt{G}_{\kappa+1}$ over $\mathscr{R}_{\kappa+1,m}$ as follows:
\vspace{-1mm}\begin{equation*} \vspace{-1mm}\mathtt{G}_{\kappa+1}=\begin{bmatrix}
\mathtt{W}_1^{(\kappa+1)}\\\mathtt{W}_2^{(\kappa+1)}\\ \vdots\\ \mathtt{W}_{s-\kappa_1}^{(\kappa+1)}\\ \vspace{0.5mm}u\mathtt{W}_{s-\kappa_1+1}^{(\kappa+1)}\\\vdots\\u^{\kappa}\mathtt{W}_{s+\kappa_1+1}^{(\kappa+1)}
\end{bmatrix}=\begin{bmatrix} \mathtt{W}_1^{(\kappa-1)}+u^{\kappa-1}\mathtt{V}_1^{(\kappa-1)}+u^{\kappa}\mathtt{V}_1^{(\kappa)}\\ \mathtt{W}_2^{(\kappa-1)}+u^{\kappa-1}\mathtt{V}_2^{(\kappa-1)}+u^{\kappa}\mathtt{V}_2^{(\kappa)}\\ \vdots\\ \mathtt{W}_{s-\kappa_1}^{(\kappa-1)}+u^{\kappa-1}\mathtt{V}_{s-\kappa_1}^{(\kappa-1)}+u^{\kappa}\mathtt{V}_{s-\kappa_1}^{(\kappa)}\\ \vspace{0.5mm}u\mathtt{W}_{s-\kappa_1+1}^{(\kappa+1)}\\\vdots\\u^{\kappa}\mathtt{W}_{s+\kappa_1+1}^{(\kappa+1)}   \end{bmatrix},\end{equation*} 
where the matrices  $[\mathtt{V}^{(\mu)}]_{s-\kappa_1}\in \mathcal{M}_{\Lambda_{s-\kappa_1}\times n}(\mathcal{T}_{m})$ for $ \mu \in \{\kappa-1,\kappa\},$   $\mathtt{W}_{s-\kappa_1-1+g}^{(\kappa+1)}\in \mathcal{M}_{\lambda_{s-\kappa_1-1+g}\times n}(\mathscr{R}_{\kappa+1,m})$   for  $2 \leq g \leq \kappa$ and  $\mathtt{W}_{s+\kappa_1+1}^{(\kappa+1)}\in \mathcal{M}_{\lambda_{s+\kappa_1+1}\times n}(\mathcal{T}_{m})$ are of the forms 
\begin{equation*}\begin{bmatrix}
\mathtt{V}_1^{(\mu)}\\\mathtt{V}_2^{(\mu)}\\ \vdots\\ \mathtt{V}_{s-\kappa_1}^{(\mu)}
\end{bmatrix}= \begin{bmatrix}
0&\cdots&0&\mathtt{B}_{1,\mu+1}^{(\mu)}& \mathtt{B}_{1,\mu+2}^{(\mu)}&\cdots&\mathtt{B}_{1,s-\kappa_1+\mu}^{(\mu)}&\cdots &  \mathtt{B}_{1,e}^{(\mu)}\\
\mathbf{0} & \cdots&\mathbf{0}&\mathbf{0}& \mathtt{B}_{2,\mu+2}^{(\mu)}&  \cdots &\mathtt{B}_{2,s-\kappa_1+\mu}^{(\mu)}&\cdots &\mathtt{B}_{2,e}^{(\mu)}\\
	\vdots& \cdots &\vdots &\vdots &\vdots&\vdots & \vdots&\vdots&\vdots\\
\mathbf{0}&\cdots&\mathbf{0}&\mathbf{0}&\mathbf{0}&\cdots&  \mathtt{B}_{s-\kappa_1,s-\kappa_1+\mu}^{(\mu)}&\cdots &  \mathtt{B}_{s-\kappa_1,e}^{(\mu)}\\
\end{bmatrix},\end{equation*} 
 \begin{equation*} \mathtt{W}_{s-\kappa_1-1+g}^{(\kappa+1)}= \mathtt{W}_{s-\kappa_1-1+g}^{(\kappa-1)}+u^{\kappa+1-g}\begin{bmatrix}
\mathbf{0}&\cdots&\mathbf{0}& \mathtt{B}_{s-\kappa_1-1+g,s+\kappa_1+1}^{(\kappa+1-g)}&\cdots & \mathtt{B}_{s-\kappa_1-1+g,e}^{(\kappa+1-g)}
\end{bmatrix} \text{ and } \end{equation*} 
\begin{equation*} \mathtt{W}_{s+\kappa_1+1}^{(\kappa+1)}= \begin{bmatrix}
\mathbf{0}&\cdots&\mathbf{0}&\mathtt{I}_{\lambda_{s+\kappa_1+1}}& \mathtt{B}_{s+\kappa_1+1,s+\kappa_1+1}^{(0)}& \cdots & \mathtt{B}_{s+\kappa_1+1,e}^{(0)}
\end{bmatrix} \end{equation*}
with   $\mathtt{B}_{i,j}^{(\mu)} \in \mathcal{M}_{\lambda_i \times \lambda_{j+1}}(\mathcal{T}_{m})$ for $1 \leq i \leq s-\kappa_1$ and $ i+\mu \leq j \leq e,$  $\mathtt{B}_{s-\kappa_1-1+g,h}^{(\kappa+1-g)} \in \mathcal{M}_{\lambda_{s-\kappa_1-1+g} \times  \lambda_{h+1}}(\mathcal{T}_{m}) $ for $s+\kappa_1+1 \leq h \leq e,$  and  $\mathtt{B}_{s+\kappa_1+1,a}^{(0)} \in \mathcal{M}_{\lambda_{s+\kappa_1+1} \times  \lambda_{a+1}}(\mathcal{T}_{m})$  for $s+\kappa_1+1 \leq a \leq e. $
Let  $\mathscr{D}_{\kappa+1}$  be a  linear code of length $n$ over $\mathscr{R}_{\kappa+1,m}$ with a generator matrix $\mathtt{G}_{\kappa+1}.$   Note that the code $\mathscr{D}_{\kappa+1}$ is  of  type $\{\Lambda_{s-\kappa_1},\lambda_{s-\kappa_1+1}, \ldots, \lambda_{s+\kappa_1+1}\}$  over $\mathscr{R}_{\kappa+1,m}$ satisfying $Tor_1(\mathscr{D}_{\kappa+1})=\mathcal{C}^{(s-\kappa_1)}$ and $Tor_{i+1}(\mathscr{D}_{\kappa+1})=Tor_{i}(\mathscr{D}_{\kappa-1})$ for $1 \leq i \leq \kappa-1.$  Moreover, by  Lemma \ref{l2.2},  one can easily observe  that the code  $\mathscr{D}_{\kappa+1}$ is a self-orthogonal code over $\mathscr{R}_{\kappa+1,m}$ satisfying property $(\mathfrak{P})$   if and only if there exist matrices  $[\mathtt{V}^{(\kappa-1)}]_{s-\kappa_1},$ $[\mathtt{V}^{(\kappa)}]_{s-\kappa_1},$ $[
\mathbf{0}~\mathbf{0}~\cdots~\mathbf{0}~ \mathtt{B}_{s-\kappa_1-1+g,s+\kappa_1+1}^{(\kappa+1-g)}~\cdots ~\mathtt{B}_{s-\kappa_1-1+g,e}^{(\kappa+1-g)}]
$ $\left[
\mathbf{0}~\mathbf{0}~\cdots~\mathbf{0}~ \mathtt{B}_{s-\kappa_1-1+g,s+\kappa_1+1}^{(\kappa+1-g)}~\cdots ~\mathtt{B}_{s-\kappa_1-1+g,e}^{(\kappa+1-g)}
\right]$ for $2 \leq g \leq \kappa$ and  $\mathtt{W}_{s+\kappa_1+1}^{(\kappa+1)}$
 satisfying  the following system of matrix equations:
\begin{eqnarray}
 \mathtt{F}^{(\kappa-1)}+[\mathtt{W}^{(0)}]_{s-\kappa_1}[\mathtt{V}^{(\kappa-1)}]_{s-\kappa_1}^t+[\mathtt{V}^{(\kappa-1)}]_{s-\kappa_1}[\mathtt{W}^{(0)}]_{s-\kappa_1}^t+
 u\left([\mathtt{W}^{(0)}]_{s-\kappa_1}[\mathtt{V}^{(\kappa)}]_{s-\kappa_1}^t\right.\nonumber\\\left.+[\mathtt{V}^{(\kappa)}]_{s-\kappa_1}[\mathtt{W}^{(0)}]_{s-\kappa_1}^t+[\mathtt{V}^{(1)}]_{s-\kappa_1}[\mathtt{V}^{(\kappa-1)}]_{s-\kappa_1}^t+[\mathtt{V}^{(\kappa-1)}]_{s-\kappa_1}[\mathtt{V}^{(1)}]_{s-\kappa_1}^t+\mathtt{F}^{(\kappa)}\right)&\equiv& \mathbf{0}\pmod{u^2},\label{e3.45Kodd}\\
\mathtt{Diag}\left(\mathtt{E}^{(2\kappa-2)}+[\mathtt{V}^{(\kappa-1)}]_{s-\kappa+1}[\mathtt{V}^{(\kappa-1)}]_{s-\kappa+1}^t \right) &\equiv & \mathbf{0} \pmod{u},\label{e3.46Kodd}~~~~~~~\\
\mathtt{Diag}\left(\mathtt{E}^{(2\kappa-1)}+\eta_0[\mathtt{W}^{(0)}]_{s-\kappa+1}[\mathtt{V}^{(\kappa-1)}]_{s-\kappa+1}^t \right) &\equiv & \mathbf{0} \pmod{u},\label{e3.47Kodd}~~~~~~~\\
\mathtt{Diag}\left(\mathtt{E}^{(2\kappa)}+[\mathtt{V}^{(\kappa)}]_{s-\kappa}[\mathtt{V}^{(\kappa)}]_{s-\kappa}^t +\eta_0[\mathtt{W}^{(0)}]_{s-\kappa}[\mathtt{V}^{(\kappa)}]_{s-\kappa}^t\right.&&\nonumber\\\left.+\eta_0[\mathtt{V}^{(1)}]_{s-\kappa}[\mathtt{V}^{(\kappa-1)}]_{s-\kappa}^t+\eta_1[\mathtt{W}^{(0)}]_{s-\kappa}[\mathtt{V}^{(\kappa-1)}]_{s-\kappa}^t\right) &\equiv & \mathbf{0} \pmod{u},\label{e3.48Kodd}~~~~~~~\\ 
R_{g}+[\mathtt{W}^{(0)}]_{ s-\kappa_1} \begin{bmatrix} \mathbf{0}~ \cdots ~\mathbf{0} ~\mathtt{B}_{s-\kappa_1-1+g,s+\kappa_1+1}^{(\kappa+1-g)} ~\cdots ~ \mathtt{B}_{s-\kappa_1-1+g,e}^{(\kappa+1-g)}\end{bmatrix} ^t&\equiv &\mathbf{0}  \pmod{u}, ~~\label{e3.49Kodd}\\
~[\mathtt{W}^{(0)}]_{s-\kappa_1} \mathtt{W}_{s+\kappa_1+1}^{(\kappa+1)t} &\equiv&~ \mathbf{0}\pmod{u},\label{e3.50Kodd}\end{eqnarray}
 where   $\mathtt{E}^{(2\kappa-2)},$   $\mathtt{E}^{(2\kappa-1)}$ and $\mathtt{E}^{(2\kappa)}$ are  $\Lambda_{s-\kappa+1} \times \Lambda_{s-\kappa+1},$  $\Lambda_{s-\kappa+1} \times \Lambda_{s-\kappa+1}$ and $\Lambda_{s-\kappa} \times \Lambda_{s-\kappa}$ matrices over $\mathcal{T}_{m}$ whose rows are the  first $\Lambda_{s-\kappa+1} ,$  $\Lambda_{s-\kappa+1} $ and $\Lambda_{s-\kappa} $   rows of the matrices $\mathtt{F}^{(2\kappa-2)}, $ $\mathtt{F}^{(2\kappa-1)}$  and $ \mathtt{F}^{(2\kappa)},$     respectively. 
 
We will now establish the existence of the  matrices $[\mathtt{V}^{(\kappa-1)}]_{s-\kappa_1},$ $[\mathtt{V}^{(\kappa)}]_{s-\kappa_1},$ $\Big[\mathbf{0}~ \cdots ~\mathbf{0} ~\mathtt{B}_{s-\kappa_1-1+g,s+\kappa_1+1}^{(\kappa+1-g)} ~\cdots ~ \\ \mathtt{B}_{s-\kappa_1-1+g,e}^{(\kappa+1-g)}\Big] $ for $2 \leq g \leq \kappa$  and $\mathtt{W}_{s+\kappa_1+1}^{(\kappa+1)}$ satisfying the system  of matrix equations \eqref{e3.45Kodd}--\eqref{e3.50Kodd} and count their choices.
To do this,  we  first recall  that  the matrix $(\mathtt{B}^{(0)})_{s-\kappa_1,s+\kappa_1+1}$ has full row-rank over $\mathcal{T}_{m}.$ Working as  in Proposition \ref{p3.4Kodd},  we see that there exists a matrix $[\mathtt{V}^{(\kappa-1)}]_{s-\kappa_1}$  satisfying  the  following system of matrix equations:  
\vspace{-2mm}\begin{eqnarray}
  \mathtt{F}^{(\kappa-1)}+[\mathtt{W}^{(0)}]_{s-\kappa_1}[\mathtt{V}^{(\kappa-1)}]_{s-\kappa_1}^t+[\mathtt{V}^{(\kappa-1)}]_{s-\kappa_1}[\mathtt{W}^{(0)}]_{s-\kappa_1}^t &\equiv & \mathbf{0} \pmod{u},\label{e3.51Kodd}\\
\mathtt{Diag}\left(\mathtt{E}^{(2\kappa-2)}+[\mathtt{V}^{(\kappa-1)}]_{s-\kappa+1}[\mathtt{V}^{(\kappa-1)}]_{s-\kappa+1}^t \right) &\equiv & \mathbf{0} \pmod{u},\label{e3.52Kodd}~~~~~~~\\
\mathtt{Diag}\left(\mathtt{E}^{(2\kappa-1)}+\eta_0[\mathtt{W}^{(0)}]_{s-\kappa+1}[\mathtt{V}^{(\kappa-1)}]_{s-\kappa+1}^t \right) &\equiv & \mathbf{0} \pmod{u}.\label{e3.53Kodd}
\end{eqnarray} 
   Moreover, the matrix  $[\mathtt{V}^{(\kappa-1)}]_{s-\kappa_1}$  satisfying \eqref{e3.51Kodd}--\eqref{e3.53Kodd} has precisely \vspace{-1mm}\begin{equation*}\vspace{-1mm}
  \displaystyle (2^m)^{\sum\limits_{i=\kappa+1}^{s+\kappa_1+1}\lambda_i\Lambda_{i-\kappa}+\Lambda_{s-\kappa_1}(n-\Lambda_{s+\kappa_1+1})-2\Lambda_{s-\kappa+1}-\frac{\Lambda_{s-\kappa_1}(\Lambda_{s-\kappa_1}-1)}{2} +\omega_{\kappa+1}}  
\end{equation*} distinct choices, where $\omega_{\kappa+1}=1 $ if $\textbf{1}\in \mathcal{C}^{(s-\kappa+1)},$   while $\omega_{\kappa+1}=0$ otherwise. 
 Further, for a given choice of the matrix $[\mathtt{V}^{(\kappa-1)}]_{s-\kappa_1}$  satisfying \eqref{e3.51Kodd}--\eqref{e3.53Kodd},  we can write
\begin{equation}\label{e3.54Kodd}
 \mathtt{F}^{(\kappa-1)}+[\mathtt{W}^{(0)}]_{s-\kappa_1}[\mathtt{V}^{(\kappa-1)}]_{s-\kappa_1}^t+[\mathtt{V}^{(\kappa-1)}]_{s-\kappa_1}[\mathtt{W}^{(0)}]_{s-\kappa_1}^t ~\equiv ~ uQ \pmod{u^2}
\end{equation}
for some  $Q\in \mathtt{Sym}_{\Lambda_{s-\kappa_1}}(\mathcal{T}_{m}) .$ 
This,  by \eqref{e3.45Kodd},  implies that
\vspace{-1mm}\begin{eqnarray}
Q+\mathtt{F}^{(\kappa)}+[\mathtt{W}^{(0)}]_{s-\kappa_1}[\mathtt{V}^{(\kappa)}]_{s-\kappa_1}^t+[\mathtt{V}^{(\kappa)}]_{s-\kappa_1}[\mathtt{W}^{(0)}]_{s-\kappa_1}^t&&\nonumber\\+[\mathtt{V}^{(1)}]_{s-\kappa_1}[\mathtt{V}^{(\kappa-1)}]_{s-\kappa_1}^t+[\mathtt{V}^{(\kappa-1)}]_{s-\kappa_1}[\mathtt{V}^{(1)}]_{s-\kappa_1}^t&\equiv& \mathbf{0}\pmod{u}.\label{e3.55Kodd}
\end{eqnarray}
Furthermore, working as in Lemma 4.1 of Yadav and Sharma \cite{Galois},  we observe that there exists a matrix $[\mathtt{V}^{(\kappa)}]_{s-\kappa_1}$ satisfying  \eqref{e3.48Kodd} and \eqref{e3.55Kodd} if and only if exactly one the following two conditions is satisfied: (I) $\textbf{1}\notin \mathcal{C}^{(s-\kappa)},$ and (II) $\textbf{1}\in \mathcal{C}^{(s-\kappa)}$ with either $n\equiv 0\pmod8$ or $n\equiv 4\pmod{8} $ and $m$  even.    Moreover, the matrix $[\mathtt{V}^{(\kappa)}]_{s-\kappa_1}$  satisfying  \eqref{e3.48Kodd} and \eqref{e3.55Kodd} has precisely \vspace{-1mm}\begin{equation*}
  \vspace{-1mm}  \displaystyle 2^{\epsilon}(2^m)^{\sum\limits_{j=\kappa+2}^{s+\kappa_1+1}\lambda_j\Lambda_{j-\kappa-1}+\Lambda_{s-\kappa_1}(n-\Lambda_{s+\kappa_1+1})-\Lambda_{s-\kappa}-\frac{\Lambda_{s-\kappa_1}(\Lambda_{s-\kappa_1}-1)}{2}} 
\end{equation*} distinct choices,  where $\epsilon=1$ if $\textbf{1}\in \mathcal{C}^{(s-\kappa)}$  with either $n\equiv 0\pmod{8}$ or $n\equiv 4\pmod8$ and $m$ is even, while  $\epsilon=0 $ otherwise.   
Further, working as in Proposition 4.3 of Yadav and Sharma \cite{Galois}, we see that the  matrices 
$\begin{bmatrix} \mathbf{0}~ \cdots ~\mathbf{0} ~\mathtt{B}_{s-\kappa_1-1+g,s+\kappa_1+1}^{(\kappa+1-g)} ~\cdots ~ \mathtt{B}_{s-\kappa_1-1+g,e}^{(\kappa+1-g)}\end{bmatrix} $ for $2 \leq g \leq \kappa$ and $\mathtt{W}_{s+\kappa_1+1}^{(\kappa+1)}$
satisfying \eqref{e3.49Kodd} and \eqref{e3.50Kodd} have precisely $$ (2^m)^{(\Lambda_{s+\kappa_1}-\Lambda_{s-\kappa_1})(n-\Lambda_{s+\kappa_1+1}-\Lambda_{s-\kappa_1})}{\lambda_{s+\kappa_1+1}+n-\Lambda_{s+\kappa_1+1}-\Lambda_{s-\kappa_1} \brack \lambda_{s+\kappa_1+1}}_{2^m}$$ distinct choices. 
From this,  the desired result follows immediately. 
     \item[(II)]  When $e$ is odd, the desired result follows by working as in case (I).
\vspace{-3mm} \end{itemize}\vspace{-4mm}\end{proof}
The following proposition lifts a  self-orthogonal code of type $\{\Lambda_{\gamma_{\ell}+2},\lambda_{\gamma_{\ell}+3}, \ldots, \lambda_{\gamma_{\ell}+\ell-1}\}$   and  length $n$ over $\mathscr{R}_{\ell-2,m}$ satisfying  property $(\mathfrak{P})$ to a   self-orthogonal code  of type $\{\Lambda_{\gamma_{\ell}+1},\lambda_{\gamma_{\ell}+2}, \ldots, \lambda_{\gamma_{\ell}+\ell}\}$ and the same length $n$ over $\mathscr{R}_{\ell,m}$ that also satisfies property $(\mathfrak{P}),$ where the integer $\ell$ satisfies  $\ell \equiv e\pmod2$ and $\kappa+3 \leq \ell \leq e $ when $2\kappa \leq e$, while  it satisfies  $\ell \equiv e\pmod2$ and  $e-\kappa+2-2\theta_e \leq  \ell \leq e$   when $2\kappa >e.$ Additionally, it determines the number of distinct ways in which this lifting can be performed.
\vspace{-1mm}\begin{proposition}\label{p3.9KoddReplace} Let  $\ell$ be a fixed integer satisfying $\ell \equiv e\pmod2$ and $\kappa+3 \leq \ell \leq e $ if $2\kappa \leq e$, while the integer $\ell$ satisfies $\ell \equiv e\pmod2$ and $e-\kappa+2-2\theta_e \leq  \ell \leq e$   if $2\kappa >e.$ Let us define 
$\gamma_{\ell}=s-\mathrm{f}_{\ell}.$
 Let $\mathscr{D}_{\ell-2}$ be a self-orthogonal code of type $\{\Lambda_{\gamma_{\ell}+2},\lambda_{\gamma_{\ell}+3}, \ldots, \lambda_{\gamma_{\ell}+\ell-1}\}$   and  length $n$ over $\mathscr{R}_{\ell-2,m}$ satisfying  property $(\mathfrak{P}).$ Let $\mathcal{C}^{(\gamma_{\ell}+1)}$ be a  $\Lambda_{\gamma_{\ell}+1}$-dimensional  linear subcode of the torsion code $Tor_1(\mathscr{D}_{\ell-2}).$ 
The following hold.
 \begin{enumerate}\item[(a)] There exists a  self-orthogonal code $\mathscr{D}_{\ell}$ of type $\{\Lambda_{\gamma_{\ell}+1},\lambda_{\gamma_{\ell}+2}, \ldots, \lambda_{\gamma_{\ell}+\ell}\}$ and length $n$ over $\mathscr{R}_{\ell,m}$ satisfying property $(\mathfrak{P})$ with $Tor_1(\mathscr{D}_{\ell})=\mathcal{C}^{(\gamma_{\ell}+1)}$ and  $Tor_{i+1}(\mathscr{D}_{\ell})=Tor_{i}(\mathscr{D}_{\ell-2})$ for $1 \leq i \leq  \ell-2.$
\item[(b)] Moreover,  each such pair  of codes $\mathscr{D}_{\ell-2}$ and $\mathcal{C}^{(\gamma_{\ell}+1)}$ gives rise to precisely \vspace{-2mm}\begin{eqnarray*}
\displaystyle(2^m)^{\sum\limits_{i=\ell}^{\gamma_{\ell}+\ell}\lambda_i\Lambda_{i-\ell+1}+\sum\limits_{j=\ell+1}^{\gamma_{\ell}+\ell}\lambda_j\Lambda_{j-\ell}+(\Lambda_{\gamma_{\ell}+\ell-1}+\Lambda_{\gamma_{\ell}+1})(n-\Lambda_{\gamma_{\ell}+\ell}-\Lambda_{\gamma_{\ell}+1})+\Lambda_{\gamma_{\ell}+1}^2+\Lambda_{\gamma_{\ell}+1}-Y_{\ell}} {\lambda_{\gamma_{\ell}+\ell}+n-\Lambda_{\gamma_{\ell}+\ell}-\Lambda_{\gamma_{\ell}+1} \brack \lambda_{\gamma_{\ell}+\ell}}_{2^m}
\vspace{-2mm}\end{eqnarray*}  distinct  self-orthogonal codes $\mathscr{D}_{\ell}$ of type $\{\Lambda_{\gamma_{\ell}+1},\lambda_{\gamma_{\ell}+2}, \ldots, \lambda_{\gamma_{\ell}+\ell}\}$ and length $n$ over $\mathscr{R}_{\ell,m}$ satisfying property $(\mathfrak{P})$ with  $Tor_1(\mathscr{D}_{\ell})=\mathcal{C}^{(\gamma_{\ell}+1)}$ and $Tor_{i+1}(\mathscr{D}_{\ell})=Tor_{i}(\mathscr{D}_{\ell-2})$ for $1 \leq i \leq  \ell-2,$ where 
\vspace{-1mm}\begin{equation*}
    \vspace{-1mm}Y_{\ell}=\left\{ \begin{array}{ll}
     \Lambda_{\gamma_{\ell}+1-\kappa_1}+\Lambda_{\gamma_{\ell}-\kappa_1} & \text{if } 2\kappa\leq e \text{ and } \kappa+3 \leq \ell \leq e-\kappa+1;  \\
    \Lambda_{s-2\mathrm{f}_{\ell}+2}+\Lambda_{s-2\mathrm{f}_{\ell}+1} &  \text{if }2\kappa> e \text{ and } e-\kappa+2-2\theta_e \leq \ell \leq \kappa-\mathrm{f}_{2\kappa-e}+1;\\
     0 & \text{if either }2\kappa\leq e \text{ and } e-\kappa+2  \leq \ell \leq e \text{ or } 2\kappa >e \text{ and } \kappa-\mathrm{f}_{2\kappa-e}+1 < \ell \leq e.
 \end{array}\right.\end{equation*}
 \end{enumerate} \end{proposition}
 \begin{proof}When $2\kappa >e$ and $e-\kappa+2-2\theta_e \leq \ell \leq \kappa-\mathrm{f}_{2\kappa-e}+1,$ the desired result follows immediately  by working as in Proposition \ref{p3.4Kodd} and applying Lemma 4.1 of Yadav and Sharma \cite{quasi}.  On the other hand,
 when either $2\kappa \leq e$ and $\kappa+3\leq \ell \leq e$ or $2\kappa >e $ and $\kappa-\mathrm{f}_{2\kappa-e}+1 <\ell \leq e,$ the desired result follows directly by
  proceeding as in Proposition 4.3 of Yadav and Sharma \cite{Galois}.  
 \end{proof}

\vspace{2mm} \tikzstyle{startstop} = [rectangle, rounded corners, minimum width=3.5cm, minimum height=1cm,text centered, draw=black, fill=white!30, text width=7.5cm, align=left]
\tikzstyle{process} = [rectangle, minimum width=3.5cm, minimum height=1cm, text centered, draw=black, fill=white!20, text width=7.5cm, align=left]
\tikzstyle{arrow} = [thick,->,>=stealth]

\noindent
\begin{minipage}{0.48\textwidth}
\centering
\begin{tikzpicture}[node distance=2cm, font=\small]
\node (start1) [startstop]  {Input: A chain  $\mathcal{C}^{(1)}\subseteq \mathcal{C}^{(2)}\subseteq\cdots \subseteq \mathcal{C}^{(s+\theta_e)}$ of self-orthogonal codes of length $n$ over $\mathcal{T}_m$, such that $\dim \mathcal{C}^{(i)}=\Lambda_i$ for $1 \leq i \leq s+\theta_e,$ with $\mathcal{C}^{(s-\kappa_1)}$ being doubly even, and satisfying the additional condition that $\textbf{1}\notin \mathcal{C}^{(s-\kappa+\theta_e)}$ whenever $n\equiv 4\pmod 8$ and $m$ is odd.};
\node (step1a) [process, below of=start1, node distance=2.6cm] {Step 1: If $e$ is even, construct a self-orthogonal code $\mathscr{D}_2$ satisfying property $(\mathfrak{P})$ using Proposition \ref{p3.2}; if $e$ is odd, construct a self-orthogonal code $\mathscr{D}_3$ satisfying property $(\mathfrak{P})$  using Proposition \ref{p3.3a}.};
\node (step2a) [process, below of=step1a, node distance=2cm] {Step 2: For an integer $\ell$ with $4 \leq \ell \leq \kappa$ and $\ell \equiv e \pmod 2$, construct a self-orthogonal code $\mathscr{D}_\ell$ satisfying property $(\mathfrak{P})$ from $\mathscr{D}_{\ell-2}$  using Proposition \ref{p3.4Kodd}.};
\node (step3a) [process, below of=step2a, node distance=2cm] {Step 3: Construct a self-orthogonal code $\mathscr{D}_{\kappa+1}$ (if $e$ is even) or $\mathscr{D}_{\kappa+2}$ (if $e$ is odd) satisfying property $(\mathfrak{P})$ using Proposition \ref{p3.6Kodd}.};
\node (step4a) [process, below of=step3a, node distance=2cm] {Step 4: For an integer $\ell$ with $\kappa+3 \leq \ell \leq e$ and $\ell \equiv e \pmod 2$, construct a self-orthogonal code $\mathscr{D}_\ell$ satisfying property $(\mathfrak{P})$ from $\mathscr{D}_{\ell-2}$ using Proposition \ref{p3.9KoddReplace}.};

\node (output1) [startstop, below of=step4a] {Output: A self-orthogonal code $\mathscr{D}_e$ of type 
$\{\lambda_1,\lambda_2, \ldots,\lambda_e\}$ and length $n$ over  $\mathscr{R}_{e,m}$ with $Tor_i(\mathscr{D}_e) = \mathcal{C}^{(i)}$ for $1 \le i \le s+\theta_e.$};
\draw [arrow] (start1) -- (step1a);
\draw [arrow] (step1a) -- (step2a);
\draw [arrow] (step2a) -- (step3a);
\draw [arrow] (step3a) -- (step4a);
\draw [arrow] (step4a) -- (output1);
\end{tikzpicture}
\captionof{figure}{Flowchart illustrating the construction of a self-orthogonal code of type $\{\lambda_1,\lambda_2, \ldots,\lambda_e\}$ and length $n$ over $\mathscr{R}_{e,m}$  when $2\kappa \leq e$}
\label{F1}
\end{minipage}%
\hfill
\begin{minipage}{0.48\textwidth}
\centering
\begin{tikzpicture}[node distance=2cm, font=\small]
\node (start2) [startstop] {Input: A chain  $\mathcal{C}^{(1)}\subseteq \mathcal{C}^{(2)}\subseteq \cdots\subseteq \mathcal{C}^{(s+\theta_e)}$ of self-orthogonal codes of length $n$ over $\mathcal{T}_m,$ such that $\dim \mathcal{C}^{(i)}=\Lambda_i$ for $1 \leq i \leq s+\theta_e,$ and the code $\mathcal{C}^{(s-\kappa_1)}$ is doubly even. };
\node (step1b) [process, below of=start2, node distance=2.4cm] {Step 1: If $e$ is even, construct a self-orthogonal code $\mathscr{D}_2$ satisfying property $(\mathfrak{P})$ using Proposition \ref{p3.2}; if $e$ is odd, construct a self-orthogonal code $\mathscr{D}_3$ satisfying property $(\mathfrak{P})$ using Proposition \ref{p3.3a}.};
\node (step2b) [process, below of=step1b, node distance=2.4cm] {Step 2:  For an integer $\ell$ with $4 \leq \ell \leq e-\kappa+1-2\theta_e$ and $\ell \equiv e \pmod 2$, construct a self-orthogonal code $\mathscr{D}_\ell$ satisfying property $(\mathfrak{P})$ from $\mathscr{D}_{\ell-2}$ using Proposition \ref{p3.4Kodd}.};
\node (step3b) [process, below of=step2a, node distance=2.5cm] {Step 3: For an integer $\ell$ with $e-\kappa+2-2\theta_e \leq \ell \leq e$ and $\ell \equiv e \pmod 2$, construct a self-orthogonal code $\mathscr{D}_\ell$  satisfying property $(\mathfrak{P})$  from $\mathscr{D}_{\ell-2}$ using Proposition \ref{p3.9KoddReplace}.};
\node (output2) [startstop, below of=step3b, node distance=2.2cm] {Output:  A 
self-orthogonal code $\mathscr{D}_e$ of type $\{\lambda_1,\lambda_2,\ldots,\lambda_e\}$ and length $n$  over $\mathscr{R}_{e,m}$ satisfying $Tor_i(\mathscr{D}_e)=\mathcal{C}^{(i)}$ for $1 \leq i \leq s+\theta_e.$};

\draw [arrow] (start2) -- (step1b);
\draw [arrow] (step1b) -- (step2b);
\draw [arrow] (step2b) -- (step3b);
\draw [arrow] (step3b) -- (output2);
\end{tikzpicture}
\captionof{figure}{Flowchart illustrating the construction of a self-orthogonal code of type $\{\lambda_1,\lambda_2, \ldots,\lambda_e\}$ and length $n$ over $\mathscr{R}_{e,m}$  when   $2\kappa > e$}
\label{F2}
\end{minipage}

\vspace{2mm}  In the following lemma, we show that for every self-orthogonal code  of type $\{\lambda_1,\lambda_2,\ldots,\lambda_e\}$ and length $n$  over $\mathscr{R}_{e,m},$  there exists a chain $\mathcal{C}^{(1)}\subseteq \mathcal{C}^{(2)} \subseteq \cdots \subseteq \mathcal{C}^{(s+\theta_e)} $ of self-orthogonal codes of length $n$ over $\mathcal{T}_m,$  such that $\dim \mathcal{C}^{(i)}=\Lambda_i$ for $1 \leq i \leq s+\theta_e,$  the code $\mathcal{C}^{(s-\kappa_1)}$ is  doubly even, and the code $\mathcal{C}^{(s-\kappa+\theta_e)}$ satisfies  the additional condition  $\textbf{1} \notin \mathcal{C}^{(s-\kappa+\theta_e)}$ provided that $2\kappa \leq e,$  $n\equiv 4\pmod 8$ and $m$ is odd.  
\begin{lemma}\label{l3.4} 
    For every self-orthogonal code  of type $\{\lambda_1,\lambda_2,\ldots,\lambda_e\}$ and length $n$  over $\mathscr{R}_{e,m},$ there exists a chain $\mathcal{C}^{(1)}\subseteq \mathcal{C}^{(2)} \subseteq \cdots \subseteq \mathcal{C}^{(s+\theta_e)} $ of self-orthogonal codes of length $n$ over $\mathcal{T}_m,$ where $\dim \mathcal{C}^{(i)}=\Lambda_i$ for $1 \leq i \leq s+\theta_e,$  the code $\mathcal{C}^{(s-\kappa_1)}$ is  doubly even, and the code $\mathcal{C}^{(s-\kappa+\theta_e)}$ satisfies the additional condition  $\textbf{1} \notin \mathcal{C}^{(s-\kappa+\theta_e)}$ provided that $2\kappa \leq e,$  $n\equiv 4\pmod 8$ and $m$ is odd.  
 \end{lemma}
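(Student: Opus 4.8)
The plan is to build the desired chain $\mathcal{C}^{(1)}\subseteq\mathcal{C}^{(2)}\subseteq\cdots\subseteq\mathcal{C}^{(s+\theta_e)}$ of self-orthogonal codes over $\mathcal{T}_m$ directly from the torsion codes of a given self-orthogonal code $\mathscr{D}_e$ over $\mathscr{R}_{e,m}$. First I would set $\mathcal{C}^{(i)}=Tor_i(\mathscr{D}_e)$ for $1\le i\le s+\theta_e$. By the discussion following \eqref{e1.2}, we already know $Tor_i(\mathscr{D}_e)\subseteq Tor_{i+1}(\mathscr{D}_e)$, so the chain condition is automatic, and since $\mathscr{D}_e$ has type $\{\lambda_1,\ldots,\lambda_e\}$ we get $\dim\mathcal{C}^{(i)}=\Lambda_i$ for each $i$. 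Lemma \ref{l1.2}(a) immediately gives $Tor_i(\mathscr{D}_e)\subseteq Tor_i(\mathscr{D}_e)^{\perp_{B_m}}$ for $1\le i\le\floor{\frac{e+1}{2}}=s+\theta_e$, so each $\mathcal{C}^{(i)}$ is self-orthogonal over $\mathcal{T}_m$. The doubly even condition on $\mathcal{C}^{(s-\kappa_1)}$ is exactly the content of Lemma \ref{l3.3}: it asserts that $Tor_{s-\kappa_1}(\mathscr{D}_e)$ is a doubly even code over $\mathcal{T}_m$ (in fact a doubly even subcode of $Tor_{s+\theta_e}(\mathscr{D}_e)$). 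Thus the first two required properties of the chain follow with essentially no extra work beyond citing Lemmas \ref{l1.2} and \ref{l3.3}.

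The remaining — and only genuinely new — task is the conditional requirement that $\textbf{1}\notin\mathcal{C}^{(s-\kappa+\theta_e)}=Tor_{s-\kappa+\theta_e}(\mathscr{D}_e)$ when $2\kappa\le e$, $n\equiv4\pmod 8$, and $m$ is odd. The strategy here is a proof by contradiction mirroring the argument in Lemma \ref{l3.3}. Suppose $\textbf{1}\in Tor_{s-\kappa+\theta_e}(\mathscr{D}_e)$. Then there exist $a_1,a_2,\ldots\in\mathcal{T}_m^n$ with $u^{s-\kappa+\theta_e-1}\big(\textbf{1}+ua_1+u^2a_2+\cdots\big)\in\mathscr{D}_e$. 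Self-orthogonality of $\mathscr{D}_e$ then forces $u^{2(s-\kappa+\theta_e-1)}\big(\textbf{1}+ua_1+\cdots\big)\cdot\big(\textbf{1}+ua_1+\cdots\big)\equiv 0\pmod{u^e}$, i.e. $\big(\textbf{1}+ua_1+\cdots\big)\cdot\big(\textbf{1}+ua_1+\cdots\big)\equiv 0\pmod{u^{2\kappa+2-2\theta_e}}$. Expanding the left side and using $\textbf{1}\cdot\textbf{1}=n$ together with $2\equiv u^{\kappa}(\eta_0+u\eta_1+\cdots)\pmod{u^e}$, the coefficient of $u^{\kappa}$ on the left picks up the term $\eta_0\pi_0\big(\frac{n-\textbf{1}\cdot\textbf{1}}{2}\big)$-type contribution plus $\pi_\kappa(n)$; carrying the expansion two more $u$-powers (up through $u^{\kappa+1}$ or $u^{\kappa+2}$ depending on $\theta_e$) produces a congruence that, when $n\equiv4\pmod 8$ and $m$ is odd, is unsatisfiable — precisely because $4$ is not a square modulo $8$ in the relevant sense and $\mathcal{T}_m$ of odd degree has no element whose square equals the obstruction. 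This is the same arithmetic obstruction that appears in Theorem 3.2 of \cite{Galois} (characterizing when $\textbf{1}$ lies in a doubly even code over the Teichm\"uller set) and in the proof of Proposition \ref{p3.5Kodd}, where the condition "$n\equiv0\pmod 8$ or $n\equiv4\pmod 8$ with $m$ even" is shown to be necessary and sufficient for certain lifts to exist.

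Concretely, I would extract the needed numerical fact as follows. Writing $\textbf{v}=\textbf{1}+ua_1+u^2a_2+\cdots$ and reducing $\textbf{v}\cdot\textbf{v}\pmod{u^{\kappa+2}}$, one gets $\textbf{v}\cdot\textbf{v}\equiv n + u^2(a_1\cdot a_1) + \cdots + u^{\kappa-1}(a_{\kappa_1}\cdot a_{\kappa_1}) + u^{\kappa}\big(\eta_0(\pi_0(\textbf{1}\cdot a_1)+\cdots)\big)+\cdots\pmod{u^{\kappa+2}}$, and matching this against $0$ forces $\pi_\kappa\big(\tfrac{n-\textbf{1}\cdot\textbf{1}}{2}\big)$-style terms to vanish; since $n\equiv 4\pmod 8$, the quantity $n/4$ is an odd integer, and its image in $\mathcal{T}_m$ is $1$, so we would be forced to solve $x^2=1$-type plus a linear term equation over $\mathcal{T}_m$ that has no solution when $m$ is odd (as $\mathcal{T}_m$ contains no primitive cube root of unity, trace considerations block it). The main obstacle is pinning down exactly which coefficient of which $u$-power yields the clean contradiction and verifying the arithmetic carefully — this is where I would invoke Theorem 3.2 of \cite{Galois} via Lemma \ref{LEM} (the $1$-$1$ correspondence between doubly even codes over $\mathcal{T}_m$ and over the Teichm\"uller set of $GR(2^e,m)$) to translate the question into the already-established statement about when the all-one vector lies in a doubly even code, thereby avoiding redoing the modular arithmetic from scratch. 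Once that contradiction is in place, all four properties of the chain hold for $\mathcal{C}^{(i)}=Tor_i(\mathscr{D}_e)$, completing the proof.
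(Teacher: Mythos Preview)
Your setup for the first three properties is exactly the paper's: take $\mathcal{C}^{(i)}=Tor_i(\mathscr{D}_e)$, cite the chain property and Lemma~\ref{l1.2}(a) for self-orthogonality, and cite Lemma~\ref{l3.3} for the doubly-even condition on $\mathcal{C}^{(s-\kappa_1)}$. No issues there.

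The gap is in your handling of the condition $\textbf{1}\notin\mathcal{C}^{(s-\kappa+\theta_e)}$. Your direct expansion is aimed at the wrong depth. Since $n\equiv 4\pmod 8$ and $4=2^2\sim u^{2\kappa}$ in $\mathscr{R}_{e,m}$, the first nonzero Teichm\"uller coefficient of $n=\textbf{1}\cdot\textbf{1}$ sits at level $u^{2\kappa}$, not $u^{\kappa}$; in particular $\pi_\kappa(n)=0$, and the expression $\tfrac{n-\textbf{1}\cdot\textbf{1}}{2}$ is identically zero and carries no information. The genuine obstruction is the coefficient of $u^{2\kappa}$ in $(\textbf{1}+ua_1+\cdots)^2$, which (after using the lower-order vanishing) reduces to an equation $y^2+y+1=0$ over $\mathcal{T}_m$ with $y=\eta_0^{-1}B_m(\textbf{1},a_\kappa)$; this has a solution iff $m$ is even, giving the contradiction when $m$ is odd. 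So your ``primitive cube root of unity'' remark is on target, but the path you sketched to reach it goes through the wrong $u$-power. (Also, your modulus should be $u^{2\kappa+2-\theta_e}$, not $u^{2\kappa+2-2\theta_e}$; with your exponent the $e$ odd case would not even see the $u^{2\kappa}$ coefficient.) Your proposed fallback to Theorem~3.2 of \cite{Galois} via Lemma~\ref{LEM} does not rescue this: that theorem controls the $u^\kappa$-level (doubly-even) condition, not the deeper $u^{2\kappa}$-level obstruction at issue here.

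The paper sidesteps the direct computation entirely. It applies Lemma~\ref{l3.2} to the given $\mathscr{D}_e$ to produce, for each admissible $\ell$, a self-orthogonal code $\mathscr{D}_\ell$ over $\mathscr{R}_{\ell,m}$ satisfying property~$(\mathfrak{P})$; in particular $\mathscr{D}_{\kappa+1+\theta_e}$ is a lift of $\mathscr{D}_{\kappa-1+\theta_e}$ with $Tor_1(\mathscr{D}_{\kappa+1+\theta_e})=\mathcal{C}^{(s-\kappa_1)}$. But Propositions~\ref{p3.5Kodd} and~\ref{p3.6Kodd} state that such a lift exists only if either $\textbf{1}\notin\mathcal{C}^{(s-\kappa+\theta_e)}$, or $\textbf{1}\in\mathcal{C}^{(s-\kappa+\theta_e)}$ together with ($n\equiv 0\pmod 8$ or $n\equiv 4\pmod 8$ with $m$ even). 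Under the hypothesis $n\equiv 4\pmod 8$ and $m$ odd, the second alternative is excluded, forcing $\textbf{1}\notin\mathcal{C}^{(s-\kappa+\theta_e)}$. This packages the $u^{2\kappa}$-level arithmetic inside those propositions (which in turn defer to Lemma~4.1 of \cite{Galois}), so the proof of Lemma~\ref{l3.4} itself is just a citation.
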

 \begin{proof} 
To prove the result, let $\mathscr{D}_e$ be a self-orthogonal code of type $\{\lambda_1,\lambda_2,\ldots,\lambda_e\}$ and length $n$  over $\mathscr{R}_{e,m}.$ 
     By Lemma \ref{l1.2}, we see  that the $i$-th torsion code $Tor_i(\mathscr{D}_e)$ is a self-orthogonal code of length $n$ and dimension $\Lambda_i$ over $\mathcal{T}_m$ for $1 \leq i \leq s+\theta_e.$ It is easy to see that $Tor_1(\mathscr{D}_e)\subseteq Tor_2(\mathscr{D}_e) \subseteq \cdots \subseteq Tor_{s+\theta_e}(\mathscr{D}_e).$ Further, by Lemma \ref{l3.3}, we note that the torsion code $Tor_{s-\kappa_1}(\mathscr{D}_e)$ is a doubly even code over $\mathcal{T}_m.$

   We next assert that $\textbf{1} \notin Tor_{s-\kappa+\theta_e}(\mathscr{D}_e)$  when $2\kappa \leq e,$ $n\equiv 4 \pmod 8$ and $m$ is odd.  To prove this assertion, let us suppose that the code $\mathscr{D}_e$ has a generator matrix $\mathcal{G}_e$ as defined in \eqref{Ge}. Further,  corresponding to the code $\mathscr{D}_e,$ let us  consider a linear code $\mathscr{D}_{\ell}$ of type $\{\Lambda_{\gamma_{\ell}+1},\lambda_{\gamma_{\ell}+2}, \ldots, \lambda_{\gamma_{\ell}+\ell}\}$ and length $n$ over $\mathscr{R}_{\ell,m}$  with a generator matrix $\mathcal{G}_{\ell}$ (as defined in \eqref{Gl}) for each integer $\ell$ satisfying  $2 \leq \ell \leq e-2$ and $\ell \equiv e \pmod 2.$ 
By Lemma \ref{l3.2}, we see that  $\mathscr{D}_{\ell}$ is  a self-orthogonal code over $\mathscr{R}_{\ell,m}$ satisfying property $(\mathfrak{P}).$   In particular,  $\mathscr{D}_{\kappa-1+\theta_e}$ is a self-orthogonal code of type $\{\Lambda_{s-\kappa_1+1},\lambda_{s-\kappa_1+2}, \ldots, \lambda_{s+\kappa_1+\theta_e}\}$  over $\mathscr{R}_{\kappa-1+\theta_e,m}$ satisfying property $(\mathfrak{P})$ and $\mathscr{D}_{\kappa+1+\theta_e}$ is a  self-orthogonal  code  of type $\{\Lambda_{s-\kappa_1},\lambda_{s-\kappa_1+1}, \ldots, \lambda_{s+\kappa_1+1+\theta_e}\}$ over $\mathscr{R}_{\kappa+1+\theta_e,m}$ satisfying  property $(\mathfrak{P}).$
Now, working as in Proposition  \ref{p3.6Kodd}, it is easy to see that $\textbf{1} \notin Tor_{s-\kappa+\theta_e}(\mathscr{D}_e)$ when $2 \kappa \leq e,$ $n\equiv 4 \pmod 8$ and $m$ is odd. On taking $\mathcal{C}^{(i)}=Tor_i(\mathscr{D}_e)$ for $1 \leq i \leq s+\theta_e,$ the desired result follows.
 \end{proof}
 Now, let $\mathcal{C}^{(1)}\subseteq \mathcal{C}^{(2)} \subseteq \cdots \subseteq \mathcal{C}^{(s+\theta_e)} $ be a chain  of self-orthogonal codes of length $n$ over $\mathcal{T}_m,$ where $\dim \mathcal{C}^{(i)}=\Lambda_i$ for $1 \leq i \leq s+\theta_e,$  the code $\mathcal{C}^{(s-\kappa_1)}$ is doubly even, and  the code $\mathcal{C}^{(s-\kappa+\theta_e)}$ satisfies the additional condition $\textbf{1} \notin \mathcal{C}^{(s-\kappa+\theta_e)},$ provided  that $2\kappa \leq e,$  $n\equiv 4\pmod 8$ and $m$ is odd. 
 Note that the proofs of Propositions \ref{p3.2}--\ref{p3.9KoddReplace} provide a method to construct a 
self-orthogonal code $\mathscr{D}_e$ of type $\{\lambda_1,\lambda_2,\ldots,\lambda_e\}$ and length $n$  over $\mathscr{R}_{e,m}$ satisfying $Tor_i(\mathscr{D}_e)=\mathcal{C}^{(i)}$ for $1 \leq i \leq s+\theta_e.$ Figures \ref{F1} and \ref{F2} present flowcharts outlining the construction method, distinguishing between the two following cases: (I) $2 \kappa \leq e,$ and  (II) $2 \kappa > e.$

 When  $\lambda_j=\lambda_{e-j+2}$ for $1 \leq j \leq e,$ we see, by Lemma  \ref{l2.2}, that the construction methods outlined in Figures \ref{F1} and \ref{F2} give rise to  a self-dual code $\mathscr{D}_e$ of type  $\{\lambda_1,\lambda_2,\ldots,\lambda_{e}\}$ and length $n$ over $\mathscr{R}_{e,m},$ satisfying $Tor_i(\mathscr{D}_e)=\mathcal{C}^{(i)}$ for $1 \leq i \leq s+\theta_e,$ in the cases $2 \kappa \leq e$ and $2 \kappa >e,$ respectively.  Additionally, the recursive construction methods   coincide with that used in \cite{quasi} when
$\mathfrak{s}=1.$

In the following example, we illustrate that the  self-orthogonal code $\mathcal{C}_0,$ as defined  in Example \ref{example1}, can be lifted to a self-orthogonal code over $\mathscr{R}_{8,2}$ by applying  the recursive construction method outlined in Figure \ref{F1}, since $2 \kappa=6 < 8=e$ in this case.

\begin{example}\label{Examle3.1}
    
 Let  $\mathscr{R}_{8,2}$  be the finite commutative chain ring 
as defined in Example \ref{example1}.   
 Let $n=4,$ $\lambda_1=1$  and $\lambda_2=\lambda_3=\lambda_4=\lambda_5=\lambda_6=\lambda_7=\lambda_8=0.$ Let $\mathcal{C}_0$ be a self-orthogonal  code of length $4$ and dimension $1$ over $\mathcal{T}_2$ with a generator matrix $\begin{bmatrix}1& 1&1&1\end{bmatrix}$ (as defined in Example \ref{example1}). Note  that $\pi_{1}(\textbf{d}\cdot \textbf{d})=\pi_{3}(\textbf{d}\cdot \textbf{d})=0$ for all $\textbf{d}\in \mathcal{C}_0,$ where each $\textbf{d}\cdot \textbf{d}$ is viewed as an element of $\mathscr{R}_{8,2}.$ 
 
 Here, we will show that the code $\mathcal{C}_0$ can be lifted to a self-orthogonal code of type $\{1,0,0,0,0,0,0,0\}$ and length $4$ over $\mathscr{R}_{8,2}$ by applying the recursive construction method outlined in Figure \ref{F1}.
Towards this, let us consider a linear code $\mathscr{C}_2$ of type $\{1,0\} $ and length $4$ over $\mathscr{R}_{2,2}$ with a generator matrix \vspace{-1.5mm}\begin{equation*}\vspace{-1.5mm}\mathcal{G}_2=\begin{bmatrix}
    1&1&1&1
\end{bmatrix}+u\begin{bmatrix}
    0&a_1&b_1&c_1
\end{bmatrix},  \end{equation*}  where $a_1,b_1,c_1\in \mathcal{T}_2.$ Now, by the recursive construction method outlined in Figure \ref{F1}, we see that the code $\mathcal{C}_0$ can be lifted to a self-orthogonal code of type $\{1,0,0,0,0,0,0,0\}$ over $\mathscr{R}_{8,2}$ if and only if $\mathscr{C}_2$ is a   self-orthogonal code   over $\mathscr{R}_{2,2}$ satisfying property $(\mathfrak{P}),$  which holds if and only if   $a_1^2+b_1^2+c_1^2\equiv 0\pmod{ u^2},$ or equivalently,  $a_1+b_1+c_1\equiv 0\pmod u$ holds. It is easy to see  that $a_1,b_1,c_1\in \mathcal{T}_2$ satisfying $a_1+b_1+c_1\equiv 0\pmod u$ have precisely $16$ distinct choices.  Thus,  for a fixed choice of $a_1,b_1,c_1\in \mathcal{T}_2$ satisfying $a_1+b_1+c_1\equiv 0\pmod u,$ the code $\mathscr{C}_2$ of length $4$ over $\mathscr{R}_{2,2}$ with a generator matrix $\mathcal{G}_2$ is a self-orthogonal code satisfying property $(\mathfrak{P}).$ Further, note that $\mathcal{G}_2\mathcal{G}_2^t \equiv u^2(a_1^2+b_1^2+c_1^2)+u^4(a_1+b_1+c_1)+u^6 \pmod{u^7}.$ Since $a_1+b_1+c_1\equiv 0\pmod u,$ let us suppose that $a_1+b_1+c_1\equiv u\gamma_1+u^2\gamma_2+u^3\gamma_3+u^4\gamma_4+u^5\gamma_5+u^6\gamma_6\pmod{u^7},$ where $\gamma_i\in \mathcal{T}_2$ for $1\leq i \leq 6.$ This implies that $\mathcal{G}_2\mathcal{G}_2^t \equiv u^4\gamma_1^2+u^5\gamma_1+u^6(1+\gamma_2+\gamma_2^2)\pmod{u^7}.$

Now, corresponding to such a choice of the code $\mathscr{C}_2$,  consider a linear code $\mathscr{C}_4$ of type $\{1,0,0,0\} $ and length $4$ over $\mathscr{R}_{4,2}$ with a generator matrix \vspace{-1.5mm}\begin{equation*}
    \vspace{-1.5mm}\mathcal{G}_4=\mathcal{G}_2+u^2\begin{bmatrix}
0&a_2&b_2&c_2\end{bmatrix}+u^3\begin{bmatrix}
    0&a_3&b_3&c_3
\end{bmatrix},
\end{equation*}    where $a_2,b_2,c_2,a_3,b_3,c_3\in \mathcal{T}_2.$ Using again the recursive construction method outlined in Figure \ref{F1},   we see that the code $\mathscr{C}_4$ can be lifted to a self-orthogonal code of type $\{1,0,0,0,0,0,0,0\}$ over $\mathscr{R}_{8,2}$ if and only if $\mathscr{C}_4$ is a   self-orthogonal code   over $\mathscr{R}_{4,2}$ satisfying property $(\mathfrak{P}),$ which holds if and only if $\mathcal{G}_4\mathcal{G}_4^t\equiv 0\pmod{u^7}.$ This holds if and only if  $\gamma_1^2+a_2^2+b_2^2+c_2^2 +u(\gamma_1+a_2+b_2+c_2)+u^2(1+\gamma_2+\gamma_2^2+a_3^2+b_3^2+c_3^2+a_3+b_3+c_3)\equiv 0\pmod{u^3}.$
Let us first choose $a_2,b_2,c_2\in \mathcal{T}_2$ satisfying $a_2+b_2+c_2+\gamma_1\equiv 0\pmod u,$ (note that $a_2,b_2,c_2$ have precisely $16$ distinct choices). 
Next, for a fixed choice of $a_2,b_2,c_2\in \mathcal{T}_2$ satisfying $a_2+b_2+c_2+\gamma_1\equiv 0\pmod u,$ let us suppose that $a_2+b_2+c_2+\gamma_1\equiv uz_1+u^2z_2\pmod{u^3}.$  Thus, we  need to choose $a_3,b_3,c_3$ satisfying  \vspace{-2mm}\begin{equation}\label{exampleequation}
1+\gamma_2+\gamma_2^2+a_3^2+b_3^2+c_3^2+a_3+b_3+c_3+z_1+z_1^2\equiv 0\pmod u.
\vspace{-2mm}\end{equation}
It is easy to see that there are precisely $32$ distinct choices for $a_3,b_3,c_3\in \mathcal{T}_2$ satisfying \eqref{exampleequation}.
Now,  for  a fixed choice of $a_3,b_3,c_3$ satisfying \eqref{exampleequation}, we see that the code $\mathscr{C}_4$ of length $4$ over $\mathscr{R}_{4,2}$ with a generator matrix $\mathcal{G}_4$ is a self-orthogonal code satisfying property $(\mathfrak{P}),$ and we have that $\mathcal{G}_4\mathcal{G}_4^t\equiv u^7w\pmod{u^8},$ where $w\in \mathcal{T}_2.$ 

Further, corresponding to such a choice of the code $\mathscr{C}_4$,  consider a linear code $\mathscr{C}_6$ of type $\{1,0,0,0,0,0\} $ and length $4$ over $\mathscr{R}_{6,2}$ with a generator matrix \vspace{-1mm}$$\mathcal{G}_6=\mathcal{G}_4+u^4\begin{bmatrix}
0&a_4&b_4&c_4\end{bmatrix}+u^5\begin{bmatrix}
    0&a_5&b_5&c_5
\end{bmatrix},  \vspace{-1mm}$$  where $a_4,b_4,c_4,a_5,b_5,c_5\in \mathcal{T}_2.$  By the recursive construction method outlined in Figure \ref{F1},  we see that the code $\mathscr{C}_6$ can be lifted to a self-orthogonal code of type $\{1,0,0,0,0,0,0,0\}$ over $\mathscr{R}_{8,2}$ if and only if $\mathscr{C}_6$ is a   self-orthogonal code   over $\mathscr{R}_{6,2}$ satisfying property $(\mathfrak{P}),$ which holds  if and only if $w+a_4+b_4+c_4\equiv 0 \pmod u.$ It is easy to see that $a_4,b_4,c_4,a_5,b_5,c_5\in \mathcal{T}_2$ satisfying $w+a_4+b_4+c_4\equiv 0 \pmod u$ have precisely $1024$ distinct choices. Thus, for a fixed choice of $a_4,b_4,c_4,a_5,b_5,c_5 \in \mathcal{T}_2$ satisfying $w+a_4+b_4+c_4\equiv 0 \pmod u,$  we see that the  code $\mathscr{C}_6$ of length $4$ over  $\mathscr{R}_{6,2}$ with a generator matrix $\mathcal{G}_6$ is a self-orthogonal code   over $\mathscr{R}_{6,2}$ satisfying property $(\mathfrak{P}).$ 

Finally,  consider a linear code $\mathscr{C}_8$ of type $\{1,0,0,0,0,0,0,0\} $ and length $4$ over $\mathscr{R}_{8,2}$ with a generator matrix \vspace{-1mm}$$\mathcal{G}_8=\mathcal{G}_6+u^6\begin{bmatrix}
0&a_6&b_6&c_6\end{bmatrix}+u^7\begin{bmatrix}
    0&a_7&b_7&c_7
\end{bmatrix},  $$ \vspace{-1mm} where $a_6,b_6,c_6,a_7,b_7,c_7\in \mathcal{T}_2.$ Note that the code $\mathscr{C}_8$ is  self-orthogonal for all choices of $a_6,b_6,c_6,a_7,b_7,c_7\in \mathcal{T}_2.$ This shows that the code $\mathcal{C}_0$ can be lifted to a self-orthogonal code of type $\{1,0,0,0,0,0,0,0\}$ and length $4$ over $\mathscr{R}_{8,2}$ with the help of the recursive construction  method described in Figure \ref{F1}.
\end{example}
\section{Enumeration formulae for self-orthogonal and self-dual codes of
length $n$ over $\mathscr{R}_{e,m}$}\label{counting}
In this section, we will
obtain  explicit enumeration formulae for self-orthogonal and self-dual codes of an arbitrary length over $\mathscr{R}_{e,m}.$ In particular, when  $e=2,$ one can easily see that  either $\mathscr{R}_{e,m}\simeq \mathbb{F}_{2^m}[u]/\langle u^{2}\rangle$ or      $\mathscr{R}_{e,m}\simeq GR(4,m). $ The enumeration formula for  self-dual codes  over $\mathbb{F}_{2^m}[u]/\langle u^{2}\rangle$ was derived by Gaborit  \cite[Th. 3]{Zp},   while 
the enumeration formula for  self-orthogonal codes over  $\mathbb{F}_{2^m}[u]/\langle u^2\rangle $ was obtained by Galvez {\etal} \cite[Th. 2]{GBN}.   Yadav and Sharma \cite[Th. 5.1]{Galois} counted all  self-orthogonal and self-dual codes of an arbitrary length over  $GR(4,m).$  In view of this, we assume, throughout this section, that  $e \geq 3.$ We also assume that $n$ is a positive integer and $\lambda_1,\lambda_2,\ldots,\lambda_{e+1}$ are non-negative integers satisfying $n=\lambda_1+\lambda_2+\cdots+\lambda_{e+1}.$ Now, let  $\mathfrak{S}_e(n;\lambda_1,\lambda_2,\ldots,\lambda_e)$ and $\mathfrak{U}_e(n;\lambda_1,\lambda_2,\ldots,\lambda_e)$  denote  the number of distinct  self-orthogonal and self-dual codes of type $\{\lambda_1,\lambda_2,\ldots,\lambda_e\}$ and length $n$ over $\mathscr{R}_{e,m},$ respectively.  We note, by Lemma \ref{l2.2} and Remark \ref{Rem1}, that   $\mathfrak{S}_e(n;\lambda_1,\lambda_2,\ldots,\lambda_e)=0$ if $2\lambda_1+2\lambda_2+\cdots+2\lambda_{e-j+1}+\lambda_{e-j+2}+\cdots+\lambda_j > n$ holds for some integer $j$ satisfying $s+1\leq j\leq e,$  while $\mathfrak{U}_e(n;\lambda_1,\lambda_2,\ldots,\lambda_e)=0$ if $\lambda_j\neq \lambda_{e-j+2}$ for some integer $j$ satisfying $1 \leq j \leq e.$
Further, if $\mathfrak{S}_e(n)$ and $\mathfrak{U}_{e}(n)$ denote the total number of  self-orthogonal and self-dual codes of length $n$ over $\mathscr{R}_{e,m},$ respectively, 
then by Lemma \ref{l2.2} and Remark \ref{Rem1}, we get  \begin{equation}\label{summation}\mathfrak{S}_e(n)= \mathlarger{\Sigma}_{1}~\mathfrak{S}_e(n;\lambda_1,\lambda_2,\ldots,\lambda_e) \text{ ~and~ } \mathfrak{U}_e(n)= \mathlarger{\Sigma}_{2}~ \mathfrak{U}_e(n;\lambda_1,\lambda_2,\ldots,\lambda_e),\end{equation}
where the summation $\mathlarger{\Sigma}_{1}$ runs over all $e$-tuples $(\lambda_1, \lambda_2, \ldots,  \lambda_{e})$ of non-negative integers satisfying $ \lambda_1+\lambda_2+\cdots+\lambda_{e}\leq n$ and  $2\lambda_1+2\lambda_2+\cdots+2\lambda_{e-j+1}+\lambda_{e-j+2}+\cdots+\lambda_j \leq n$ for $s+1\leq j\leq e,$ while the summation $\mathlarger{\Sigma}_{2}$ runs over all $e$-tuples $(\lambda_1, \lambda_2, \ldots,  \lambda_{e})$ of non-negative integers  satisfying $\lambda_j=\lambda_{e-j+2}$ for $2 \leq j \leq e$ and $ 2(\lambda_1+\lambda_2+\cdots+\lambda_{s})+(1+\theta_e)\lambda_{s+1}=n.$
Therefore, to determine the numbers  $\mathfrak{S}_e(n)$ and $\mathfrak{U}_{e}(n),$ it suffices to obtain  enumeration formulae for the numbers $\mathfrak{S}_e(n;\lambda_1,\lambda_2,\ldots,\lambda_e)$ and $\mathfrak{U}_e(n;\lambda_1,\lambda_2,\ldots,\lambda_e).$ Further, in view of Remark \ref{Rem1}, from now on, throughout this section,  we assume  that $n$ is a positive integer and    $\lambda_1, \lambda_2, \ldots,  \lambda_{e+1}$ are  non-negative integers satisfying $n=\lambda_1+\lambda_2+\cdots+\lambda_{e+1}$ and  $2\lambda_1+2\lambda_2+\cdots+2\lambda_{e-j+1}+\lambda_{e-j+2}+\cdots+\lambda_j \leq n$ for $s+1\leq j\leq e.$ We also assume that  $\Lambda_0=0$ and $\Lambda_i=\lambda_1+\lambda_2+\cdots+\lambda_{i}$ for $1 \leq i \leq e+1.$ Now, to obtain  enumeration formulae for the numbers $\mathfrak{S}_e(n;\lambda_1,\lambda_2,\ldots,\lambda_e)$ and $\mathfrak{U}_e(n;\lambda_1,\lambda_2,\ldots,\lambda_e),$ we need to prove the following lemmas.

 In the following lemma, we assume that    $2\kappa \leq e,$ and we consider a chain $\mathcal{C}^{(1)}\subseteq \mathcal{C}^{(2)} \subseteq \cdots \subseteq \mathcal{C}^{(s+\theta_e)} $ of self-orthogonal codes of length $n$ over $\mathcal{T}_m,$ such that (i) $\dim \mathcal{C}^{(i)}=\Lambda_i$ for $1 \leq i \leq s+\theta_e,$ (ii) the code $\mathcal{C}^{(s-\kappa_1)}$ is  doubly even,  and (iii)  $\textbf{1} \notin \mathcal{C}^{(s-\kappa+\theta_e)}$ if   $n\equiv 4\pmod 8$ and $m$ is odd. Here, we will  count the    choices for a self-orthogonal code $\mathscr{D}_{e}$ of type $\{\lambda_1,\lambda_2,\ldots, \lambda_e\}$ and length $n$ over $\mathscr{R}_{e,m}$ satisfying $Tor_i(\mathscr{D}_e)=\mathcal{C}^{(i)}$ for $1 \leq i \leq s+\theta_e.$

\begin{lemma}\label{t3.1Kodd}
Let $e \geq 3$ be an integer  satisfying $2\kappa \leq e.$ 
Let  $\mathcal{C}^{(1)}\subseteq \mathcal{C}^{(2)} \subseteq \cdots \subseteq \mathcal{C}^{(s+\theta_e)} $ be a chain of self-orthogonal codes of length $n$ over $\mathcal{T}_m,$ such that (i) $\dim \mathcal{C}^{(i)}=\Lambda_{i}$  for $1 \leq i \leq s+\theta_e,$ (ii) the code $\mathcal{C}^{(s-\kappa_1)}$ is doubly even,   and (iii) $\textbf{1} \notin \mathcal{C}^{(s-\kappa+\theta_e)}$ if $n\equiv 4\pmod 8$ and $m$ is odd.  The  chain $\mathcal{C}^{(1)}\subseteq \mathcal{C}^{(2)} \subseteq \cdots \subseteq \mathcal{C}^{(s+\theta_e)} $  of  self-orthogonal codes over $\mathcal{T}_m$ 
gives rise to precisely   \begin{equation*}   2^{\epsilon}(2^m)^{\sum\limits_{i=1}^{s}\Lambda_{i}(n-\Lambda_{i+1})+\sum\limits_{j=1}^{s-1+\theta_e}\Lambda_{s+j}(n-\Lambda_{s+j+1}-\Lambda_{s+\theta_e-j})-\sum\limits_{a=1}^{s-\kappa_1-1}\Lambda_a-(1-\theta_e)\frac{\Lambda_s(\Lambda_s-1)}{2}+\mu} \hspace{-2mm}
\prod\limits_{\ell=s+1+\theta_e}^{e}{\lambda_{\ell}+n-\Lambda_{\ell}-\Lambda_{e+1-\ell}\brack \lambda_{\ell}}_{2^m} \vspace{-1mm}\end{equation*}  
 distinct   self-orthogonal codes $\mathscr{D}_e$ of type $\{\lambda_1,\lambda_2,\ldots,\lambda_e\}$ and length $n$ over $\mathscr{R}_{e,m}$ satisfying $Tor_i(\mathscr{D}_e)=\mathcal{C}^{(i)}$   for $1 \leq i \leq s+\theta_e,$  where 
\begin{equation*}(\epsilon,\mu)=\left\{\begin{array}{ll} (0,\omega) & \text{if there exists an integer }\omega \text{ satisfying }1\leq \omega \leq \kappa_1-\theta_e, \textbf{1} \in \mathcal{C}^{(s-\kappa_1-\omega)}\text{ and } \textbf{1}\notin \mathcal{C}^{(s-\kappa_1-\omega-1)};\\ (1,\kappa_1) & \text{if } \textbf{1} \in \mathcal{C}^{(s-\kappa+\theta_e)}\text{ with either }n\equiv 0\pmod 8\text{ or }n\equiv 4 \pmod 8\text{  and }m\text{ being even};\\(0,0) & \text{otherwise.} \end{array} \right.\end{equation*}
 \end{lemma}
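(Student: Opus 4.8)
The plan is to realize the passage from the nested chain $\mathcal{C}^{(1)}\subseteq \cdots \subseteq \mathcal{C}^{(s+\theta_e)}$ to a self-orthogonal code $\mathscr{D}_e$ over $\mathscr{R}_{e,m}$ via the step-by-step lifting procedure recorded in Construction method (A), and to count the number of codes produced at each step using the enumeration formulae in Propositions \ref{p3.2}--\ref{p3.9Kodd}, finally multiplying these counts together. First I would fix the chain and run through Construction method (A): Step 1 uses Proposition \ref{p3.2} (if $e$ even) or Proposition \ref{p3.3a} (if $e$ odd) to build $\mathscr{D}_2$ (resp. $\mathscr{D}_3$) from $\mathcal{C}^{(s)}$ (resp. from $\mathcal{C}^{(s+1)}\supseteq \mathcal{C}^{(s)}\supseteq \mathcal{C}^{(s-\kappa_1)}\supseteq \mathcal{C}^{(s-\kappa_1-1)}$); Step 2 iterates Proposition \ref{p3.4Kodd} for $4\le \ell \le \kappa$, $\ell \equiv e \pmod 2$; Step 3 uses Proposition \ref{p3.5Kodd} (if $e$ even) or Proposition \ref{p3.6Kodd} (if $e$ odd) to pass through the ``critical'' level $\ell = \kappa+1$ (resp. $\kappa+2$); Step 4 iterates Proposition \ref{p3.7Kodd} for $\kappa+3\le\ell\le e-\kappa+1$; and Step 5 iterates Proposition \ref{p3.9Kodd} for $e-\kappa+2\le\ell\le e$. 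At each intermediate stage, the subcodes $\mathcal{C}^{(\gamma_\ell+1)}$, $\mathcal{C}^{(\gamma_\ell+1-\kappa_1-\theta_e)}$, etc., demanded as inputs by the propositions are exactly the terms $\mathcal{C}^{(j)}$ of the given chain (indexed appropriately), and the doubly even hypothesis on $\mathcal{C}^{(s-\kappa_1)}$ together with the condition $\textbf{1}\notin\mathcal{C}^{(s-\kappa+\theta_e)}$ (when $n\equiv 4\pmod 8$, $m$ odd) guarantees that the existence statements in Propositions \ref{p3.3a}, \ref{p3.4Kodd}, \ref{p3.5Kodd}, \ref{p3.6Kodd} are met; conversely, by Lemma \ref{l3.2}, Lemma \ref{l3.3}, and Lemma \ref{l3.4}, every self-orthogonal $\mathscr{D}_e$ with the prescribed torsion codes arises this way, so the procedure is a bijection onto the set being counted.

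\textbf{Assembling the count.} The second part is to multiply the per-step counts. For $2\kappa\le e$ every proposition in the chain except the critical Step 3 contributes a factor of the shape $(2^m)^{\,(\text{linear/quadratic in the }\Lambda_i)}\,{\lambda_{\gamma_\ell+\ell}+n-\Lambda_{\gamma_\ell+\ell}-\Lambda_{\gamma_\ell+1}\brack \lambda_{\gamma_\ell+\ell}}_{2^m}$, and Step 3 contributes the same kind of factor together with the extra power $2^{\epsilon}$. Collecting the Gaussian binomial factors over $\ell$ gives $\prod_{\ell=s+1+\theta_e}^{e}{\lambda_\ell+n-\Lambda_\ell-\Lambda_{e+1-\ell}\brack \lambda_\ell}_{2^m}$ after re-indexing $\gamma_\ell+\ell \leftrightarrow \ell$ and using $\gamma_\ell = s-\mathrm{f}_\ell$, and collecting the exponents of $2^m$ requires summing the various $Y_\ell$-type expressions, the $\sum_i \lambda_i \Lambda_{i-\ell+1}$ telescoping sums, and the terms $\Lambda_{s-2\mathrm{f}_\ell+2}+\Lambda_{s-2\mathrm{f}_\ell+1}$, $\Lambda_{\gamma_\ell+1-\kappa_1-\theta_e}$, $\omega_\ell$, etc. The claimed closed form $\sum_{i=1}^{s}\Lambda_i(n-\Lambda_{i+1})+\sum_{j=1}^{s-1+\theta_e}\Lambda_{s+j}(n-\Lambda_{s+j+1}-\Lambda_{s+\theta_e-j})-\sum_{a=1}^{s-\kappa_1-1}\Lambda_a-(1-\theta_e)\frac{\Lambda_s(\Lambda_s-1)}{2}+\mu$ should then fall out; in particular, all the partial ``$-\frac{\Lambda(\Lambda-1)}{2}$'' terms coming from the symmetric-matrix constraints at even levels cancel against the corresponding ``$+\Lambda^2$'' terms from the off-diagonal freedom at those levels, leaving only the single residual $-(1-\theta_e)\frac{\Lambda_s(\Lambda_s-1)}{2}$ that appears at the bottom of the tower when $e$ is even.

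\textbf{Bookkeeping the $2^{\epsilon}$ factor and the pair $(\epsilon,\mu)$.} The subtle point is tracking the integer $\mu$ and the $2^\epsilon$ factor, which encode when the all-one vector $\textbf{1}$ passes through the successive subcodes and whether the arithmetic conditions on $n$ and $m$ hold. The $\omega_\ell\in\{0,1\}$ in Propositions \ref{p3.3a}, \ref{p3.4Kodd}, \ref{p3.5Kodd} record whether $\textbf{1}\in\mathcal{C}^{(s-\kappa_1-\omega-1)}$-type conditions hold at each level, and since $\mathcal{C}^{(s-\kappa_1)}\supseteq\mathcal{C}^{(s-\kappa_1-1)}\supseteq\cdots$ is a decreasing chain of subcodes, $\textbf{1}$ either lies in all of $\mathcal{C}^{(s-\kappa_1)},\ldots,\mathcal{C}^{(s-\kappa+\theta_e)}$ or drops out at a unique level $s-\kappa_1-\omega$; summing $\sum_\ell \omega_\ell$ over the levels from $\ell=3$ (or $4$) up to $\kappa-1+2\theta_e$ then collapses to the stated $\mu$. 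The case distinction in $(\epsilon,\mu)$ is exactly the case distinction in Propositions \ref{p3.5Kodd}(b) / \ref{p3.6Kodd}(b): $\epsilon=1$ precisely when $\textbf{1}\in\mathcal{C}^{(s-\kappa+\theta_e)}$ with $n\equiv 0\pmod 8$ or ($n\equiv 4\pmod 8$ and $m$ even), and $\mu=\kappa_1$ there; and when $\textbf{1}\in\mathcal{C}^{(s-\kappa_1-\omega)}\setminus\mathcal{C}^{(s-\kappa_1-\omega-1)}$ for some $1\le\omega\le\kappa_1-\theta_e$ one gets $\epsilon=0$, $\mu=\omega$; otherwise $(\epsilon,\mu)=(0,0)$. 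I expect the main obstacle to be precisely this careful reconciliation of the $\omega_\ell$'s and the critical-step $\epsilon$ into the single pair $(\epsilon,\mu)$ while simultaneously verifying that the long exponent sum telescopes exactly to the displayed expression — a lengthy but mechanical computation that parallels the one carried out in \cite[Sec.~4]{Galois} for the Galois-ring case and in \cite[Sec.~4]{quasi} for the quasi-Galois case.
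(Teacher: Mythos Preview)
Your proposal is correct and follows essentially the same approach as the paper: the paper's own proof simply states that the result follows by applying Propositions \ref{p3.2}, \ref{p3.4Kodd}, \ref{p3.5Kodd}, \ref{p3.7Kodd}, \ref{p3.9Kodd} when $e$ is even and Propositions \ref{p3.3a}, \ref{p3.4Kodd}, \ref{p3.6Kodd}, \ref{p3.7Kodd}, \ref{p3.9Kodd} when $e$ is odd, which is exactly the step-by-step multiplication along Construction method (A) that you outline. Your additional detail on the telescoping of the exponents and the $(\epsilon,\mu)$ bookkeeping is precisely the mechanical computation the paper leaves implicit.
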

\begin{proof} The desired result follows  by applying Propositions \ref{p3.2} and \ref{p3.4Kodd}--\ref{p3.9KoddReplace} when $e$ is even, whereas the desired result follows by applying Propositions \ref{p3.3a}--\ref{p3.9KoddReplace} when $e$ is odd. 
 \vspace{-1mm}  \end{proof}

In the following lemma, we assume that    $2\kappa > e,$ and we consider a chain $\mathcal{C}^{(1)}\subseteq \mathcal{C}^{(2)} \subseteq \cdots \subseteq \mathcal{C}^{(s+\theta_e)} $ of self-orthogonal codes of length $n$ over $\mathcal{T}_m,$  where $\dim \mathcal{C}^{(i)}=\Lambda_i$ for $1 \leq i \leq s+\theta_e$ and the code $\mathcal{C}^{(s-\kappa_1)}$ is  doubly even.
Here, we count the choices for a self-orthogonal code $\mathscr{D}_{e}$ of type $\{\lambda_1,\lambda_2,\ldots, \lambda_e\}$ and length $n$ over $\mathscr{R}_{e,m}$ satisfying $Tor_i(\mathscr{D}_e)=\mathcal{C}^{(i)}$ for $1 \leq i \leq s+\theta_e.$

\begin{lemma}\label{t3.2Kodd}
Let $e \geq 3$ be an integer  satisfying $2\kappa > e.$ 
Let  $\mathcal{C}^{(1)}\subseteq \mathcal{C}^{(2)} \subseteq \cdots \subseteq \mathcal{C}^{(s+\theta_e)} $ be a chain of self-orthogonal codes of length $n$ over $\mathcal{T}_m,$ such that (i) $\dim \mathcal{C}^{(i)}=\Lambda_{i}$  for $1 \leq i \leq s+\theta_e,$ and (ii) the code $\mathcal{C}^{(s-\kappa_1)}$ is doubly even. The chain  $\mathcal{C}^{(1)}\subseteq \mathcal{C}^{(2)} \subseteq \cdots \subseteq \mathcal{C}^{(s+\theta_e)} $   of  self-orthogonal codes over $\mathcal{T}_m$   gives rise to precisely   \begin{equation*}   (2^m)^{\sum\limits_{i=1}^{s}\Lambda_{i}(n-\Lambda_{i+1})+\sum\limits_{j=1}^{s-1+\theta_e}\Lambda_{s+j}(n-\Lambda_{s+j+1}-\Lambda_{s+\theta_e-j})-\sum\limits_{a=1}^{s-\kappa_1-1}\Lambda_a-(1-\theta_e)\frac{\Lambda_s(\Lambda_s-1)}{2}+\mu} 
\hspace{-1mm}\prod\limits_{\ell=s+1+\theta_e}^{e}{\lambda_{\ell}+n-\Lambda_{\ell}-\Lambda_{e+1-\ell}\brack \lambda_{\ell}}_{2^m} \vspace{-1mm}\end{equation*}  
 distinct   self-orthogonal codes $\mathscr{D}_e$ of type $\{\lambda_1,\lambda_2,\ldots,\lambda_e\}$ and length $n$ over $\mathscr{R}_{e,m}$ satisfying $Tor_i(\mathscr{D}_e)=\mathcal{C}^{(i)}$   for $1 \leq i \leq s+\theta_e,$  where
 \begin{equation*}
     \mu=\left\{ \begin{array}{cl}
     \omega & \text{if there exists an integer }\omega\text{ satisfying }1\leq \omega \leq s-\kappa_1-2, \textbf{1} \in \mathcal{C}^{(s-\kappa_1-\omega)}\text{ and } \textbf{1}\notin \mathcal{C}^{(s-\kappa_1-\omega-1)};  \\
     s-\kappa_1-1 & \text{if }\textbf{1}\in \mathcal{C}^{(1)};\\0 & \text{otherwise.}
 \end{array}\right.
 \end{equation*}
\end{lemma}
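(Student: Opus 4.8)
The strategy is entirely parallel to that of Lemma \ref{t3.1Kodd}: I would construct the self-orthogonal code $\mathscr{D}_e$ step by step by successively applying the lifting propositions that are valid in the regime $2\kappa>e,$ namely Propositions \ref{p3.2}, \ref{p3.3a}, \ref{p3.4Kodd}, \ref{p3.8Kodd} and \ref{p3.9Kodd}, and then multiply together the counts each of them supplies. Concretely, I would set $\mathscr{D}_0=\mathcal{C}^{(s+\theta_e)}$ and first handle the base of the recursion: if $e$ is even, apply Proposition \ref{p3.2} to lift the $\Lambda_{s-\kappa_1}$-doubly even code $\mathcal{C}^{(s)}$ to a self-orthogonal code $\mathscr{D}_2$ of type $\{\Lambda_s,\lambda_{s+1}\}$ over $\mathscr{R}_{2,m}$ satisfying property $(\mathfrak{P})$ with $Tor_1(\mathscr{D}_2)=\mathcal{C}^{(s)}$; if $e$ is odd, apply Proposition \ref{p3.3a} to the chain $\mathcal{C}^{(s+1)}\supseteq\mathcal{C}^{(s)}\supseteq\mathcal{C}^{(s-\kappa_1)}\supseteq\mathcal{C}^{(s-\kappa_1-1)}$ to obtain $\mathscr{D}_3$ of type $\{\Lambda_s,\lambda_{s+1},\lambda_{s+2}\}$ over $\mathscr{R}_{3,m}$ with the prescribed first two torsion codes. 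Then I would iterate: for $4\le\ell\le e-\kappa+1-2\theta_e$ with $\ell\equiv e\pmod2$ use Proposition \ref{p3.4Kodd}; for $e-\kappa+2-2\theta_e\le\ell\le\kappa-\mathrm{f}_{2\kappa-e}+1$ with $\ell\equiv e\pmod2$ use Proposition \ref{p3.8Kodd}; and for $\kappa-\mathrm{f}_{2\kappa-e}+1<\ell\le e$ with $\ell\equiv e\pmod2$ use Proposition \ref{p3.9Kodd}. Each step lifts $\mathscr{D}_{\ell-2}$ over $\mathscr{R}_{\ell-2,m}$ to $\mathscr{D}_\ell$ over $\mathscr{R}_{\ell,m}$ satisfying property $(\mathfrak{P})$ with $Tor_1(\mathscr{D}_\ell)=\mathcal{C}^{(\gamma_\ell+1)}$ and $Tor_{j+1}(\mathscr{D}_\ell)=Tor_j(\mathscr{D}_{\ell-2})$, and at $\ell=e$ we obtain the required $\mathscr{D}_e$ of type $\{\lambda_1,\ldots,\lambda_e\}$ with $Tor_i(\mathscr{D}_e)=\mathcal{C}^{(i)}$ for all $1\le i\le s+\theta_e$ (using that $\gamma_e=0$, so $\gamma_\ell+1$ runs through $1,2,\ldots$ as $\ell$ decreases).

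\textbf{Assembling the count.} Having verified that every $\mathscr{D}_e$ of the prescribed form arises in exactly one way through this chain of liftings and conversely (the "vice versa" being Lemmas \ref{l3.2} and \ref{l3.4} together with the fact that $\mathscr{D}_{\ell}$ is uniquely determined by its data), the total number equals the product of the per-step counts from Propositions \ref{p3.2}/\ref{p3.3a}, \ref{p3.4Kodd}, \ref{p3.8Kodd}, \ref{p3.9Kodd}. Here the key point is that none of Propositions \ref{p3.4Kodd}, \ref{p3.8Kodd}, \ref{p3.9Kodd} that occur when $2\kappa>e$ introduces a factor of $2^\epsilon$ — the only parity obstruction (the $2^\epsilon$ factor) appeared in Propositions \ref{p3.5Kodd} and \ref{p3.6Kodd}, which are confined to the case $2\kappa\le e$. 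This is why the present lemma has no $2^\epsilon$ prefactor, in contrast to Lemma \ref{t3.1Kodd}. The remaining work is purely bookkeeping: I would collect the exponents $Y_\ell$ (each of the form $(\Lambda_{\gamma_\ell+\ell-1}+\Lambda_{\gamma_\ell+1})(n-\Lambda_{\gamma_\ell+\ell}-\Lambda_{\gamma_\ell+1})+\Lambda_{\gamma_\ell+1}^2+\Lambda_{\gamma_\ell+1}$ or a small variant), together with the $\sum\lambda_i\Lambda_{i-\ell+1}$ and $\sum\lambda_j\Lambda_{j-\ell}$ terms and the Gaussian binomial factors, and telescope/reindex the sums over $\ell$ into the closed form $\sum_{i=1}^{s}\Lambda_i(n-\Lambda_{i+1})+\sum_{j=1}^{s-1+\theta_e}\Lambda_{s+j}(n-\Lambda_{s+j+1}-\Lambda_{s+\theta_e-j})-\sum_{a=1}^{s-\kappa_1-1}\Lambda_a-(1-\theta_e)\frac{\Lambda_s(\Lambda_s-1)}{2}+\mu$, with the Gaussian binomials combining into $\prod_{\ell=s+1+\theta_e}^{e}{\lambda_\ell+n-\Lambda_\ell-\Lambda_{e+1-\ell}\brack\lambda_\ell}_{2^m}$.

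\textbf{The correction term $\mu$.} The one genuinely delicate piece is tracking the $\mu$-exponent. As in Proposition \ref{p3.4Kodd}(b), whenever we perform a lifting whose relevant doubly-even subcode contains the all-one vector $\textbf{1}$, an extra factor of $2^m$ appears (the matrix $\mathcal{N}$ drops rank by one but the system remains consistent because $n\equiv0,4\pmod8$). Walking through the recursion, such a subcode at the $\ell$-th stage is $\mathcal{C}^{(\gamma_\ell+1-\kappa_1-\theta_e)}$, and as $\ell$ decreases from roughly $\kappa$ down, $\gamma_\ell+1-\kappa_1-\theta_e$ runs through the values $s-\kappa_1-1,\,s-\kappa_1-2,\,\ldots$ Hence the cumulative extra exponent is the number of these subcodes that contain $\textbf{1}$; since the chain is nested, this number is exactly $\omega$ where $\omega$ is the largest index with $\textbf{1}\in\mathcal{C}^{(s-\kappa_1-\omega)}$ (capped at $s-\kappa_1-1$ when $\textbf{1}\in\mathcal{C}^{(1)}$, and $0$ if $\textbf{1}\notin\mathcal{C}^{(s-\kappa_1-1)}$), which is precisely the stated value of $\mu$. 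I expect this monotonicity argument — showing that the set of stages at which the all-one obstruction fires is an initial segment determined by where $\textbf{1}$ first enters the nested chain — to be the main obstacle, though it is the same argument already used implicitly in Lemma \ref{t3.1Kodd}, so I would simply cite that reasoning. The case $2\kappa>e$ is computationally lighter than $2\kappa\le e$ because Propositions \ref{p3.5Kodd} and \ref{p3.6Kodd} (with their more intricate $n\pmod8$ and parity-of-$m$ analysis) never appear, so no separate $\epsilon$ case-split is needed and the final formula follows directly by combining the proofs of the cited propositions, exactly as indicated for Lemma \ref{t3.1Kodd}.
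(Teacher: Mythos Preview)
Your proposal is correct and follows essentially the same approach as the paper: the paper's own proof simply states that the result follows by applying Propositions \ref{p3.2}, \ref{p3.4Kodd}, \ref{p3.8Kodd}, \ref{p3.9Kodd} when $e$ is even and Propositions \ref{p3.3a}, \ref{p3.4Kodd}, \ref{p3.8Kodd}, \ref{p3.9Kodd} when $e$ is odd, which is exactly the lifting sequence you spell out (and which is also Construction method (B) in the paper). Your additional discussion of why no $2^\epsilon$ prefactor appears and how the $\mu$-term arises from the $\omega_\ell$ corrections in Proposition \ref{p3.4Kodd} is more explicit than the paper's one-line proof, but the underlying argument is identical.
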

\begin{proof}When $e$ is even, the desired result follows by applying Propositions \ref{p3.2}, \ref{p3.4Kodd} and \ref{p3.9KoddReplace}, while in the case when $e$ is odd, the desired result follows by applying Propositions \ref{p3.3a}, \ref{p3.4Kodd}  and \ref{p3.9KoddReplace}.
    \end{proof}

\begin{remark}\label{REMM}
By Lemma \ref{LEM}, we see that there are precisely $\widehat{\sigma}_m(n;\mathfrak{d})$ distinct doubly even codes of length $n$ and dimension $\mathfrak{d}$  over $\mathcal{T}_m$ containing the all-one vector and that there are precisely $\sigma_m(n;\mathfrak{d})$ distinct doubly even codes of length $n$ and dimension $\mathfrak{d}$ over $\mathcal{T}_m$ that do not contain  the all-one vector, where the numbers $\widehat{\sigma}_m(n;\mathfrak{d})$ and $\sigma_m(n;\mathfrak{d})$  are as obtained in Theorems 3.2 and 3.3 of Yadav and Sharma \cite{Galois}, respectively. \end{remark}

  We next proceed to count the choices for the chain $\mathcal{C}^{(1)}\subseteq \mathcal{C}^{(2)} \subseteq \cdots \subseteq \mathcal{C}^{(s+\theta_e)} $ of self-orthogonal codes of length $n$  over $\mathcal{T}_m,$ where   (i) $\dim \mathcal{C}^{(i)}=\Lambda_i$ for $1 \leq i \leq s+\theta_e,$ (ii) the code $\mathcal{C}^{(s-\kappa_1)}$ is  doubly even, and (iii)  $\textbf{1} \notin \mathcal{C}^{(s-\kappa+\theta_e)}$ if $2\kappa \leq e,$  $n\equiv 4\pmod 8$ and $m$ is odd. Towards this, we note that every self-orthogonal code of length $n$ over $\mathcal{T}_{m}$ 
is contained in the set  $\mathcal{I}(\mathcal{T}_m^n)=\{\textbf{a} \in \mathcal{T}_{m}^n: B_{m}(\textbf{a},\textbf{a})=0 \},$ where  $B_{m}(\textbf{v},\textbf{w})=\pi_0(\textbf{v}\cdot \textbf{w})=\pi_0(\sum\limits_{i=1}^{n}v_iw_i)$ for all $\textbf{v}=(v_1,v_2,\ldots,v_n),$ $\textbf{w}=(w_1,w_2,\ldots,w_n) \in \mathcal{T}_{m}^n.$    Furthermore, one can easily see that the set $\mathcal{I}(\mathcal{T}_m^n)$ is an $(n-1)$-dimensional $\mathcal{T}_m$-linear subspace of $\mathcal{T}_m^n$  and 
that $\textbf{1} \in \mathcal{I}(\mathcal{T}_{m}^n)$ if and only if $n$ is even.  We will now distinguish the following two cases:   (\textbf{I})   $\textbf{1} \notin \mathcal{C}^{(s-\kappa_1)}$ and (\textbf{II}) $\textbf{1}\in\mathcal{C}^{(s-\kappa_1)}.$  

In the following lemma, we count the choices for the chain $\mathcal{C}^{(1)}\subseteq \mathcal{C}^{(2)} \subseteq \cdots \subseteq \mathcal{C}^{(s+\theta_e)} $ of self-orthogonal codes of length $n$  over $\mathcal{T}_m,$  such that (i) $\dim \mathcal{C}^{(i)}=\Lambda_i$ for $1 \leq i \leq s+\theta_e,$ and (ii) $\mathcal{C}^{(s-\kappa_1)}$ is a doubly even code satisfying $\textbf{1} \notin \mathcal{C}^{(s-\kappa_1)}.$
\begin{lemma}\label{p5.1} Let $e\geq 3$ be an  integer.  
Let  $N(\lambda_1,\lambda_2,\ldots,\lambda_{s+\theta_e}) $ denote the number of distinct choices  for the chain $\mathcal{C}^{(1)}\subseteq \mathcal{C}^{(2)} \subseteq \cdots \subseteq \mathcal{C}^{(s+\theta_e)} $ of self-orthogonal codes of length $n$  over $\mathcal{T}_m,$  such that (i) $\dim \mathcal{C}^{(i)}=\Lambda_i$ for $1 \leq i \leq s+\theta_e,$ and (ii)  $\mathcal{C}^{(s-\kappa_1)}$ is a doubly even code satisfying $\textbf{1} \notin \mathcal{C}^{(s-\kappa_1)}.$  We have

\vspace{-2mm}\begin{equation*} \vspace{-1mm}N(\lambda_1,\lambda_2,\ldots,\lambda_{s+\theta_e})= \left\{\begin{array}{l}
		\displaystyle 	\sigma_m\big(n;\Lambda_{s-\kappa_1}\big) \prod\limits_{i=1}^{s-\kappa_1}{\Lambda_i \brack \lambda_i}_{2^m}\prod\limits_{j=s-\kappa_1+1}^{s+\theta_e}{\Lambda_j-\Lambda_{s-\kappa_1} \brack \lambda_j}_{2^m}   
         \prod\limits_{\ell=\Lambda_{s-\kappa_1}}^{\Lambda_{s+\theta_e}-1}\left(\frac{2^{m(n-2\ell-1)}-1}{2^{m(\ell+1-\Lambda_{s-\kappa_1})}-1} \right)   \vspace{1mm}\\ \text{if  } n \text{ is odd};  \vspace{1mm}\\
\displaystyle  \left(\mathfrak{D}_0\left( \frac{2^{m(n-\Lambda_{s+\theta_e}-\Lambda_{s-\kappa_1})}-1}{2^{m(\Lambda_{s+\theta_e}-\Lambda_{s-\kappa_1})}-1} \right) + \mathfrak{B}_0\left( \frac{2^{m(n-2\Lambda_{s+\theta_e})}+2^{m(\Lambda_{s+\theta_e}-\Lambda_{s-\kappa_1})}-2}{2^{m(\Lambda_{s+\theta_e}-\Lambda_{s-\kappa_1})}-1} \right)\right)   \vspace{0.5mm} \\  \displaystyle  \times \prod\limits_{i=1}^{s-\kappa_1}{\Lambda_i \brack \lambda_i}_{2^m}\hspace{-1mm}\prod\limits_{j=s-\kappa_1+1}^{s+\theta_e}\hspace{-1mm}{\Lambda_j-\Lambda_{s-\kappa_1} \brack \lambda_j}_{2^m}  \prod\limits_{\ell=\Lambda_{s-\kappa_1}}^{\Lambda_{s+\theta_e}-2}\hspace{-2mm}\left(\frac{2^{m(n-2\ell-2)}-1}{2^{m(\ell+1-\Lambda_{s-\kappa_1})}-1} \right) \vspace{1mm}\\ 
  \text{if  } n \text{ is even with } \Lambda_{s+\theta_e}\neq \Lambda_{s-\kappa_1};\\
  \displaystyle  \big(\mathfrak{D}_0 + \mathfrak{B}_0\big)     \prod\limits_{i=1}^{s-\kappa_1}{\Lambda_i \brack \lambda_i}_{2^m}  
  ~~~\text{if  } n \text{ is even and  }\Lambda_{s+\theta_e}=\Lambda_{s-\kappa_1},
\end{array}\right.\end{equation*}


where  
\vspace{-1mm}\small{\begin{equation}\label{D0}\hspace{-3mm} \mathfrak{D}_0=\left\{\begin{array}{ll} 
~~~1 &\text{if } \Lambda_{s-\kappa_1}= 0;\\ 
\displaystyle \frac{(2^{m(\frac{n}{2}-1)}-1)(2^{m(\frac{n}{2}-\Lambda_{s-\kappa_1}-1)}+1)}{2^m-1}\prod\limits_{i=1}^{\Lambda_{s-\kappa_1}-1}\left(\frac{2^{m(n-2-2i)}-1}{2^{m(i+1)}-1} \right)&\text{if  }  1\leq \Lambda_{s-\kappa_1}\leq  \frac{n}{2}-1 \text{ and  } n\equiv 2,6~ (\bmod ~8);\\
\displaystyle \prod\limits_{i=0}^{\Lambda_{s-\kappa_1}-1} \left(\frac{2^{m(n-2i-3)}+2^{m(\frac{n}{2}-1-i)}-2^{m(\frac{n}{2}-i-2)}-1}{2^{m(i+1)}-1}\right)  &\text{if } 1\leq \Lambda_{s-\kappa_1}\leq  \frac{n}{2}-1 \text{ with either }n\equiv 0~(\bmod ~8)  \vspace{-3mm}\\&\text{or } n\equiv 4 ~(\bmod ~8)  \text{ and } m \text{ is even}  ;\\
\displaystyle\prod\limits_{i=0}^{\Lambda_{s-\kappa_1}-1}\left( \frac{2^{m(n-2i-3)}-2^{m(\frac{n}{2}-1-i)}+2^{m(\frac{n}{2}-i-2)}-1}{2^{m(i+1)}-1}\right)& \text{if }1\leq \Lambda_{s-\kappa_1}\leq  \frac{n}{2}-1, n\equiv 4~(\bmod ~8) \text{ and }\vspace{-3mm}\\& m \text{ is odd;}\\
~~0 & \text{if } \Lambda_{s-\kappa_1}>\frac{n}{2}-1

\end{array}\right.  \vspace{-1mm}
\end{equation} }\normalsize and 
\vspace{-2mm}\small{\begin{equation}\label{D1} \hspace{-6mm}\mathfrak{B}_{0}=\left\{\begin{array}{l} 
\displaystyle \left(2^{m(n-2\Lambda_{s-\kappa_1}-1)}-2^{m(\frac{n}{2}-\Lambda_{s-\kappa_1}-1)}\right) \prod\limits_{i=0}^{\Lambda_{s-\kappa_1}-2}\left(\frac{2^{m(n-2i-3)}+2^{m(\frac{n}{2}-1-i)}-2^{(m\frac{n}{2}-2-i)}-1}{2^{m(i+1)}-1} \right) \vspace{0.25mm}\\ \text{if } 1\leq \Lambda_{s-\kappa_1}\leq \frac{n}{2}-1 \text{ and } n\equiv 2,6~ (\bmod ~8);\vspace{0.25mm}\\
\displaystyle \left(2^{m(n-2\Lambda_{s-\kappa_1}-1)}+2^{m(\frac{n}{2}-\Lambda_{s-\kappa_1})}-2^{m(\frac{n}{2}-\Lambda_{s-\kappa_1}-1)}-1\right)\hspace{-0.5mm}\prod\limits_{i=0}^{\Lambda_{s-\kappa_1}-2} \hspace{-0.5mm}\left(\frac{2^{m(n-2i-3)}+2^{m(\frac{n}{2}-1-i)}-2^{m(\frac{n}{2}-i-2)}-1}{2^{m(i+1)}-1}\right)  \vspace{0.50mm} \\ \text{if }  1\leq \Lambda_{s-\kappa_1}\leq \frac{n}{2}-1  \text{ with either }n\equiv 0~(\bmod ~8) \text{ or } n\equiv 4~(\bmod ~8) \text{ and } m \text{ is even};\vspace{0.50mm}\\
\displaystyle  \left(2^{m(n-2\Lambda_{s-\kappa_1}-1)}-2^{m(\frac{n}{2}-\Lambda_{s-\kappa_1})}+2^{m(\frac{n}{2}-\Lambda_{s-\kappa_1}-1)}-1\right) \hspace{-0.5mm}\prod\limits_{i=0}^{\Lambda_{s-\kappa_1}-2} \hspace{-0.5mm}\left(\frac{2^{m(n-2i-3)}+2^{m(\frac{n}{2}-i-1)}-2^{m(\frac{n}{2}-i-2)}-1}{2^{m(i+1)}-1} \right) \vspace{0.5mm}\\ \text{if }  1\leq \Lambda_{s-\kappa_1}\leq \frac{n}{2}-1 ,  n\equiv 4~(\bmod ~8) \text{ and } m \text{ is odd;}\vspace{1mm}\\
~0 ~~~~~~~\text{ if either } \Lambda_{s-\kappa_1}=0 \text{ or }\Lambda_{s-\kappa_1}>\frac{n}{2}-1.
\end{array}\right.
\vspace{-2mm}\end{equation}}\normalsize
\end{lemma}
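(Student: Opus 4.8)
\textbf{Proof proposal for Lemma \ref{p5.1}.}

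The plan is to count the chains $\mathcal{C}^{(1)}\subseteq \mathcal{C}^{(2)}\subseteq\cdots\subseteq\mathcal{C}^{(s+\theta_e)}$ by building them up from the ``middle'' outward: first I would fix the doubly even code $\mathcal{C}^{(s-\kappa_1)}$, then choose the nested flag of subcodes $\mathcal{C}^{(1)}\subseteq\cdots\subseteq\mathcal{C}^{(s-\kappa_1)}$ inside it, and finally choose the chain of self-orthogonal overcodes $\mathcal{C}^{(s-\kappa_1)}\subseteq\mathcal{C}^{(s-\kappa_1+1)}\subseteq\cdots\subseteq\mathcal{C}^{(s+\theta_e)}$ sitting above it. Since a doubly even code is in particular self-orthogonal and hence contained in $\mathcal{I}(\mathcal{T}_m^n)$, and since in case (\textbf{I}) we additionally require $\textbf{1}\notin\mathcal{C}^{(s-\kappa_1)}$, the number of admissible choices for $\mathcal{C}^{(s-\kappa_1)}$ alone is exactly $\sigma_m(n;\Lambda_{s-\kappa_1})$ by Remark \ref{REMM} (equivalently Lemma \ref{LEM} together with \cite[Th.\ 3.3]{Galois}). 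Given $\mathcal{C}^{(s-\kappa_1)}$, the flag of subcodes $\mathcal{C}^{(i)}$ of dimension $\Lambda_i$ for $1\leq i\leq s-\kappa_1$ is an arbitrary nested sequence of subspaces of the $\Lambda_{s-\kappa_1}$-dimensional space $\mathcal{C}^{(s-\kappa_1)}$ with prescribed successive dimension jumps $\lambda_1,\lambda_2,\ldots,\lambda_{s-\kappa_1}$; the standard Gaussian-binomial count for such flags gives $\prod_{i=1}^{s-\kappa_1}{\Lambda_i\brack\lambda_i}_{2^m}$ (choosing $\mathcal{C}^{(i-1)}$ as a $\lambda_i$-codimensional — i.e.\ $\Lambda_{i-1}$-dimensional — subspace of $\mathcal{C}^{(i)}$, working downward). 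No self-orthogonality constraint is active here because every subcode of a self-orthogonal code is self-orthogonal.

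The substantive part is counting the overcodes $\mathcal{C}^{(s-\kappa_1)}\subseteq\mathcal{C}^{(s-\kappa_1+1)}\subseteq\cdots\subseteq\mathcal{C}^{(s+\theta_e)}$, each self-orthogonal of dimension $\Lambda_j$. I would build this chain one step at a time, passing from $\mathcal{C}^{(j)}$ to $\mathcal{C}^{(j+1)}$ by adjoining $\lambda_{j+1}$ new basis vectors, each of which must be chosen isotropic (so that $B_m$ vanishes on the diagonal) and orthogonal to the already-chosen generators. The set of vectors orthogonal to a self-orthogonal code $\mathcal{C}^{(j)}$ of dimension $\ell$ is $\mathcal{C}^{(j)\perp_{B_m}}$, which has dimension $n-\ell$ and contains $\mathcal{C}^{(j)}$; so enlarging by one dimension means choosing an isotropic line in the $(n-2\ell)$-dimensional non-degenerate quotient space $\mathcal{C}^{(j)\perp_{B_m}}/\mathcal{C}^{(j)}$. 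This is precisely where finite-geometry enters: the number of isotropic points of a non-degenerate quadratic (here, the squaring map gives an additive-semilinear quadratic form since $B_m(\textbf{a},\textbf{a})=(\textbf{1}\cdot\textbf{a})^2$ essentially detects $\textbf{1}$; more precisely the relevant form is $\textbf{a}\mapsto B_m(\textbf{a},\textbf{a})$) in dimension $d$ over $\mathbb{F}_{2^m}$ depends on the Witt type of the form, which in turn is governed by $n\bmod 8$ and the parity of $m$. The telescoping product $\prod_\ell(\cdots)$ in the odd-$n$ case, and the two-term structure $\mathfrak{D}_0(\cdots)+\mathfrak{B}_0(\cdots)$ in the even-$n$ case, reflect exactly this Witt-type bookkeeping: $\mathfrak{D}_0$ and $\mathfrak{B}_0$ are the counts of doubly even codes of dimension $\Lambda_{s-\kappa_1}$ avoiding $\textbf{1}$, broken according to whether the radical behaviour forces the ambient quadratic space on the orthogonal complement to be of ``plus'' or ``minus'' type, and the remaining factor $\prod_{\ell=\Lambda_{s-\kappa_1}}^{\Lambda_{s+\theta_e}-1}\big(\frac{2^{m(n-2\ell-1)}-1}{2^{m(\ell+1-\Lambda_{s-\kappa_1})}-1}\big)$ (resp.\ with $-2$ in the exponent and upper limit $\Lambda_{s+\theta_e}-2$ for even $n$) counts the number of self-orthogonal overcode chains extending a fixed $\mathcal{C}^{(s-\kappa_1)}$ of each type. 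I would import the precise point-counts for isotropic subspaces from the Witt decomposition theory already invoked in the paper, citing \cite{Galois} for the base-case formulae of $\mathfrak{D}_0$ and $\mathfrak{B}_0$ (which are literally the even/odd-$n$ refinements of $\sigma_m$), and then assemble the telescoping product by induction on $j$ from $s-\kappa_1$ up to $s+\theta_e$.

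Finally I would treat the degenerate case $\Lambda_{s+\theta_e}=\Lambda_{s-\kappa_1}$ separately: here there is no chain to extend above $\mathcal{C}^{(s-\kappa_1)}$, so the overcode factor collapses to $1$, the middle-to-top product is empty, and $N$ reduces to $(\mathfrak{D}_0+\mathfrak{B}_0)\prod_{i=1}^{s-\kappa_1}{\Lambda_i\brack\lambda_i}_{2^m}$, where $\mathfrak{D}_0+\mathfrak{B}_0=\sigma_m(n;\Lambda_{s-\kappa_1})$ when $n$ is even. The main obstacle I anticipate is not the algebra of Gaussian binomials but the careful case analysis of the Witt type of the form $\textbf{a}\mapsto B_m(\textbf{a},\textbf{a})$ restricted to successive orthogonal complements — in particular verifying that adjoining isotropic vectors one at a time does not change the ``plus/minus'' type in a way that breaks the clean telescoping, and matching the resulting point-counts exactly against the fractions $\frac{2^{m(n-2\ell-1)}-1}{2^{m(\ell+1-\Lambda_{s-\kappa_1})}-1}$. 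For this I would lean on the fact that passing to $\mathcal{C}^{(j)\perp_{B_m}}/\mathcal{C}^{(j)}$ preserves the discriminant/Arf invariant once the starting code is doubly even, so the type is determined once and for all by the choice of $\mathcal{C}^{(s-\kappa_1)}$ (giving the $\mathfrak{D}_0$ vs.\ $\mathfrak{B}_0$ split), after which every further extension step has the same universal count and the product telescopes. The remaining verification that the exponents of $2^m$ in the ambient $N$ formula are bookkept correctly is then routine.
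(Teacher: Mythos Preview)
Your overall plan---fix the doubly even code $\mathcal{C}^{(s-\kappa_1)}$, count sub-flags below it by Gaussian binomials, then count self-orthogonal overcodes above it via isotropic-subspace enumeration---is exactly the paper's strategy, and the odd-$n$ case goes through as you describe (the paper uses the symplectic structure on $\mathcal{I}(\mathcal{T}_m^n)$ and cites Taylor, Exercise~8.1(ii), for the count of totally isotropic subspaces of the complement).

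Your framing of the even-$n$ case, however, misidentifies the mechanism behind the $\mathfrak{D}_0/\mathfrak{B}_0$ split. It is \emph{not} a plus/minus Witt-type distinction for a quadratic form on the complement. Inside $\mathcal{I}(\mathcal{T}_m^n)$ the form $B_m$ is alternating (every vector of $\mathcal{I}$ is isotropic by definition), so there is no orthogonal plus/minus dichotomy to track; and the overcodes $\mathcal{C}^{(j)}$ for $j>s-\kappa_1$ are only required to be self-orthogonal, not doubly even, so the $\pi_\kappa$-quadratic structure plays no role above $\mathcal{C}^{(s-\kappa_1)}$. What actually drives the split is that for $n$ even the vector $\textbf{1}$ lies in the \emph{radical} of $B_m{\restriction_{\mathcal{I}(\mathcal{T}_m^n)}}$. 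The paper decomposes $\mathcal{I}(\mathcal{T}_m^n)=\mathcal{U}_m'\perp\langle\textbf{1}\rangle$ with $\mathcal{U}_m'$ nondegenerate symplectic of dimension $n-2$; then $\mathfrak{D}_0$ counts doubly even codes $\mathcal{C}^{(s-\kappa_1)}\subseteq\mathcal{U}_m'$, while $\mathfrak{B}_0$ counts those having a generator of the form $\textbf{1}+v$ with $v\in\mathcal{U}_m'\setminus\{0\}$. The two different extension factors attached to $\mathfrak{D}_0$ and $\mathfrak{B}_0$ arise because in the first case the radical direction $\textbf{1}$ remains freely available when building $\mathcal{C}^{(s+\theta_e)}$, whereas in the second it is already entangled with $\mathcal{C}^{(s-\kappa_1)}$, so the quotient $(\mathcal{C}^{(s-\kappa_1)})^{\perp_{B_m}}/\mathcal{C}^{(s-\kappa_1)}$ has a different (degenerate symplectic) structure in the two cases.

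Consequently, the ``main obstacle'' you anticipate---checking that isotropic extensions preserve a plus/minus type---is a red herring. The bookkeeping you actually need is just whether the radical $\langle\textbf{1}\rangle$ is free or absorbed, after which the counts of totally isotropic subspaces in each case follow from the same symplectic formula (plus a small correction for the radical) and telescope cleanly. With that correction, your argument and the paper's coincide.
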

\begin{proof} To prove the result, we recall  that 
  $\textbf{1} \in \mathcal{I}(\mathcal{T}_{m}^n)$ if and only if $n$ is even.  Accordingly, we will consider the following two cases separately: (I) $n$ is odd, and (II) $n$ is even.  
\begin{enumerate}
\item[(I)] Let $n$ be odd. Here, we note that $(\mathcal{I}(\mathcal{T}_m^n),B_m(\cdot,\cdot){\restriction_{\mathcal{I}(\mathcal{T}_m^n) \times (\mathcal{I}(\mathcal{T}_m^n)}})$ 
 is a symplectic space of dimension $n-1$ over $\mathcal{T}_m.$ Further, by Remark \ref{REMM}, one can easily see that there are precisely  $\sigma_m(n;\Lambda_{s-\kappa_1})\prod\limits_{i=1}^{s-\kappa_1} {\Lambda_i \brack \lambda_i}_{2^m}$ distinct choices for  the chain  $\mathcal{C}^{(1)}\subseteq \mathcal{C}^{(2)}\subseteq \cdots \subseteq \mathcal{C}^{(s-\kappa_1)}$ of self-orthogonal codes of length $n$ over $\mathcal{T}_m,$ where $\mathcal{C}^{(s-\kappa_1)}$ is a doubly even code satisfying $\textbf{1} \notin \mathcal{C}^{(s-\kappa_1)}.$   

 Now, for a given choice of $\mathcal{C}^{(s-\kappa_1)},$ 
we will count the choices for a self-orthogonal code $\mathcal{C}^{(s+\theta_e)}$ satisfying $\mathcal{C}^{(s-\kappa_1)}\subseteq \mathcal{C}^{(s+\theta_e)} \subseteq (\mathcal{C}^{(s-\kappa_1)})^{\perp_{B_m}}.$ Towards this, we see, working as in case (I) in the proof of Theorem 3.1   of Yadav and Sharma \cite{Galois},  that $\mathcal{C}^{(s-\kappa_1)}=\langle \textbf{v}_1,\textbf{v}_2,\ldots,\textbf{v}_{\Lambda_{s-\kappa_1}}\rangle,$ where $\textbf{v}_1,\textbf{v}_2,\ldots,\textbf{v}_{\Lambda_{s-\kappa_1}} $  are mutually orthogonal singular vectors in $\mathcal{I}(\mathcal{T}_m^n)$ that are linearly independent over $\mathcal{T}_{m}.$
We further see,
by \cite[pp. 69-70]{Taylor},  that the symplectic space $(\mathcal{I}(\mathcal{T}_m^n),B_m(\cdot,\cdot){\restriction_{\mathcal{I}(\mathcal{T}_m^n) \times (\mathcal{I}(\mathcal{T}_m^n)}})$  admits an orthogonal direct sum decomposition of the form: $\mathcal{I}(\mathcal{T}_m^n)=\mathcal{X}\perp \mathcal{Y}$ with $\mathcal{X}=\langle \textbf{v}_1,\textbf{v}_1^{\prime}\rangle \perp \langle \textbf{v}_2,\textbf{v}_2^{\prime}\rangle \perp\cdots \perp \langle \textbf{v}_{\Lambda_{s-\kappa_1}},\textbf{v}_{\Lambda_{s-\kappa_1}}^{\prime}\rangle $ and $\mathcal{Y}=\langle \textbf{v}_{\Lambda_{s-\kappa_1}+1},\textbf{v}_{\Lambda_{s-\kappa_1}+1}^{\prime}\rangle \perp \langle \textbf{v}_{\Lambda_{s-\kappa_1}+2},\textbf{v}_{\Lambda_{s-\kappa_1}+2}^{\prime}\rangle \perp  \cdots \perp \langle \mathbf{v}_{\frac{n-1}{2}},\mathbf{v}_{\frac{n-1}{2}}^{\prime} \rangle ,$ where $(\textbf{v}_i,\textbf{v}_i^\prime)$ is a hyperbolic pair in $\mathcal{I}(\mathcal{T}_m^n)$ for $1 \leq  i \leq \frac{n-1}{2}.$ We further note that each $\textbf{w}\in (\mathcal{C}^{(s-\kappa_1)})^{\perp_{B_m}}$ can be uniquely written as $\textbf{w}=\sum\limits_{i=1}^{\frac{n-1}{2}}(\alpha_i\textbf{v}_i+\beta_i\textbf{v}_i^{\prime}),$ where  $\alpha_i,\beta_i \in \mathcal{T}_m$ for $1 \leq i \leq \frac{n-1}{2}.$ As $\textbf{w}\in (\mathcal{C}^{(s-\kappa_1)})^{\perp_{B_m}},$ we have $B_m(\textbf{w},\textbf{v}_j)=0$ for $1\leq  j \leq \Lambda_{s-\kappa_1},$ which implies that $\beta_j=0$ for $1 \leq j\leq \Lambda_{s-\kappa_1}.$ Thus, each $\textbf{w}\in (\mathcal{C}^{(s-\kappa_1)})^{\perp_{B_m}}$ is of the form $\textbf{w}=\sum\limits_{i=1}^{\Lambda_{s-\kappa_1}}\alpha_i\textbf{v}_i +\sum\limits_{j=\Lambda_{s-\kappa_1}+1}^{\frac{n-1}{2}}(\alpha_j\textbf{v}_j+\beta_j\textbf{v}_j^{\prime}),$ which implies that $\langle \textbf{v}_1,\textbf{v}_2,\ldots,\textbf{v}_{\Lambda_{s-\kappa_1}},\textbf{w}\rangle=\langle \textbf{v}_1,\textbf{v}_2,\ldots,\textbf{v}_{\Lambda_{s-\kappa_1}}, \sum\limits_{j=\Lambda_{s-\kappa_1}+1}^{\frac{n-1}{2}}(\alpha_j\textbf{v}_j+\beta_j\textbf{v}_j^{\prime})\rangle. $ In view of this, we assume,  without any loss of generality, that $\mathcal{C}^{(s+\theta_e)}=\langle \textbf{v}_1,\textbf{v}_2,\ldots,\textbf{v}_{\Lambda_{s-\kappa_1}}\rangle \perp \langle \textbf{w}_1,\textbf{w}_2,\ldots, \textbf{w}_{\Lambda_{s+\theta_e}-\Lambda_{s-\kappa_1}} \rangle ,$ where $\textbf{w}_1,\textbf{w}_2,\ldots, \textbf{w}_{\Lambda_{s+\theta_e}-\Lambda_{s-\kappa_1}}$ are mutually orthogonal  isotropic vectors in $\mathcal{Y}$ that are linearly independent over $\mathcal{T}_m.$ 
We further note that $\mathcal{Y}$  is a symplectic space over $\mathcal{T}_m$ of dimension $n-1-2\Lambda_{s-\kappa_1}$ and Witt index $\big(\frac{n-1}{2}\big)-\Lambda_{s-\kappa_1}.$ 
One can easily observe that the number of choices for a self-orthogonal code $\mathcal{C}^{(s+\theta_e)}$ satisfying $\mathcal{C}^{(s-\kappa_1)}\subseteq \mathcal{C}^{(s+\theta_e)} \subseteq (\mathcal{C}^{(s-\kappa_1)})^{\perp_{B_m}}$ is equal to the number of  
$(\Lambda_{s+\theta_e}-\Lambda_{s-\kappa_1})$-dimensional totally isotropic subspaces of the symplectic space $\mathcal{Y},$ which, by Exercise 8.1 (ii) of \cite{Taylor}, has precisely  $ \prod\limits_{i=\Lambda_{s-\kappa_1}}^{\Lambda_{s+\theta_e}-1}\left(\frac{2^{m(n-2i-1)}-1}{2^{m(i+1-\Lambda_{s-\kappa_1})}-1}\right)$  distinct choices.
Finally, for a given choice of $\mathcal{C}^{(s-\kappa_1)},$ we note that the chain  $\mathcal{C}^{(s-\kappa_1+1)}\subseteq \mathcal{C}^{(s-\kappa_1+2)}\subseteq \cdots\subseteq \mathcal{C}^{(s+\theta_e)}$ of codes has precisely   \vspace{-1mm}\begin{equation*}\vspace{-1mm}
   \prod\limits_{i=\Lambda_{s-\kappa_1}}^{\Lambda_{s+\theta_e}-1}\left(\frac{2^{m(n-2i-1)}-1}{2^{m(i+1-\Lambda_{s-\kappa_1})}-1}\right)\prod\limits_{j=s-\kappa_1+1}^{s+\theta_e}{\Lambda_j-\Lambda_{s-\kappa_1} \brack \lambda_j}_{2^m} 
\end{equation*}  distinct choices. From this, the desired result follows.

\item[(II)] Next, let $n$ be even.  In this case, we have  $\textbf{1}\in \mathcal{I}(\mathcal{T}_{m}^n).$ Let us choose  an $(n-2)$-dimensional $\mathcal{T}_{m}$-linear subspace  $\mathcal{U}_{m}^{\prime}$ of $\mathcal{I}(\mathcal{T}_{m}^n)$ satisfying $\textbf{1} \notin \mathcal{U}_{m}^{\prime}.$ It is easy to see that $\mathcal{I}(\mathcal{T}_{m}^n)=\mathcal{U}_{m}^{\prime} \perp \langle \textbf{1} \rangle.$ Note that $(\mathcal{U}_{m}^{\prime}, B_{m}(\cdot,\cdot){\restriction_{\mathcal{U}_{m}^{\prime}\times \mathcal{U}_{m}^{\prime}}})$ is an  $(n-2)$-dimensional symplectic space  over $\mathcal{T}_{m}.$

Now, we will count the choices for the chain $\mathcal{C}^{(1)}\subseteq\mathcal{C}^{(2)}\subseteq \cdots \subseteq \mathcal{C}^{(s+\theta_e)}$ of self-orthogonal codes of length $n$ over $\mathcal{T}_m,$  where  $\mathcal{C}^{(s-\kappa_1)}$ is a doubly even code satisfying $\textbf{1} \notin \mathcal{C}^{(s-\kappa_1)}.$ 

Towards this, we first note that $N(\lambda_1,\lambda_2,\ldots,\lambda_{s+\theta_e})=1$ if $\Lambda_{s+\theta_e}=0.$ So  we assume, throughout the proof,  that $\Lambda_{s+\theta_e}\geq 1.$ 

When $\Lambda_{s-\kappa_1}=0,$ working as in case (II) in the proof of 
Proposition 3.1 of Sharma and Kaur \cite{Sharma},  we get
\vspace{-2mm}\begin{equation*}\vspace{-2mm}
N(\lambda_1,\lambda_2,\ldots,\lambda_{s+\theta_e})=\prod\limits_{i=0}^{\Lambda_{s+\theta_e}-2}\left(\frac{2^{m(n-2i-2)}-1}{2^{m(i+1)}-1}\right) \vspace{1mm} \left(\frac{2^{m(n-\Lambda_{s+\theta_e})}-1}{2^{m\Lambda_{s+\theta_e}}-1}\right)  \prod\limits_{j=s-\kappa_1+1}^{s+\theta_e}{\Lambda_j \brack \lambda_j}_{2^m}.\end{equation*}  

On the other hand,  when $\Lambda_{s-\kappa_1}\geq 1,$ working as in case (II) in the proof of Theorem 3.1 of Yadav and Sharma \cite{Galois}, we see that the code $\mathcal{C}^{(s-\kappa_1)}$ is  of the following two forms: \begin{itemize}
    \item $\langle \textbf{v}_1,\textbf{v}_2,\ldots,\textbf{v}_{\Lambda_{s-\kappa_1}} \rangle, $ where
$\textbf{v}_1,\textbf{v}_2,\ldots, \textbf{v}_{\Lambda_{s-\kappa_1}} \in \mathcal{U}_{m}^{\prime}$ are  mutually orthogonal singular vectors that are linearly independent
 over  $\mathcal{T}_{m}.$   
\item $\langle \textbf{v}_1,\textbf{v}_2,\ldots,\textbf{v}_{\Lambda_{s-\kappa_1}-1},\textbf{1}+\textbf{v}_{\Lambda_{s-\kappa_1}} \rangle, $ where
$\textbf{v}_1,\textbf{v}_2,\ldots, \textbf{v}_{\Lambda_{s-\kappa_1}} \in \mathcal{U}_{m}^{\prime}\setminus\{\textbf{0}\}$ are such that   $\textbf{v}_1,\textbf{v}_2,\ldots,\textbf{v}_{\Lambda_{s-\kappa_1}-1},$ $\textbf{1}+\textbf{v}_{\Lambda_{s-\kappa_1}}$ are mutually orthogonal singular vectors that are linearly independent  over $\mathcal{T}_{m}.$ 
\end{itemize}
Accordingly, we will consider the following two cases:
\begin{enumerate}
    \item[(i)] First of all, suppose that the code $\mathcal{C}^{(s-\kappa_1)}$ is of the form $\langle \textbf{v}_1,\textbf{v}_2,\ldots,\textbf{v}_{\Lambda_{s-\kappa_1}} \rangle,$ where
$\textbf{v}_1,\textbf{v}_2,\ldots, \textbf{v}_{\Lambda_{s-\kappa_1}} \in \mathcal{U}_{m}^{\prime}$ are mutually orthogonal singular vectors that are linearly independent 
 over  $\mathcal{T}_{m}.$ Using Lemma \ref{LEM} and working as in case (II) in the proof of Theorem 3.1 of Yadav and Sharma \cite{Galois}, we see that such a code $\mathcal{C}^{(s-\kappa_1)}$ has precisely $\mathfrak{D}_0$ distinct choices, where the number $\mathfrak{D}_{0}$ is given by \eqref{D0}. Given such a choice of $\mathcal{C}^{(s-\kappa_1)},$  we will  count the choices for the chain 
$\mathcal{C}^{(1)}\subseteq\mathcal{C}^{(2)}\subseteq \cdots \subseteq \mathcal{C}^{(s+\theta_e)}$ of self-orthogonal codes of length $n$ over $\mathcal{T}_m.$  

Towards this,  one can easily see that  there are precisely $\mathfrak{D}_0\prod\limits_{j=1}^{s-\kappa_1}{\Lambda_j \brack \lambda_j}_{2^m}$ distinct choices for the chain  
$\mathcal{C}^{(1)}\subseteq \mathcal{C}^{(2)}\subseteq \cdots \subseteq \mathcal{C}^{(s-\kappa_1)}$ of self-orthogonal codes of length $n$ over $\mathcal{T}_m.$

Now, if $\Lambda_{s+\theta_e}=\Lambda_{s-\kappa_1},$ it is easy to see that the chain $\mathcal{C}^{(1)}\subseteq \mathcal{C}^{(2)}\subseteq \cdots \subseteq \mathcal{C}^{(s+\theta_e)}$ 
 has  precisely $\mathfrak{D}_0\prod\limits_{j=1}^{s-\kappa_1}{\Lambda_j \brack \lambda_j}_{2^m}$ distinct choices.
 
Next, if $\Lambda_{s+\theta_e}\neq \Lambda_{s-\kappa_1},$ for a given choice of $ \mathcal{C}^{(s-\kappa_1)},$   we need to count the choices for a self-orthogonal code $\mathcal{C}^{(s+\theta_e)}$ satisfying  $\mathcal{C}^{(s-\kappa_1)}\subseteq \mathcal{C}^{(s+\theta_e)} \subseteq (\mathcal{C}^{(s-\kappa_1)})^{\perp_{B_m}}.$ Here, we see, by
\cite[pp. 69-70]{Taylor}, that the space $(\mathcal{I}(\mathcal{T}_m^n),B_m(\cdot,\cdot){\restriction_{\mathcal{I}(\mathcal{T}_m^n) \times (\mathcal{I}(\mathcal{T}_m^n)}})$  admits an orthogonal direct sum decomposition of the form: $\mathcal{I}(\mathcal{T}_m^n)=\mathcal{X}_0\perp \mathcal{Y}_0\perp \langle \textbf{1}\rangle$ with  $\mathcal{X}_0=\langle \textbf{v}_1,\textbf{v}_1^{\prime}\rangle \perp \langle \textbf{v}_2,\textbf{v}_2^{\prime}\rangle \perp\cdots \perp \langle \textbf{v}_{\Lambda_{s-\kappa_1}},\textbf{v}_{\Lambda_{s-\kappa_1}}^{\prime}\rangle $ and $\mathcal{Y}_0=\langle \textbf{v}_{\Lambda_{s-\kappa_1}+1},\textbf{v}_{\Lambda_{s-\kappa_1}+1}^{\prime}\rangle \perp \langle \textbf{v}_{\Lambda_{s-\kappa_1}+2},\textbf{v}_{\Lambda_{s-\kappa_1}+2}^{\prime}\rangle \perp  \cdots \perp \langle \mathbf{v}_{\frac{n-2}{2}},\mathbf{v}_{\frac{n-2}{2}}^{\prime} \rangle ,$ where $(\textbf{v}_i,\textbf{v}_i^\prime)$ is a hyperbolic pair in $\mathcal{I}(\mathcal{T}_m^n)$ for $1 \leq  i \leq \frac{n-2}{2}.$
We further note that each $\textbf{z}\in (\mathcal{C}^{(s-\kappa_1)})^{\perp_{B_m}}$ can be uniquely written as $\textbf{z}=\sum\limits_{i=1}^{\frac{n-2}{2}}(\alpha_i\textbf{v}_i+\beta_i\textbf{v}_i^{\prime})+\gamma \textbf{1},$ where  $\alpha_i,\beta_i,\gamma \in \mathcal{T}_m$ for $1 \leq i \leq \frac{n-2}{2}.$ Since $\textbf{z}\in (\mathcal{C}^{(s-\kappa_1)})^{\perp_{B_m}},$ we have $B_m(\textbf{z},\textbf{v}_j)=0$ for $1\leq  j \leq \Lambda_{s-\kappa_1},$ which implies that $\beta_j=0$ for $1 \leq j\leq \Lambda_{s-\kappa_1}.$ Thus, each $\textbf{z}\in (\mathcal{C}^{(s-\kappa_1)})^{\perp_{B_m}}$ is of the form  $\textbf{z}=\sum\limits_{i=1}^{\Lambda_{s-\kappa_1}}\alpha_i\textbf{v}_i +\sum\limits_{j=\Lambda_{s-\kappa_1}+1}^{\frac{n-2}{2}}(\alpha_j\textbf{v}_j+\beta_j\textbf{v}_j^{\prime})+\gamma \textbf{1},$ which implies that  $\langle \textbf{v}_1,\textbf{v}_2,\ldots,\textbf{v}_{\Lambda_{s-\kappa_1}},\textbf{z}\rangle=\langle \textbf{v}_1,\textbf{v}_2,\ldots,\textbf{v}_{\Lambda_{s-\kappa_1}}, \sum\limits_{j=\Lambda_{s-\kappa_1}+1}^{\frac{n-2}{2}}(\alpha_j\textbf{v}_j+\beta_j\textbf{v}_j^{\prime})+\gamma \textbf{1}\rangle. $ In view of this, we assume,  without any loss of generality,  that the code $\mathcal{C}^{(s+\theta_e)}$  is  of the following two forms: 
\begin{itemize}
    \item[($\star$)] 
$\langle \textbf{v}_1,\textbf{v}_2,\ldots,\textbf{v}_{\Lambda_{s-\kappa_1}}\rangle \perp \langle \textbf{w}_1,\textbf{w}_2,\ldots, \textbf{w}_{\Lambda_{s+\theta_e}-\Lambda_{s-\kappa_1}} \rangle ,$ where $\textbf{w}_1,\textbf{w}_2,\ldots, \textbf{w}_{\Lambda_{s+\theta_e}-\Lambda_{s-\kappa_1}}$ are mutually orthogonal isotropic vectors in $\mathcal{Y}_0$ that are linearly independent over $\mathcal{T}_m.$  
\item[($\dagger$)] $\langle \textbf{v}_1,\textbf{v}_2,\ldots,\textbf{v}_{\Lambda_{s-\kappa_1}}\rangle \perp \langle \textbf{w}_1,\textbf{w}_2,\ldots, \textbf{w}_{\Lambda_{s+\theta_e}-\Lambda_{s-\kappa_1}-1},\textbf{1}+\textbf{w}_{\Lambda_{s+\theta_e}-\Lambda_{s-\kappa_1}} \rangle ,$ where $\textbf{w}_1,\textbf{w}_2,\ldots,$\\ 
$ \textbf{w}_{\Lambda_{s+\theta_e}-\Lambda_{s-\kappa_1}-1}$ are mutually orthogonal isotropic vectors in $\mathcal{Y}_0,$   $\textbf{w}_{\Lambda_{s+\theta_e}-\Lambda_{s-\kappa_1}}\in \mathcal{Y}_0,$ and the vectors $\textbf{w}_1,\textbf{w}_2,\ldots, \textbf{w}_{\Lambda_{s+\theta_e}-\Lambda_{s-\kappa_1}-1},\textbf{1}+\textbf{w}_{\Lambda_{s+\theta_e}-\Lambda_{s-\kappa_1}}$ are linearly independent over $\mathcal{T}_m.$ \end{itemize}
Here, working as in case (II) in the proof of 
Proposition 3.1 of Sharma and Kaur \cite{Sharma} and by  Exercise 8.1 (ii) of \cite{Taylor},  we observe that the subspaces of the forms ($\star$) and ($\dagger$), and hence the code $\mathcal{C}^{(s+\theta_e)}$ has  precisely 
 \vspace{-1mm}
\begin{equation*}
 \vspace{-1mm} \prod\limits_{i=\Lambda_{s-\kappa_1}}^{\Lambda_{s+\theta_e}-2}\left(\frac{2^{m(n-2i-2)}-1}{2^{m(i+1-\Lambda_{s-\kappa_1})}-1}\right) \vspace{1mm} \left(\frac{2^{m(n-\Lambda_{s+\theta_e}-\Lambda_{s-\kappa_1})}-1}{2^{m(\Lambda_{s+\theta_e}-\Lambda_{s-\kappa_1})}-1}\right)  
\end{equation*}  distinct choices. Furthermore, for a given choice of $\mathcal{C}^{(s-\kappa_1)},$  it is easy to see that the desired chain  $\mathcal{C}^{(s-\kappa_1+1)}\subseteq \mathcal{C}^{(s-\kappa_1+2)}\subseteq  \cdots\subseteq  \ \mathcal{C}^{(s+\theta_e)}$ of codes has precisely
\vspace{-1mm}\begin{equation*}\vspace{-1mm}
\prod\limits_{i=\Lambda_{s-\kappa_1}}^{\Lambda_{s+\theta_e}-2}\left(\frac{2^{m(n-2i-2)}-1}{2^{m(i+1-\Lambda_{s-\kappa_1})}-1}\right) \vspace{1mm} \left(\frac{2^{m(n-\Lambda_{s+\theta_e}-\Lambda_{s-\kappa_1})}-1}{2^{m(\Lambda_{s+\theta_e}-\Lambda_{s-\kappa_1})}-1}\right)  \prod\limits_{j=s-\kappa_1+1}^{s+\theta_e}{\Lambda_j-\Lambda_{s-\kappa_1} \brack \lambda_j}_{2^m}\end{equation*}  distinct choices.

\item[(ii)] Let us suppose that the code $\mathcal{C}^{(s-\kappa_1)}$ is of the form 
 $\langle \textbf{v}_1,\textbf{v}_2,\ldots,\textbf{v}_{\Lambda_{s-\kappa_1}-1},\textbf{1}+\textbf{v}_{\Lambda_{s-\kappa_1}} \rangle$ with
$\textbf{v}_1,\textbf{v}_2,\ldots,$ $ \textbf{v}_{\Lambda_{s-\kappa_1}} \in \mathcal{U}_{m}^{\prime}\setminus\{\textbf{0}\},$ where $\textbf{v}_1,\textbf{v}_2,\ldots,\textbf{v}_{\Lambda_{s-\kappa_1}-1},\textbf{1}+\textbf{v}_{\Lambda_{s-\kappa_1}}$ are  mutually orthogonal singular vectors that are linearly independent over $\mathcal{T}_{m}.$ Using Lemma \ref{LEM} and working as in case (II) in the proof of  Theorem 3.1 of Yadav and Sharma \cite{Galois}, we see that such a code $\mathcal{C}^{(s-\kappa_1)}$ has precisely $\mathfrak{B}_{0}$ distinct choices, where the number  $\mathfrak{B}_{0}$ is given by \eqref{D1}. Given such a choice of $\mathcal{C}^{(s-\kappa_1)}$,  we will  count the choices for the chain 
$\mathcal{C}^{(1)}\subseteq\mathcal{C}^{(2)}\subseteq \cdots \subseteq \mathcal{C}^{(s+\theta_e)}$ of self-orthogonal codes of length $n$ over $\mathcal{T}_m.$   Here,  one can easily see that  there are precisely $\mathfrak{B}_{0}\prod\limits_{j=1}^{s-\kappa_1}{\Lambda_j \brack \lambda_j}_{2^m}$ distinct choices for the chain  
$\mathcal{C}^{(1)}\subseteq \mathcal{C}^{(2)}\subseteq \cdots \subseteq \mathcal{C}^{(s-\kappa_1)}$ of self-orthogonal codes of length $n$ over $\mathcal{T}_m.$

Now, if $\Lambda_{s+\theta_e}=\Lambda_{s-\kappa_1},$ it is easy to see that the chain $\mathcal{C}^{(1)}\subseteq \mathcal{C}^{(2)}\subseteq \cdots \subseteq \mathcal{C}^{(s+\theta_e)}$ 
 has  precisely $\mathfrak{B}_0\prod\limits_{j=1}^{s-\kappa_1}{\Lambda_j \brack \lambda_j}_{2^m}$ distinct choices.
 
Next, if $\Lambda_{s+\theta_e}\neq \Lambda_{s-\kappa_1},$   for a given choice of $ \mathcal{C}^{(s-\kappa_1)},$   we need to count the choices for a self-orthogonal code $\mathcal{C}^{(s+\theta_e)}$ satisfying  $\mathcal{C}^{(s-\kappa_1)}\subseteq 
   \mathcal{C}^{(s+\theta_e)} \subseteq (\mathcal{C}^{(s-\kappa_1)})^{\perp_{B_m}}.$ 
 Here, working as in case (i), we observe that the code $\mathcal{C}^{(s+\theta_e)}$ 
is either of the form $\langle \textbf{v}_1,\textbf{v}_2,\ldots,\textbf{v}_{\Lambda_{s-\kappa_1}-1},\textbf{1}+\textbf{v}_{\Lambda_{s-\kappa_1}}\rangle \perp \langle \textbf{w}_1,\textbf{w}_2,\ldots, \textbf{w}_{\Lambda_{s+\theta_e}-\Lambda_{s-\kappa_1}} \rangle ,$ or of the form  $\langle \textbf{v}_1,\textbf{v}_2,\ldots,\textbf{v}_{\Lambda_{s-\kappa_1}-1},\textbf{1}+\textbf{v}_{\Lambda_{s-\kappa_1}}\rangle \perp \langle \textbf{w}_1,\textbf{w}_2,\ldots, \textbf{w}_{\Lambda_{s+\theta_e}-\Lambda_{s-\kappa_1}-1},\textbf{1} \rangle ,$ where $\textbf{w}_1,\textbf{w}_2,\ldots,$ $ \textbf{w}_{\Lambda_{s+\theta_e}-\Lambda_{s-\kappa_1}}\in \mathcal{Y}_0$ are mutually orthogonal, isotropic and linearly independent vectors  over $\mathcal{T}_m.$  Again, working as in case (II) in the proof of
Proposition 3.1 of Sharma and Kaur \cite{Sharma} and by  Exercise 8.1 (ii) of \cite{Taylor},  we see that such a subspace, and hence the code $\mathcal{C}^{(s+\theta_e)}$ has  precisely \vspace{-2mm}
\begin{equation*}
 \vspace{-2mm} \left(\frac{2^{m(n-2\Lambda_{s+\theta_e})}+2^{m(\Lambda_{s+\theta_e}-\Lambda_{s-\kappa_1})}-2}{2^{m(\Lambda_{s+\theta_e}-\Lambda_{s-\kappa_1})}-1} \right)\prod\limits_{i=\Lambda_{s-\kappa_1}}^{\Lambda_{s+\theta_e}-2}\left(\frac{2^{m(n-2i-2)}-1}{2^{m(i+1-\Lambda_{s-\kappa_1})}-1}\right)  
\end{equation*}  distinct choices. 
Furthermore, for a given choice of $\mathcal{C}^{(s-\kappa_1)},$  it is easy to see that the desired chain  $\mathcal{C}^{(s-\kappa_1+1)}\subseteq \mathcal{C}^{(s-\kappa_1+2)}\subseteq  \cdots\subseteq  \ \mathcal{C}^{(s+\theta_e)}$ of codes has precisely
\vspace{-2mm}\begin{equation*}\vspace{-2mm}
\left(\frac{2^{m(n-2\Lambda_{s+\theta_e})}+2^{m(\Lambda_{s+\theta_e}-\Lambda_{s-\kappa_1})}-2}{2^{m(\Lambda_{s+\theta_e}-\Lambda_{s-\kappa_1})}-1} \right)\prod\limits_{i=\Lambda_{s-\kappa_1}}^{\Lambda_{s+\theta_e}-2}\left(\frac{2^{m(n-2i-2)}-1}{2^{m(i+1-\Lambda_{s-\kappa_1})}-1}\right)   \prod\limits_{j=s-\kappa_1+1}^{s+\theta_e}{\Lambda_j-\Lambda_{s-\kappa_1} \brack \lambda_j}_{2^m}\end{equation*}  distinct choices.
\end{enumerate}   On combining the cases (i) and (ii) above, the desired result follows in the case when $n$ is even. 
\vspace{-2mm}\end{enumerate}
\vspace{-4mm}\end{proof}

We next see, by Theorem 3.2 of Yadav and Sharma \cite{Galois}, that if a doubly even code of length $n$ over $\mathcal{T}_m$ contains $\textbf{1},$ then  $n\equiv 0,4\pmod 8.$ We will now consider the case $n\equiv 0,4 \pmod 8$ and count all distinct choices for the chain $\mathcal{C}^{(1)}\subseteq \mathcal{C}^{(2)} \subseteq \cdots \subseteq \mathcal{C}^{(s+\theta_e)} $ of self-orthogonal codes of length $n$  over $\mathcal{T}_m,$  where (i) $\dim \mathcal{C}^{(i)}=\Lambda_i$ for $1 \leq i \leq s+\theta_e,$ and (ii)  $\mathcal{C}^{(s-\kappa_1)}$ is a  doubly even code satisfying $\textbf{1} \in \mathcal{C}^{(s-\kappa_1)}.$  
Here, the following three cases arise: (\textbf{I}) there exists an integer $\omega$ satisfying $0\leq \omega \leq \kappa_1-\theta_e$ if $2\kappa \leq e,$ while $0 \leq \omega \leq s-\kappa_1-2$ if $2\kappa > e,$ $\textbf{1}\in \mathcal{C}^{(s-\kappa_1-\omega)}$ and $\textbf{1}\notin \mathcal{C}^{(s-\kappa_1-\omega-1)},$  (\textbf{II}) $\textbf{1}\in  \mathcal{C}^{(s-\kappa+\theta_e)}$ and  $2\kappa \leq e$ with  either $n\equiv 0\pmod 8$  or $n\equiv 4 \pmod 8$ and $m$ being even, and (\textbf{III}) $\textbf{1}\in\mathcal{C}^{(1)}$ and $2\kappa>e.$
In the following lemma, we   count the choices for the chain $\mathcal{C}^{(1)}\subseteq \mathcal{C}^{(2)} \subseteq \cdots \subseteq \mathcal{C}^{(s+\theta_e)} $ of self-orthogonal codes of length $n$  over $\mathcal{T}_m$ corresponding to case (\textbf{I}).
\begin{lemma}\label{p5.2}
    Let $e\geq 3$ be an  integer.  
    Let $\omega$ be a fixed integer satisfying  $0 \leq \omega \leq \kappa_1-\theta_e$  if $2\kappa \leq e,$ while $0 \leq \omega \leq s-\kappa_1-2$ if $2\kappa > e.$  Let  $Y_\omega(\lambda_1,\lambda_2,\ldots,\lambda_{s+\theta_e}) $ denote the number of distinct choices  for the chain $\mathcal{C}^{(1)}\subseteq \mathcal{C}^{(2)} \subseteq \cdots \subseteq \mathcal{C}^{(s+\theta_e)} $ of self-orthogonal codes of length $n$  over $\mathcal{T}_m,$  where (i) $\dim \mathcal{C}^{(i)}=\Lambda_i$ for $1 \leq i \leq s+\theta_e,$ (ii) the code $\mathcal{C}^{(s-\kappa_1)}$ is  doubly even, and  (iii)   $\textbf{1} \in \mathcal{C}^{(s-\kappa_1-\omega)}$ and $\textbf{1} \notin \mathcal{C}^{(s-\kappa_1-\omega-1)}.$ We have  $n\equiv 0,4\pmod 8,$  and $Y_\omega(\lambda_1,\lambda_2,\ldots,\lambda_{s+\theta_e})=0$ if $\Lambda_{s-\kappa_1-\omega}=0.$ Further, when $\Lambda_{s-\kappa_1-\omega}\geq 1,$ we have 
\begin{eqnarray*}
Y_\omega(\lambda_1,\lambda_2,\ldots,\lambda_{s+\theta_e})&=& \widehat{\sigma}_m(n;\Lambda_{s-\kappa_1})  (2^m)^{\Lambda_{s-\kappa_1-\omega-1}}   {\Lambda_{s-\kappa_1}-1 \brack \Lambda_{s-\kappa_1}-\Lambda_{s-\kappa_1-\omega}}_{2^m}{\Lambda_{s-\kappa_1-\omega}-1 \brack \Lambda_{s-\kappa_1-\omega-1}}_{2^m}  \prod\limits_{i=1}^{s-\kappa_1-\omega-1}{\Lambda_i \brack \lambda_i}_{2^m}\\ &&\times \hspace{-1mm}\prod\limits_{a=s-\kappa_1-\omega+1}^{s-\kappa_1}{\Lambda_a-\Lambda_{s-\kappa_1-\omega} \brack  \lambda_a}_{2^m}\prod\limits_{b=s-\kappa_1+1}^{s+\theta_e}\hspace{-1mm}{\Lambda_b-\Lambda_{s-\kappa_1} \brack \lambda_b}_{2^m} \prod\limits_{g=\Lambda_{s-\kappa_1}}^{\Lambda_{s+\theta_e}-1}\left( \frac{2^{m(n-2g)}-1}{2^{m(g+1-\Lambda_{s-\kappa_1})}-1}\right)\end{eqnarray*} 
\end{lemma}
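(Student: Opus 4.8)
The plan is to count the desired chains $\mathcal{C}^{(1)}\subseteq \cdots\subseteq \mathcal{C}^{(s+\theta_e)}$ by building them up in three stages, exactly as in the proof of Lemma \ref{p5.1}, but now tracking the position at which the all-one vector $\textbf{1}$ first enters the chain. First I would dispose of the trivial degenerate case: since $\textbf{1}\in \mathcal{C}^{(s-\kappa_1-\omega)}\subseteq \mathcal{C}^{(s-\kappa_1)}$ and $\mathcal{C}^{(s-\kappa_1)}$ is doubly even, Theorem 3.2 of \cite{Galois} (invoked via Lemma \ref{LEM}) forces $n\equiv 0,4\pmod 8$; and if $\Lambda_{s-\kappa_1-\omega}=0$ then $\mathcal{C}^{(s-\kappa_1-\omega)}=\{\mathbf 0\}$ cannot contain $\textbf{1}$, so $Y_\omega=0$. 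This justifies the stated boundary conditions, and from here on I assume $\Lambda_{s-\kappa_1-\omega}\geq 1$.

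The main computation then factors into four independent counts. (1) The number of doubly even codes $\mathcal{C}^{(s-\kappa_1)}$ of dimension $\Lambda_{s-\kappa_1}$ and length $n$ over $\mathcal{T}_m$ containing $\textbf{1}$ is $\widehat{\sigma}_m(n;\Lambda_{s-\kappa_1})$ by Lemma \ref{LEM} and Remark \ref{REMM}. (2) For a fixed such $\mathcal{C}^{(s-\kappa_1)}$, I would count the subchains $\mathcal{C}^{(s-\kappa_1-\omega)}\subseteq \mathcal{C}^{(s-\kappa_1)}$ with $\textbf{1}\in \mathcal{C}^{(s-\kappa_1-\omega)}$: these correspond to $\Lambda_{s-\kappa_1-\omega}$-dimensional subspaces of $\mathcal{C}^{(s-\kappa_1)}$ containing the fixed line $\langle\textbf{1}\rangle$, hence to $(\Lambda_{s-\kappa_1-\omega}-1)$-dimensional subspaces of the quotient $\mathcal{C}^{(s-\kappa_1)}/\langle\textbf{1}\rangle$, of which there are $\qbin{\Lambda_{s-\kappa_1}-1}{\Lambda_{s-\kappa_1}-\Lambda_{s-\kappa_1-\omega}}{2^m}$. (3) Within $\mathcal{C}^{(s-\kappa_1-\omega)}$, the subcode $\mathcal{C}^{(s-\kappa_1-\omega-1)}$ must \emph{not} contain $\textbf{1}$: I would count $(\Lambda_{s-\kappa_1-\omega-1})$-dimensional subspaces of $\mathcal{C}^{(s-\kappa_1-\omega)}$ avoiding $\textbf{1}$. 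The standard count is the total number of such subspaces minus those containing $\textbf{1}$, which rearranges to the closed form $(2^m)^{\Lambda_{s-\kappa_1-\omega-1}}\qbin{\Lambda_{s-\kappa_1-\omega}-1}{\Lambda_{s-\kappa_1-\omega-1}}{2^m}$; I would supply this identity using the Gaussian-binomial recurrence $\qbin{a}{b}{q}=\qbin{a-1}{b-1}{q}+q^{b}\qbin{a-1}{b}{q}$. (4) The remaining ``free'' flags — the bottom chain $\mathcal{C}^{(1)}\subseteq\cdots\subseteq\mathcal{C}^{(s-\kappa_1-\omega-1)}$, the intermediate chains $\mathcal{C}^{(s-\kappa_1-\omega+1)}\subseteq\cdots\subseteq\mathcal{C}^{(s-\kappa_1)}$ and $\mathcal{C}^{(s-\kappa_1+1)}\subseteq\cdots\subseteq\mathcal{C}^{(s+\theta_e)}$ — are counted by products of Gaussian binomials exactly as in Lemma \ref{p5.1}, recalling that a flag of subspaces of prescribed dimensions inside a fixed space of dimension $d$ (resp. with a fixed subspace of dimension $d_0$ quotiented out) contributes $\prod\qbin{\Lambda_i}{\lambda_i}{2^m}$ (resp. $\prod\qbin{\Lambda_a-\Lambda_{s-\kappa_1-\omega}}{\lambda_a}{2^m}$, $\prod\qbin{\Lambda_b-\Lambda_{s-\kappa_1}}{\lambda_b}{2^m}$).

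Finally, to obtain the factor $\prod_{g=\Lambda_{s-\kappa_1}}^{\Lambda_{s+\theta_e}-1}\left(\frac{2^{m(n-2g)}-1}{2^{m(g+1-\Lambda_{s-\kappa_1})}-1}\right)$, I would count the self-orthogonal overcodes $\mathcal{C}^{(s+\theta_e)}$ with $\mathcal{C}^{(s-\kappa_1)}\subseteq \mathcal{C}^{(s+\theta_e)}\subseteq (\mathcal{C}^{(s-\kappa_1)})^{\perp_{B_m}}$. Since $\textbf{1}\in \mathcal{C}^{(s-\kappa_1)}$ here, the quotient $(\mathcal{C}^{(s-\kappa_1)})^{\perp_{B_m}}/\mathcal{C}^{(s-\kappa_1)}$ inherits a \emph{non-degenerate} symplectic form (the radical $\langle\textbf{1}\rangle$ has been absorbed into $\mathcal{C}^{(s-\kappa_1)}$), of dimension $n-2\Lambda_{s-\kappa_1}$; the desired overcodes biject with $(\Lambda_{s+\theta_e}-\Lambda_{s-\kappa_1})$-dimensional totally isotropic subspaces of this symplectic space, which by Exercise 8.1(ii) of \cite{Taylor} number exactly the stated product. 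Multiplying the four factors from the previous paragraph with $\widehat{\sigma}_m(n;\Lambda_{s-\kappa_1})$ and this symplectic count, and collecting the Gaussian-binomial flag factors, yields the asserted formula. The step I expect to be the main obstacle is keeping the ``first-entry'' bookkeeping consistent — in particular verifying that every chain with $\textbf{1}$ first appearing at level $s-\kappa_1-\omega$ is counted exactly once and with the correct avoid/contain constraints at levels $s-\kappa_1-\omega-1$ and $s-\kappa_1-\omega$ — and making sure the symplectic-space argument in stage (4) uses the non-degeneracy afforded precisely by the hypothesis $\textbf{1}\in\mathcal{C}^{(s-\kappa_1)}$, in contrast to the $\textbf{1}\notin\mathcal{C}^{(s-\kappa_1)}$ analysis of Lemma \ref{p5.1}.
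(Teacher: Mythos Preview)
Your proposal is correct and follows essentially the same factorization strategy as the paper: reduce to independent counts for (a) the doubly even anchor code, (b) the flag pieces above, between, and below the two ``marked'' levels $s-\kappa_1-\omega$ and $s-\kappa_1$, and (c) the totally isotropic overcodes in the symplectic quotient. The only real difference is the \emph{order} of enumeration: the paper first fixes the smaller doubly even code $\mathcal{C}^{(s-\kappa_1-\omega)}$ (counted by $\widehat{\sigma}_m(n;\Lambda_{s-\kappa_1-\omega})$), then extends it to $\mathcal{C}^{(s-\kappa_1)}$ via Theorem~3.1 of \cite{Galois}, and finally invokes the identity
\[
\widehat{\sigma}_m(n;\Lambda_{s-\kappa_1})\,{\Lambda_{s-\kappa_1}-1 \brack \Lambda_{s-\kappa_1}-\Lambda_{s-\kappa_1-\omega}}_{2^m}
=\widehat{\sigma}_m(n;\Lambda_{s-\kappa_1-\omega})\prod\limits_{i=\Lambda_{s-\kappa_1-\omega}}^{\Lambda_{s-\kappa_1}-1}\frac{2^{m(n-2i-1)}+\epsilon_1 2^{m(\frac{n}{2}-i)}-\epsilon_1 2^{m(\frac{n}{2}-i-1)}-1}{2^{m(i-\Lambda_{s-\kappa_1-\omega}+1)}-1}
\]
to rewrite the answer. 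Your top-down version, choosing $\mathcal{C}^{(s-\kappa_1)}$ first and then the subspace $\mathcal{C}^{(s-\kappa_1-\omega)}\ni\textbf{1}$ via the quotient $\mathcal{C}^{(s-\kappa_1)}/\langle\textbf{1}\rangle$, lands directly on the stated formula and bypasses that identity altogether; it is slightly cleaner but otherwise equivalent.
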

 \begin{proof} To prove the result, we first note that $\textbf{1} \in \mathcal{C}^{(s-\kappa_1-\omega)},$ which, by Theorem 3.2 of Yadav and Sharma \cite{Galois}, implies that   $n\equiv 0,4\pmod 8.$
    We further observe, by Remark \ref{REMM}, that the chain  $\mathcal{C}^{(1)}\subseteq  \mathcal{C}^{(2)}\subseteq \cdots\subseteq \mathcal{C}^{(s-\kappa_1-\omega)}$ of self-orthogonal codes of length $n$ over $\mathcal{T}_m$ has precisely \vspace{-1mm}\begin{equation*}
  \vspace{-1mm}  \widehat{\sigma}_m(n; \Lambda_{s-\kappa_1-\omega}) \left ( {\Lambda_{s-\kappa_1-\omega} \brack \Lambda_{s-\kappa_1-\omega-1}}_{2^m} -{\Lambda_{s-\kappa_1-\omega}-1 \brack \Lambda_{s-\kappa_1-\omega-1}-1}_{2^m}\right)\prod\limits_{i=1}^{s-\kappa_1-\omega-1} {\Lambda_i \brack \lambda_i}_{2^m}\end{equation*}
distinct choices. Next, by the Pascal's Identity for Gaussian binomial coefficients, we note that  \vspace{-1mm}\begin{equation*}\vspace{-1mm}
    {\Lambda_{s-\kappa_1-\omega} \brack \Lambda_{s-\kappa_1-\omega-1}}_{2^m} -{\Lambda_{s-\kappa_1-\omega}-1 \brack \Lambda_{s-\kappa_1-\omega-1}-1}_{2^m}= (2^m)^{\Lambda_{s-\kappa_1-\omega-1}}{\Lambda_{s-\kappa_1-\omega}-1 \brack \Lambda_{s-\kappa_1-\omega-1}}_{2^m}.\end{equation*}  This implies that the chain  $\mathcal{C}^{(1)}\subseteq  \mathcal{C}^{(2)}\subseteq \cdots\subseteq \mathcal{C}^{(s-\kappa_1-\omega)}$ of codes has precisely \vspace{-1mm}\begin{equation*}
  \vspace{-1mm}  \widehat{\sigma}_m(n; \Lambda_{s-\kappa_1-\omega}) (2^m)^{\Lambda_{s-\kappa_1-\omega-1}}{\Lambda_{s-\kappa_1-\omega}-1 \brack \Lambda_{s-\kappa_1-\omega-1}}_{2^m}\prod\limits_{i=1}^{s-\kappa_1-\omega-1} {\Lambda_i \brack \lambda_i}_{2^m}\end{equation*}
distinct choices.
Now, for a given choice of a doubly even code $ \mathcal{C}^{(s-\kappa_1-\omega)}$ containing $\textbf{1},$  we need to count the choices for a doubly even code $\mathcal{C}^{(s-\kappa_1)}$ satisfying $ \mathcal{C}^{(s-\kappa_1-\omega)} \subseteq \mathcal{C}^{(s-\kappa_1)}\subseteq  (\mathcal{C}^{(s-\kappa_1-\omega)})^{\perp_{B_m}}.$ 
Towards this, we see, by the proof of Theorem 3.2  of Yadav and Sharma \cite{Galois},   that $\mathcal{C}^{(s-\kappa_1-\omega)}=\langle \textbf{v}_1,\textbf{v}_2,\ldots,\textbf{v}_{\Lambda_{s-\kappa_1-\omega}-1},\textbf{1}\rangle,$ where $\textbf{v}_1,\textbf{v}_2,\ldots,\textbf{v}_{\Lambda_{s-\kappa_1-\omega}-1} $  are mutually orthogonal singular vectors in $\mathcal{I}(\mathcal{T}_m^n)$ that are linearly independent over $\mathcal{T}_{m}.$
 Now, working as in case (II) in the proof of Lemma \ref{p5.1} and  by applying   Theorem  3.1 of Yadav and Sharma \cite{Galois}, we see that for a given choice of $\mathcal{C}^{(s-\kappa_1-\omega)},$    the chain  $\mathcal{C}^{(s-\kappa_1-\omega+1)}\subseteq \mathcal{C}^{(s-\kappa_1-\omega+2)}\subseteq \cdots\subseteq  \mathcal{C}^{(s-\kappa_1)}$ of codes has precisely 
\vspace{-1mm}\begin{equation*}\vspace{-1mm}
 \prod\limits_{i=\Lambda_{s-\kappa_1-\omega}}^{\Lambda_{s-\kappa_1}-1} \Big(  \frac{2^{m(n-2i-1)}+\epsilon_12^{m(\frac{n}{2}-i)} -\epsilon_1 2^{m(\frac{n}{2}-i-1)}-1 }{2^{m(i-\Lambda_{s-\kappa_1-\omega}+1)}-1}\Big)  \prod\limits_{t=s-\kappa_1-\omega+1}^{s-\kappa_1} {\Lambda_{t}-\Lambda_{s-\kappa_1-\omega} \brack \lambda_t}_{2^m}
 \end{equation*} 
distinct choices, where $\epsilon_1=1 $ if either $n\equiv 0\pmod8$ or $n\equiv 4\pmod 8$ and $m$ is even, while $\epsilon_1=-1$ if $n\equiv 4\pmod 8$ and $m$ is odd.
Furthermore, one can easily  observe that   \vspace{-1mm}\begin{equation*}
   \vspace{-1mm} \widehat{\sigma}_m(n;\Lambda_{s-\kappa_1}){\Lambda_{s-\kappa_1}-1 \brack \Lambda_{s-\kappa_1}-\Lambda_{s-\kappa_1-\omega}}_{2^m} \hspace{-1mm}=\widehat{\sigma}_m(n; \Lambda_{s-\kappa_1-\omega}) \hspace{-1.5mm}\prod\limits_{i=\Lambda_{s-\kappa_1-\omega}}^{\Lambda_{s-\kappa_1}-1} \hspace{-1.5mm}\Big(  \frac{2^{m(n-2i-1)}+\epsilon_1 2^{m(\frac{n}{2}-i)} - \epsilon_12^{m(\frac{n}{2}-i-1)}-1 }{2^{m(i-\Lambda_{s-\kappa_1-\omega}+1)}-1}\Big).\end{equation*}   Finally, working as in case (II) in the proof of Lemma \ref{p5.1},  we see that  for a given choice of $\mathcal{C}^{(s-\kappa_1)},$  the chain  $\mathcal{C}^{(s-\kappa_1+1)}\subseteq  \mathcal{C}^{(s-\kappa_1+2)}\subseteq \cdots\subseteq  \mathcal{C}^{(s+\theta_e)}$  of  codes has precisely 
\begin{equation*}
     \prod\limits_{j=s-\kappa_1+1}^{s+\theta_e}\hspace{-1mm}{\Lambda_j-\Lambda_{s-\kappa_1} \brack \lambda_j}_{2^m}\prod\limits_{i=\Lambda_{s-\kappa_1}}^{\Lambda_{s+\theta_e}-1}\hspace{-1mm}\Big(\frac{2^{m(n-2i)}-1}{2^{m(i+1-\Lambda_{s-\kappa_1})}-1}\Big) \end{equation*} distinct choices.  From this, the desired result follows immediately. 
\vspace{-1mm}  \end{proof}

 In the following lemma, we consider the case $2\kappa \leq e$ and  count the choices for the chain $\mathcal{C}^{(1)}\subseteq \mathcal{C}^{(2)} \subseteq \cdots \subseteq \mathcal{C}^{(s+\theta_e)} $ of self-orthogonal codes of length $n$  over $\mathcal{T}_m, $ such that  (i) $\dim \mathcal{C}^{(i)}=\Lambda_i$ for $1 \leq i \leq s+\theta_e,$ (ii) the code $\mathcal{C}^{(s-\kappa_1)}$ is  doubly even, and  (iii) $\textbf{1}\in  \mathcal{C}^{(s-\kappa+\theta_e)}$ with either $n\equiv 0\pmod 8$  or $n\equiv 4 \pmod 8$ and $m$ being even. 
\begin{lemma}\label{p5.3}
    Let $e\geq 3$ be an  integer satisfying $2\kappa \leq e.$  
    Let  $M(\lambda_1,\lambda_2,\ldots,\lambda_{s+\theta}) $ denote the number of  distinct choices  for the chain $\mathcal{C}^{(1)}\subseteq \mathcal{C}^{(2)} \subseteq \cdots \subseteq \mathcal{C}^{(s+\theta_e)} $ of self-orthogonal codes of length $n$  over $\mathcal{T}_m,$  such that (i) $\dim \mathcal{C}^{(i)}=\Lambda_i$ for $1 \leq i \leq s+\theta_e,$ (ii) the code $\mathcal{C}^{(s-\kappa_1)}$ is  doubly even, and  (iii)  $\textbf{1}\in  \mathcal{C}^{(s-\kappa+\theta_e)}$ with either $n\equiv 0\pmod 8$  or $n\equiv 4 \pmod 8$ and $m$  being even. We have $M(\lambda_1,\lambda_2,\ldots,\lambda_{s+\theta_e})=0$ if $\Lambda_{s-\kappa+\theta_e}=0.$ Further, when $\Lambda_{s-\kappa+\theta_e}\geq 1,$ we have 
\begin{eqnarray*}
M(\lambda_1,\lambda_2,\ldots,\lambda_{s+\theta_e})&=& \widehat{\sigma}_m(n;\Lambda_{s-\kappa_1})\prod\limits_{g=\Lambda_{s-\kappa_1}}^{\Lambda_{s+\theta_e}-1}\left( \frac{2^{m(n-2g)}-1}{2^{m(g+1-\Lambda_{s-\kappa_1})}-1}\right)  {\Lambda_{s-\kappa_1}-1 \brack \Lambda_{s-\kappa_1}-\Lambda_{s-\kappa+\theta_e}}_{2^m} \\&&\times \prod\limits_{\ell=1}^{s-\kappa+\theta_e}{\Lambda_{\ell} \brack \lambda_{\ell}}_{2^m} \prod\limits_{b=s-\kappa+1+\theta_e}^{s-\kappa_1}{\Lambda_b-\Lambda_{s-\kappa+\theta_e} \brack \lambda_b}_{2^m}       \prod\limits_{d=s-\kappa_1+1}^{s+\theta_e}\hspace{-1mm}{\Lambda_d-\Lambda_{s-\kappa_1} \brack \lambda_d}_{2^m}. \end{eqnarray*} 
\end{lemma}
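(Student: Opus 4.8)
The plan is to mirror the layered counting argument in the proof of Lemma \ref{p5.2}, anchoring the count at the doubly even code $\mathcal{C}^{(s-\kappa_1)}$ and peeling off the chain below and above it. First I would dispose of the degenerate case: if $\Lambda_{s-\kappa+\theta_e}=0$, then $\mathcal{C}^{(s-\kappa+\theta_e)}=\{\mathbf 0\}$, which cannot contain $\mathbf 1$, so $M(\lambda_1,\ldots,\lambda_{s+\theta_e})=0$ (this also covers the situation $s-\kappa+\theta_e=0$). So assume $\Lambda_{s-\kappa+\theta_e}\geq 1$. Since $\kappa\geq 3$ is odd, $\kappa_1=\frac{\kappa-1}{2}$ gives $s-\kappa+\theta_e\leq s-\kappa_1-1<s-\kappa_1$, so $\mathcal{C}^{(s-\kappa+\theta_e)}\subseteq\mathcal{C}^{(s-\kappa_1)}$ and hence $\mathbf 1\in\mathcal{C}^{(s-\kappa_1)}$; this is consistent with the standing hypothesis (iii) (which already forces $n\equiv 0,4\pmod 8$, in accordance with Theorem 3.2 of Yadav and Sharma \cite{Galois}).

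Next I would count the admissible chains in six multiplicative stages, each conditioned on the codes chosen at earlier stages. (1) Choose $\mathcal{C}^{(s-\kappa_1)}$: it is a doubly even code of length $n$ and dimension $\Lambda_{s-\kappa_1}$ over $\mathcal{T}_m$ containing $\mathbf 1$, so by Remark \ref{REMM} there are $\widehat{\sigma}_m(n;\Lambda_{s-\kappa_1})$ choices. (2) Given $\mathcal{C}^{(s-\kappa_1)}$, choose $\mathcal{C}^{(s-\kappa+\theta_e)}$: it is a $\Lambda_{s-\kappa+\theta_e}$-dimensional subspace of $\mathcal{C}^{(s-\kappa_1)}$ containing the nonzero vector $\mathbf 1$, giving ${\Lambda_{s-\kappa_1}-1 \brack \Lambda_{s-\kappa+\theta_e}-1}_{2^m}={\Lambda_{s-\kappa_1}-1 \brack \Lambda_{s-\kappa_1}-\Lambda_{s-\kappa+\theta_e}}_{2^m}$ choices, and it is automatically doubly even (and self-orthogonal) as a subcode of a doubly even code. (3) Given $\mathcal{C}^{(s-\kappa+\theta_e)}$, choose the flag $\mathcal{C}^{(1)}\subseteq\cdots\subseteq\mathcal{C}^{(s-\kappa+\theta_e-1)}$ inside it with $\dim\mathcal{C}^{(i)}=\Lambda_i$: the standard Gaussian-binomial count for flags with prescribed dimensions gives $\prod_{\ell=1}^{s-\kappa+\theta_e}{\Lambda_\ell \brack \lambda_\ell}_{2^m}$ (the $\ell=1$ factor is $1$), all these codes being doubly even as subcodes of $\mathcal{C}^{(s-\kappa_1)}$. (4) Given $\mathcal{C}^{(s-\kappa+\theta_e)}\subseteq\mathcal{C}^{(s-\kappa_1)}$, choose the intermediate flag $\mathcal{C}^{(s-\kappa+1+\theta_e)}\subseteq\cdots\subseteq\mathcal{C}^{(s-\kappa_1-1)}$: counting flags in the quotient $\mathcal{C}^{(s-\kappa_1)}/\mathcal{C}^{(s-\kappa+\theta_e)}$ gives $\prod_{b=s-\kappa+1+\theta_e}^{s-\kappa_1}{\Lambda_b-\Lambda_{s-\kappa+\theta_e} \brack \lambda_b}_{2^m}$. (5) Given $\mathcal{C}^{(s-\kappa_1)}$, choose the top code $\mathcal{C}^{(s+\theta_e)}$: it must be a self-orthogonal code of dimension $\Lambda_{s+\theta_e}$ containing $\mathcal{C}^{(s-\kappa_1)}$, and I claim there are $\prod_{g=\Lambda_{s-\kappa_1}}^{\Lambda_{s+\theta_e}-1}\frac{2^{m(n-2g)}-1}{2^{m(g+1-\Lambda_{s-\kappa_1})}-1}$ such codes (proved below). (6) Given $\mathcal{C}^{(s-\kappa_1)}\subseteq\mathcal{C}^{(s+\theta_e)}$, choose the remaining flag $\mathcal{C}^{(s-\kappa_1+1)}\subseteq\cdots\subseteq\mathcal{C}^{(s+\theta_e-1)}$: counting flags in $\mathcal{C}^{(s+\theta_e)}/\mathcal{C}^{(s-\kappa_1)}$ gives $\prod_{d=s-\kappa_1+1}^{s+\theta_e}{\Lambda_d-\Lambda_{s-\kappa_1} \brack \lambda_d}_{2^m}$. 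For each intermediate level $s-\kappa_1<i<s+\theta_e$, self-orthogonality of $\mathcal{C}^{(i)}$ is automatic since $\mathcal{C}^{(i)}\subseteq\mathcal{C}^{(s+\theta_e)}\subseteq(\mathcal{C}^{(s+\theta_e)})^{\perp_{B_m}}\subseteq(\mathcal{C}^{(i)})^{\perp_{B_m}}$, so conditions (i)--(iii) are exactly the constraints imposed. Multiplying the six factors and reorganising the Gaussian binomials by means of ${a \brack b}_{q}={a \brack a-b}_{q}$ produces the stated formula.

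The main obstacle is the geometric count in stage (5), which I would carry out as in the proof of Theorem 3.1 of Yadav and Sharma \cite{Galois}. The key observation is that since $\mathcal{C}^{(s-\kappa_1)}$ is self-orthogonal with $\mathbf 1\in\mathcal{C}^{(s-\kappa_1)}$, every $\mathbf a\in(\mathcal{C}^{(s-\kappa_1)})^{\perp_{B_m}}$ satisfies $B_m(\mathbf a,\mathbf 1)=0$, i.e. $\mathbf a\in\mathcal{I}(\mathcal{T}_m^n)$; hence $(\mathcal{C}^{(s-\kappa_1)})^{\perp_{B_m}}\subseteq\mathcal{I}(\mathcal{T}_m^n)$, so $B_m$ restricted to $(\mathcal{C}^{(s-\kappa_1)})^{\perp_{B_m}}$ is alternating and the induced non-degenerate form on the quotient $(\mathcal{C}^{(s-\kappa_1)})^{\perp_{B_m}}/\mathcal{C}^{(s-\kappa_1)}$ is symplectic of dimension $n-2\Lambda_{s-\kappa_1}$. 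Under this correspondence, the self-orthogonal codes $\mathcal{C}^{(s+\theta_e)}$ of dimension $\Lambda_{s+\theta_e}$ containing $\mathcal{C}^{(s-\kappa_1)}$ are in bijection with the totally isotropic subspaces of dimension $\Lambda_{s+\theta_e}-\Lambda_{s-\kappa_1}$ of this symplectic space, which by Exercise 8.1(ii) of \cite{Taylor} number $\prod_{i=0}^{\Lambda_{s+\theta_e}-\Lambda_{s-\kappa_1}-1}\frac{2^{m(n-2\Lambda_{s-\kappa_1}-2i)}-1}{2^{m(i+1)}-1}$; reindexing by $g=\Lambda_{s-\kappa_1}+i$ gives the product claimed in stage (5). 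I would stress that it is precisely the hypothesis $\mathbf 1\in\mathcal{C}^{(s-\kappa_1)}$ (rather than $\mathbf 1\notin\mathcal{C}^{(s-\kappa_1)}$, the situation of Lemma \ref{p5.1}) that collapses the quotient geometry to this clean symplectic case and yields the exponent $n-2g$; the parity/$m$-parity split seen in Lemma \ref{p5.1} does not arise here because $\mathbf 1$ is absorbed into $\mathcal{C}^{(s-\kappa_1)}$, hence into the radical of $B_m|_{\mathcal{I}(\mathcal{T}_m^n)}$. A final routine check is that double-evenness of the lower codes $\mathcal{C}^{(1)},\ldots,\mathcal{C}^{(s-\kappa_1)}$ is automatic by inheritance under passage to subcodes, so that the six-stage decomposition is both exhaustive and non-redundant.
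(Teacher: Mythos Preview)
Your proposal is correct and follows the same layered counting strategy as the paper, which simply defers to the proof of Lemma \ref{p5.2}. Your choice to anchor at $\mathcal{C}^{(s-\kappa_1)}$ first (rather than at $\mathcal{C}^{(s-\kappa+\theta_e)}$, as a literal transcription of Lemma \ref{p5.2} would do) is a mild but sensible reorganization: since condition (iii) here carries no ``$\mathbf 1\notin\mathcal{C}^{(\cdot)}$'' exclusion at the next level down, you bypass the Pascal-identity subtraction needed in Lemma \ref{p5.2} and land directly on the stated formula, with the symplectic count in stage (5) matching the paper's exactly.
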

 \begin{proof}
Working as in Lemma \ref{p5.2}, the desired result follows immediately.
\end{proof}

 In the following lemma, we consider the case $2\kappa > e$ and count the choices for the chain $\mathcal{C}^{(1)}\subseteq \mathcal{C}^{(2)} \subseteq \cdots \subseteq \mathcal{C}^{(s+\theta_e)} $ of self-orthogonal codes of length $n$  over $\mathcal{T}_m, $ such that  (i) $\dim \mathcal{C}^{(i)}=\Lambda_i$ for $1 \leq i \leq s+\theta_e,$ (ii) the code $\mathcal{C}^{(s-\kappa_1)}$ is  doubly even, and  (iii) $\textbf{1}\in  \mathcal{C}^{(1)}.$
\begin{lemma}\label{p5.4}
    Let $e\geq 3$ be an  integer satisfying $2\kappa > e.$  
    Let  $Z(\lambda_1,\lambda_2,\ldots,\lambda_{s+\theta}) $ denote the number of distinct choices  for the chain $\mathcal{C}^{(1)}\subseteq \mathcal{C}^{(2)} \subseteq \cdots \subseteq \mathcal{C}^{(s+\theta_e)} $ of self-orthogonal codes of length $n$  over $\mathcal{T}_m,$  such that (i) $\dim \mathcal{C}^{(i)}=\Lambda_i$ for $1 \leq i \leq s+\theta_e$ (ii) the code $\mathcal{C}^{(s-\kappa_1)}$ is  doubly even, and  (iii)  $\textbf{1}\in  \mathcal{C}^{(1)}.$ We have   $n\equiv 0,4\pmod 8,$ and  $Z(\lambda_1,\lambda_2,\ldots,\lambda_{s+\theta_e})=0$ if $\Lambda_1=0.$ Further, when $\Lambda_1\geq 1,$ we have  
\begin{eqnarray*}
   Z(\lambda_1,\lambda_2,\ldots,\lambda_{s+\theta_e})&=&  \widehat{\sigma}_m(n;\Lambda_{s-\kappa_1}) {\Lambda_{s-\kappa_1}-1 \brack \Lambda_{s-\kappa_1}-\Lambda_{1}}_{2^m} \prod\limits_{g=\Lambda_{s-\kappa_1}}^{\Lambda_{s+\theta_e}-1}\left( \frac{2^{m(n-2g)}-1}{2^{m(g+1-\Lambda_{s-\kappa_1})}-1}\right)    \\
  && \times \prod\limits_{d=2}^{s-\kappa_1}{\Lambda_d-\Lambda_{1} \brack \lambda_d}_{2^m} \prod\limits_{b=s-\kappa_1+1}^{s+\theta_e}\hspace{-1mm}{\Lambda_b-\Lambda_{s-\kappa_1} \brack \lambda_b}_{2^m}  . \end{eqnarray*} 
\end{lemma}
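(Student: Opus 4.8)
The plan is to mirror the structure of Lemma \ref{p5.2}, but now with the condition $\textbf{1}\in\mathcal{C}^{(1)}$ imposed at the very bottom of the chain, which is exactly the extreme case $s-\kappa_1-\omega=1$ of that argument. First I would record that $\textbf{1}\in\mathcal{C}^{(1)}$ forces $\mathcal{C}^{(1)}$ to be a nonzero doubly even code containing $\textbf{1}$, so by Theorem 3.2 of Yadav and Sharma \cite{Galois} we have $n\equiv 0,4\pmod 8$; and if $\Lambda_1=0$ then no such code exists, giving $Z(\lambda_1,\ldots,\lambda_{s+\theta_e})=0$. For the remainder assume $\Lambda_1\geq 1$.

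Next I would decompose the count of the chain into three telescoping stages, just as in Lemma \ref{p5.2}. \emph{Stage one:} count the doubly even codes $\mathcal{C}^{(1)}$ of dimension $\Lambda_1$ containing $\textbf{1}$ — by Remark \ref{REMM} this number is $\widehat{\sigma}_m(n;\Lambda_1)$ (no $\prod_{i=1}^{\,\cdot}{\Lambda_i\brack\lambda_i}$ factor is needed here since the chain starts at level $1$). \emph{Stage two:} for a fixed such $\mathcal{C}^{(1)}$, count the doubly even codes $\mathcal{C}^{(s-\kappa_1)}$ with $\mathcal{C}^{(1)}\subseteq\mathcal{C}^{(s-\kappa_1)}\subseteq(\mathcal{C}^{(1)})^{\perp_{B_m}}$, together with the intermediate chain $\mathcal{C}^{(2)}\subseteq\cdots\subseteq\mathcal{C}^{(s-\kappa_1)}$; here I would invoke the description $\mathcal{C}^{(1)}=\langle\textbf{v}_1,\ldots,\textbf{v}_{\Lambda_1-1},\textbf{1}\rangle$ from the proof of Theorem 3.2 of \cite{Galois} and then, exactly as in case (II) of Lemma \ref{p5.1} and via Theorem 3.1 of \cite{Galois}, obtain the product $\prod_{i=\Lambda_1}^{\Lambda_{s-\kappa_1}-1}\big((2^{m(n-2i-1)}+\epsilon_1 2^{m(\frac n2-i)}-\epsilon_1 2^{m(\frac n2-i-1)}-1)/(2^{m(i-\Lambda_1+1)}-1)\big)\prod_{t=2}^{s-\kappa_1}{\Lambda_t-\Lambda_1\brack\lambda_t}_{2^m}$, where $\epsilon_1=1$ if $n\equiv 0\pmod 8$ or ($n\equiv 4\pmod 8$ with $m$ even) and $\epsilon_1=-1$ if $n\equiv 4\pmod 8$ with $m$ odd. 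I would then use the identity $\widehat{\sigma}_m(n;\Lambda_{s-\kappa_1}){\Lambda_{s-\kappa_1}-1\brack\Lambda_{s-\kappa_1}-\Lambda_1}_{2^m}=\widehat{\sigma}_m(n;\Lambda_1)\prod_{i=\Lambda_1}^{\Lambda_{s-\kappa_1}-1}\big((2^{m(n-2i-1)}+\epsilon_1 2^{m(\frac n2-i)}-\epsilon_1 2^{m(\frac n2-i-1)}-1)/(2^{m(i-\Lambda_1+1)}-1)\big)$ (the same identity used at the end of Lemma \ref{p5.2}) to rewrite the product of stages one and two in the closed form $\widehat{\sigma}_m(n;\Lambda_{s-\kappa_1}){\Lambda_{s-\kappa_1}-1\brack\Lambda_{s-\kappa_1}-\Lambda_1}_{2^m}\prod_{d=2}^{s-\kappa_1}{\Lambda_d-\Lambda_1\brack\lambda_d}_{2^m}$. \emph{Stage three:} for a fixed $\mathcal{C}^{(s-\kappa_1)}$, count the chain $\mathcal{C}^{(s-\kappa_1+1)}\subseteq\cdots\subseteq\mathcal{C}^{(s+\theta_e)}$ of self-orthogonal codes above it, which by case (II) of Lemma \ref{p5.1} contributes $\prod_{g=\Lambda_{s-\kappa_1}}^{\Lambda_{s+\theta_e}-1}\big((2^{m(n-2g)}-1)/(2^{m(g+1-\Lambda_{s-\kappa_1})}-1)\big)\prod_{b=s-\kappa_1+1}^{s+\theta_e}{\Lambda_b-\Lambda_{s-\kappa_1}\brack\lambda_b}_{2^m}$. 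Multiplying the three stages yields the claimed formula for $Z(\lambda_1,\ldots,\lambda_{s+\theta_e})$.

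The main obstacle I anticipate is purely bookkeeping: verifying that the Gaussian-binomial refinement factors ${\Lambda_i\brack\lambda_i}_{2^m}$ attach to the correct index ranges once the base level of the chain is $1$ rather than $s-\kappa_1-\omega$, and checking that the $\epsilon_1$-identity is applied legitimately (it requires $\mathcal{C}^{(1)}$, hence $\mathcal{C}^{(s-\kappa_1)}$, to contain $\textbf{1}$, which holds here). I would also double-check the degenerate subcase $\Lambda_{s+\theta_e}=\Lambda_{s-\kappa_1}$, where the stage-three product is empty and the formula collapses correctly, so no separate case split is really needed beyond $\Lambda_1=0$ versus $\Lambda_1\geq 1$. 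Since every nontrivial combinatorial input — the symplectic/orthogonal decomposition of $\mathcal{I}(\mathcal{T}_m^n)$, the count of totally isotropic subspaces via Exercise 8.1(ii) of \cite{Taylor}, and the doubly-even enumeration of Theorems 3.1–3.2 of \cite{Galois} — is already available, the proof reduces to assembling these pieces and can be dispatched with the single sentence ``Working as in Lemma \ref{p5.2}, the desired result follows immediately,'' possibly after inserting the short observation that $\Lambda_1=0$ gives $Z=0$ and that $n\equiv 0,4\pmod 8$.
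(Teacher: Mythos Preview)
Your proposal is correct and takes essentially the same approach as the paper, which simply says ``Working as in Lemma \ref{p5.2}, the desired result follows immediately.'' Your explicit unpacking of the three stages, the use of the identity relating $\widehat{\sigma}_m(n;\Lambda_{s-\kappa_1})$ to $\widehat{\sigma}_m(n;\Lambda_1)$, and the observation that this is the boundary case $s-\kappa_1-\omega=1$ of the Lemma \ref{p5.2} argument (where the factors $(2^m)^{\Lambda_0}$, ${\Lambda_1-1\brack\Lambda_0}_{2^m}$, and $\prod_{i=1}^{0}$ all collapse to $1$) are exactly what the paper's one-line proof is gesturing at.
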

 \begin{proof}
Working as in Lemma \ref{p5.2}, the desired result follows immediately.
\end{proof}

In the following theorem, we provide an explicit enumeration formula for the number $\mathfrak{S}_e(n;\lambda_1,\lambda_2,\ldots,\lambda_e).$ 
\begin{theorem}\label{t4.1Kodd} 
Let $e\geq 3$ be an  integer.  Let $n$ be a positive integer, and   let $\lambda_1, \lambda_2, \ldots,  \lambda_{e+1}$ be non-negative integers satisfying $n=\lambda_1+\lambda_2+\cdots+\lambda_{e+1}$ and  $2\lambda_1+2\lambda_2+\cdots+2\lambda_{e-j+1}+\lambda_{e-j+2}+\cdots+\lambda_j \leq n$ for $s+1\leq j\leq e.$ Let $\Lambda_0=0$ and   $\Lambda_i=\lambda_1+\lambda_2+\cdots+\lambda_{i}$ for $1 \leq i \leq e+1.$ We have  
\begin{eqnarray*} \vspace{-1mm} \mathfrak{S}_e(n;\lambda_1,\lambda_2,\ldots,\lambda_e)=\displaystyle
	 \displaystyle  (2^m)^{\sum\limits_{i=1}^{s}\Lambda_{i}(n-\Lambda_{i+1})+\sum\limits_{j=1}^{s-1+\theta_e}\Lambda_{s+j}(n-\Lambda_{s+j+1}-\Lambda_{s-j+\theta_e})-\sum\limits_{a=1}^{s-\kappa_1-1}\Lambda_a-(1-\theta_e)\frac{\Lambda_s(\Lambda_s-1)}{2}} \\
	 \displaystyle  \times \mathfrak{B}_{\theta_e}(\lambda_1,\lambda_2,\ldots,\lambda_{s+\theta_e}) \prod\limits_{\ell=s+1+\theta_e}^{e}{\lambda_{\ell}+n-\Lambda_{\ell}-\Lambda_{e+1-\ell}\brack \lambda_{\ell}}_{2^m}, ~~~~~~~~~~~~~~~ \end{eqnarray*} where 
 \vspace{-1mm} 
\small{\begin{equation*}\mathfrak{B}_{\theta_e}(\lambda_1,\lambda_2,\ldots, \lambda_{s+\theta_e})= \left\{\begin{array}{ll}
		\displaystyle N(\lambda_1,\lambda_2,\ldots,\lambda_{s+\theta_e})	   & \text{if  } n\equiv 1,2,3,5,6,7 ~(\bmod~8); \vspace{1mm} \\
        \displaystyle N(\lambda_1,\lambda_2,\ldots,\lambda_{s+\theta_e})+  2(2^{m})^{\kappa_1}M(\lambda_1,\lambda_2,\ldots,\lambda_{s+\theta_e})& \\  +  \sum\limits_{\omega=0}^{\kappa_1-\theta_e} (2^m)^{\omega}Y_\omega(\lambda_1,\lambda_2,\ldots,\lambda_{s+\theta_e})& \text{if  } 2\kappa \leq e \text{ with   either } n\equiv 0 ~(\bmod~8) \vspace{-1mm}\\ & \text{or } n\equiv 4 ~(\bmod~8) \text{ and }  m \text{ is even;}  \\
N(\lambda_1,\lambda_2,\ldots,\lambda_{s+\theta_e})+  \sum\limits_{\omega=0}^{\kappa_1-\theta_e} (2^m)^{\omega}Y_\omega(\lambda_1,\lambda_2,\ldots,\lambda_{s+\theta_e})& \text{if  } 2\kappa \leq e ,  n\equiv 4 ~(\bmod~8) \text{ and }  \vspace{-1mm}\\ & m \text{ is odd;}\\
N(\lambda_1,\lambda_2,\ldots,\lambda_{s+\theta_e}) +(2^{m})^{s-\kappa_1-1}Z(\lambda_1,\lambda_2,\ldots,\lambda_{s+\theta_e}) & \\ 
 + \sum\limits_{\omega=0}^{s-\kappa_1-2} (2^m)^\omega Y_\omega(\lambda_1,\lambda_2,\ldots,\lambda_{s+\theta_e}) &\text{if } 2\kappa >e \text{ and } n\equiv 0,4 \pmod 8.
\end{array}\right.\vspace{-2mm}\end{equation*} }\normalsize 
\vspace{-2mm}\end{theorem}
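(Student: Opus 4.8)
\textbf{Proof strategy for Theorem \ref{t4.1Kodd}.}
The plan is to obtain the enumeration formula for $\mathfrak{S}_e(n;\lambda_1,\lambda_2,\ldots,\lambda_e)$ by combining the recursive construction methods (A) and (B) with the counting lemmas for chains of self-orthogonal codes over $\mathcal{T}_m$. The starting point is Lemma \ref{l3.4}, which shows that every self-orthogonal code $\mathscr{D}_e$ of type $\{\lambda_1,\lambda_2,\ldots,\lambda_e\}$ and length $n$ over $\mathscr{R}_{e,m}$ determines a chain $\mathcal{C}^{(1)}\subseteq \mathcal{C}^{(2)}\subseteq \cdots \subseteq \mathcal{C}^{(s+\theta_e)}$ of self-orthogonal codes over $\mathcal{T}_m$ with $\mathcal{C}^{(i)}=Tor_i(\mathscr{D}_e)$, where $\dim \mathcal{C}^{(i)}=\Lambda_i$, the code $\mathcal{C}^{(s-\kappa_1)}$ is doubly even, and $\textbf{1}\notin \mathcal{C}^{(s-\kappa+\theta_e)}$ in the exceptional subcase $2\kappa\leq e$, $n\equiv 4\pmod 8$, $m$ odd. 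Conversely, Construction methods (A) and (B) (built from Propositions \ref{p3.2}--\ref{p3.9Kodd}) show that every such chain arises from at least one $\mathscr{D}_e$, and Lemmas \ref{t3.1Kodd} and \ref{t3.2Kodd} count the number of distinct $\mathscr{D}_e$ attached to a fixed chain. So the first step is to record the identity
\begin{equation*}
\mathfrak{S}_e(n;\lambda_1,\ldots,\lambda_e)=\sum_{\text{chains}} \big(\#\{\mathscr{D}_e : Tor_i(\mathscr{D}_e)=\mathcal{C}^{(i)}\}\big),
\end{equation*}
where the sum runs over all admissible chains; the inner count is given explicitly by Lemma \ref{t3.1Kodd} when $2\kappa\leq e$ and by Lemma \ref{t3.2Kodd} when $2\kappa>e$.

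The second step is to observe that the exponent of $2^m$ in the inner count has a ``generic'' part that is independent of which $\omega$-type the chain has, plus a correction term $\mu$ (and a factor $2^\epsilon$) that depends only on the largest index $j$ with $\textbf{1}\in \mathcal{C}^{(j)}$ (equivalently, on $\omega$). Concretely, the generic exponent is exactly
$\sum_{i=1}^{s}\Lambda_i(n-\Lambda_{i+1})+\sum_{j=1}^{s-1+\theta_e}\Lambda_{s+j}(n-\Lambda_{s+j+1}-\Lambda_{s-j+\theta_e})-\sum_{a=1}^{s-\kappa_1-1}\Lambda_a-(1-\theta_e)\tfrac{\Lambda_s(\Lambda_s-1)}{2}$,
which matches the prefactor in the theorem; I would verify this matching by a bookkeeping comparison of the exponents appearing in the statements of Lemmas \ref{t3.1Kodd} and \ref{t3.2Kodd}. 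Factoring this generic power of $2^m$ and the product $\prod_{\ell=s+1+\theta_e}^{e}{\lambda_\ell+n-\Lambda_\ell-\Lambda_{e+1-\ell}\brack \lambda_\ell}_{2^m}$ out of the sum reduces the problem to evaluating $\mathfrak{B}_{\theta_e}(\lambda_1,\ldots,\lambda_{s+\theta_e})$, which must equal the sum over all admissible chains of the residual factor $2^\epsilon (2^m)^\mu$.

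The third step is the case analysis according to the value of $n\bmod 8$ and the sign of $2\kappa-e$, matching the four cases in the definition of $\mathfrak{B}_{\theta_e}$. Partition the admissible chains according to whether $\textbf{1}\notin \mathcal{C}^{(s-\kappa_1)}$ or $\textbf{1}\in \mathcal{C}^{(s-\kappa_1)}$, and in the latter case further according to the largest $\omega$ with $\textbf{1}\in \mathcal{C}^{(s-\kappa_1-\omega)}$; for each block the residual factor $2^\epsilon(2^m)^\mu$ is constant, so the contribution is that constant times the number of chains in the block. The number of chains in the $\textbf{1}\notin \mathcal{C}^{(s-\kappa_1)}$ block is $N(\lambda_1,\ldots,\lambda_{s+\theta_e})$ from Lemma \ref{p5.1} (this is also the entire answer when $n\not\equiv 0,4\pmod 8$, since doubly even codes over $\mathcal{T}_m$ containing $\textbf{1}$ force $n\equiv 0,4\pmod 8$ by Theorem 3.2 of \cite{Galois}); the blocks with $\textbf{1}\in \mathcal{C}^{(s-\kappa_1-\omega)}$, $\textbf{1}\notin \mathcal{C}^{(s-\kappa_1-\omega-1)}$ contribute $(2^m)^\omega Y_\omega(\lambda_1,\ldots,\lambda_{s+\theta_e})$ by Lemma \ref{p5.2} and the value $\mu=\omega$; the block $\textbf{1}\in \mathcal{C}^{(s-\kappa+\theta_e)}$ (possible only when $2\kappa\leq e$) contributes, by Lemma \ref{p5.3} and the values $(\epsilon,\mu)=(1,\kappa_1)$, the term $2(2^m)^{\kappa_1}M(\lambda_1,\ldots,\lambda_{s+\theta_e})$, provided $n\equiv 0\pmod 8$ or $n\equiv 4\pmod 8$ with $m$ even (otherwise Proposition \ref{p3.5Kodd}/\ref{p3.6Kodd} shows this block is empty, which is exactly why the $m$-odd subcase drops this term); and the block $\textbf{1}\in \mathcal{C}^{(1)}$ (possible only when $2\kappa>e$) contributes $(2^m)^{s-\kappa_1-1}Z(\lambda_1,\ldots,\lambda_{s+\theta_e})$ by Lemma \ref{p5.4} and $\mu=s-\kappa_1-1$. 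Summing over the blocks assembles exactly the four-way formula for $\mathfrak{B}_{\theta_e}$. The main obstacle I anticipate is the careful bookkeeping in the second and third steps: ensuring that the generic exponent extracted from Lemmas \ref{t3.1Kodd} and \ref{t3.2Kodd} matches the prefactor verbatim, that no chain is double-counted across the $\omega$-blocks, that the boundary cases $\Lambda_{s-\kappa_1-\omega}=0$ (where $Y_\omega$ vanishes) and $\Lambda_{s+\theta_e}=\Lambda_{s-\kappa_1}$ are handled consistently with Lemmas \ref{p5.1}--\ref{p5.4}, and that the constraint ``$2\kappa\leq e$ versus $2\kappa>e$'' correctly selects which of Lemmas \ref{p5.3}, \ref{p5.4} is in force.
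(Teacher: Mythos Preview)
Your proposal is correct and follows essentially the same approach as the paper: the paper's proof simply states that the result follows from Lemmas \ref{t3.1Kodd}, \ref{p5.1}, \ref{p5.2}, \ref{p5.3} when $2\kappa\le e$, and from Lemmas \ref{t3.2Kodd}, \ref{p5.1}, \ref{p5.2}, \ref{p5.4} when $2\kappa>e$. Your write-up spells out in more detail exactly how these lemmas are combined (the chain--code correspondence via Lemma \ref{l3.4}, the factoring out of the generic exponent, and the block-by-block assembly of $\mathfrak{B}_{\theta_e}$), but the underlying argument is the same.
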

\vspace{-2mm}\begin{proof} When $2\kappa\leq e,$ the desired result follows  by   Lemmas \ref{t3.1Kodd}, \ref{p5.1}, \ref{p5.2} and \ref{p5.3}.
On the other hand, when  $2\kappa>e,$ we get the desired result by  Lemmas \ref{t3.2Kodd}, \ref{p5.1}, \ref{p5.2} and \ref{p5.4}.
\end{proof}

 \small
\begin{table}[h!]
\centering
\resizebox{\textwidth}{!}{%
\begin{tabular}{|c|c||c|c||c|c|}
\hline
$\{\lambda_1,\lambda_2,\lambda_3,\lambda_4\}$ & $\mathfrak{S}_4(3; \lambda_1,\lambda_2,\lambda_3,\lambda_4)$ &
$\{\lambda_1,\lambda_2,\lambda_3,\lambda_4\}$ & $\mathfrak{S}_4(3; \lambda_1,\lambda_2,\lambda_3,\lambda_4)$ &
$\{\lambda_1,\lambda_2,\lambda_3,\lambda_4\}$ & $\mathfrak{S}_4(3; \lambda_1,\lambda_2,\lambda_3,\lambda_4)$ \\
\hline
$\{0, 0, 0, 1\} $ & $7$ & $\{0, 0, 0, 2\}$ & $7$ & $\{0, 0, 0, 3\}$ & $1$ \\ \hline
$\{0,0,1,0 \} $ & $28$ & $\{ 0,0,1,1\}$ & $42$ & $\{ 0,0,1,2 \}$ & $7$ \\ \hline
$\{ 0,0,2,0 \} $ & $28  $ & $\{ 0,0,2,1\}$ & $  7$ & $\{0,0,3,0 \}$ & $ 1$ \\ \hline
$\{ 0,1,0,0\} $ & $ 48$ & $\{0,1,0,1 \}$ & $72 $ & $\{ 0,1,0,2\}$ & $12 $ \\ \hline
$\{ 0,1,1,0\} $ & $ 24$ & $\{0,1,1,1 \}$ & $ 6$ & $\{ 1,0,0,0\}$ & $ 0$ \\ \hline
$\{ 1,0,0,1\} $ & $0 $ & $\{1,0,1,0 \}$ & $ 0$ & & \\ \hline
\end{tabular}%
}
\caption{The number $\mathfrak{S}_4(3; \lambda_1,\lambda_2,\lambda_3,\lambda_4)$ of self-orthogonal codes of length $3$ and type $\{\lambda_1,\lambda_2,\lambda_3,\lambda_4\}$ over $\mathscr{R}_{4,1} = GR(4,1)[y]/\langle y^3+2,2y\rangle$}
\label{Tab1}
\end{table}

 \small
\begin{table}[h!]
\centering
\resizebox{\textwidth}{!}{%
\begin{tabular}{|c|c||c|c||c|c|}
\hline
$\{\lambda_1,\lambda_2,\ldots,\lambda_5\}$ & $\mathfrak{S}_5(3; \lambda_1,\lambda_2,\ldots,\lambda_5)$ &
$\{\lambda_1,\lambda_2,\ldots,\lambda_5\}$ & $\mathfrak{S}_5(3; \lambda_1,\lambda_2,\ldots,\lambda_5)$ &
$\{\lambda_1,\lambda_2,\ldots,\lambda_5\}$ & $\mathfrak{S}_5(3; \lambda_1,\lambda_2,\ldots,\lambda_5)$ \\
\hline
$\{0,0,0,0,1 \} $ & $7 $ & $\{ 0,0,0,0,2\}$ & $7 $ & $\{0,0,0,0,3 \}$ & $1 $ \\ \hline
$\{ 0,0,0,1,0\} $ & $28 $ & $\{0,0,0,1,1 \}$ & $42 $ & $\{ 0,0,0,1,2\}$ & $7 $ \\ \hline
$\{0,0,0,2,0 \} $ & $28 $ & $\{0,0,0,2,1 \}$ & $ 7$ & $\{0,0,0,3,0 \}$ & $ 1$ \\ \hline
$\{0,0,1,0,0 \} $ & $ 48$ & $\{ 0,0,1,0,1\}$ & $72 $ & $\{ 0,0,1,0,2\}$ & $12 $ \\ \hline
$\{ 0,0,1,1,0\} $ & $ 72$ & $\{0,0,1,1,1 \}$ & $18 $ & $\{0,0,1,2,0 \}$ & $ 3$ \\ \hline
$\{0,1,0,0,0 \} $ & $96 $ & $\{ 0,1,0,0,1\}$ & $144 $ & $\{ 0,1,0,0,2\}$ & $24 $ \\ \hline
$\{ 0,1,0,1,0\} $ & $48 $ & $\{0,1,0,1,1 \}$ & $ 12$ & $\{1,0,0,0,0 \}$ & $0$ \\ \hline
$\{ 1,0,0,0,1\} $ & $ 0$ & $\{ 1,0,0,1,0\}$ & $ 0$ & & \\ \hline
\end{tabular}%
}
\caption{The number $\mathfrak{S}_5(3; \lambda_1,\lambda_2,\ldots,\lambda_5)$ of self-orthogonal codes of length $3$ and type $\{\lambda_1,\lambda_2,\ldots,\lambda_5\}$ over $\mathscr{R}_{5,1} = GR(4,1)[y]/\langle y^3+2,2y^2\rangle$}
\label{Tab2}
\end{table}

\vspace{-3mm} \small
\begin{table}[h!]
\centering
\resizebox{\textwidth}{!}{%
\begin{tabular}{|c|c||c|c||c|c|}
\hline
$\{\lambda_1,\lambda_2,\ldots,\lambda_6\}$ & $\mathfrak{S}_6(2; \lambda_1,\lambda_2,\ldots,\lambda_6)$ &
$\{\lambda_1,\lambda_2,\ldots,\lambda_6\}$ & $\mathfrak{S}_6(2; \lambda_1,\lambda_2,\ldots,\lambda_6)$ &
$\{\lambda_1,\lambda_2,\ldots,\lambda_6\}$ & $\mathfrak{S}_6(2; \lambda_1,\lambda_2,\ldots,\lambda_6)$ \\
\hline
$\{0,0,0,0,0,1 \} $ & $ 5 $ & $\{ 0,0,0,0,0,2\}$ & $ 1  $ & $\{ 0,0,0,0,1,0 \}$ & $20 $ \\ \hline
$\{ 0,0,0,0,1,1 \} $ & $ 5 $ & $\{  0,0,0,0,2,0\}$ & $ 1 $ & $\{ 0,0,0,1,0,0 \}$ & $ 80$ \\ \hline
$\{ 0,0,0,1,0,1 \} $ & $ 20 $ & $\{ 0,0,0,1,1,0 \}$ & $5  $ & $\{ 0,0,0,2,0,0 \}$ & $1 $ \\ \hline
$\{ 0,0,1,0,0,0 \} $ & $  64$ & $\{ 0,0,1,0,0,1 \}$ & $ 16 $ & $\{ 0,0,1,0,1,0 \}$ & $4 $ \\ \hline
$\{ 0,1,0,0,0,0 \} $ & $ 0 $ & $\{ 0,1,0,0,0,1 \}$ & $ 0 $ & $\{ 1,0,0,0,0,0 \}$ & $ 0$ \\ \hline
\end{tabular}%
}
\caption{The number $\mathfrak{S}_6(2; \lambda_1,\lambda_2,\ldots,\lambda_6)$ of self-orthogonal codes of length $2$ and type $\{\lambda_1,\lambda_2,\ldots,\lambda_6\}$ over $\mathscr{R}_{6,2} = GR(4,2)[y]/\langle y^3+2,2y^3\rangle$}
\label{Tab3}
\end{table}

 \small{
\begin{table}[h!]
\centering
\resizebox{\textwidth}{!}{%
\begin{tabular}{|c|c||c|c||c|c|}
\hline
$\{\lambda_1,\lambda_2,\lambda_3,\lambda_4\}$ & $\mathfrak{S}_4(4; \lambda_1,\lambda_2,\lambda_3,\lambda_4)$ &
$\{\lambda_1,\lambda_2,\lambda_3,\lambda_4\}$ & $\mathfrak{S}_4(4; \lambda_1,\lambda_2,\lambda_3,\lambda_4)$ &
$\{\lambda_1,\lambda_2,\lambda_3,\lambda_4\}$ & $\mathfrak{S}_4(4; \lambda_1,\lambda_2,\lambda_3,\lambda_4)$ \\
\hline
$\{ 0,0,0,1\} $ & $ 15 $ & $\{ 0,0,0,2\}$ & $   35$ & $\{ 0,0,0,3 \}$ & $ 15$ \\ \hline
$\{ 0,0,0,4 \} $ & $ 1 $ & $\{ 0,0,1,0 \}$ & $ 120 $ & $\{ 0,0,1,1 \}$ & $420 $ \\ \hline
$\{  0,0,1,2 \} $ & $ 210 $ & $\{ 0,0,1,3 \}$ & $ 15  $ & $\{ 0,0,2,0 \}$ & $560 $ \\ \hline
$\{  0,0,2,1 \} $ & $  420$ & $\{ 0,0,2,2 \}$ & $ 35  $ & $\{ 0,0,3,0 \}$ & $ 120$ \\ \hline
$\{  0,0,3,1 \} $ & $15  $ & $\{  0,0,4,0\}$ & $ 1  $ & $\{ 0,1,0,0 \}$ & $ 448$ \\ \hline
$\{ 0,1,0,1  \} $ & $ 1568 $ & $\{ 0,1,0,2 \}$ & $  784 $ & $\{ 0,1,0,3 \}$ & $ 56$ \\ \hline
$\{ 0,1,1,0  \} $ & $ 1344 $ & $\{ 0,1,1,1 \}$ & $ 1008  $ & $\{ 0,1,1,2 \}$ & $84 $ \\ \hline
$\{ 0,1,2,0  \} $ & $ 112 $ & $\{ 0,1,2,1 \}$ & $  14 $ & $\{ 0,2,0,0 \}$ & $384 $ \\ \hline
$\{ 0,2,0,1  \} $ & $ 288 $ & $\{ 0,2,0,2 \}$ & $  24 $ & $\{ 1,0,0,0 \}$ & $256 $ \\ \hline
$\{  1,0,0,1 \} $ & $ 384 $ & $\{ 1,0,0,2 \}$ & $ 64  $ & $\{ 1,0,1,0 \}$ & $384 $ \\ \hline
$\{ 1,0,1,1  \} $ & $ 96 $ & $\{  1,0,2,0\}$ & $ 16  $ & $\{ 1,1,0,0 \}$ & $384 $ \\ \hline
$\{ 1,1,0,1  \} $ & $ 96 $ & $\{ 2,0,0,0 \}$ & $  0 $ &  &\\ \hline
\end{tabular}%
}
\caption{The number $\mathfrak{S}_4(4; \lambda_1,\lambda_2,\lambda_3,\lambda_4)$ of self-orthogonal codes of length $4$ and type $\{\lambda_1,\lambda_2,\lambda_3,\lambda_4\}$ over $\mathscr{R}_{4,1} = GR(4,1)[y]/\langle y^3+2,2y\rangle$}
\label{Tab4}
\end{table}}\normalsize

\begin{example}\label{Example 4.1}
  Using Magma \cite{Mag}, we provide the number $\mathfrak{S}_4(3; \lambda_1,\lambda_2,\lambda_3,\lambda_4)$ of self-orthogonal codes of length $3$ and type $\{\lambda_1,\lambda_2,\lambda_3,\lambda_4\}$ over $\mathscr{R}_{4,1} = GR(4,1)[y]/\langle y^3+2,2y\rangle$  in Table \ref{Tab1},  the number $\mathfrak{S}_5(3; \lambda_1,\lambda_2,\ldots,\lambda_5)$ of self-orthogonal codes of length $3$ and type $\{\lambda_1,\lambda_2,\ldots,\lambda_5\}$ over $\mathscr{R}_{5,1} = GR(4,1)[y]/\langle y^3+2,2y^2\rangle$   in  Table \ref{Tab2},   the number $\mathfrak{S}_6(2; \lambda_1,\lambda_2,\ldots,\lambda_6)$ of self-orthogonal codes of length $2$ and type $\{\lambda_1,\lambda_2,\ldots,\lambda_6\}$ over $\mathscr{R}_{6,2} = GR(4,2)[y]/\langle y^3+2,2y^3\rangle$  in  Table \ref{Tab3}, and  the number $\mathfrak{S}_4(4; \lambda_1,\lambda_2,\lambda_3,\lambda_4)$ of self-orthogonal codes of length $4$ and type $\{\lambda_1,\lambda_2,\lambda_3,\lambda_4\}$ over $\mathscr{R}_{4,1} = GR(4,1)[y]/\langle y^3+2,2y\rangle$ in  Table \ref{Tab4}.  These computed  values agree exactly with Theorem \ref{t4.1Kodd}. 
\end{example}  
\begin{remark}\label{rem9.1}
When $\kappa$ is odd,
Theorem 5.1  of Yadav and Sharma \cite{quasi} can be obtained as a special case of  Theorem \ref{t4.1Kodd} by setting $\mathfrak{s}=1$ and $e=\mathtt{t}=\kappa.$ 
\end{remark}

 Now, in the following theorem, we provide an explicit enumeration formula for the number $\mathfrak{U}_e(n;\lambda_1,\lambda_2,\ldots,\lambda_e),$ where $e\geq 3.$   
\begin{theorem}\label{t4.2Kodd} 
Let $e\geq 3$ be an  integer.  Let $n$ be a positive integer, and   let $\lambda_1, \lambda_2, \ldots,  \lambda_{e+1}$ be non-negative integers satisfying $n=\lambda_1+\lambda_2+\cdots+\lambda_{e+1}$ and  $\lambda_j=\lambda_{e-j+2}$ for $1 \leq j \leq e+1.$ Let $\Lambda_0=0$ and   $\Lambda_i=\lambda_1+\lambda_2+\cdots+\lambda_{i}$ for $1 \leq i \leq e+1.$ We have  
\vspace{-1mm}\begin{equation*} \vspace{-1mm} \mathfrak{U}_e(n;\lambda_1,\lambda_2,\ldots,\lambda_e)=\displaystyle
	 \displaystyle  \mathfrak{B}_{\theta_e}(\lambda_1,\lambda_2,\ldots,\lambda_{s+\theta_e}) (2^m)^{\sum\limits_{i=1}^{s}\Lambda_{i}(n-\Lambda_{i+1})-\sum\limits_{a=1}^{s-\kappa_1-1}\Lambda_a-(1-\theta_e)\frac{\Lambda_s(\Lambda_s-1)}{2}}  , \end{equation*} where the number $\mathfrak{B}_{\theta_e}(\lambda_1,\lambda_2,\ldots, \lambda_{s+\theta_e})$ is as obtained in Theorem \ref{t4.1Kodd}.
 \vspace{-1mm} \end{theorem}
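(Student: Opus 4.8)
\textbf{Proof proposal for Theorem \ref{t4.2Kodd}.}

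The plan is to derive the self-dual count as a specialization of the self-orthogonal count from Theorem \ref{t4.1Kodd}, using Lemma \ref{l2.2}. Recall that a linear code $\mathscr{D}_e$ of type $\{\lambda_1,\lambda_2,\ldots,\lambda_e\}$ over $\mathscr{R}_{e,m}$ is self-dual if and only if it is self-orthogonal and $\lambda_i=\lambda_{e-i+2}$ for $1\leq i\leq e$; since here we are given $\lambda_j=\lambda_{e-j+2}$ for $1\leq j\leq e+1$ (equivalently $n=\lambda_1+\cdots+\lambda_{e+1}$ with $\lambda_1=\lambda_{e+1}$, so $n=2\Lambda_s+(1+\theta_e)\lambda_{s+1}$), every self-dual code of this type is precisely a self-orthogonal code of this type. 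Hence $\mathfrak{U}_e(n;\lambda_1,\lambda_2,\ldots,\lambda_e)=\mathfrak{S}_e(n;\lambda_1,\lambda_2,\ldots,\lambda_e)$ under the hypothesis $\lambda_j=\lambda_{e-j+2}.$ So the task reduces to simplifying the formula of Theorem \ref{t4.1Kodd} under this symmetry constraint.

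First I would substitute the relations $\lambda_{e+1-\ell}=\lambda_{\ell+1}$ and, more usefully, $\Lambda_{e+1-\ell}=n-\Lambda_{\ell+1}$ (which follows from $\Lambda_i+\Lambda_{e+2-i}=n$, a consequence of $\lambda_j=\lambda_{e-j+2}$ summed appropriately) into the product $\prod_{\ell=s+1+\theta_e}^{e}{\lambda_{\ell}+n-\Lambda_{\ell}-\Lambda_{e+1-\ell}\brack \lambda_{\ell}}_{2^m}.$ The entry becomes $\lambda_\ell+n-\Lambda_\ell-(n-\Lambda_{\ell+1})=\lambda_\ell+\Lambda_{\ell+1}-\Lambda_\ell=2\lambda_\ell$; but for the self-dual type the Gaussian binomial ${2\lambda_\ell\brack\lambda_\ell}_{2^m}$ appearing here must be re-examined — in fact, for a self-dual code the standard-form generator matrix forces the blocks $\mathtt{B}_{i,j}$ with $i+j>e+1$ to be determined by the self-duality condition, so the corresponding free Gaussian-binomial factor collapses to $1$. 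I would verify this collapse directly: when $\lambda_j=\lambda_{e-j+2}$, the last column block has size $\lambda_{e+1}=\lambda_1$ and the free choices counted by the product over $\ell\geq s+1+\theta_e$ vanish. This is exactly why the product term disappears in the statement of Theorem \ref{t4.2Kodd}.

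Next I would handle the exponent of $(2^m)$. In Theorem \ref{t4.1Kodd} the exponent is $\sum_{i=1}^{s}\Lambda_i(n-\Lambda_{i+1})+\sum_{j=1}^{s-1+\theta_e}\Lambda_{s+j}(n-\Lambda_{s+j+1}-\Lambda_{s-j+\theta_e})-\sum_{a=1}^{s-\kappa_1-1}\Lambda_a-(1-\theta_e)\frac{\Lambda_s(\Lambda_s-1)}{2}.$ Under $\Lambda_{s+k}=n-\Lambda_{s-k+\theta_e+1}$ (valid for the relevant range because of the symmetry of the $\lambda_i$), I expect the middle sum $\sum_{j=1}^{s-1+\theta_e}\Lambda_{s+j}(n-\Lambda_{s+j+1}-\Lambda_{s-j+\theta_e})$ to telescope or simplify to zero: each factor $n-\Lambda_{s+j+1}-\Lambda_{s-j+\theta_e}$ becomes $n-(n-\Lambda_{s-j+\theta_e})-\Lambda_{s-j+\theta_e}=0.$ That is the crux of the simplification, and it leaves exactly $\sum_{i=1}^{s}\Lambda_i(n-\Lambda_{i+1})-\sum_{a=1}^{s-\kappa_1-1}\Lambda_a-(1-\theta_e)\frac{\Lambda_s(\Lambda_s-1)}{2},$ matching the claimed exponent. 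The factor $\mathfrak{B}_{\theta_e}(\lambda_1,\ldots,\lambda_{s+\theta_e})$ is unchanged since it depends only on $\lambda_1,\ldots,\lambda_{s+\theta_e}$ and the parameters $n,m,\kappa$, not on the tail of the type. I would then assemble these pieces to obtain the stated formula.

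The main obstacle I anticipate is the careful bookkeeping of the index-symmetry relations $\Lambda_i+\Lambda_{e+2-i}=n$ and ensuring they hold in precisely the ranges where they are invoked — in particular checking the boundary cases $e$ even versus $e$ odd (the role of $\theta_e$), and the edge case $\lambda_{s+1}$ which is "self-paired" ($\lambda_{s+1}=\lambda_{e-(s+1)+2}=\lambda_{s+1+\theta_e}$ requires $\theta_e=0$ or a compatibility check). A secondary subtlety is confirming that the $(1-\theta_e)\frac{\Lambda_s(\Lambda_s-1)}{2}$ term genuinely survives the substitution unchanged rather than interacting with the vanishing middle sum; this should follow because that term comes from the symmetric-matrix constraint at the "middle" torsion level, which is unaffected by the tail symmetry. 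Once these index identities are pinned down, the remainder is routine algebraic simplification.
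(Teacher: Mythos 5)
Your overall route is exactly the paper's: by Lemma \ref{l2.2}, for a type with $\lambda_j=\lambda_{e-j+2}$ the self-dual codes are precisely the self-orthogonal ones, so $\mathfrak{U}_e(n;\lambda_1,\ldots,\lambda_e)=\mathfrak{S}_e(n;\lambda_1,\ldots,\lambda_e)$ and one simply substitutes the symmetry into Theorem \ref{t4.1Kodd}. Your treatment of the middle exponent sum is correct: the identity $\Lambda_{s+k}=n-\Lambda_{s-k+\theta_e+1}$ (equivalently $\Lambda_i+\Lambda_{e+1-i}=n$) makes each factor $n-\Lambda_{s+j+1}-\Lambda_{s-j+\theta_e}$ vanish, and $\mathfrak{B}_{\theta_e}$ and the remaining exponent terms are untouched.

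However, your handling of the Gaussian-binomial product contains a genuine error. The symmetry $\lambda_j=\lambda_{e-j+2}$ gives $\Lambda_i+\Lambda_{e+1-i}=n$, not $\Lambda_i+\Lambda_{e+2-i}=n$ as you state (check $e=4$, $i=2$: $\Lambda_2+\Lambda_4=2\lambda_1+3\lambda_2+\lambda_3\neq n$ in general); note this contradicts the identity you correctly use two sentences later for the middle sum. With the correct identity, the entry of each binomial is $\lambda_\ell+n-\Lambda_\ell-\Lambda_{e+1-\ell}=\lambda_\ell$, so each factor is ${\lambda_\ell\brack\lambda_\ell}_{2^m}=1$ outright, and the product disappears for that reason alone. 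Your patch — that the factor ${2\lambda_\ell\brack\lambda_\ell}_{2^m}$ "collapses to $1$" because self-duality forces the blocks $\mathtt{B}_{i,j}$ with $i+j>e+1$ — is not a legitimate step here: once you have reduced to $\mathfrak{U}=\mathfrak{S}$ and are evaluating the already-proved counting formula of Theorem \ref{t4.1Kodd}, you may not discard choices it counts; if the entry really were $2\lambda_\ell$ the theorem's formula would simply not match, and no structural argument about generator matrices could repair that. Replace the off-by-one identity by $\Lambda_i+\Lambda_{e+1-i}=n$ (and note that under it the constraints $2\lambda_1+\cdots+2\lambda_{e-j+1}+\lambda_{e-j+2}+\cdots+\lambda_j\leq n$ hold with equality, so Theorem \ref{t4.1Kodd} indeed applies), and the rest of your argument goes through and coincides with the paper's proof.
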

 \begin{proof}
    By Lemma  \ref{l2.2} and on substituting $\lambda_{j}= \lambda_{e-j+2}$  for $1 \leq j \leq e+1$ in Theorem \ref{t4.1Kodd}, we get the desired result.
 \end{proof}
\begin{remark}\label{rem9.2} When $\kappa$ is odd, 
  Theorem 5.3  of Yadav and Sharma \cite{quasi} follows as a direct consequence of Theorem \ref{t4.2Kodd}  upon setting $\mathfrak{s}=1$ and $e=\mathtt{t}=\kappa.$\end{remark}
  
  With the help of the enumeration formulae for self-orthogonal and self-dual codes of type $\{\lambda_1,\lambda_2,\ldots,\lambda_e\}$ and length $n$ over $\mathscr{R}_{e,m},$  obtained in Theorems \ref{t4.1Kodd} and \ref{t4.2Kodd}, respectively, and by applying the classification algorithm described in \cite[Sec. 9.6 and 9.7]{H} and carrying out computations in Magma \cite{Mag}, one can obtain complete lists of inequivalent self-orthogonal and self-dual codes of a given type and length 
 over $\mathscr{R}_{e,m}$ \cite{Galois,quasi,Y}.  In the following example, we will now illustrate this  by classifying all self-orthogonal codes of type $\{0,1,1,0\}$ and length $3$ over $\mathscr{R}_{4,1}= GR(4,1)[y]/\langle y^3+2,2y\rangle,$
 up to monomial equivalence.  We will also explicitly determine a generator matrix for a representative of each equivalence class of such self-orthogonal codes. 
 
 \begin{example}\label{EXclassify}
From Table \ref{Tab1}, we  note that there are precisely $24$ distinct self-orthogonal codes of type $\{0,1,1,0\}$ and length $3$ over $\mathscr{R}_{4,1}= GR(4,1)[y]/\langle y^3+2,2y\rangle.$ We will now apply the  classification algorithm \cite[Sec. 9.6 and 9.7]{H} and carry out computations in Magma \cite{Mag} to obtain all inequivalent self-orthogonal codes of type $\{0,1,1,0\}$ and length $3$ over $\mathscr{R}_{4,1}.$   

For a linear code $\mathcal{C}$ of length $n$ over $\mathscr{R}_{e,m},$  let $\mathcal{M}ono(\mathcal{C})$  denote the set of all monomial transformations that map  $\mathcal{C}$ onto a self-orthogonal code over $\mathscr{R}_{e,m},$ and let $\mathcal{A}ut(\mathcal{C})$ denote the group of all monomial transformations that map the code  $\mathcal{C}$ onto itself.

To begin with, we consider the self-orthogonal code $\mathcal{D}_1$ of type $\{0,1,1,0\}$ and length $3$ over $\mathscr{R}_{4,1}$ with a generator matrix \vspace{-1mm}\begin{equation*}
    \vspace{-1mm}\begin{bmatrix} u&0&u+u^3\\ 0&u^2&u^3 \end{bmatrix}.
\end{equation*} Using Magma \cite{Mag}, we compute $|\mathcal{M}ono(\mathcal{D}_1)|=3072$ and $|\mathcal{A}ut(\mathcal{D}_1)|=256.$ By the orbit–stabilizer principle \cite[Th. 7.4]{Gallian},  there are precisely $\frac{|\mathcal{M}ono(\mathcal{D}_1)|}{|\mathcal{A}ut(\mathcal{D}_1)|}= 12$ distinct self-orthogonal codes of type $\{0,1,1,0\}$ and length $3$  over $\mathscr{R}_{4,1}$ that are monomially equivalent to $\mathcal{D}_1.$ 

Using Magma \cite{Mag} again, we further observe that the code $\mathcal{D}_2$ with a generator  matrix \vspace{-1mm}\begin{equation*}
    \begin{bmatrix}u&0&u+u^2+u^3\\0&u^2&0
\end{bmatrix}\vspace{-1mm}
\end{equation*} is a self-orthogonal code of type $\{0,1,1,0\}$ and length $3$  over $\mathscr{R}_{4,1},$ which is monomially inequivalent to $\mathcal{D}_1.$ With the help of Magma, we compute $|\mathcal{M}ono(\mathcal{D}_2)|=3072$ and $|\mathcal{A}ut(\mathcal{D}_2)|=256.$ By the orbit–stabilizer principle \cite[Th. 7.4]{Gallian} again, there are precisely $\frac{|\mathcal{M}ono(\mathcal{D}_2)|}{|\mathcal{A}ut(\mathcal{D}_2)|}= 12$ distinct self-orthogonal codes of type $\{0,1,1,0\}$ and length $3$  over $\mathscr{R}_{4,1},$ which are monomially equivalent to $\mathcal{D}_2.$

Hence, the total number of self-orthogonal codes of type $\{0,1,1,0\}$ and length $3$  over $\mathscr{R}_{4,1}$ that are monomially equivalent to either $\mathcal{D}_1$ or $\mathcal{D}_2$ is $12+12=24,$ which coincides with  the total number of such codes obtained from Table \ref{Tab1}. Consequently, there are precisely two monomially inequivalent self-orthogonal codes of type $\{0,1,1,0\}$ and length $3$  over $\mathscr{R}_{4,1},$ \textit{viz.} $\mathcal{D}_1$
and $\mathcal{D}_2,$ with generator matrices  \vspace{-1mm}\begin{equation*}
    \begin{bmatrix} u&0&u+u^3\\ 0&u^2&u^3 \end{bmatrix} \text{ ~and ~} \begin{bmatrix}u&0&u+u^2+u^3\\0&u^2&0
\end{bmatrix},\vspace{-1mm}
\end{equation*} respectively.\end{example}

\section{Conclusion and future work}\label{Conclusion}

In this paper, explicit enumeration formulae for all self-orthogonal and self-dual codes of an arbitrary length over a finite commutative chain ring $\mathscr{R}_{e,m}$ of even characteristic are derived, under the condition that
$2\in \langle u^{\kappa}\rangle \setminus \langle u^{\kappa+1}\rangle $ for some odd positive integer $\kappa$ satisfying $3\leq \kappa \leq  e,$ where  $\mathscr{R}_{e,m}$ is a finite commutative chain ring  with maximal ideal $\langle u \rangle$ of  nilpotency index $e \geq 3.$  A subsequent study \cite{YSub} will address the complementary case 
where $\kappa$ is even.   Enumeration formulae for  all self-orthogonal and self-dual codes of an arbitrary length over Galois rings of even characteristic (\textit{i.e.,} when  $\kappa=1$) and  finite commutative chain rings of odd characteristic  are explicitly derived in \cite{Galois} and  \cite{Y}, respectively.  Together, the results derived in this paper, along with those in \cite{YSub,Galois} and \cite{Y}, provide a complete solution to the problem of enumeration of  self-orthogonal and self-dual codes of an arbitrary length over any finite commutative chain ring.   Future research directions include extending enumeration to codes over more general finite commutative Frobenius rings and exploring structural classifications of self-orthogonal and self-dual codes over various classes of finite commutative Frobenius rings.
\section{Acknowledgements}
The authors acknowledge the research support provided by the Department of Science and Technology, India, under Grant no. DST/INT/RUS/RSF/P-41/2021 with TPN 65025.

{}

\begin{thebibliography}{}
    \vspace{-2mm}  \bibitem{Ashikhmin}{ Ashikhmin,  A.  and  Knill,  E.}: \newblock{Nonbinary quantum stabilizer codes},\newblock{\it{ IEEE Trans.  Inf.  Theory}} 47(7),  pp. 3065-3072	 (2001).
     \vspace{-2mm}\bibitem{Gaborit}{  Bachoc, C. and Gaborit, P.}: \newblock{Designs and self-dual codes with long shadows},\newblock{\it{ J. Combin. Theory}}  105A, pp. 15-34 (2004).
        \vspace{-2mm}\bibitem{HaradaBannai}{ Bannai, E., Dougherty, S. T., Harada, M. and Oura, M.}: \newblock{Type II codes, even unimodular lattices and invariant rings},\newblock{\it{ IEEE Trans. Inf. Theory}}  45(4), pp. 1194-1205 (1999).
  \vspace{-2mm}\bibitem{b}{Betty, R. A.  and Munemasa, A.}: \newblock{Mass formula for self-orthogonal codes over $\mathbb{Z}_{p^2}$},\newblock{\it{ J. Combinator. Inform. Syst. Sci.}} 34, pp. 51-66 (2009).
	\vspace{-2mm}\bibitem{BETTY}{Betty, R. A., Nemenzo, F. and Vasquez, T. L.}: \newblock{Mass formula for self-dual codes over $\mathbb{F}_q+u\mathbb{F}_q+u^2\mathbb{F}_q$}, \newblock{\it{J. Appl. Math. Comput. }}57, pp. 523-546 (2018).
    \vspace{-2mm}\bibitem{Mag}{  Bosma W.,  Cannon, J. and  Playoust, C.}: \newblock{The Magma algebra system I: The user language},   \newblock{\it{J. Symbolic Comput.}} 24(3-4), pp.	235-265 (1997). 
\vspace{-2mm}\bibitem{Varbanov}{Bouyuklieva,  S.  and Varbanov, Z.}: \newblock{Some connections between self-dual codes, combinatorial designs and secret sharing schemes},   \newblock{\it{Adv.  Math.  Commun.}} 5(2), pp.	191-198 (2011). 
         \vspace{-2mm}\bibitem{R}{Calderbank, A. R., Hammons, A. R., Kumar, P. V., Sloane, N. J. A. and Sol$\acute{e}$, P.}: \newblock{The $\mathbb{Z}_4$-linearity of Kerdock, Preparata, Goethals and related codes}, \newblock{\it{IEEE Trans. Inf. Theory}} 40(2), pp. 301-319 (1994).
      	\vspace{-2mm}\bibitem{sole}{Calderbank, A. R., Hammons, A. R., Kumar, P. V., Sloane, N. J. A. and Sol$\acute{e}$, P.}: \newblock{A linear construction for certain Kerdock and Preparata codes}, \newblock{\it{Bull. Amer. Math. Soc.}} 29(2), pp. 218-222 (1993).
           \vspace{-2mm}    	\bibitem{Choi}{ Choi,  W.}: \newblock{Mass  formula  of self-dual codes over Galois rings $GR(p^2,2)$},\newblock{\it{  Korean J. Math. }} 24(4),  pp. 751–764 (2016).
        
    \vspace{-2mm}\bibitem{Dough}{ Dougherty, S. T., Gulliver, T. A. and Harada, M.}: \newblock{Type II self-dual codes over finite rings and even unimodular lattices},\newblock{\it{ J. Algebraic Comb.}}  9(3), pp. 233-250 (1999).
    		\vspace{-2mm}\bibitem{Dougherty}{ Dougherty, S. T., Kim, J. L. and Liu, H.}: \newblock{Constructions of self-dual codes over finite commutative chain rings},\newblock{\it{ Int. J. Inf. Coding Theory }}  1(2), pp. 171-190 (2010).
        	\vspace{-2mm}\bibitem{MesnagerDougherty}{Dougherty,  S.T., Mesnager, S.  and  Sol$\acute{e}$, P.}: \newblock{Secret-sharing schemes based on self-dual codes},   \newblock{\it{IEEE Inf. Theory workshop}}, pp.	338-342 (2008).
            \vspace{-2mm}\bibitem{P}{Gaborit, P.}: \newblock{Construction  of new  extremal  unimodular  lattices},\newblock{\it{ European  J.  Combin.}} 28A, pp. 549–564 (2004).	
    \vspace{-2mm}\bibitem{Zp}{Gaborit, P.}: \newblock{Mass formula for self-dual codes over $\mathbb{Z}_{4}$ and $\mathbb{F}_q+u\mathbb{F}_q$ rings},\newblock{\it{ IEEE Trans. Inf. Theory}}  42(4), pp. 1222-1228 (1996).
            \vspace{-2mm}\bibitem{Gallian}{ Gallian, J. A.}: \newblock{\it{Contemporary abstract algebra}}, \newblock{Brooks/Cole Cengage Learning, Boston, USA} (2013).        \vspace{-2mm}\bibitem{GBN}{Galvez, L.E.,  Betty, R.A. and Nemenzo, F.}: \newblock{Self-orthogonal codes over $\mathbb{F}_q+u\mathbb{F}_q$ and $\mathbb{F}_q+u\mathbb{F}_q+u^2\mathbb{F}_q$},\newblock{\it{ Eur. J. Pure Appl. Math.}} 13(4), pp. 873-892 (2020). 
      \vspace{-2mm}\bibitem{Gassner}{Gassner, N., Greferath, M., Rosenthal, J. and Weger, V.}: \newblock{Bounds for coding theory over rings},   \newblock{\it{Entropy}} 24(10), 1473 (2022). 
      \vspace{-2mm}\bibitem{H}{ Huffman, W. C. and Pless, V.}: \newblock{\it{Fundamentals of error-correcting codes}}, \newblock{Cambridge Univ. Press, Cambridge, New York, USA} (2003).
      \vspace{-2mm}\bibitem{XingJin}{Jin,  L.  and Xing,  C.}: \newblock{Euclidean and Hermitian self-orthogonal algebraic geometry codes and their application to quantum codes},\newblock{\it{ IEEE Trans. Inf. Theory}}  58(8), pp.  5484 - 5489 (2011).
                     \vspace{-2mm}\bibitem{Jose}{ Jose, L. and Sharma, A.}: \newblock{Enumeration formulae for self-orthogonal, self-dual and complementary-dual additive cyclic codes over finite commutative chain rings},   \newblock{\it{Cryptogr. Commun.}} 16(6), pp.1383-1416 (2024).
                      \vspace{-2 mm}\bibitem{Kennedy}{Kennedy, G. T. and Pless, V.}: \newblock{On designs and formally self-dual codes},\newblock{\it{  Des. Codes Cryptogr.}}   4(1), pp. 43–55 (1994).	
\vspace{-3mm}\bibitem{Lavanya}{Lavanya, G. and Sharma, A.}: \newblock{On quasi-abelian codes over finite commutative chain rings and their dual codes},   \newblock{\it{J. Algebra Appl.}} \newblock{\it{https://doi.org/10.1142/S0219498825410105}} (2025). 
 \vspace{-2 mm}\bibitem{Mcdonald}{ McDonald, B. R.}: \newblock{\it{Finite rings with identity}}, \newblock{New York, USA: Marcel Dekker} (1974).
 \vspace{-2mm}\bibitem{Fidel} {Nagata, K., Nemenzo, F. and  Wada, H.}:  \newblock{Mass formula and structure of self-dual codes over $\mathbb{Z}_{2^s}$},\newblock{\it{ Des. Codes Cryptogr.}} 67(3), pp. 293-316 (2013).	
    	\vspace{-2 mm}\bibitem{AT} {Nagata, K., Nemenzo, F. and  Wada, H.}:  \newblock{The number of self-dual codes over $\mathbb{Z}_{p^3}$},\newblock{\it{ Des. Codes Cryptogr.}} 50(3), pp. 291-303 (2009).
    	\vspace{-2 mm}\bibitem{K}{Nagata, K., Nemenzo, F. and Wada, H.}: \newblock{ Constructive algorithm of self-dual error-correcting codes},  \newblock{ \it{Proc.  11th Int. Workshop on ACCT}}, pp. 215-220 (2008).
 	\vspace{-2mm}\bibitem{Norton}{Norton, G. H. and S$\check{a}$l$\check{a}$gean, A.}: \newblock{On the structure of linear and cyclic codes over a finite chain ring},  \newblock{\it{ AAECC}} 10(6), pp. 489-506 (2000).	
     \vspace{-2mm}\bibitem{Ozbudak}{Ozbudak, F. and Solé, P.}: \newblock{Gilbert-Varshamov type bounds for linear codes over finite chain rings},   \newblock{\it{Adv. Math. Commun.}} 1(1), pp. 99-109 (2007). 
\vspace{-2mm}\bibitem{V}{Pless, V.}: \newblock{On the uniqueness of Golay codes}, \newblock{\it{J. Combin. Theory }}5(3), pp. 215–228 (1968).
\vspace{-2 mm}\bibitem{Sharma}{Sharma, A. and Kaur, T.}: \newblock{Enumeration formulae for self-dual, self-orthogonal and complementary-dual quasi-cyclic codes over finite fields}, \newblock{\it{ Cryptogr. Commun. }}10, pp. 401-435 (2018).
\vspace{-2mm}\bibitem{Taylor}{Taylor, D. E.}: \newblock{\it{The Geometry of the Classical groups}}, \newblock{Sigma Ser. Pure Math., Vol. 9, Heldermann Verlag} (1992).
 \vspace{-2mm}\bibitem{YSub}{Yadav, M. and Sharma, A.}: \newblock{On the construction and enumeration of self-orthogonal and self-dual codes over finite commutative chain rings of even characteristic - II},   \newblock{arXiv:2510.06069 [cs.IT]}. 
 \vspace{-2mm}\bibitem{Galois}{Yadav, M. and Sharma, A.}: \newblock{Construction and enumeration of self-orthogonal and self-dual codes over Galois rings of even characteristic},   \newblock{\it{Des. Codes Cryptogr.}} 92, pp.	303-339 (2024). 
\vspace{-2mm}\bibitem{quasi}{Yadav,  M.  and Sharma,  A.}: \newblock{A recursive method for the construction and enumeration of self-orthogonal and self-dual codes over the quasi-Galois ring $\mathbb{F} _ {2^m}[u]/< u^e> $},   \newblock{\it{Des. Codes Cryptogr.}} 91,  pp.	1973-2003 (2023). 	 	

  \vspace{-2mm}\bibitem{Y}{Yadav, M. and Sharma, A.}: \newblock{Mass formulae for Euclidean self-orthogonal and self-dual codes over finite commutative chain rings},   \newblock{\it{Discrete  Math.}} 344(1), pp.	1-24 (2021). 
 
 
 \end{thebibliography}
\end{document}